\pgfplotsset{compat=1.18}
\newcommand{\mcB}{\mathcal{B}}
\newcommand{\mcC}{\mathcal{C}}
\newcommand{\tO}{\tilde{O}}
\newcommand{\mcP}{\mathcal{P}}
\newcommand{\neighbor}{\texttt{NN}}
\newcommand{\wnn}{\textsf{W\textsubscript{NN}}}
\newcommand{\merge}{\texttt{Merge}}
\newcommand{\val}{\text{val}}
\newcommand{\poly}{\operatorname{poly}}
\newcommand{\argmax}{\mathop{\mathrm{argmax}}}
\newcommand{\dSing}{d_\mathrm{single}}
\newcommand{\dAvg}{d_\mathrm{average}} 
\newcommand{\dCen}{d_\mathrm{cen}}
\newcommand{\dWard}{d_\mathrm{Ward}}
\newcommand{\R}{\mathbb{R}}
\newcommand{\cA}{\mathcal{A}}
\newcommand{\cC}{\mathcal{C}}
\newcommand{\cS}{\mathcal{S}}
\newtheorem{theorem}{Theorem}
\newtheorem{lemma}{Lemma}
\newtheorem{definition}{Definition}
\newcommand{\eps}{\epsilon}
\newcounter{Frame}
\title{
Parallel Hierarchical Agglomerative Clustering\\ in Low Dimensions
}
\author{%
MohammadHossein Bateni\\%
Google Research\\%
New York, USA%
\and%
Laxman Dhulipala\thanks{Supported by NSF grants CCF-2403235 and CNS-2317194}\\%
University of Maryland\\%
College Park, USA%
\and%
Willem Fletcher\thanks{Supported by NSF grant CCF-2403236}\\%
Brown University\\
Providence, USA%
\and%
Kishen N Gowda\footnotemark[1]\\%
University of Maryland\\
College Park, USA%
\and%
D Ellis Hershkowitz\footnotemark[2]\\%
Brown University\\
Providence, USA%
\and%
Rajesh Jayaram\\%
Google Research\\
New York, USA%
\and%
\hspace{11mm}Jakub Łącki\\%
\hspace{11mm}Google Research\\
\hspace{11mm}New York, USA%
% \hspace{11mm}jlacki@google.com%
}
\date{}
\begin{document}
\maketitle
\begin{abstract}
Hierarchical Agglomerative Clustering (HAC) is an extensively studied and widely used method for hierarchical clustering in $\mathbb{R}^k$ based on repeatedly merging the closest pair of clusters according to an input linkage function $d$. Highly parallel (i.e., NC) algorithms are known for $(1+\epsilon)$-approximate HAC (where near-minimum rather than minimum pairs are merged) for certain linkage functions that monotonically increase as merges are performed. However, no such algorithms are known for many important but non-monotone linkage functions such as centroid and Ward's linkage.

In this work we show that a general class of non-monotone linkage functions---which include centroid and Ward's distance---admit efficient NC algorithms for $(1+\epsilon)$-approximate HAC in low dimensions. Our algorithms are based on a structural result which may be of independent interest: the height of the hierarchy resulting from any constant-approximate HAC on $n$ points for this class of linkage functions is at most $\operatorname{poly}(\log n)$ as long as $k = O(\log \log n / \log \log \log n)$. Complementing our upper bounds, we show that NC algorithms for HAC with these linkage functions in \emph{arbitrary} dimensions are unlikely to exist by showing that HAC is CC-hard when $d$ is centroid distance and $k = n$.
\end{abstract}
%. HAC takes a set of points $\mcP \subseteq \mathbb{R}^k$, each initially in their own cluster, a ``linkage'' function $d$ that gives the distance between clusters, and then repeatedly merges the pair of clusters minimizing $d$ until only a single cluster remains. 

\pagenumbering{gobble}
\newpage

\tableofcontents
\newpage
\pagenumbering{arabic}

\setcounter{page}{1}

% \will{nit: maybe better to define $d$ for $2^\mcP \times 2^\mcP$ since we don't actually require it to be defined for pset of $\mathbb{R}^k$ e.g. infinite sets}
% \will{Sentence before "See Figure 1" sounds awkward.}
% \will{For single linkage definition, maybe $\min_{a \in A, b \in B}$ instead of two separate mins}
% \will{For Ward's linkage definition should it be increase instead of improve?}
% \will{"More specifically for single linkage," I think we don't want first comma as removing this aside makes sentence untrue.}
% \will{"A recent work gave an work-optimal" an -> a}
% \will{"especially if $(1 + \epsilon)$-approximation is desired" Maybe allowed/tolerated instead of desired?}
% \will{"a $O(\log^2 n)$-approximation" a -> an}
% \will{"Ward’s is especially motivated by the important practical applications of clustering that rely on cluster centroids" sounds awkward to me, maybe something like "applications of clustering with linkage functions that... would sound better?}

\section{Introduction}

Hierarchical Agglomerative Clustering (HAC) is a greedy bottom-up clustering algorithm which takes as input a collection of $n$ points $\mcP \subseteq \mathbb{R}^k$ and a symmetric ``linkage'' function $d : 2^{\mcP} \times 2^{\mcP} \to \mathbb{R}_{\geq 0}$ which gives the ``distance'' between clusters of points.\footnote{One can also study linkage functions that give the ``similarity'' between clusters; see, e.g., \cite{dhulipala2022hierarchical}.} 
HAC begins with the singleton clustering $\mcC = \{\{p\} : p \in \mcP\}$. Then, over the course of $n-1$ steps, it merges the two closest clusters.
In particular, in each step if $A, B \in \mcC$ are the clusters in $\mcC$ minimizing $d(A,B)$, it removes $A$ and $B$ from $\mcC$ and adds $A \cup B$ to $\mcC$.
We say that $d(A,B)$ is the \emph{value} of the merge.
The resulting hierarchy of clusters on $\mcP$ naturally corresponds to a rooted binary tree of clusters called a dendrogram. 
In particular, the dendrogram has $2n-1$ nodes; $A \cup B$ is a node of the dendrogram with children $A$ and $B$ if at some point HAC merges clusters $A$ and $B$.
See \Cref{fig:HAC}. 
If the merged clusters $A$ and $B$ do not satisfy $d(A, B) \leq \min_{C,D \in \mcC}d(C,D)$ in each of the $n-1$ iterations but satisfy $d(A, B) \leq c \cdot \min_{C,D \in \mcC}d(C,D)$ for some fixed $c \geq 1$, then the HAC is said to be $c$-approximate.

HAC has seen widespread adoption in practice because, unlike other clustering algorithms (e.g., $k$-means), it does not require users to pre-specify the number of desired clusters. 
Furthermore, after running the algorithm once, the output dendrogram can be cut to obtain clusterings with varying numbers of clusters.
HAC implementations are available in many widely-used data science libraries such as SciPy~\cite{virtanen2020scipy}, scikit-learn~\cite{pedregosa2011scikit}, fastcluster~\cite{mullner2013fastcluster},  Julia~\cite{Clusteringjl}, R~\cite{hclust}, MATLAB~\cite{mathworks_linkage}, Mathematica~\cite{mathematica} and many more \cite{murtagh2012algorithms,murtagh2017algorithms, shearer1985alglib}. 
Likewise, its adoption is explained by the fact that it performs well for many natural objective functions, both in theory \cite{grosswendt2019analysis, moseley-wang} and practice \cite{bateni2024efficient, dhulipala2023terahac, yu2024parclusterers, yu2025dynhac, dhulipala2022hierarchical, dhulipala2021hierarchical, dhulipala2024optimal, epasto2021clustering, cochez2015twister}.
%\enote{Laxman/Kishen, can you put in some refs here}. 

Different linkage functions give different variants of HAC. In roughly ascending order of complexity, some common linkage functions define the distance between two clusters $A, B \subseteq \mathbb{R}^k$ as:
\begin{itemize}
    \item \textbf{Single-linkage}: the minimum distance $\dSing(A,B) := \min_{(a,b) \in A \times B}||a-b||$.
    % \item \textbf{Complete-linkage}: the \emph{maximum} distance between a point in one cluster and a point in the other cluster. \enote{List? Not sure what's known...}
    \item \textbf{Average-linkage}: the average distance $\dAvg(A,B) := \frac{1}{|A||B|}\sum_{(a,b) \in A \times B} ||a-b||$.
    \item \textbf{Centroid-linkage}: the distance between centroids $\dCen(A,B) := ||\mu(A) - \mu(B)||$ where $\mu(X) := \sum_{x \in X} x / |X|$ is the centroid of $X \subseteq \mathbb{R}^k$.
    \item \textbf{Ward's-linkage}: how much the merge would increase the $k$-means objective $\dWard(A,B) := \Delta(A\cup B)-\Delta(A)-\Delta(B)$ where $\Delta(X) := \sum_{x \in X} ||x-\mu(X)||^2$ is the $k$-means objective.
\end{itemize}

\begin{figure}[h]
    \centering
    \begin{subfigure}[b]{0.19\textwidth}
        \centering
        \includegraphics[width=\textwidth,trim=0mm 0mm 350mm 0mm, clip]{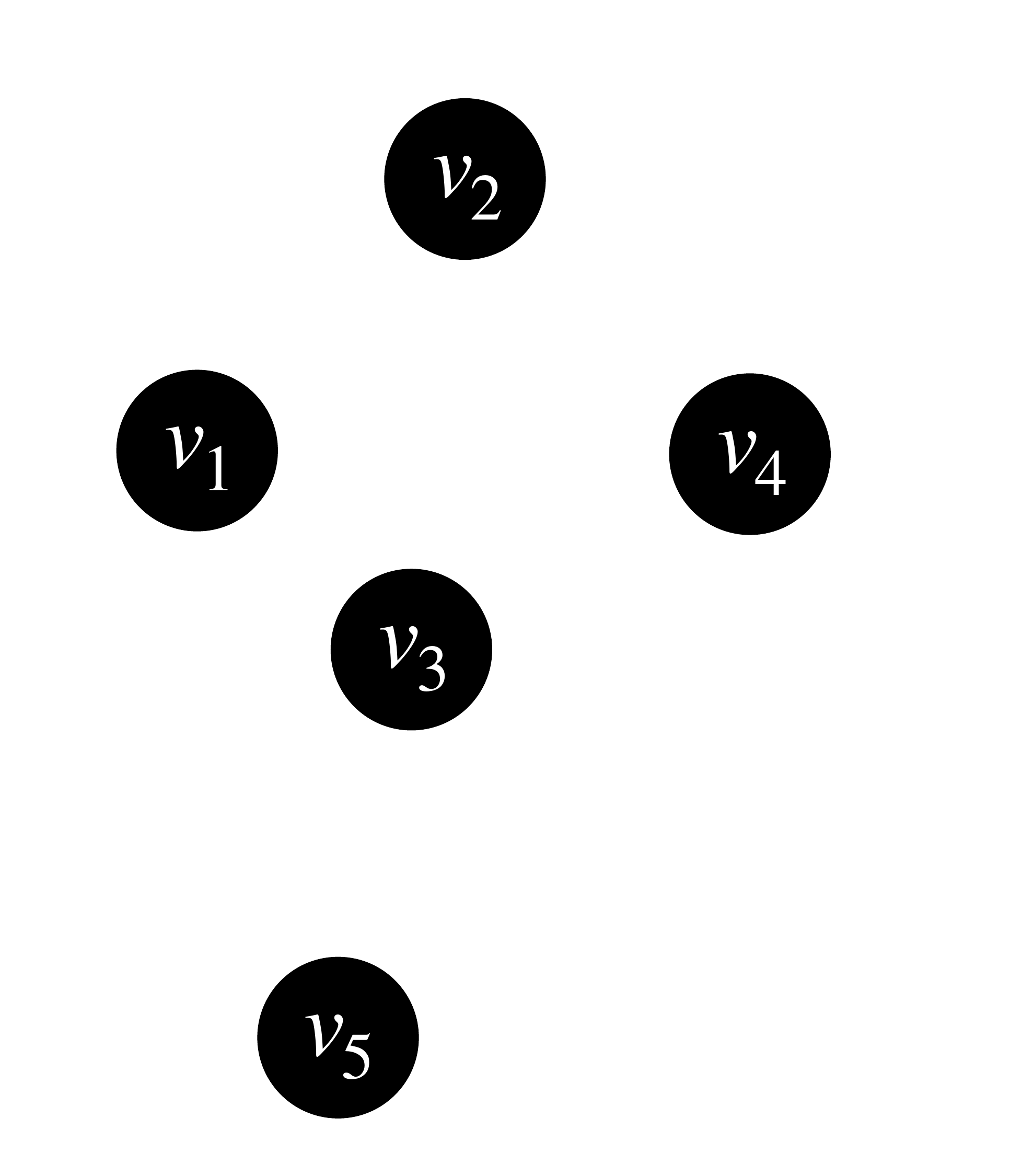}
        \caption{Input $\mcP$.}\label{sfig:hac1}
    \end{subfigure} \hspace{3.5em} %\hfill
    \begin{subfigure}[b]{0.19\textwidth}
        \centering
        \includegraphics[width=\textwidth,trim=0mm 0mm 350mm 0mm, clip]{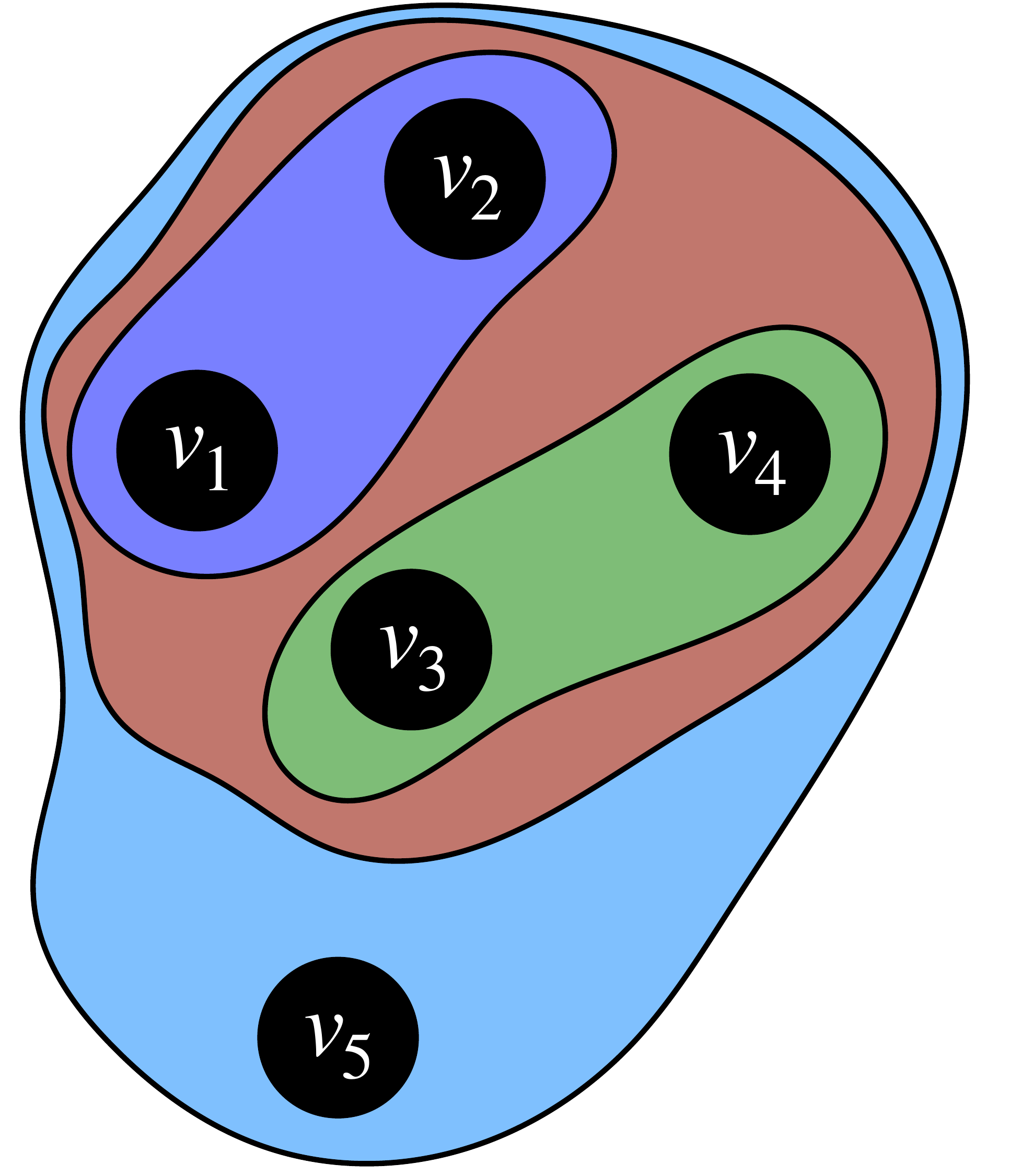}
        \caption{HAC Clustering.}\label{sfig:hac2}
    \end{subfigure}  \hspace{3.5em}%\hfill
    \begin{subfigure}[b]{0.19\textwidth}
        \centering
        \includegraphics[width=\textwidth,trim=0mm 0mm 320mm 0mm, clip]{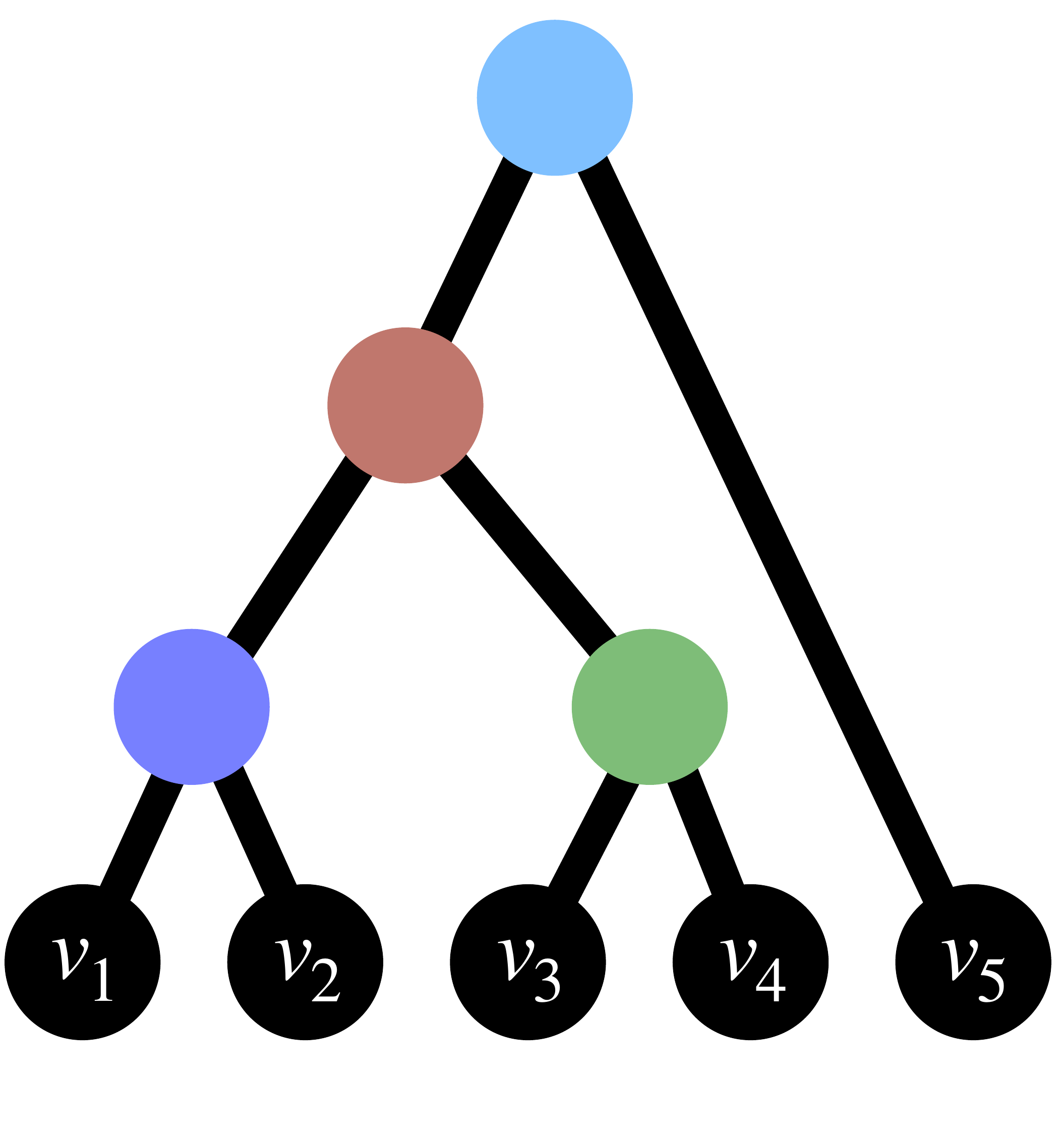}
        \caption{Dendrogram.}\label{sfig:hac3}
    \end{subfigure}   % \hfill
    \vspace{-0.5em}
    \caption{An example of HAC run on $\mcP \subseteq \mathbb{R}^k$.} \label{fig:HAC}
\end{figure}
Recently, the widespread usage of HAC on large datasets has motivated considerable interest in developing the theory of efficient parallel HAC algorithms. 
For the simpler linkage functions of single and average, efficient parallel algorithms are known. 
More specifically for single-linkage, the problem is reducible to computing the clustering of the Euclidean MST and a recent work gave a work-optimal and poly-logarithmic depth parallel algorithm in the comparison model for this problem~\cite{dhulipala2024optimal}. 
On the other hand, another line of recent work showed that average-linkage HAC on general graphs is $\mathsf{P}$-complete, and thus unlikely to be in $\mathsf{NC}$ \cite{dhulipala2022hierarchical}.
Motivated by this hardness, \cite{dhulipala2022hierarchical} showed that for any constant $\eps > 0$, $(1+\eps)$-approximate average-linkage HAC can be done with $\poly(\log n)$ depth and $\poly(n)$ work\footnote{$f(n) = \poly(n)$ if there exists some constant $c$ such that $f(n) \leq O(n^c)$. Throughout this work we use the standard work-depth model of parallelism, which is equal up to logarithmic factors in the depth to other standard parallel models like different PRAM variants, see, for example, \cite{blelloch2024introduction}.} (i.e., it is in the complexity class NC).\footnote{More generally, they show that this is true even if the input is a graph with arbitrary edge weights.}

Despite the progress over the past few years for single- and average-linkage, there is much less progress on other linkage-functions such as centroid- or Ward's-linkage, especially when $(1+\eps)$-approximation is allowed.
%However, to-date there are no efficient parallel $(1+\eps)$-approximate algorithms for centroid- or Ward's-linkage.
Recently, \cite{lattanzi2020framework} gave an $O(\log^2 n)$-approximate and $\poly(\log n)$-round distributed algorithm for centroid HAC and an $O(1)$-approximate algorithm that performed well in practice but had no provable guarantees on its round-complexity.
Unlike other linkage functions, centroid- and Ward's-linkage are directly related to cluster centroids. Thus, developing fast parallel algorithms for centroid- and Ward's is especially motivated by the important practical applications of clustering that rely on cluster centroids (e.g., when leveraging clustering in practical nearest-neighbor search indices~\cite{subramanya2019diskann, ANNScaling, douze2024faiss}). Furthermore, Ward's-linkage has a close theoretical connection to $k$-means clustering and, in fact, provably yields clusterings that are good approximations of the $k$-means objective under certain assumptions~\cite{grosswendt2019analysis}.

%Parallel algorithms for Ward's linkage are especially motivated by the fact that Ward's linkage performs the best in practice among all studied linkage functions \cite{}. \enote{Something to cite? Otherwise cut this}

The fact that parallel algorithms for single- and average-linkage exist but not centroid- and Ward's-linkage is explained by the helpful fact that single- and average-linkage are \emph{monotone}, even under arbitrary merges. 
In particular, given a set of clusters $\mcC$, merging \emph{any} pair of clusters in $\mcC$ results in a new set of clusters whose minimum distance according to both $\dSing$ and $\dAvg$ has not decreased. 
This property reduces computing the single-linkage dendrogram to post-processing the edges of the Euclidean minimum spanning tree~\cite{dhulipala2024optimal}. 
For average-linkage, this monotonicity property enables the usage of standard bucketing tricks in parallel algorithms.
In particular, \cite{dhulipala2022hierarchical} used this fact to divide $(1+\eps)$-approximate HAC into phases where the closest pair increases by a multiplicative factor $(1+\eps)$ after each phase. 
However, their work is in the graph setting, where the techniques rely on the fact that the linkage function is determined by the weights of edges between individual nodes, and it is not clear how to extend these ideas to linkage functions defined directly in $\mathbb{R}^k$.
% \ellis{TODO for KG: Add sentence about it being in a different setting (graphs) which lends self to totally different technique to Rn like in this work; not obvious how to apply to Rn}\kishen{Is this good?}
% and then provided efficient parallel algorithms for $(1+\eps)$-approximate average-linkage HAC by providing parallel algorithms for each such phase.

% exhibit a strengthened version of the well-studied property of reducibility \cite{lance1967general}. A linkage function $d$ is said to be \emph{reducible} if for any 3 clusters $A$, $B$ and $C$ where $A$ and $B$ minimize $d(A,B)$ among all clusters, we have
% \begin{align}
%     d(A \cup B, C) \geq \min(d(A,C), d(B, C)).\label{eq:red}
% \end{align}
% In other words, merging a closest pair never decreases distance below the closer of the two clusters. Single- and average-linkage exhibit what we call \emph{strong reducibility}: \Cref{eq:red} holds for any $3$ clusters $A$, $B$ and $C$, even when $A$ and $B$ are not a closest pair. In other words, merging any pair never decreases distance below the closer of the two clusters.

Also, unfortunately, neither centroid- nor Ward's-linkage is monotone under arbitrary merges. For example, even exact centroid merges can reduce the minimum distance between clusters by a multiplicative constant. See, e.g., \Cref{sfig:nonMon1} / \ref{sfig:nonMon2}. Even worse, even just $2$-approximate centroid merges can arbitrarily reduce the minimum distance between two clusters. See \Cref{sfig:nonMon3} / \ref{sfig:nonMon4}. Likewise, Ward's is known to be monotone under minimum merges \cite{grosswendt2019analysis} but one can prove that under even just $(1+\eps)$-approximate merges for arbitrarily small $\eps > 0$, this ceases to be the case. As such, achieving a good notion of progress on which to base a parallel algorithm for both centroid and Ward's appears difficult since for these linkage functions the minimum distance can oscillate wildly over the course of HAC.

\begin{figure}
\begin{subfigure}[b]{0.24\textwidth}
        \centering
        \includegraphics[width=\textwidth,trim=0mm 0mm 0mm 0mm, clip]{./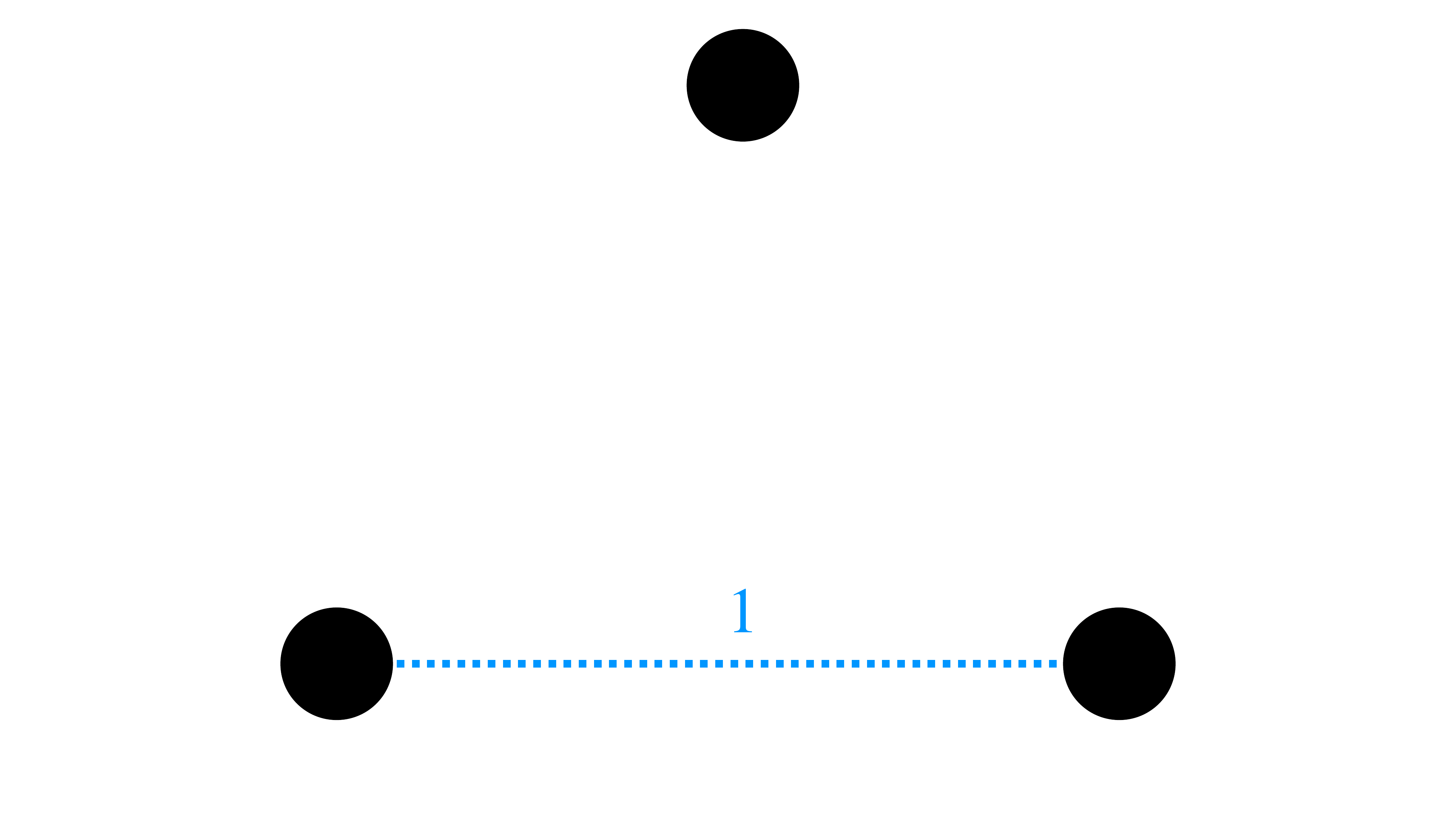}
        \caption{First merge.}\label{sfig:nonMon1}
    \end{subfigure}  \hfill
    \begin{subfigure}[b]{0.24\textwidth}
        \centering
        \includegraphics[width=\textwidth,trim=0mm 0mm 0mm 0mm, clip]{./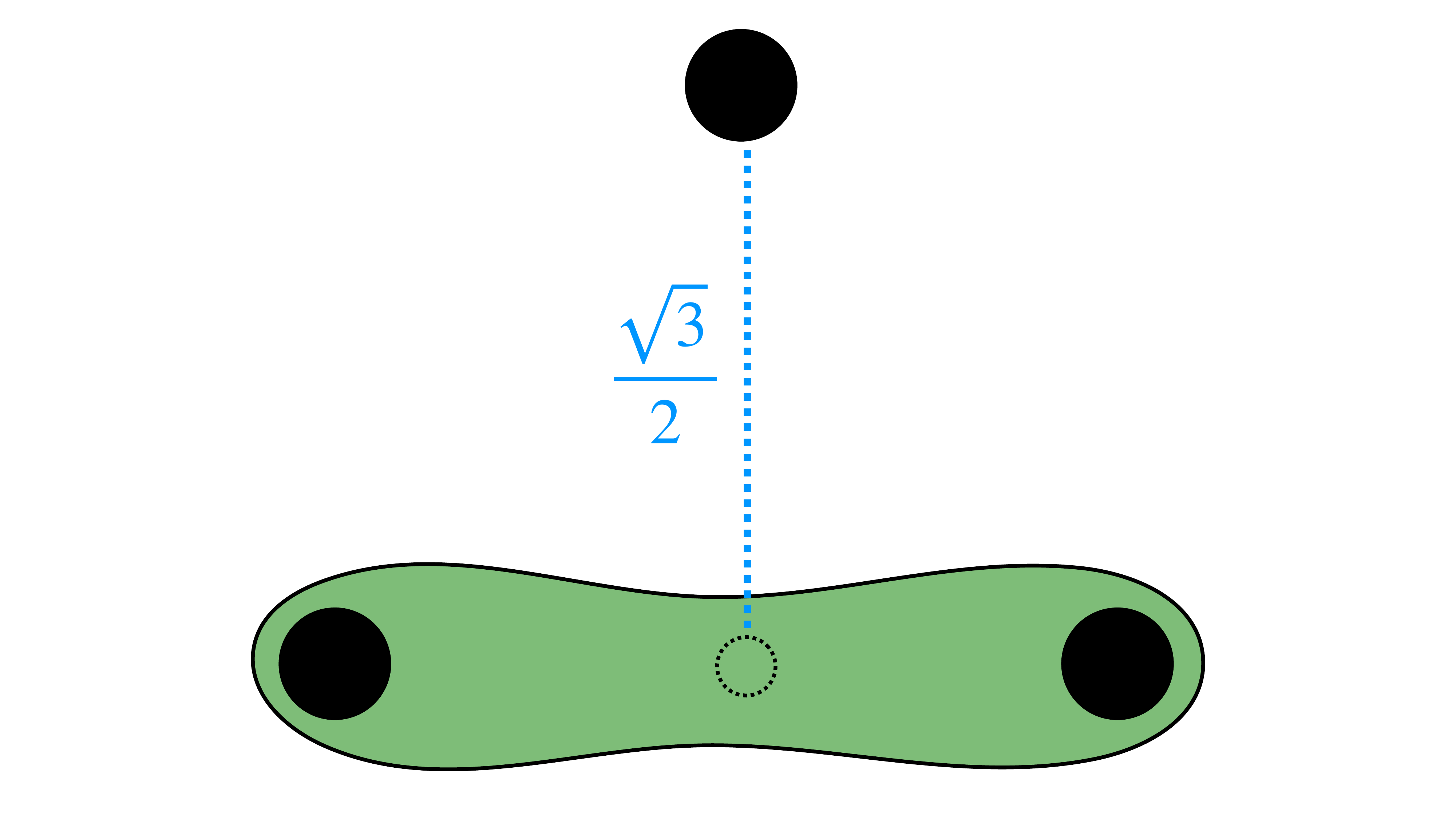}
        \caption{Next merge.}\label{sfig:nonMon2}
    \end{subfigure}   \hfill
    \begin{subfigure}[b]{0.24\textwidth}
        \centering
        \includegraphics[width=\textwidth,trim=0mm 0mm 0mm 0mm, clip]{./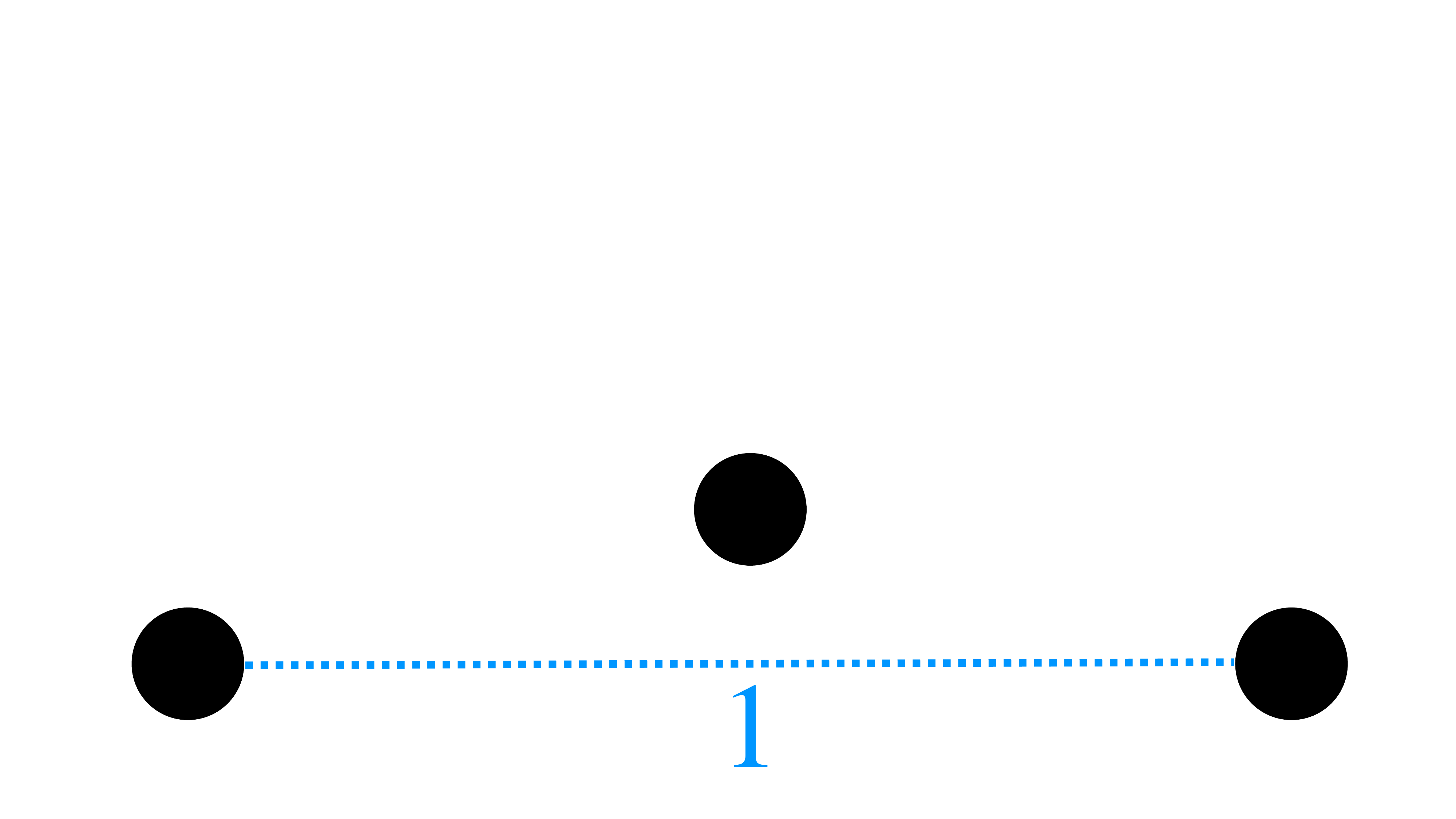}
        \caption{First merge.}\label{sfig:nonMon3}
    \end{subfigure}
    \begin{subfigure}[b]{0.24\textwidth}
        \centering
        \includegraphics[width=\textwidth,trim=0mm 0mm 0mm 0mm, clip]{./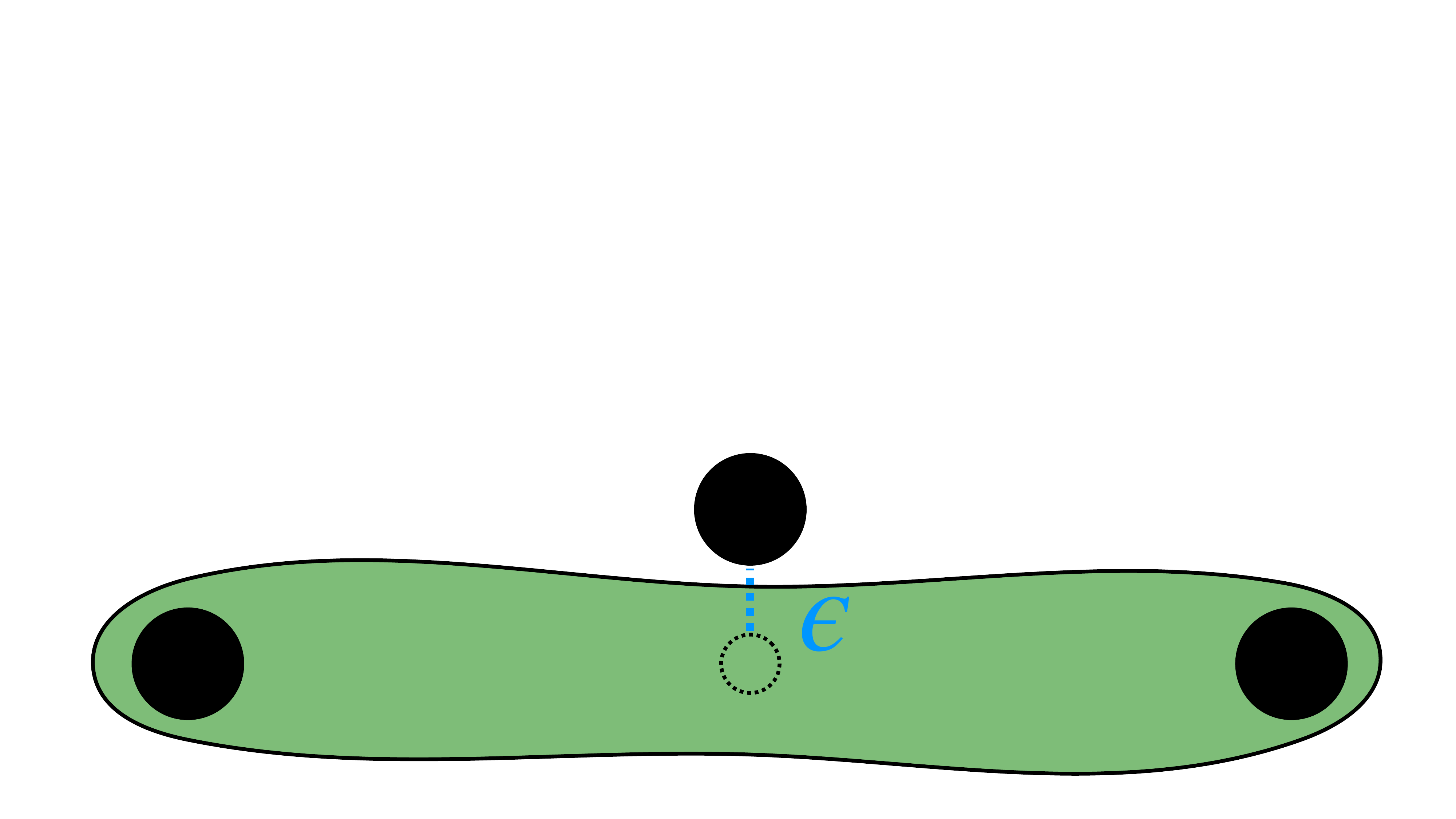}
        \caption{Next merge.}\label{sfig:nonMon4}
    \end{subfigure}
    \caption{Two examples where centroid HAC fails to be monotone.  \ref{sfig:nonMon1} / \ref{sfig:nonMon2} gives $3$ equidistant points in $\mathbb{R}^2$ where the minimum merge reduces from $1$ and $\sqrt{3}/2 < 1$. \ref{sfig:nonMon3} / \ref{sfig:nonMon4} gives $3$ points in $\mathbb{R}^2$ where a $2$-approximate merge reduces the minimum distance from $1$ to an arbitrarily small $\eps$.} \label{fig:nonMon}
\end{figure}

% \lnote{I disagree with the characterization below (most widely used and practically-applicable). Can we reword it to be milder about practical claims, but not lessen the impact of this work?}
Summarizing, there are no known fast $(1+\eps)$-approximate parallel algorithms for HAC when the linkage function is not monotone. This leads us to the central question of this work:

\begin{quote}\centering \textit{
Are there NC $(1+\eps)$-approximate algorithms for HAC with non-monotone linkage $d$?\\
}\end{quote}

\section{Our Contributions}

In this work, we show that $(1+\eps)$-approximate HAC is in NC for a general class of linkage functions which includes both centroid- and Ward's-linkage whenever the dimensionality is low. Our algorithms are based on a structural result which shows that the height of the dendrogram resulting from HAC with these linkage functions is always small in low dimensions. Complementing this, we show that the assumption of low dimensionality is provably necessary (under standard complexity assumptions) for NC algorithms for these linkage functions.

In more detail, we introduce the notion of \emph{well-behaved linkage functions}. 
Roughly speaking, well-behaved linkage functions are functions which (approximately) exhibit the usual properties of Euclidean distance such as the triangle inequality and well-known packing properties and are additionally stable under ``small'' merges. We show that both centroid and Ward's are well-behaved. Centroid- and Ward's-linkage appear very different in nature---centroid is about the distance between points in Euclidean space and Ward's is about improving a clustering objective---and so it is, perhaps, surprising that the two can be shown to have important common properties which can be exploited for parallelism.

Next, we prove the key structural result of our work which is a proof that well-behaved linkage functions always result in low height dendrograms. In particular, we show that $c$-approximate HAC with a well-behaved linkage function gives a dendrogram of height at most about $\tilde{O}\left((ck)^k\right)$ in 
% \knote{For some reason the tilde is rendered too high over O here and in the following} 
$\mathbb{R}^k$.\footnote{Throughout this work we use $\tilde{O}$ notation to hide $\poly(\log n)$ terms.} It follows that, for instance, the height of the dendrogram is always at most $\poly(\log n)$ as long as $k = O(\log \log n / \log \log \log n)$ and $c = O(1)$. Likewise, it is not too hard to see that our height bounds are essentially best possible for centroid by considering points placed uniformly $2$ apart around a unit sphere and a ``heavy'' point at the origin in $\mathbb{R}^k$.\footnote{$\Omega(k^k)$ points can be placed on the sphere by packing, and they merge into the center one by one.} 
% \enote{Can we write a proof of this somewhere?}

We then leverage the low height of the dendrogram to design NC algorithms. In particular, we show that if the dendrogram has height $h$ for a well-behaved linkage function, then 
% as long as this linkage function also satisfies one more mild assumption---which we call \emph{relaxed-reducibility}---
there exist parallel algorithms for $(1+\eps)$-approximate HAC with $\tilde{O}(h\ell^k)$ depth, where $\ell$ is an auxiliary parameter (discussed later) that is always at most $h$.
% As both centroid and Ward's satisfy relaxed-reducibility,
Combining this with our height bounds, and utilizing state-of-the-art parallel nearest neighbor search (NNS) data structures, yields $\tO(n)$ work NC algorithms for $(1+\epsilon)$-approximate centroid linkage when $k=O(1)$ and Ward's linkage when $k=O(\log\log n/\log\log\log n)$. 
Our algorithms are thus {\em nearly work-efficient}\footnote{A parallel algorithm is \emph{nearly work-efficient} if its work matches that of the best-known sequential algorithm up to polylogarithmic factors.} 
% \will{I don't know what this means but maybe most readers would?} 
compared to existing results for approximate centroid and Ward's HAC in the sequential setting~\cite{abboudhac, bateni2024efficient}.
% \kishen{Maybe mention that they are work-efficient? Using low-dim NNS in our Centroid paper and the Amir Abboud paper yields the optimal sequential algorithms}

Note that, even for $k=1$, it is not clear that parallel algorithms should be possible for centroid linkage. For instance, consider $n$ points placed along the line where the distance between the $i$th and $(i+1)$th point is $1+i \cdot \eps$ for some small $\eps > 0$ as in \Cref{fig:path}. Initially, the $(i+1)$th point would like to merge with the $i$th point for every $i$. As such, there is a chain of ``dependencies'' of length $\Omega(n)$ which must be resolved before the $(n-1)$th point can know if it should next merge with the $(n-2)th$ point or the $n$th point. Such examples would seem to preclude efficient low-depth parallel algorithms.

\begin{figure}
\begin{subfigure}[b]{0.48\textwidth}
        \centering
        \includegraphics[width=\textwidth,trim=0mm 0mm 0mm 280mm, clip]{./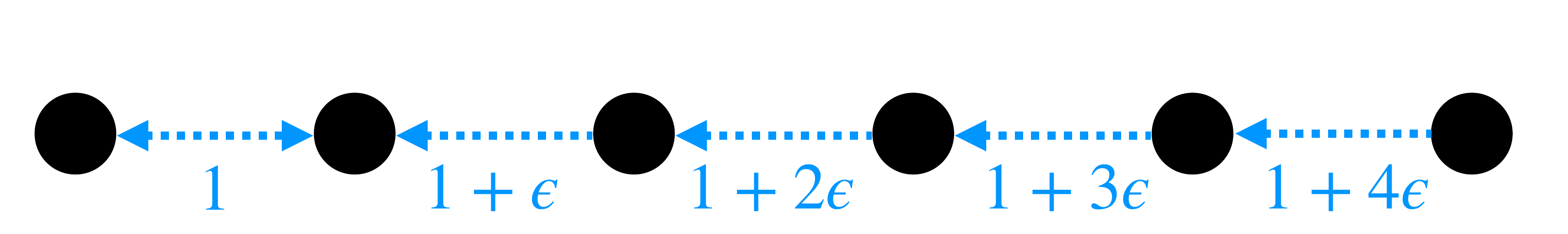}
        \caption{Initial $\mcP$.}\label{sfig:path1}
    \end{subfigure}  \hfill
    \begin{subfigure}[b]{0.48\textwidth}
        \centering
        \includegraphics[width=\textwidth,trim=0mm 0mm 0mm 280mm, clip]{./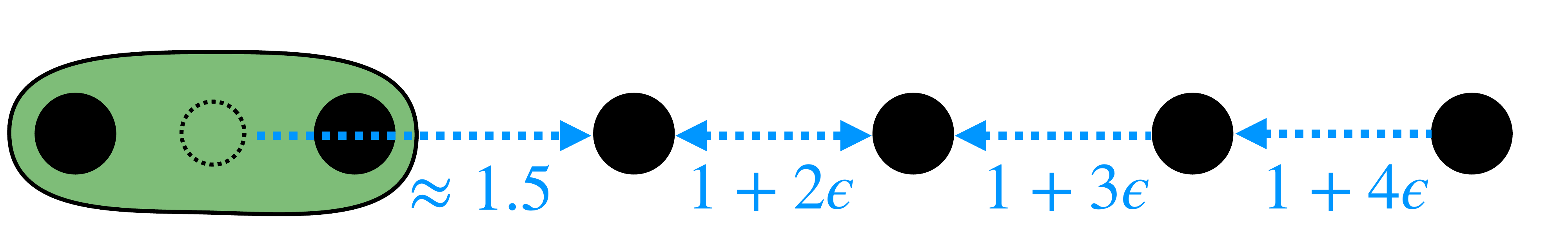}
        \caption{First merge.}\label{sfig:path2}
    \end{subfigure}   \hfill
    \begin{subfigure}[b]{0.48\textwidth}
        \centering
        \includegraphics[width=\textwidth,trim=0mm 0mm 0mm 280mm, clip]{./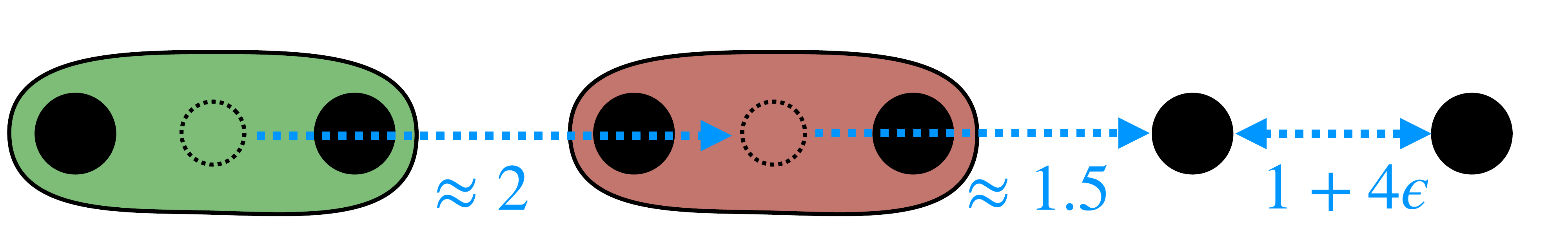}
        \caption{Second merge.}\label{sfig:path3}
    \end{subfigure}\hfill
    \begin{subfigure}[b]{0.48\textwidth}
        \centering
        \includegraphics[width=\textwidth,trim=0mm 0mm 0mm 280mm, clip]{./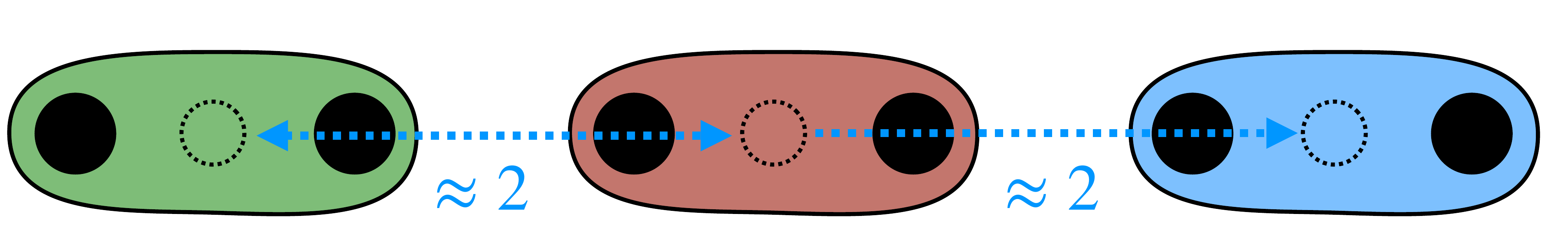}
        \caption{Third merge.}\label{sfig:path4}
    \end{subfigure}
    \caption{An example of centroid HAC with an $\Omega(n)$-length chain of dependencies. Each point/cluster points at its nearest other cluster. Notice that the $(n-1)$th point does not know if it merges left or right until $\Omega(n)$ merges have been performed.} \label{fig:path}
\end{figure}

Lastly, we show that the assumption of low dimensionality is necessary for NC algorithms for well-behaved linkage functions. In particular, we show that HAC with centroid-linkage is hard for the class Comparator Circuit (CC) when $k$ is linear in $n$. CC-hardness is widely believed to rule out NC algorithms \cite{cook2014complexity, mayr1992complexity,subramanian1994new}. While there exist a number of complexity results for HAC \cite{ dhulipala2022hierarchical, bateni2024s}, these results only hold for more general graph versions of HAC; to our knowledge, our hardness result is the first hardness for HAC in Euclidean space.

In what remains of this section, we give a more formal description of our results and discuss some of the techniques and challenges in proving them.

\subsection{Well-Behaved Linkage Functions}
% The general class of linkage functions for which we give results are what we call packable weight-reducible linkage functions. Roughly, a linkage function is packable if lower bounds on the linkage function distance imply lower bounds on the pairwise distances between centroids of clusters. A linkage function is weight-reducible if, whenever it fails to satisfy reducibility, it appreciable increases cluster size.

We begin by defining the properties that are satisfied by a well-behaved linkage function.\footnote{We note that most of our proofs work even if the below definitions have larger constants or even extra $\poly (\log \log n)$ or $\poly (\log n)$ slack in some cases; for simplicity of presentation we've generally stated things in terms of small fixed constants. In several places we've noted the tolerance that our definitions allow.}

Our first two properties generalize well-known properties of Euclidean distance to linkage functions. The first of these properties is what we call $\alpha$-packability. In particular, it is well-known that any collection of points in $\mathbb{R}^k$ that are pairwise at least some distance $r$ apart but contained in a radius $R$ ball consists of at most  $\left(\frac{R}{r}\right)^{O(k)}$ points (see \Cref{thm:packPoints}). $\alpha$-packability simply generalizes this to an arbitrary linkage function with a multiplicative slack of $\alpha$. Below and throughout this work, we let $B_d^\mcC(A,R) := \{ C \in \mcC : d(A,C) \leq R\}$ be the radius $R$ ball centered at $A$ with respect to clusters $\mcC$ and linkage distances $d$ where $A$ is the center of the ball.
\begin{restatable}[$\alpha$-Packability]{definition}{packableLinkage} \label{dfn:packableLinkage}
Linkage function $d$ is $\alpha$-packable if for all $r > 0$ and $\mcC \subseteq 2^{\mathbb{R}^k}$ such that $d(B,C) \geq r$ for all distinct $B,C \in \mcC$, we have for every $A \subseteq \mathbb{R}^k$ and $R \geq r$ that
\begin{align*}
    \left|B_d^\mcC(A,R) \right| \leq \alpha \cdot \left(\frac{R}{r} \right)^{O(k)}.
\end{align*}
\end{restatable}
\noindent Proving $O(1)$-packability for centroid is straightforward as centroid distances are just given by Euclidean distances between points in $\mathbb{R}^k$ and these pack as described above. However, it is much less clear for Ward's since Ward's is about greedy improvement of the $k$-means objective. Nonetheless, we show that Ward's is $\alpha$-packable for $\alpha = O(\log n)$ by using alternate characterizations of Ward's linkage. We note that, in fact, the only place we will use low dimension in all of our proofs is to bound the cardinality of balls of the above form.

Next, we appropriately generalize the triangle inequality to linkage functions. Again, since centroid-linkage is given by the Euclidean distance between points, it trivially satisfies the triangle inequality. However, it is not too hard to see that Ward's linkage can be arbitrarily far from satisfying the triangle inequality. Fortunately, our bounds on the dendrogram height (and therefore our definition of well-behaved linkage functions) only require the triangle inequality when the middle cluster is not the minimum size cluster. For this restricted setting, Ward's can be shown to satisfy the triangle inequality approximately. This gives us the following notion of an approximate triangle inequality.
\begin{restatable}[Approximate Triangle Inequality]{definition}{apxMetric} \label{dfn:apxMetric}
We say linkage function $d$ approximately satisfies the triangle inequality if
\begin{align*}
    d(A,C) \leq c_{\Delta} \cdot \left(d(A,B) + d(B,C) \right)
\end{align*}
for some fixed constant $c_{\Delta}$ for every $A, B, C \subseteq \mathbb{R}^k$ such that $|B| \geq \min(|A|,|C|)$.\footnote{In fact, all of our proofs work even if $c_{\Delta} = O(\log \log n)$; only the algorithm requires $c_\Delta=O(1)$.}
\end{restatable}

% OLD WEIGHT-STABLE
% \begin{restatable}[Weight-Stable]{definition}{weightStab} \label{dfn:weightStab}
% We say linkage function $d$ is weight-stable if for any $A, B, C \subseteq \mathbb{R}^k$ and $\eps > 0$, if $\eps |A| \geq |B|, |C|$ and $A$-$B$ is a locally optimal merge then
% \begin{align*}
%     d(A \cup B, C) \geq (1-\eps) \cdot d(A, C)
% \end{align*}
% \end{restatable}

Our last $3$ properties can be seen as stability properties. In particular, each states that over the course of one or many merges, the linkage function should not vary too wildly.

The first of these is what we call weight-stability. Roughly speaking, a linkage function is weight-stable if when a cluster merges with a much smaller cluster, this ``moves'' the cluster only a small amount. Since we are dealing with general clusters which cannot readily be summarized by a single point, the formal sense of how much a cluster ``moves'' is based on the linkage function distance between the cluster before and after the merge, as described below.\footnote{We note all of our proofs can be made to work even with any fixed constant in front of $\frac{|B|}{|A| + |B|} \cdot d(A, B)$.}
\begin{restatable}[Weight-Stable]{definition}{weightStab} \label{dfn:weightStab}
A linkage function $d$ is weight-stable if for any $A, B \subseteq \mathbb{R}^k$ we have
\begin{align*}
    d(A \cup B, A) \leq \frac{|B|}{|A| + |B|} \cdot d(A, B).
\end{align*}
\end{restatable}
\noindent Roughly speaking, weight-stability holds for centroid because when $A$ merges with $B$, the new centroid can be ``dragged'' at most the relative size of $B$ times how far the centroid of $B$ was from $A$. For Ward's, it follows from the \emph{Lance-Williams} characterization of Ward's linkage distance after a merge \cite{lance1967general} (see~\Cref{lem:LanceWilliamsWards}).

Our next stability property says that after performing a merge, our new distances to other clusters should be (up to the magnitude of the merge performed) at least the average of our prior distances. 
We call this property average-reducibility.\footnote{All of our proofs work if the inequality holds for any fixed convex combination bounded away from $1$. In particular, our proofs work if there is a constant $c \in (0,1)$ such that $d(A \cup B, C) \geq c\cdot d(A,C) + (1-c) \cdot d(B,C) - d(A,B)$.}
\begin{restatable}[Average-Reducibility]{definition}{avgRed} \label{dfn:avgRed}
Linkage function $d$ is average-reducible if for any $A, B, C \subseteq \mathbb{R}^k$ such that $|C| \geq |A|+ |B|$, we have
\begin{align*}
    d(A \cup B, C) \geq \frac{d(A,C) + d(B,C)}{2} - d(A,B).
\end{align*}
\end{restatable}
\noindent The name average-reducibility comes from the fact that average-reducibility closely resembles the well-studied property of reducibility, which states that if $A$ and $B$ are a minimum merge we have $d(A \cup B, C) \geq \min (d(A,C), d(B,C))$ \cite{lance1967general}. For centroid, average-reducibility follows from the fact that the centroid of two centroids lies at the weighted midpoint between the two along with the triangle inequality. For Ward's it is not true in general but, assuming $|C| \geq |A|+ |B|$, it follows from the Lance-Williams characterization.

Our last stability property is arguably the most straightforward. In particular, we say that a linkage function has poly-bounded diameter if, given a collection of clusters each consisting of a single point, the maximum linkage distance between any two subsets of these points is at most polynomially-larger. Below, and throughout this work, for $u, v \in \mathbb{R}^k$, we let $d(u, v) := d(\{u\}, \{v\})$.
\begin{restatable}[Poly-Bounded Diameter]{definition}{boundDiam} \label{dfn:polyDiam}
We say linkage function $d$ has poly-bounded diameter if given any $\mcP \subseteq \mathbb{R}^k$ such that $\Delta = \max_{u,v \in \mcP}d(u, v)$, we have $d(A,B) \leq \poly(\Delta \cdot |\mcP|)$ for any disjoint $A,B \subseteq \mcP$.
\end{restatable}
\noindent The above is true for centroid since all centroids of subsets of a $\mcP \subseteq \mathbb{R}^k$ lie in the convex hull of $\mcP$. It is similarly easy to show for Ward's.

If a linkage function satisfies the above properties, then we say that it is well-behaved.
\begin{restatable}[Well-Behaved]{definition}{wellBeh} \label{dfn:wellBehaved}
We say linkage function $d$ is well-behaved if it is $\alpha$-packable for $\alpha = \tilde{O}(1)$ (\Cref{dfn:packableLinkage}), approximately satisfies the triangle inequality (\Cref{dfn:apxMetric}), is weight-stable (\Cref{dfn:weightStab}), average-reducible (\Cref{dfn:avgRed}) and has poly-bounded diameter (\Cref{dfn:polyDiam}).
\end{restatable}
\noindent As discussed above, we show that both centroid and Ward's are well-behaved.
\begin{restatable}{theorem}{centroidWB}\label{thm:centroidWB}
    The centroid linkage function $\dCen$ is well-behaved (\Cref{dfn:wellBehaved}).
\end{restatable}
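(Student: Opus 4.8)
The plan is to verify each of the five defining properties of a well-behaved linkage function (\Cref{dfn:packableLinkage}, \Cref{dfn:apxMetric}, \Cref{dfn:weightStab}, \Cref{dfn:avgRed}, \Cref{dfn:polyDiam}) directly from the identity $\dCen(A,B) = \|\mu(A) - \mu(B)\|$ together with the elementary reweighting identity $\mu(A \cup B) = \frac{|A|\mu(A) + |B|\mu(B)}{|A|+|B|}$ for disjoint finite $A, B$. Since a centroid distance is literally the Euclidean distance between the points $\mu(\cdot) \in \mathbb{R}^k$, most properties are immediate and we obtain the best possible constants throughout.

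First I would handle the two metric-like properties. The approximate triangle inequality is automatic: $\dCen$ is an honest metric on centroids, so $\dCen(A,C) \le \dCen(A,B) + \dCen(B,C)$ with $c_\Delta = 1$, and no size restriction is needed. For $\alpha$-packability, I would observe that if $\dCen(B,C) \ge r$ for all distinct $B,C \in \mcC$ then the points $\{\mu(C) : C \in \mcC\}$ are pairwise at Euclidean distance at least $r$ (in particular, distinct), while every member of $B_{\dCen}^{\mcC}(A,R)$ has its centroid inside the radius-$R$ Euclidean ball about $\mu(A)$; the standard Euclidean packing bound (\Cref{thm:packPoints}) then yields $|B_{\dCen}^{\mcC}(A,R)| \le (R/r)^{O(k)}$, so $\alpha = O(1)$ suffices.

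Next I would dispatch the three stability properties. Weight-stability is a one-line computation: the reweighting identity gives $\mu(A \cup B) - \mu(A) = \frac{|B|}{|A|+|B|}(\mu(B) - \mu(A))$, so in fact $\dCen(A \cup B, A) = \frac{|B|}{|A|+|B|}\dCen(A,B)$, with equality. For average-reducibility, write $\mu(A \cup B) = \lambda\mu(A) + (1-\lambda)\mu(B)$ with $\lambda = |A|/(|A|+|B|) \in (0,1)$; then $\mu(A \cup B) - \mu(C) = (\mu(A) - \mu(C)) + (1-\lambda)(\mu(B) - \mu(A))$, and the triangle inequality gives $\dCen(A \cup B, C) \ge \dCen(A,C) - \dCen(A,B)$, and symmetrically $\dCen(A \cup B, C) \ge \dCen(B,C) - \dCen(A,B)$; averaging the two bounds gives
\[ \dCen(A \cup B, C) \ \ge\ \tfrac12\bigl(\dCen(A,C) + \dCen(B,C)\bigr) - \dCen(A,B), \]
in fact with an arbitrary convex combination of the two terms, which is stronger than the definition requires and needs no size hypothesis. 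For poly-bounded diameter, note that for disjoint $A, B \subseteq \mcP$ we have $\mu(A) - \mu(B) = \frac{1}{|A||B|}\sum_{a \in A,\, b \in B}(a - b)$, an average of vectors of norm at most $\Delta := \max_{u,v \in \mcP}\|u - v\|$, so $\dCen(A,B) \le \Delta \le \poly(\Delta \cdot |\mcP|)$.

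I do not expect any genuine obstacle here: centroid linkage satisfies all five properties essentially by inspection, with optimal constants. The only points meriting a moment's care are (i) in packability, that the pairwise-separation hypothesis of \Cref{dfn:packableLinkage} forces distinct clusters to have distinct centroids, so that the Euclidean packing bound may be applied to $\{\mu(C)\}_{C \in \mcC}$ with no centroid-collision issue, and (ii) the two-sided use of the triangle inequality that produces average-reducibility. By contrast, the real work lies in the analogous verification for Ward's linkage, where metric-type bounds fail badly and one must instead rely on the Lance--Williams formula and alternative characterizations of Ward's distance.
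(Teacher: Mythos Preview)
Your proposal is correct and follows essentially the same approach as the paper: verify each of the five properties directly from $\dCen(A,B)=\|\mu(A)-\mu(B)\|$ and the centroid reweighting identity, invoking \Cref{thm:packPoints} for packability. The only cosmetic difference is in average-reducibility, where the paper sums the two triangle inequalities and uses the collinearity fact $\dCen(A,A\cup B)+\dCen(A\cup B,B)=\dCen(A,B)$ (yielding the slightly sharper $-\tfrac12\dCen(A,B)$ before relaxing), whereas you bound each $\dCen(A\cup B,\cdot)\le\dCen(A,B)$ separately; both reach the required inequality.
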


\begin{restatable}{theorem}{wardsWB}\label{thm:wardWB}
    Ward's linkage function $\dWard$ is well-behaved (\Cref{dfn:wellBehaved}).
\end{restatable}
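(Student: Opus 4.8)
The plan is to derive all five properties from the closed form of Ward's linkage. For disjoint $A,B$ the identity $\Delta(A\cup B)=\Delta(A)+\Delta(B)+\tfrac{|A|\,|B|}{|A|+|B|}\|\mu(A)-\mu(B)\|^2$ gives
\[
\dWard(A,B)=\frac{|A|\,|B|}{|A|+|B|}\,\|\mu(A)-\mu(B)\|^2 ,
\]
which I would adopt as the working definition (it also makes sense verbatim when clusters overlap, as in weight-stability). From it one recovers the Lance--Williams recurrence for Ward's (\Cref{lem:LanceWilliamsWards}): for pairwise disjoint $A,B,C$,
\[
\dWard(A\cup B,\,C)=\frac{(|A|+|C|)\,\dWard(A,C)+(|B|+|C|)\,\dWard(B,C)-|C|\,\dWard(A,B)}{|A|+|B|+|C|} .
\]
I would prove these two identities first; everything else is built on them.

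Four of the properties then follow by direct substitution. \textbf{Poly-bounded diameter}: centroids of subsets of $\mcP$ lie in $\operatorname{conv}(\mcP)$, so $\dWard(A,B)\le\min(|A|,|B|)\cdot\|\mu(A)-\mu(B)\|^2\le|\mcP|\cdot\diam(\mcP)^2=2|\mcP|\cdot\Delta\le\poly(|\mcP|\cdot\Delta)$, where $\Delta=\max_{u,v}\dWard(u,v)=\tfrac12\diam(\mcP)^2$. \textbf{Weight-stability}: since $\mu(A\cup B)-\mu(A)=\tfrac{|B|}{|A|+|B|}(\mu(B)-\mu(A))$, the closed form gives $\dWard(A\cup B,A)=\tfrac{|B|}{2|A|+|B|}\dWard(A,B)\le\tfrac{|B|}{|A|+|B|}\dWard(A,B)$. \textbf{Average-reducibility} (case $|C|\ge|A|+|B|$): in the recurrence the coefficients of $\dWard(A,C)$ and $\dWard(B,C)$ are $\tfrac{|A|+|C|}{|A|+|B|+|C|}$ and $\tfrac{|B|+|C|}{|A|+|B|+|C|}$, both $\ge\tfrac12$ by the hypothesis, while the coefficient $\tfrac{|C|}{|A|+|B|+|C|}$ of $\dWard(A,B)$ is $\le1$; since $\dWard\ge0$, this yields $\dWard(A\cup B,C)\ge\tfrac12(\dWard(A,C)+\dWard(B,C))-\dWard(A,B)$. \textbf{Approximate triangle inequality} (case $|B|\ge\min(|A|,|C|)$): taking WLOG $|A|\le|C|$ so $|A|\le|B|,|C|$, bound $\dWard(A,C)\le|A|\,\|\mu(A)-\mu(C)\|^2\le2|A|\bigl(\|\mu(A)-\mu(B)\|^2+\|\mu(B)-\mu(C)\|^2\bigr)$; since $\tfrac{xy}{x+y}\ge\tfrac12\min(x,y)\ge\tfrac12|A|$ whenever $x,y\ge|A|$, we get $\|\mu(A)-\mu(B)\|^2\le\tfrac2{|A|}\dWard(A,B)$ and $\|\mu(B)-\mu(C)\|^2\le\tfrac2{|A|}\dWard(B,C)$, hence $\dWard(A,C)\le4\bigl(\dWard(A,B)+\dWard(B,C)\bigr)$, i.e.\ $c_\Delta=4$.

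The substantive property, and the one I expect to be the main obstacle, is \textbf{$\alpha$-packability}. The difficulty is the size-dependent weight $\tfrac{|A|\,|C|}{|A|+|C|}$ in $\dWard(A,C)$: a light cluster can be Ward-close to clusters that are Ward-far from one another, so --- unlike centroid linkage --- one cannot pack all of $\mcC$ against a single Euclidean length scale. My plan is to partition $\mcC$ by cardinality into $O(\log n)$ dyadic size classes (cardinality in $[2^j,2^{j+1})$), bound the ball's intersection with each class separately, and sum. Within one class all cardinalities agree up to a factor of two, so the mutual weights $\tfrac{|B|\,|C|}{|B|+|C|}$ agree up to a factor of four; thus $\dWard(B,C)\ge r$ forces $\|\mu(B)-\mu(C)\|=\Omega(\sqrt{r/2^j})$, and once the weight $\tfrac{|A|\,|C|}{|A|+|C|}$ is pinned down, $\dWard(A,C)\le R$ confines the relevant centroids to a single Euclidean ball about $\mu(A)$; the Euclidean packing bound (\Cref{thm:packPoints}) then caps the class's contribution by $(R/r)^{O(k)}$, for a total of $\tilde O(1)\cdot(R/r)^{O(k)}$. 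The crux is controlling the interaction between size classes --- equivalently, the regime in which $A$ is much lighter than the clusters of a class, where the Euclidean radius of the ball is governed by $|A|$ rather than by the class scale $2^j$. Handling this is exactly where the raw centroid formula must be replaced by an alternate characterization of Ward's distance (and where one presumably exploits that in the applications $r$ is the current minimum merge value and hence small), and it is the part I would expect to demand the most care.
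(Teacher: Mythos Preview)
Your proposal is correct and follows the paper's proof essentially verbatim. The four ``direct'' properties are argued identically: weight-stability via the Lance--Williams update (giving $\dWard(A\cup B,A)=\tfrac{|B|}{2|A|+|B|}\dWard(A,B)$), average-reducibility via Lance--Williams under $|C|\ge|A|+|B|$, the approximate triangle inequality with $c_\Delta=4$ via the sandwich $\tfrac12\min(|A|,|B|)\|\mu(A)-\mu(B)\|^2\le\dWard(A,B)\le\min(|A|,|B|)\|\mu(A)-\mu(B)\|^2$, and poly-bounded diameter via centroids lying in the convex hull. These are the paper's proofs.

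For $\alpha$-packability you also propose exactly what the paper does: split $\mcC$ into $O(\log n)$ dyadic size classes $\mcC_i=\{C:|C|\in[2^{i-1},2^i]\}$, observe that within a class the centroids are pairwise $\Omega(\sqrt{r/2^i})$-separated, confine them to a Euclidean ball of radius $O(\sqrt{R/2^i})$ about $\mu(A)$, and invoke \Cref{thm:packPoints}. The concern you flag --- the regime $|A|\ll 2^i$, where the ball radius is governed by $|A|$ rather than $2^i$ --- is legitimate, but the paper's written proof does not handle it any differently: their containment step combines $\dWard(X,Y)\ge\tfrac{|X|}{2}\|\mu(X)-\mu(Y)\|^2$ with $\dWard(X,Y)\le R$ and $|X|\ge 2^{i-1}$, which gives the desired radius only when the second endpoint is at least as heavy as $X$ and within Ward-distance $R$, i.e.\ effectively when the center $A$ satisfies $|A|\gtrsim 2^i$. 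So there is no ``alternate characterization'' you are missing; the paper simply carries out the case you already know how to do. In the height-bound application where packability is used (the potential bound in \Cref{lem:phiInit}) the center $\tilde X_j$ is only ever compared against clusters of size $<|\tilde X_j|/2$, so that case suffices there. Your plan is therefore at least as complete as the paper's.
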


%\enote{squared centroid or something else?}

We prove the above theorems in \Cref{sec:packablelinkage}.

\subsection{Bounds on Dendrogram Height}
Having formally defined well-behaved linkage functions, we now state our dendrogram height bound which says, roughly, that any $c$-approximate HAC resulting from a well-behaved linkage function in $\mathbb{R}^k$ has height $\tilde{O}\left( (ck)^k\right)$. 
% \knote{How many $\log n$ factors are we hiding here? If it's just one, maybe it would actually make sense to spell it out?} \kishen{I think it is exactly $\alpha \log n$. We define $\alpha=\tO(1)$ for general well-behaved linkages, but it is at most $\log n$ for Ward's and constant for centroid.}\enote{I added a blurb about this after the statement of the theorem.}

To formally state our result, we must first formalize some points. First, our height bounds will hold regardless of how ties are broken for HAC. In particular, at any given step of a $c$-approximate HAC there might be multiple candidate $A$ and $B$ such that $d(A, B) \leq c \cdot \min_{C, D \in \mcC} d(C,D)$ (even for $c = 1$). We say that \emph{any} $c$-approximate HAC has a dendrogram height at most $h$ if regardless of how we make these choices, the result of HAC is always a dendrogram with height at most $h$. Likewise, we define the aspect ratio of an instance of HAC given by point set $\mcP \subseteq \mathbb{R}^k$ and linkage function $d$ as
\begin{align*}
        \frac{\max_{u,v \in \mcP} d(u,v)}{\min_{u,v \in \mcP} d(u,v)}.
\end{align*}
With the above formalism in hand, we can now formally define  our key structural result bounding the height of HAC dendrograms for well-behaved functions.
\begin{restatable}{theorem}{heightBound}\label{lem:heightBound}
    Suppose $d$ is a well-behaved linkage function (as defined in \Cref{dfn:wellBehaved}). Then any $c$-approximate HAC for $d$ with $\poly(n)$ aspect ratio has a dendrogram of height $\tilde{O} \left(( k  \cdot c)^{O(k)} \right)$ in $\mathbb{R}^k$. 
    % \will{Do we need to specify in $\mathbb{R}^k$? We defined hac only for $\mathbb{R}^k$.}\kishen{might be worth leaving it in the theorem statement for completeness.}\ellis{Yeah, I say keep so it's easy to tell what $k$ is if skimming}
\end{restatable}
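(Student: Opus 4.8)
The plan is to bound the length of a longest root-to-leaf path in the dendrogram, which upper bounds its height. Fix such a path and, reading it from the leaf up, write the clusters along it as $A_0 \subsetneq A_1 \subsetneq \cdots \subsetneq A_m$, where $A_0$ is a singleton, $A_m$ is the root, each $A_{i+1} = A_i \cup Y_i$ for the cluster $Y_i$ merged into the path cluster at that point, and $v_i := d(A_i, Y_i)$ is the value of that merge. Everything rests on two elementary facts. First, since every merge of the $c$-approximate HAC is $c$-approximately minimum, $v_i \le c\cdot d(X,X')$ for every pair of clusters $X,X'$ present at that point; equivalently that clustering is pairwise $\ge v_i/c$-separated. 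Second, poly-bounded diameter together with the $\poly(n)$ aspect-ratio hypothesis forces all merge values to lie within a $\poly(n)$ factor of one another, so there are only $O(\log n) = \tilde{O}(1)$ dyadic value scales $[v,2v)$ to consider. I would additionally partition the path into $\tilde{O}(1)$ \emph{size classes} according to $\lfloor \log_2 |A_i|\rfloor$ --- this is nondecreasing along the path, and (with a slightly finer geometric cutoff) can be arranged so that within a class the path cluster grows by less than a factor of two; consequently, within a class every $Y_i$ has $|Y_i| < |A_i|$ and the total mass the path cluster absorbs over the whole class is at most $|A_i|$ for every $i$ in the class.

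Inside a size class I would split the merges into \emph{balanced} ones, where $|Y_i| \ge |A_i|/\Theta(c_\Delta c)$, and \emph{unbalanced} ones. Each balanced merge multiplies $|A_i|$ by a factor $1+\Omega(1/c)$, so over the entire path there are only $O(c\log n) = \tilde{O}(c)$ of them. It thus suffices to bound, for a fixed size class and a fixed value scale $[v,2v)$, the number of unbalanced merges $A_i \cup Y_i$ with $v_i \in [v,2v)$ by $\tilde{O}((kc)^{O(k)})$, and then multiply by the $\tilde{O}(1)$ size classes and $\tilde{O}(1)$ value scales and add the $\tilde{O}(c)$ balanced merges. For such merges, weight-stability gives $d(A_{i+1},A_i) \le \tfrac{|Y_i|}{|A_i|+|Y_i|}v_i \le v_i/\Theta(c_\Delta c)$ at each step, and summing $|Y_i|v_i$ over these steps against the mass budget of the size class shows the path cluster's total displacement, \emph{while it is at this value scale in this size class}, is only $O(v)$; the (size-restricted) approximate triangle inequality of \Cref{dfn:apxMetric} is exactly what is needed to turn such displacement statements into distance bounds, and it is legitimate here because the path cluster is always at least as large as the clusters it is being compared against, so it may always serve as the middle term.

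The core of the argument, and the step I expect to be the real obstacle, is to show that the path cluster can perform only $\tilde{O}((kc)^{O(k)})$ merges at a fixed value scale inside a fixed size class. The intended mechanism is a ``local reservoir'' argument: at any point during these merges the ambient clustering is $\Omega(v/c)$-separated, so by $\alpha$-packability with $\alpha = \tilde{O}(1)$ there are only $\tilde{O}((kc)^{O(k)})$ clusters within distance $O(v)$ of the path cluster; weight-stability keeps the path cluster essentially pinned in place throughout this phase, so one must show this neighborhood is roughly stable --- a merge of two clusters that are both near the path cluster replaces two reservoir members by one, and average-reducibility (\Cref{dfn:avgRed}, whose size hypothesis is met because the merged pair is smaller than the path cluster within a size class) guarantees that a merge of clusters that are \emph{not} near the path cluster cannot produce a cluster that suddenly is near it --- and then argue that each merge of the path cluster makes definite progress against this reservoir, e.g.\ by charging it to a distinct reservoir member or via a potential on the reservoir. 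The delicate points are (i) ensuring the approximate triangle inequality and weight-stability are each invoked only $O(1)$ times along any chain, since iterating \Cref{dfn:apxMetric} $\omega(1)$ times would inflate distances by $c_\Delta^{\omega(1)}$ and make the packing bound vacuous --- this is why one argues scale-by-scale and size-class-by-size-class rather than over a whole run --- and (ii) handling the off-path merges that occur between consecutive path merges, which is precisely where average-reducibility controls what can and cannot enter the reservoir.

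Putting the pieces together --- $\tilde{O}(1)$ size classes, $\tilde{O}(1)$ value scales, $\tilde{O}((kc)^{O(k)})$ unbalanced merges for each (size class, scale) pair, and $\tilde{O}(c)$ balanced merges in total --- bounds the length of any root-to-leaf path, and hence the dendrogram height, by $\tilde{O}((k\cdot c)^{O(k)})$ in $\mathbb{R}^k$, with the $\tilde{O}(\cdot)$ absorbing the packing slack $\alpha$, the number of size classes, and the number of value scales. I expect the residual $(kc)^{O(k)}$ rather than merely $c^{O(k)}$ to come out of the packing step --- either because the relevant neighborhood has radius $\poly(k,c)\cdot v$, or because a $\poly$-in-$k$ factor is lost per value scale or per ``level'' of the displacement-control recursion --- but in any case the dimension $k$ enters only through \Cref{dfn:packableLinkage}, consistent with the remark in the introduction that low dimensionality is used nowhere else.
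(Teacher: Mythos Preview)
Your high-level decomposition --- tracking a root-to-leaf path, bucketing by value scale and by size class, and isolating a ``local reservoir'' around the path cluster via packability --- is essentially the paper's strategy. The paper's phases (\Cref{dfn:phase}) are defined by exactly the three conditions you identify (value scale, size class, drift control), and its bound on the number of phases (\Cref{lem:boundLargeMerges,lem:boundSizeMerges,lem:boundDriftMerges}) follows the same logic you sketch.

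The gap is in your reservoir argument. You write that ``average-reducibility guarantees that a merge of clusters that are not near the path cluster cannot produce a cluster that suddenly is near it.'' This is false: average-reducibility (\Cref{dfn:avgRed}) says only $d(A\cup B,\tilde X)\ge\tfrac{1}{2}(d(A,\tilde X)+d(B,\tilde X))-d(A,B)$, so two clusters at distance $3\delta$ from $\tilde X$ can merge (at value $\le 2\delta$) and land at distance $\delta$. Clusters \emph{can} drift into the reservoir from outside; a counting or charging argument over a fixed-radius ball will not close. This is precisely the non-monotonicity obstruction the paper highlights in \Cref{fig:circ}.

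The paper's fix is to replace the hard-threshold reservoir by the exponentially-weighted potential $\phi_j(\mcC)=\sum_{|A|<|\tilde X_j|/2}\exp_2\bigl(-d(\tilde X_j,A)/(4\tilde\delta_j)\bigr)$. Convexity of $x\mapsto 2^{-x}$ combined with average-reducibility gives $\val_j(A\cup B)\le\tfrac{1}{\sqrt 2}(\val_j(A)+\val_j(B))$ (\Cref{lem:centroidConvex}), so $\phi_j$ is genuinely non-increasing under \emph{all} merges, including the off-path ones that create inflow in your picture. Each merge of the path cluster removes a term of size $\Omega(1)$ from the sum, and the initial value is bounded by summing over shells: the $l$th shell has $\alpha\cdot (c\cdot 2^l)^{O(k)}$ clusters by packability, each weighted by $\exp_2(-2^{l-3})$, and the crossover at $l\approx\log k+\log\log c$ is what produces the $(kc)^{O(k)}$ rather than $c^{O(k)}$ --- so your guess that the extra $k^{O(k)}$ comes from a $\poly(k)$-radius neighborhood is not quite right; it comes from the tail of this shell sum.

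Your hedge ``or via a potential on the reservoir'' is exactly the right instinct, but the specific potential and the convexity-plus-average-reducibility calculation are the substance of the proof, not an afterthought.
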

\noindent As an immediate consequence of the above, we have that $c$-approximate HAC for centroid and Ward's has height at most $\tilde{O} \left(( k  \cdot c)^{O(k)} \right)$ in $\mathbb{R}^k$, assuming $\poly(n)$ aspect ratio. More specifically, above, our $\tilde{O}$ notation hides only $O(\alpha \cdot \log n)$ where $\alpha$ is the packability as in \Cref{dfn:packableLinkage}. For centroid we have $\alpha = O(1)$ and for Ward's we have $\alpha = O(\log n)$, giving respective final height bounds of $O(\log n \cdot (k \cdot c)^{O(k)})$ and $O(\log^2 n \cdot (k \cdot c)^{O(k)})$. These height bounds may be interesting in their own right as they, perhaps, explain the practical utility of these linkage functions---if the underlying dimension of the data is low then they tend to produce balanced hierarchies.

\subsubsection{Intuition for Bounds on Dendrogram Height}
Our proof bounding the dendrogram height is the most technical part of our work and is based on a potential function argument. In particular, we fix an arbitrary point $x_0 \in \mcP$ and then argue that the number of times the cluster containing $x_0$ participates in a merge is at most $\tilde{O} \left(( k  \cdot c)^{O(k)} \right)$. In order to do so, we divide HAC into $\tilde{O} \left(( k  \cdot c)^{O(k)} \right)$ phases where in each phase the cluster containing $x_0$ participates in at most $\tilde{O} \left(( k  \cdot c)^{O(k)} \right)$-many merges. To bound the number of merges in each phase, we define a suitable non-increasing potential which starts at $\tilde{O} \left(( k  \cdot c)^{O(k)} \right)$ and reduces by $\Omega(1)$ each time the cluster containing $x_0$ participates in a merge.

In order to motivate and give the intuition behind this potential function argument, in the rest of this section we very roughly sketch the argument for exact (i.e.\ $1$-approximate) centroid HAC. Centroid HAC has a convenient interpretation where, instead of maintaining clusters, we simply maintain the centroids of clusters, associating with each centroid a weight equal to the size of the corresponding cluster. Centroid HAC then repeatedly takes the two closest centroids and merges them into a new centroid which is the weighted average of the merged centroids and whose weight is the sum of the merged centroids.

For the rest of this sketch, let $X$ be the cluster containing $x_0$ and let $x$ be its centroid. Suppose, for the moment, that:
\begin{enumerate}
    \item[(1)] The weight of $x$ is far larger than the weight of any other centroid and;
    \item[(2)] When we merge two centroids $y$ and $z$ that are not equal to $x$, then the newly created centroid is at least as far from $x$ as both $y$ and $z$ were.
\end{enumerate}
Note that, by (1), it follows that when $x$ merges with another centroid, the new centroid (which we will still refer to as ``$x$'') lies essentially at the same place that $x$ was prior to the merge.

Under these assumptions, consider the state of our centroids when $x$ merges with a centroid at some distance $\delta$. Then, by definition of (exact) HAC, we know that every pair of centroids are at distance at least $\delta$. By packability (\Cref{dfn:packableLinkage}), it follows that the number of centroids within distance $2\delta$ of $x$ at this moment is at most $2^{O(k)}$. Since we are assuming that $x$ has large weight and so does not move when it merges, and newly created centroids cannot get closer to $x$, the number of merges that $x$ can perform until there are no other centroids within $2\delta$ of it is at most $2^{O(k)}$. Thus, after performing $2^{O(k)}$ merges, $x$ has its closest centroid distance increase by a factor of $2$ and, assuming the maximum centroid distance of a merge that $x$ can perform is polynomially-bounded, this can happen at most $O(\log n)$ times, giving a height bound of at most $\log n \cdot 2^{O(k)}$. See \Cref{sfig:circ1}/\ref{sfig:circ2}/\ref{sfig:circ3}/\ref{sfig:circ4}/\ref{sfig:circ5}. We next discuss how to dispense with the above $2$ assumptions. 

For (1), notice that all we are using is that the initial position of $x$ when it first performs a distance $\delta$ merge is very close to the position of $x$ when it first performs a distance $\geq 2 \delta$ merge. Since $x$ only merges with centroids at distance at most $2\delta$ from it for this period, it follows that the only way $x$ can move more than $\delta$ from its initial position during this period is if the total weight of what $x$ merges with is on the order of the weight of $x$ itself. In other words, roughly speaking, (1) only fails to hold if $x$ has its weight increase by some multiplicative constant. Since weights are at least $1$ and bounded by $n$, this can only happen $O(\log n)$ times which is still consistent with a $\log n \cdot 2^{O(k)}$ height bound. Summarizing the above, we may divide HAC into about $O(\log n)$ phases where in the $j$th phase $x$ does not drift far from its starting position and always performs merges of distance at most $\approx 2^j$ and in each phase $x$ performs at most $2^{O(k)}$ merges.

Dispensing with (2) is more difficult. Specifically, consider a single phase where we are performing merges of value at most $2\delta$. 
% \knote{I've been changing "merges of value" to "merges of distance". I hope this is not inconsistent with the rest of the paper}\kishen{I think I have also used "merges of value". Distance sounds like it should be a metric?}\knote{Sg, but in this case let's define "value of a merge". I added a definition in the first paragraph of the intro, but please feel free to move it.} 
Since (even exact) centroid HAC is not monotone, when two clusters, say $y$ and $z$, at distance about $\delta$ from each other merge, they can move up to (about) $\delta$ closer to $x$. As such, even if there is nothing within $2\delta$ of $x$ at one point in HAC, a pair can later merge, produce a new centroid within $2\delta$ of $x$, and reduce the minimum distance of $x$ back below $2 \delta$, breaking the above argument.

However, notice that in order to generate a new centroid at distance $\delta$ from $x$ we must merge \emph{two} centroids within distance $2\delta$ from $x$ (assuming we are doing exact HAC). More generally, to generate a new centroid at distance $i \delta$ from $x$, we must merge two centroids at distance at most $(i+1)\delta$; see \Cref{sfig:circ6}/\ref{sfig:circ7}. In this way, centroids at distance $i\delta$ could eventually drift within $2\delta$ of $x$ but in order to do so the number of centroids at distance $i \delta$ with which they have to merge is roughly $2^i$. As such, in terms of violating monotonicity, a centroid at distance $i\delta$ ``counts'' for $1/2^i$ of a centroid at, say, distance $\delta$ from $x$. See \Cref{sfig:circ6}/\ref{sfig:circ7}/\ref{sfig:circ8}.

\begin{figure}
\begin{subfigure}[b]{0.24\textwidth}
        \centering
        \includegraphics[width=\textwidth,trim=0mm 0mm 300mm 0mm, clip]{./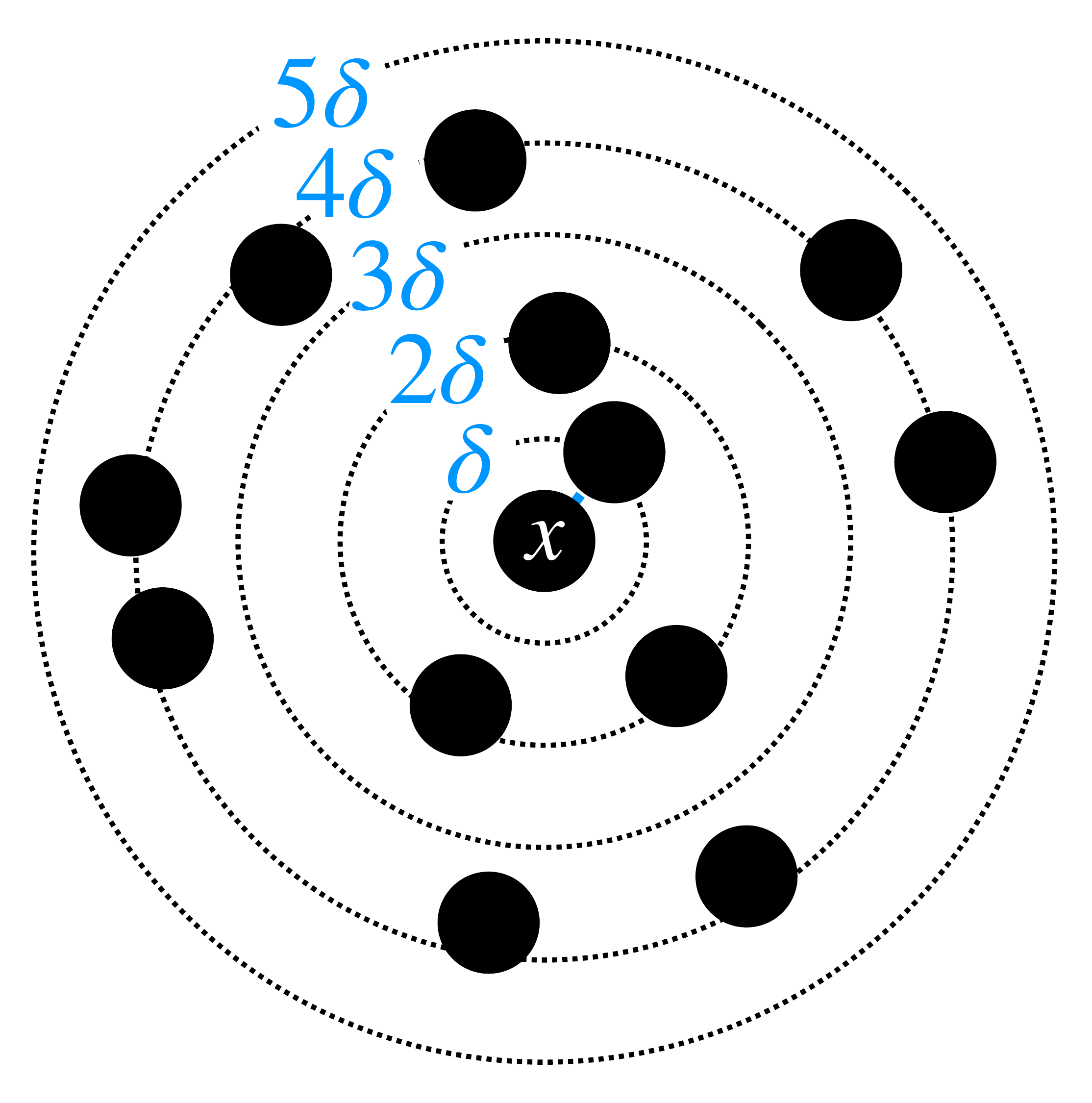}
        \caption{Initial $\mcP$.}\label{sfig:circ1}
    \end{subfigure}  \hfill
    \begin{subfigure}[b]{0.24\textwidth}
        \centering
        \includegraphics[width=\textwidth,trim=0mm 0mm 300mm 0mm, clip]{./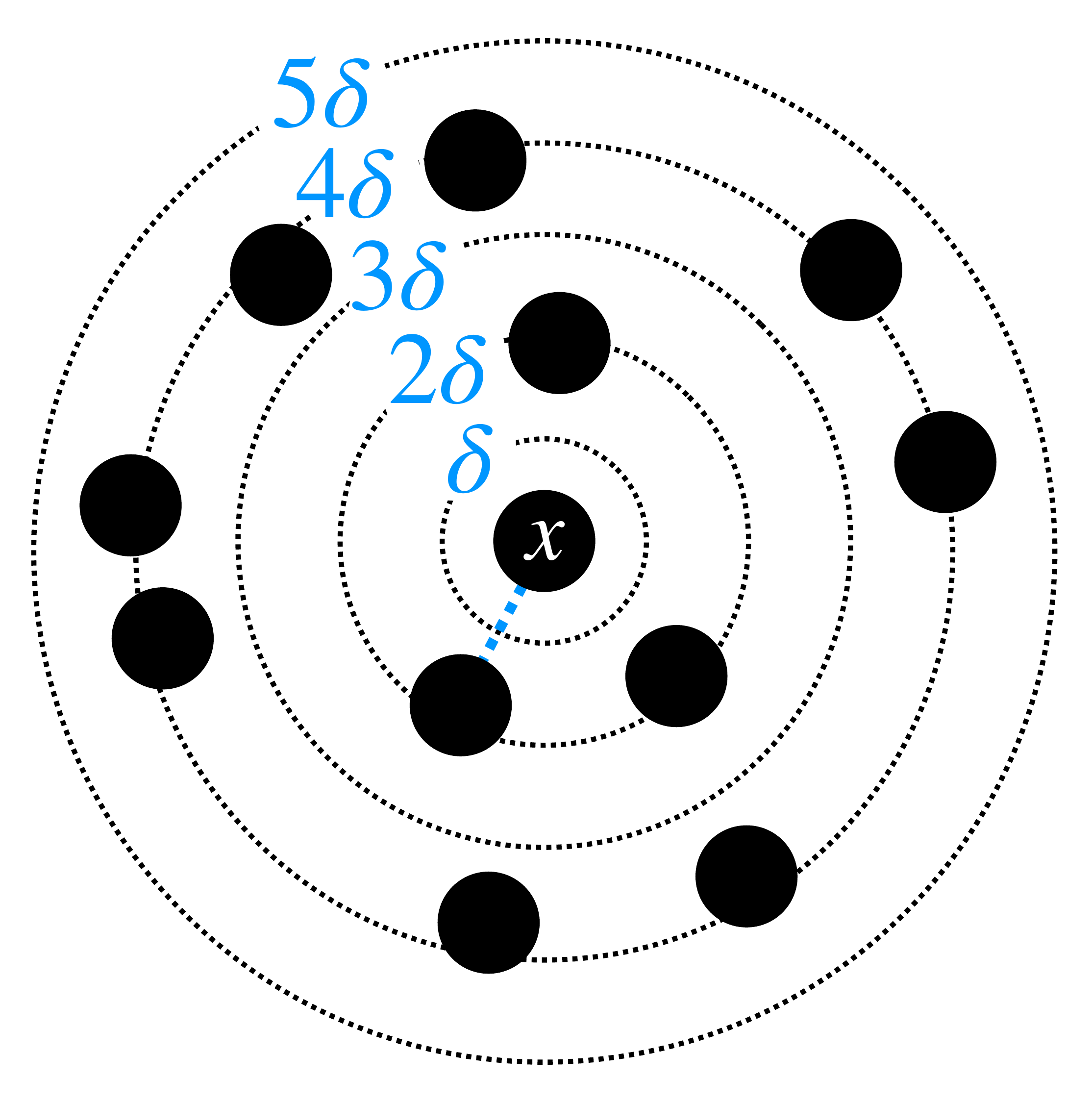}
        \caption{First merge.}\label{sfig:circ2}
    \end{subfigure}   \hfill
    \begin{subfigure}[b]{0.24\textwidth}
        \centering
        \includegraphics[width=\textwidth,trim=0mm 0mm 300mm 0mm, clip]{./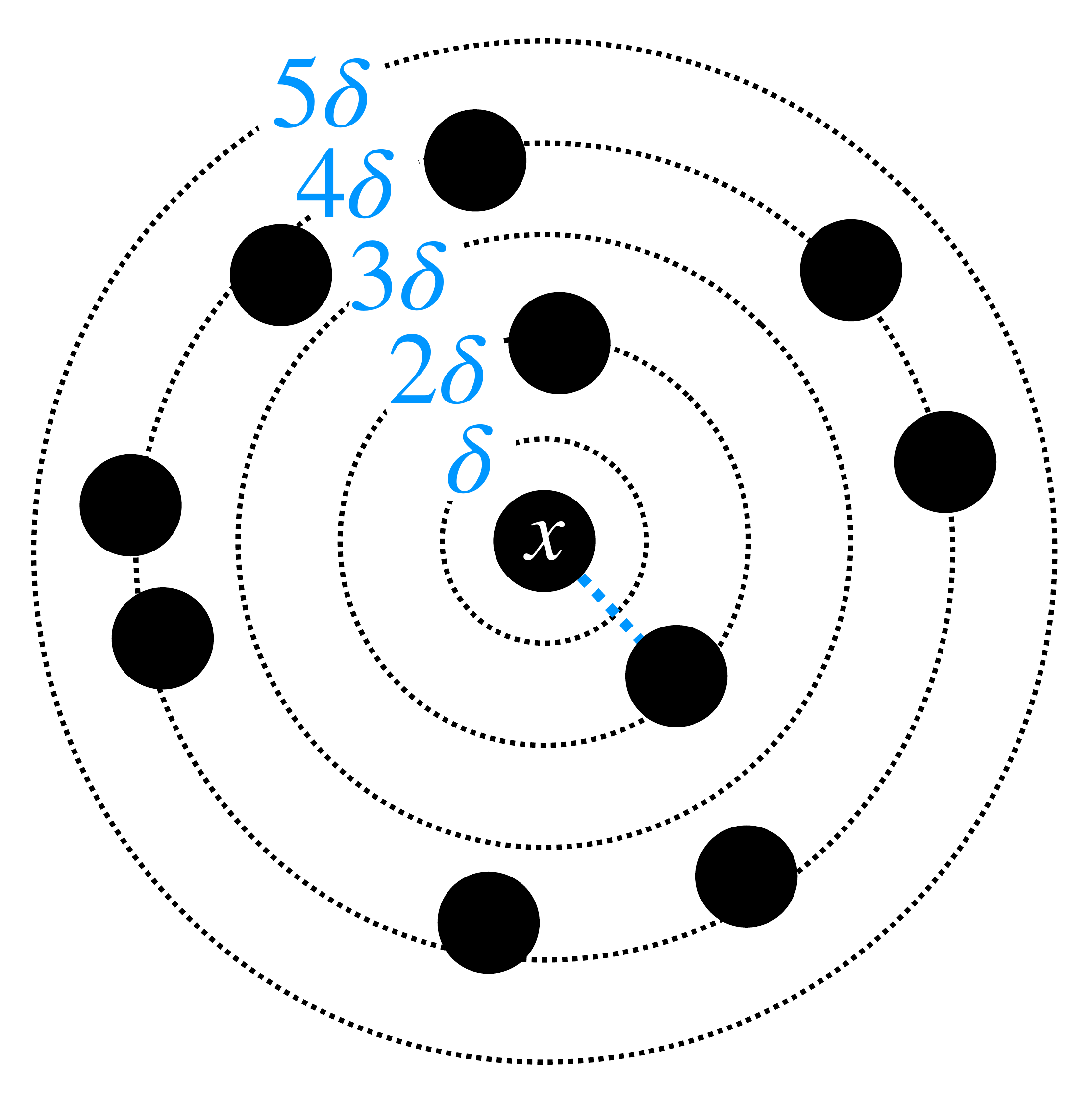}
        \caption{Second merge.}\label{sfig:circ3}
    \end{subfigure}\hfill
    \begin{subfigure}[b]{0.24\textwidth}
        \centering
        \includegraphics[width=\textwidth,trim=0mm 0mm 300mm 0mm, clip]{./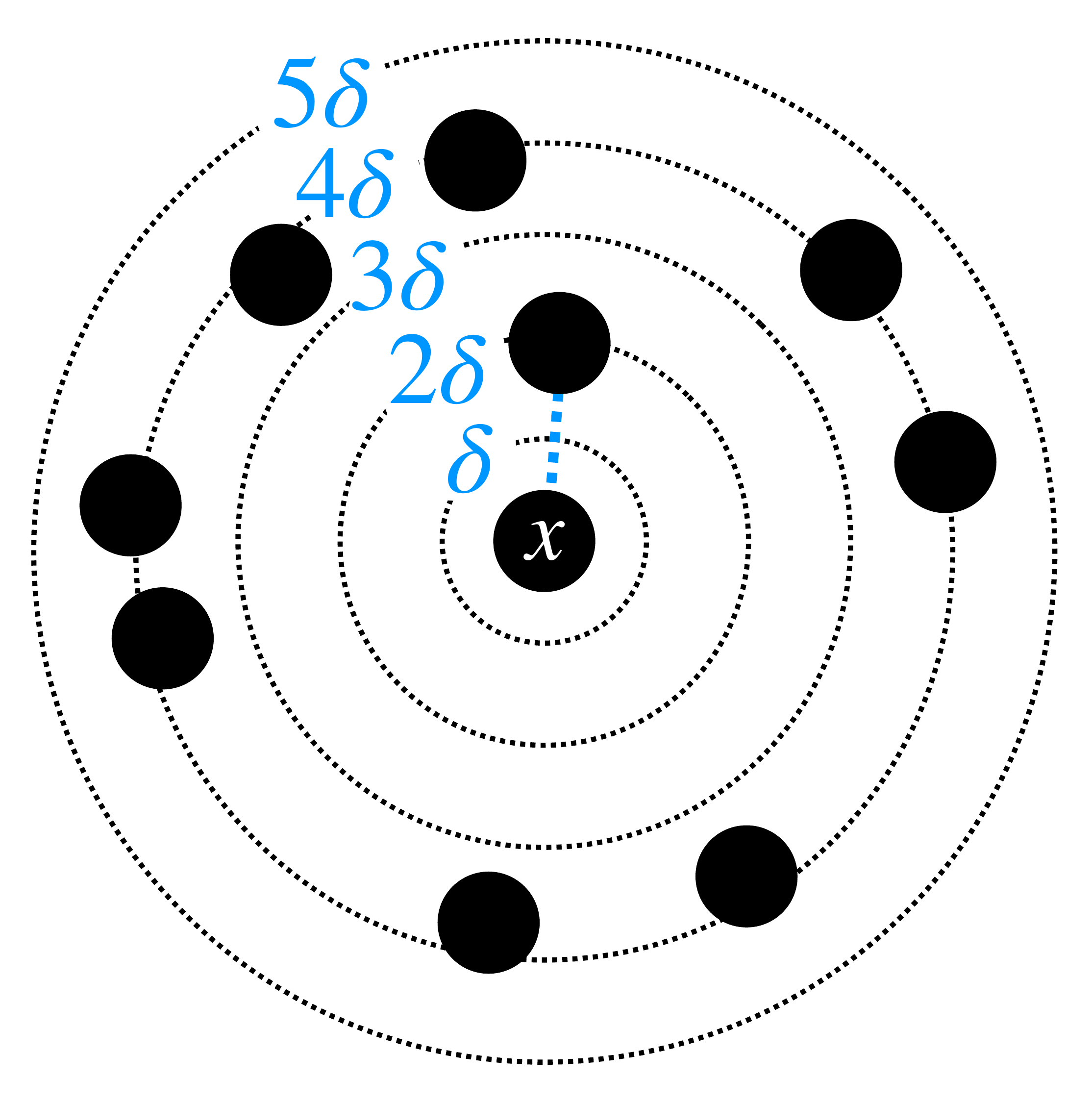}
        \caption{Third merge.}\label{sfig:circ4}
    \end{subfigure}
    \begin{subfigure}[b]{0.24\textwidth}
        \centering
        \includegraphics[width=\textwidth,trim=0mm 0mm 300mm 0mm, clip]{./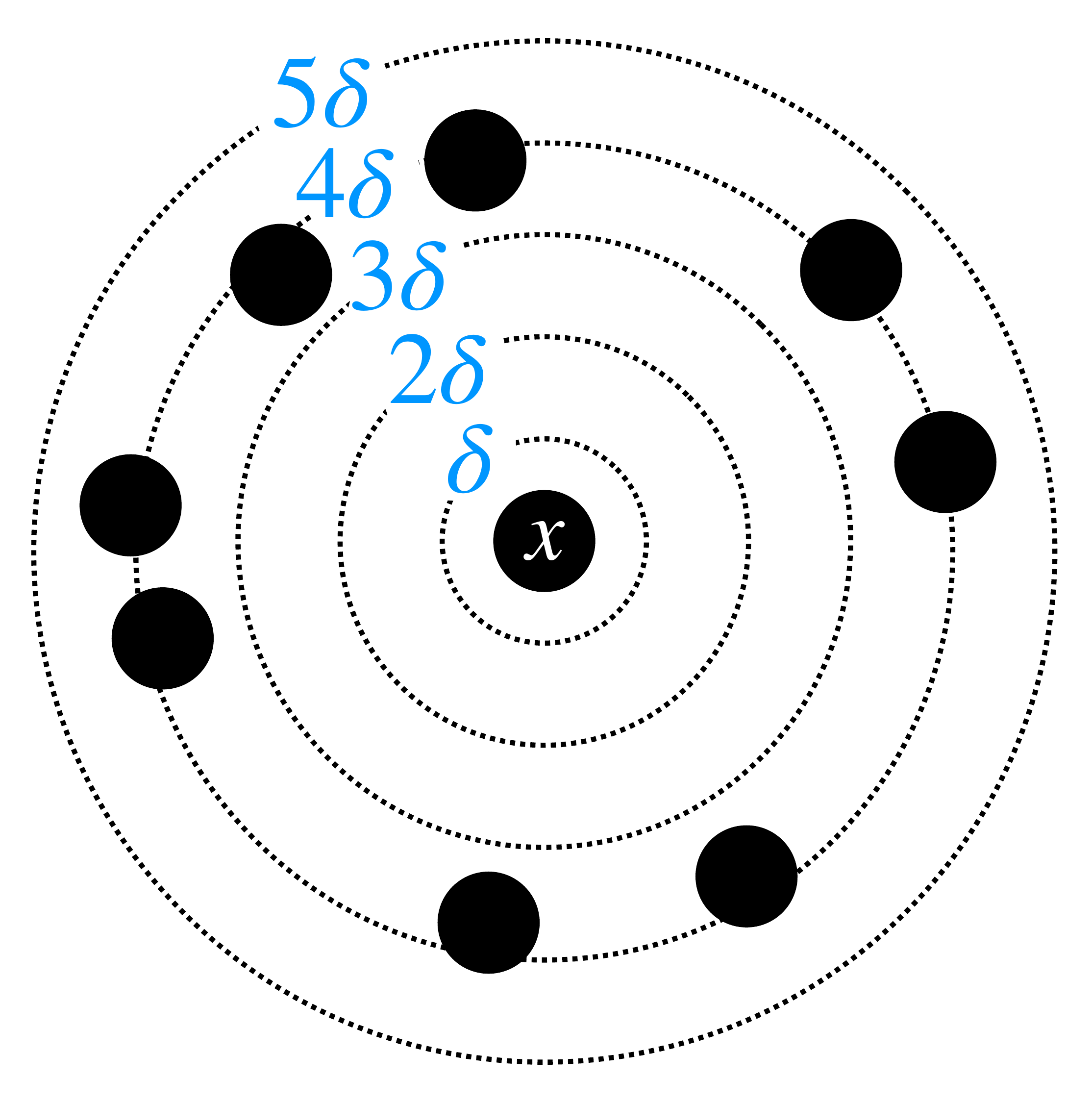}
        \caption{Empty $2\delta$ ball.}\label{sfig:circ5}
    \end{subfigure}  \hfill
    \begin{subfigure}[b]{0.24\textwidth}
        \centering
        \includegraphics[width=\textwidth,trim=0mm 0mm 300mm 0mm, clip]{./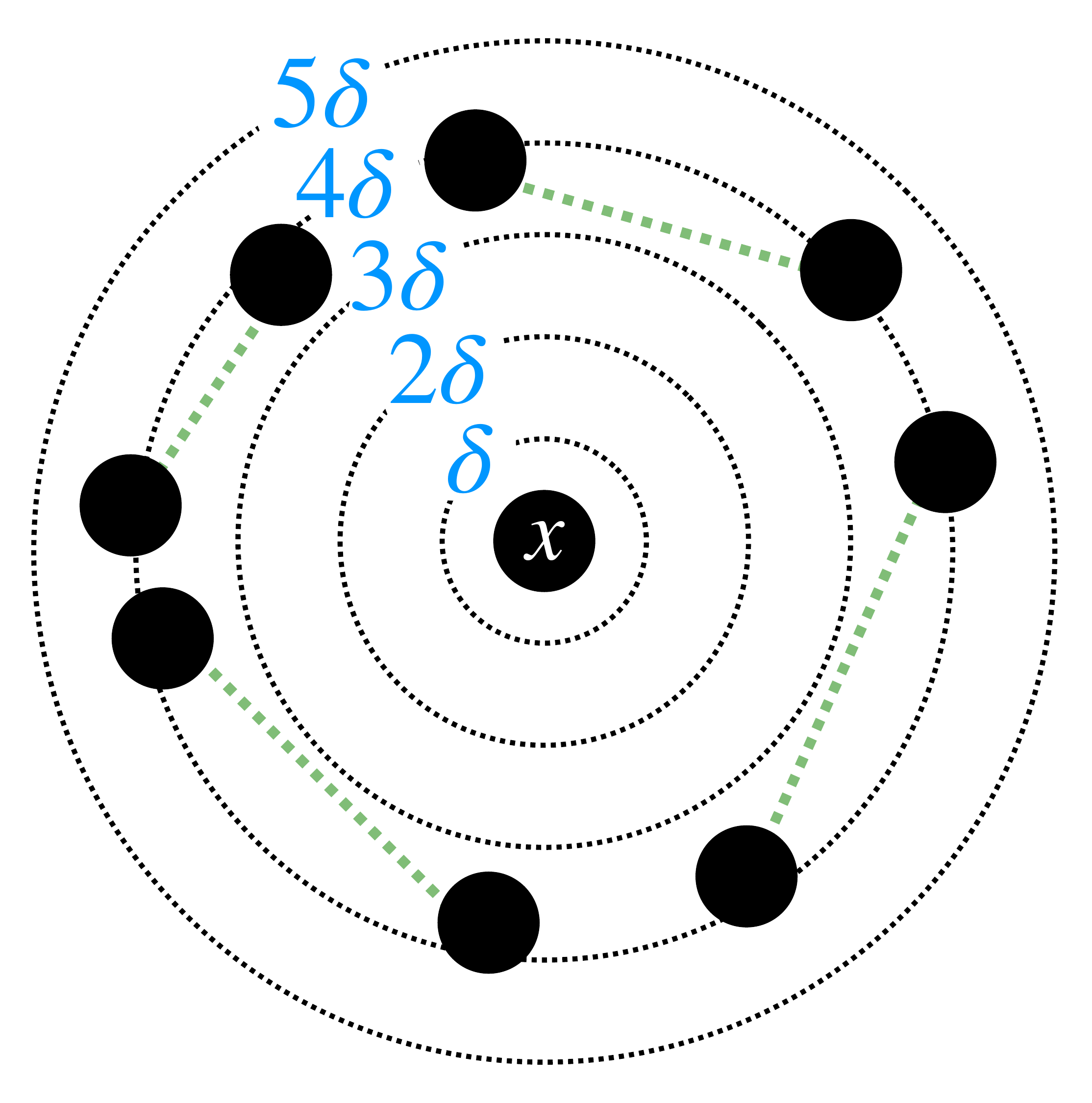}
        \caption{First pairs merge.}\label{sfig:circ6}
    \end{subfigure}   \hfill
    \begin{subfigure}[b]{0.24\textwidth}
        \centering
        \includegraphics[width=\textwidth,trim=0mm 0mm 300mm 0mm, clip]{./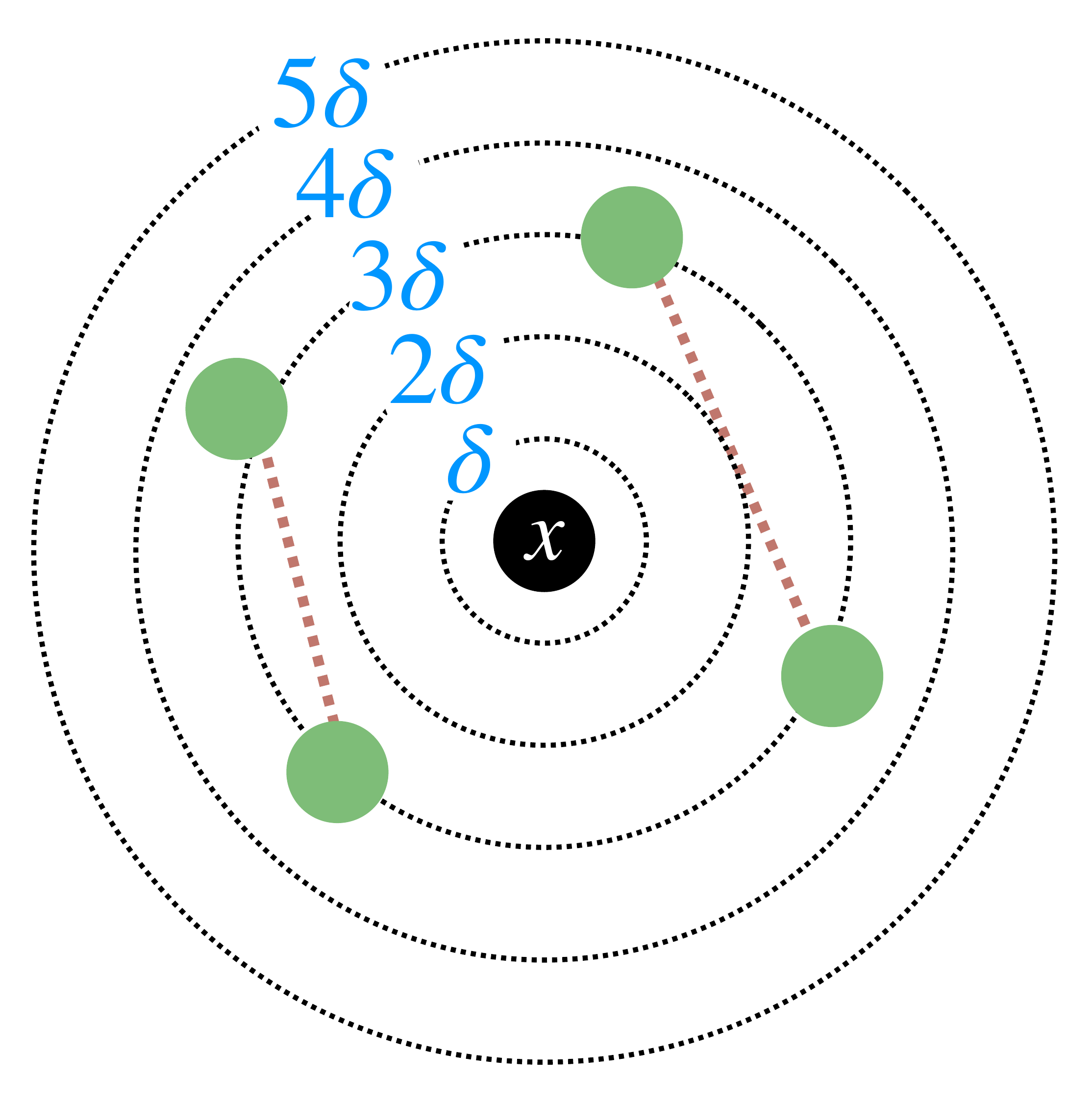}
        \caption{Second pairs merge.}\label{sfig:circ7}
    \end{subfigure}\hfill
    \begin{subfigure}[b]{0.24\textwidth}
        \centering
        \includegraphics[width=\textwidth,trim=0mm 0mm 300mm 0mm, clip]{./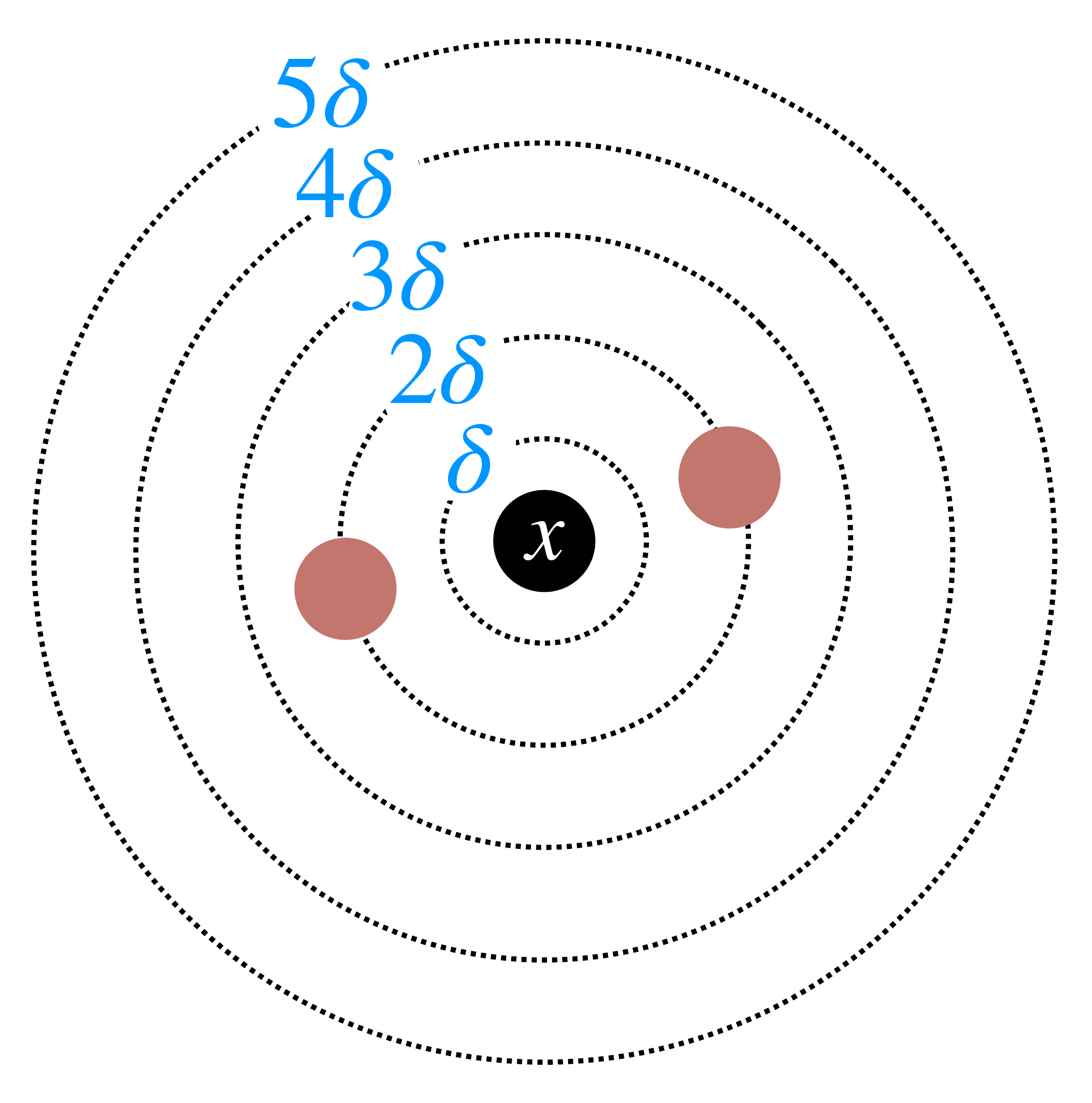}
        \caption{New centroids in ball.}\label{sfig:circ8}
    \end{subfigure}
    \caption{An illustration of our height bound argument for centroid. \Cref{sfig:circ1}/\ref{sfig:circ2}/\ref{sfig:circ3}/\ref{sfig:circ4}/\ref{sfig:circ5} show $x$ merging until there is nothing within its $2\delta$ ball. \Cref{sfig:circ6}/\ref{sfig:circ7}/\ref{sfig:circ8} shows how new centroids can enter its ball but only by merging off in pairs where $8$ centroids at distance $4\delta$ become only $2$ centroids at distance $2\delta$.} \label{fig:circ}
\end{figure}
 
The key idea of our height bound is to summarize this by assigning to each centroid $y$ a value, which is (up to scaling by $\delta$) negatively exponential in the centroid's distance from $x$
\begin{align*}
    \val(y) \approx \exp\left(-\frac{\|x-y\|}{\delta}\right),
\end{align*}
and which represents how much this centroid ``threatens'' our monotonicity at $x$; see \Cref{dfn:val} for a formal definition. We can, in turn, define our potential which is, roughly, the total threat to our monotonicity as
\begin{align*}
    \phi \approx \sum_{y}\val(y),
\end{align*}
where the above sum is over all our current centroids; see \Cref{dfn:potential} for a formal definition.

Lastly, we can show by our packing properties that $\phi$ is at most about $k^{O(k)}$ at the beginning of our phase and, by the convexity of $\exp$ and average-reducibility, is non-increasing. Furthermore, notice that any centroid within distance $2\delta$ of $x$ contributes $\Omega(1)$ to the potential and so when it merges with $x$ in our phase, it reduces our potential by $\Omega(1)$. As such, in a given phase in which $x$ merges with points at distance at most $2\delta$, the number of merges $x$ can participate in should be at most about $k^{O(k)}$.

Summarizing, we show our height bound by dividing HAC into (roughly) $O(\log n)$ phases where in the $j$th phase $x$ does not drift far from its starting point and performs merges of value at most $\approx 2^j$. We bound the number of merges in each phase as at most $k^{O(k)}$ by using the above potential function argument to bound the extent to which monotonicity can be violated.

Formalizing this argument and stating it in a way that works for Ward's linkage and, more generally, well-behaved linkage functions requires overcoming several challenges. The largest of these is the fact that, above, we are implicitly using the triangle inequality in several places. However, the triangle inequality  only holds for well-behaved functions under assumptions about cluster sizes and, even then, only multiplicatively approximately. This multiplicative constant prohibits us from naively applying the approximate triangle inequality as the multiplicative factor compounds each time we apply it. Likewise, we have to deal with approximation in HAC and the fact that, generally, distances according to well-behaved linkage functions cannot necessarily be summarized by distances between points in Euclidean space as we have done above.

The specifics of this argument and how we overcome these challenges are detailed in \Cref{sec:heightBound}.

\subsection{Parallel HAC for Low-Height Dendrograms}
% \enote{make sure we cut all mentions of relaxed-reducibility} 
Next, we leverage our bound on dendrogram height to design efficient parallel algorithms. In particular, the algorithm's depth is bounded in terms of the dendrogram height $h$ and an auxiliary parameter $\ell$, which we call the \emph{bounce-back length}. Intuitively, $\ell$ captures the non-monotonicity of the linkage function: specifically, it is the maximum number of merges a cluster undergoes---after an initial merge---before all linkage values to the cluster rise back to at least the value of the initial merge. While we defer a formal definition to \Cref{sec:algo}, we note that $\ell \le h$ always, and for certain linkage functions, such as Ward's, $\ell$ is constant, yielding stronger guarantees. 

We now state our main algorithmic result. 
% The following gives our algorithm which leverages well-behavedness to turn a height bound into the depth of a parallel algorithm.
% \footnote{Though we state this result for well-behaved linkage functions, in fact, we only use $\alpha$-packability for $\alpha = \tilde{O}(1)$ and poly-bounded diameter for this result.\kishen{maybe cut this, since we use everything except average-reducibility}}
\begin{restatable}{theorem}{conBound}\label{thm:conBound}
    Fix $\eps > 0 $ and suppose $d$ is well-behaved,
    and that any $(1+\eps)$-approximate HAC dendrogram has height at most $h$ and bounce-back length at most $\ell$. Then, there exists a $(1+\eps)$-approximate parallel HAC algorithm in $\mathbb{R}^k$ with $\tO(\wnn \cdot nh\ell^{O(k)})$ work and $\tilde{O}\left(h\ell^{O(k)}\right)$ depth, where $\wnn$ denotes the work of computing the $\tO(\ell^{O(k)})$ nearest neighbors of a cluster, assuming $\poly(n)$ aspect ratio.
\end{restatable}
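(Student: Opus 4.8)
The plan is to build a valid $(1+\eps)$-approximate HAC execution in rounds, committing a large batch of pairwise-disjoint merges per round, and to bound the number of rounds by $\tO(h\ell^{O(k)})$ using the height and bounce-back hypotheses together with well-behavedness of $d$. The algorithm maintains the current clustering $\mcC$ and does the following each round: (i) let $\mu=\min_{A\ne B\in\mcC}d(A,B)$; (ii) using a parallel nearest-neighbor data structure, compute for every $A\in\mcC$ its $m:=\tO(\ell^{O(k)})$ closest clusters---by $\alpha$-packability (\Cref{dfn:packableLinkage}) applied with radius $\Theta(\ell)\cdot\mu$ and the fact that all current linkage values are at least $\mu$, this list provably contains every cluster within distance $\Theta(\ell\mu)$ of $A$; (iii) form the candidate set of all mutual-nearest-neighbor pairs $\{A,B\}$ with $d(A,B)\le(1+\eps)\mu$ and select a maximal subfamily $M$ that is pairwise $\Omega(\ell\mu)$-separated, via a Luby-style parallel maximal independent set computation on the conflict graph of candidate pairs (two pairs conflict if they share a cluster or lie within $\Theta(\ell\mu)$); (iv) commit every merge in $M$ simultaneously. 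Since the globally closest pair is always a mutual-nearest-neighbor pair, $M\ne\emptyset$, so the algorithm terminates. The $\poly(n)$ aspect-ratio assumption is what keeps the nearest-neighbor data structure efficient and $m$ polylogarithmically bounded (via a short preliminary lemma, using weight-stability and poly-bounded diameter, that all merge values encountered stay within a $\poly(n)$ factor).

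The delicate step is showing that a committed batch $M$ extends the current $(1+\eps)$-approximate prefix, i.e.\ admits an ordering in which each merge is $(1+\eps)$-approximate against the running minimum. Ordering $M$ by increasing merge value, each pair has value at most $(1+\eps)\mu$ and the minimum at the end of the prefix was $\mu$, so the only obstruction is that an earlier merge $A'\cup B'$ of the batch creates---via non-monotonicity---a pair of value below $\mu/(1+\eps)$ with a surviving cluster $C$. Ruling this out is where well-behavedness enters: weight-stability (\Cref{dfn:weightStab}) bounds $d(A'\cup B',A')$ and $d(A'\cup B',B')$ by a fraction of $d(A',B')\le(1+\eps)\mu$; average-reducibility (\Cref{dfn:avgRed}) and, when $C$ is small, the restricted triangle inequality of \Cref{dfn:apxMetric} (which only assumes the middle cluster is not the smallest) then lower-bound $d(A'\cup B',C)$ in terms of $d(A',C)$ and $d(B',C)$, both of which are at least $\mu$ because $\{A',B'\}$ was a mutual-nearest-neighbor pair; and the $\Omega(\ell\mu)$-separation of $M$ guarantees the committed merges do not interact. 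Combining these keeps every newly created distance above $\mu/(1+\eps)$, so each merge of the serialized batch is $(1+\eps)$-approximate.

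For the round count, each point's cluster-lineage participates in at most $h$ merges over the whole run, since that is the dendrogram height. Whenever a lineage is ``active'' (part of a candidate pair), a packing argument using $\alpha$-packability and the $\Omega(\ell\mu)$-separation rule shows that at most $\tO(\ell^{O(k)})$ clusters can block it (we resolve at most one merge per $\Theta(\ell\mu)$-ball per round), while the bounce-back bound of $\ell$ limits how long a lineage can stay active at a given scale between selections; together these bound the rounds a lineage waits between consecutive merges by $\tO(\ell^{O(k)})$, for $\tO(h\ell^{O(k)})$ rounds total. Each round costs $n$ nearest-neighbor queries at $\wnn$ work apiece plus $\tO(n)$ for the independent-set computation and bookkeeping, and has depth $\tO(1)$ for the queries plus $O(\log n)$ for the independent set; multiplying by the round count gives $\tO(\wnn\cdot nh\ell^{O(k)})$ work and $\tO(h\ell^{O(k)})$ depth.

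I expect the main obstacle to be the correctness step: proving that, despite the non-monotonicity of linkage functions like centroid and Ward's, each per-round batch of simultaneous merges can always be serialized into a genuinely $(1+\eps)$-approximate sequence. This is precisely where all five well-behaved properties are needed at once---packability and the bounce-back bound to validate the precomputed neighbor lists and the separation rule, weight-stability to bound how far a merged cluster drifts from its parts, and average-reducibility together with the restricted triangle inequality to prevent newly formed distances from collapsing below $\mu/(1+\eps)$.
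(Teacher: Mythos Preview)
Your proposal has a genuine gap in the correctness step, and it is exactly the issue the paper's algorithm is engineered to avoid. You commit, in each round, a single merge per selected mutual-nearest-neighbor pair and then claim the well-behaved axioms keep every newly created distance above $\mu/(1+\eps)$. They do not. Take three equidistant points in the plane at pairwise centroid distance $1$ (the example in \Cref{sfig:nonMon1}/\ref{sfig:nonMon2}), together with a far-away mutual-nearest-neighbor pair also at distance $1$. Your batch merges both pairs. After merging the first pair $A',B'$, the new centroid is at distance $\sqrt{3}/2$ from the third point $C$, so the running minimum drops to $\sqrt{3}/2$; the second committed merge of the batch (the far pair) then has value $1>(1+\eps)\cdot\sqrt{3}/2$ for any $\eps<2/\sqrt{3}-1$, and the serialized sequence is not $(1+\eps)$-approximate. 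Your appeal to average-reducibility (\Cref{dfn:avgRed}) only yields $d(A'\cup B',C)\ge \tfrac{d(A',C)+d(B',C)}{2}-d(A',B')$, which under your hypotheses is merely $\ge 0$; and the restricted triangle inequality (\Cref{dfn:apxMetric}) rearranged gives $d(A'\cup B',C)\ge d(A',C)/c_\Delta - d(A',A'\cup B')$, which with $c_\Delta=4$ (Ward's) and $d(A',A'\cup B')$ as large as $d(A',B')/2$ is again useless. In short, the axioms do not prevent the post-merge minimum from collapsing below $\mu/(1+\eps)$.

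The paper's algorithm confronts this head-on: within a round it does not stop after one merge. Each selected cluster first merges with its nearest neighbor and then \emph{continues merging greedily} along its bounce-back path until all linkage values to it return to at least the phase's lower threshold $(1+\eps)^t$ (\Cref{dfn:BBPath}). These extra out-of-phase merges are exact nearest-neighbor merges, so they are trivially $(1+\eps)$-approximate, and the bounce-back parameter $\ell$ is precisely the bound on how many such merges are needed; this is where $\ell$ enters the algorithm, not merely the separation radius. Non-interference of different selected clusters is then proved not via a raw $\Theta(\ell\mu)$-ball, but via the bounce-back \emph{shell} (\Cref{dfn:BBShell}): one must exclude clusters near any node of the bounce-back path, and containment of that path in a ball of radius $O(\ell^{O(1)}\mu)$ already requires the recursive application of the approximate triangle inequality in \Cref{lem:BBPathlBall}. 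Finally, the round bound (\Cref{lem:num_rounds}) is a potential argument over clusters (each can be selected at most $h$ times), combined with packability to show an $\Omega(1/\ell^{O(k)})$ fraction is selected per round; your sketch that ``$\ell$ limits how long a lineage can stay active at a given scale'' does not substitute for this, since in your algorithm a lineage merges at most once per round and $\ell$ plays no role in its progress.
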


Plugging our height bound (\Cref{lem:heightBound}), 
% a naive bound of $\wnn=O(n)$\footnote{For instance, a linear scan over $\cC$ to determine points in the ball runs in $O(|\cC|)$ work and $O(\log |\cC|)$ depth.} and $\ell \le h$ into \Cref{thm:conBound} immediately yields a parallel algorithm with $\tO(n^2)$ work and $\tilde{O}(1)$ depth in constant dimensions for any well-behaved linkage functions, as summarized below.
and utilizing state-of-the-art parallel NNS data structures in low dimensions (resulting in $\wnn=\tO(1)$), we obtain a $\tO(n)$ work and $\tO(1)$ depth algorithm for $(1+\eps)$-approximate centroid and Ward's HAC in $\R^k$. The centroid result holds for $k=O(1)$, while the Ward's result extends to $k = O(\log\log n/\log\log\log n)$, owing to stronger bounds on $\ell$.

\begin{restatable}{theorem}{parAlgCent}\label{thm:parAlgCent}
For $k = O\left(1\right)$ and $\epsilon>0$, $(1+\epsilon)$-approximate centroid HAC in $\mathbb{R}^k$ can be solved in $\tO(n)$ expected work and $\tilde{O}(1)$ depth with high probability, assuming $\poly(n)$ aspect ratio.
\end{restatable}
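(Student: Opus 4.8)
The plan is to instantiate the generic algorithmic result of \Cref{thm:conBound} with the appropriate structural bounds for centroid linkage. By \Cref{thm:centroidWB}, centroid linkage $\dCen$ is well-behaved, so \Cref{thm:conBound} applies. It remains to supply: (i) a bound on the dendrogram height $h$, (ii) a bound on the bounce-back length $\ell$, and (iii) a bound on $\wnn$, the work for computing the $\tO(\ell^{O(k)})$ nearest neighbors of a cluster. For (i), I would apply \Cref{lem:heightBound} with $c = 1+\eps = O(1)$ and $k = O(1)$, which gives a dendrogram height of $h = \tO((k\cdot c)^{O(k)}) = \tO(1)$; more precisely, for centroid the packability constant is $\alpha = O(1)$, so the hidden polylog factor is just $O(\log n)$, giving $h = O(\log n)$ up to the $(k\cdot c)^{O(k)}$ factor which is constant when $k = O(1)$.

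For (ii), since $\ell \le h$ always (as noted in the paragraph preceding \Cref{thm:conBound}), we immediately get $\ell = \tO(1)$ as well, hence $\ell^{O(k)} = \tO(1)$ for $k = O(1)$. For (iii), the relevant subroutine computes the $\tO(\ell^{O(k)}) = \tO(1)$ nearest neighbors of a cluster; since centroid linkage distances are just Euclidean distances between centroids in $\R^k$, this is an ordinary (approximate) nearest-neighbor query in constant-dimensional Euclidean space. I would invoke a state-of-the-art parallel NNS data structure for low dimensions to get $\wnn = \tO(1)$ work per such batch of queries (amortized/with high probability), which is where the ``expected work'' and ``with high probability'' qualifiers in the statement come from. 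Substituting $h = \tO(1)$, $\ell^{O(k)} = \tO(1)$, and $\wnn = \tO(1)$ into the $\tO(\wnn \cdot n h \ell^{O(k)})$ work bound and $\tilde O(h\ell^{O(k)})$ depth bound of \Cref{thm:conBound} yields $\tO(n)$ work and $\tO(1)$ depth, as desired. The $\poly(n)$ aspect ratio hypothesis is passed through from \Cref{thm:conBound} and \Cref{lem:heightBound}, both of which require it.

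The main obstacle is really bookkeeping rather than a deep new idea: I need to be careful that the $(k\cdot c)^{O(k)}$ factor from \Cref{lem:heightBound} is genuinely a constant (not merely polylogarithmic) when $k = O(1)$ and $c = 1+\eps$ for fixed $\eps$ — this is immediate since a constant raised to a constant power is a constant, but it is the one place where the restriction $k = O(1)$ (as opposed to the weaker $k = O(\log\log n/\log\log\log n)$ allowed for Ward's) is essential, because for centroid there is no better bound on $\ell$ than the generic $\ell \le h$, so the $\ell^{O(k)}$ term in the depth would blow up superpolylogarithmically if $k$ were allowed to grow. The only other point requiring a moment's care is confirming that the parallel NNS data structure we cite indeed supports the batched $k$-nearest-neighbor queries needed by the algorithm of \Cref{thm:conBound} within $\tO(1)$ work per cluster and $\tO(1)$ depth in $\R^{O(1)}$, and that its preprocessing cost is also $\tO(n)$; these follow from known results on parallel nearest-neighbor search in low-dimensional Euclidean space, and I would cite them directly. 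Everything else is a direct substitution into \Cref{thm:conBound}.
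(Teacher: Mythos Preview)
Your proposal is correct and follows essentially the same approach as the paper: instantiate \Cref{thm:conBound} using \Cref{thm:centroidWB}, \Cref{lem:heightBound}, and the generic bound $\ell \le h$, then plug in an efficient low-dimensional NNS primitive. The paper is slightly more concrete in that it names the specific data structure---parallel batch-dynamic cover trees (\Cref{thm:cover_trees})---and spells out that the set of centroids must be maintained \emph{dynamically} via batch deletes and inserts after each round of merges, which is a detail you leave implicit; but this is bookkeeping, not a different idea.
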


\begin{restatable}{theorem}{parAlgWard}\label{thm:parAlgWard}
For $k = O\left(\frac{\log \log n}{ \log \log \log n} \right)$ and $\epsilon>0$, $(1+\epsilon)$-approximate Ward's HAC in $\mathbb{R}^k$ can be solved in $\tO(n)$ expected work and $\tilde{O}(1)$ depth with high probability, assuming $\poly(n)$ aspect ratio.
\end{restatable}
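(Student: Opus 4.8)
The plan is to instantiate the generic parallel algorithm of \Cref{thm:conBound} with $d = \dWard$ and approximation factor $c = 1+\eps$. This requires three ingredients: (i) that $\dWard$ is well-behaved in the sense of \Cref{dfn:wellBehaved}; (ii) bounds on the dendrogram height $h$ and bounce-back length $\ell$ of any $(1+\eps)$-approximate Ward's dendrogram in the stated dimension regime; and (iii) a parallel near-neighbor routine answering the $\tO(\ell^{O(k)})$-nearest-neighbor queries used by the algorithm with $\wnn = \tO(1)$ work and $\tO(1)$ depth. Ingredient (i) is exactly \Cref{thm:wardWB}. For (ii), \Cref{lem:heightBound} gives $h = \tO\big((kc)^{O(k)}\big)$, and since $c = O(1)$, writing $m = \log\log n$ and using $k = O(m/\log m)$ we get $\log\big(k^{O(k)}\big) = O(k\log k) = O\big((m/\log m)\cdot\log m\big) = O(m)$, so $k^{O(k)} = 2^{O(\log\log n)} = \polylog n$ and hence $h = \tO(1)$. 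For the bounce-back length I would prove a short auxiliary lemma showing $\ell = O(1)$ for Ward's; this should follow from the Lance--Williams recursion for $\dWard$ (see \Cref{lem:LanceWilliamsWards}), which writes $\dWard(A\cup B, C)$ as a fixed nonnegative combination of $\dWard(A,C)$, $\dWard(B,C)$, and $\dWard(A,B)$ and thereby forces the linkage values from a cluster to its neighbors to rebound to the value of an initial merge within $O(1)$ further merges. (Even the trivial bound $\ell\le h = \tO(1)$ suffices for the depth claim; $\ell = O(1)$ is what extends the dimension range beyond $k=O(1)$.)

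Given these bounds, \Cref{thm:conBound} immediately delivers the depth: $\ell = O(1)$ gives $\ell^{O(k)} = 2^{O(k)} = 2^{O(m/\log m)} = 2^{o(\log\log n)} = \tO(1)$, so the depth is $\tO\big(h\,\ell^{O(k)}\big) = \tO(1)$. For the work we need ingredient (iii). Because the overall guarantee is only $(1+\eps)$-approximate, it suffices to answer the required queries $(1+\Theta(\eps))$-approximately. Using $\dWard(A,B) = \tfrac{|A|\,|B|}{|A|+|B|}\|\mu(A)-\mu(B)\|^2$, each cluster is summarized by its centroid and weight, and the query ``return the $\tO(\ell^{O(k)}) = \tO(1)$ clusters of smallest Ward's distance to $X$'' reduces to an approximate weighted-nearest-neighbor query among weighted points in $\mathbb{R}^k$; this can be serviced by a dynamic parallel approximate-NNS structure over a grid/quadtree decomposition of $\mathbb{R}^k$ (handling the weights by a standard lifting), supporting insertions, deletions, and batched queries. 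All of its costs carry the usual $2^{O(k)}\cdot\eps^{-O(k)}$ overhead, which equals $2^{o(\log\log n)} = \tO(1)$ here for constant $\eps$, so $\wnn = \tO(1)$. Substituting $\wnn = \tO(1)$, $h = \polylog n$, and $\ell^{O(k)} = \tO(1)$ into \Cref{thm:conBound} yields work $\tO\big(\wnn\cdot nh\,\ell^{O(k)}\big) = \tO(n)$; the ``expected work'' and ``with high probability'' qualifiers, and the $\poly(n)$ aspect-ratio hypothesis, are inherited from the randomized NNS structure and from \Cref{lem:heightBound} / \Cref{thm:conBound} respectively.

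The main obstacle is ingredient (iii) carried out in full rigor: maintaining a dynamic parallel near-neighbor structure through the entire merge sequence produced by the algorithm of \Cref{thm:conBound} so that the $\tO(nh)$ total updates and queries cost only $\tO(1)$ work and $\tO(1)$ depth each, and doing so for the \emph{non-Euclidean} Ward's criterion, where the effective ``distance'' between two clusters depends on their weights and not merely on their centroid positions, so a black-box Euclidean ANN data structure does not directly apply. The secondary obstacle is proving $\ell = O(1)$ for Ward's from the Lance--Williams recursion; this is exactly what lets $k$ grow as large as $O(\log\log n/\log\log\log n)$ rather than being pinned at $O(1)$ as in the centroid case (\Cref{thm:parAlgCent}), where the analogous bound on $\ell$ is weaker. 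The height bound itself is not an obstacle, having already been established in \Cref{lem:heightBound}.
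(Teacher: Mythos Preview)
Your high-level plan matches the paper's: instantiate \Cref{thm:conBound} for $\dWard$, use \Cref{thm:wardWB} and \Cref{lem:heightBound} for well-behavedness and $h=\polylog n$, establish a constant bounce-back length, and supply a fast NNS primitive. Two points deserve sharpening.

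First, for $\ell$: the paper proves the stronger $\ell=1$ (not merely $O(1)$) by introducing \emph{weak-reducibility} (\Cref{dfn:weakRed}): if $d(A,B)\ge\max(d(A,C),d(B,C))$ then $d(A\cup C,B)\ge d(B,C)$. For Ward's this is a one-line consequence of Lance--Williams (\Cref{lem:wards_weakRed}), and it implies that after a single nearest-neighbor merge all linkages to the new cluster are already at least the phase lower threshold. Your sketch (``rebound within $O(1)$ merges'') is in the right direction but the clean statement is this reducibility-under-a-max-edge property.

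Second, for NNS: your ``standard lifting'' to handle the weight factor $\tfrac{|A||B|}{|A|+|B|}$ does not obviously work, since $\dWard$ is not a metric and does not embed as a weighted Euclidean distance in one extra coordinate. The paper instead buckets clusters by size into $O(\log n)$ groups $[2^i,2^{i+1})$ and maintains a parallel batch-dynamic cover tree (\Cref{thm:cover_trees}) over the \emph{centroids} in each bucket. Within a bucket, centroid distance $2$-approximates Ward's (by \Cref{lem:simpropsmallcluster}), so the true Ward's $\tO(\ell^{O(k)})$ nearest neighbors are contained in the centroid $\tO((2\ell)^{O(k)})$ nearest neighbors across the $O(\log n)$ buckets; one then filters to recover the \emph{exact} Ward's nearest neighbor, which the algorithm of \Cref{alg:PARALLEL_HAC} requires (the bounce-back merges follow the locally-optimal path). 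This yields $\wnn=\tO(2^{O(k)})=\tO(1)$ in the stated dimension regime, with the randomized guarantees inherited from the cover tree.
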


\noindent We also note that the $(1+\eps)$-approximate centroid result directly implies 
% \will{No "to"} 
$(1+\eps)^2$-approximate algorithm for \emph{squared centroid} HAC, an alternate but well-studied variant~\cite{Gower1967AComparison}.

% \enote{Mention that results carry over to squared centroid; here and for the algorithm; or if we can work squared cnetroid into our framework that would be even better}\kishen{Wrote a blurb above, but it might draw questions/criticism of whether it fits in our pretty-complex framework.} \enote{I think what you wrote is good!}
% For Ward's,  we have $\ell=1$ due to a weak-reducibility property (see \Cref{dfn:weakRed}). Consequently, we obtain NC algorithms for $(1+\eps)$-approximate Ward's HAC in $\mathbb{R}^k$ for 

% We then provide a parallel $(1+\eps)$-approximate HAC algorithms for packable linkage functions. Our algorithm has depth $\tilde{O}(1)$ as long as the height of the dendrogram is $\tilde{O}(1)$. Combining this algorithm with our dendrogram height bound shows that for packable linkage functions in $\tilde{O}(\log \log n)$ dimensions, HAC can be solved in NC.

\subsubsection{Intuition for Parallel HAC}
The algorithm proceeds in phases using standard geometric-thresholding (or bucketing): each phase begins with all linkage distances at least the lower threshold and performs merges with linkage value at most the upper threshold. Within a phase, the algorithm executes multiple rounds of synchronized parallel merges until all remaining linkage distances exceed the phase's upper threshold. However, because the linkage functions we consider are neither monotone nor reducible, the linkage distances can drop below the lower threshold during merging. 
To handle this, the algorithm starts with a set of clusters and, in parallel, repeatedly merges each cluster with its nearest neighbor until the linkage distances ``bounce back'' into the current threshold range. 
The parameter $\ell$ precisely bounds the number of such merges until a cluster bounces back, and since these merges must trace a path in the dendrogram, $\ell \le h$ always. 
This allows us to afford performing these merges sequentially within each phase without affecting the overall depth.

The main technical challenge is to show that these parallel merges do not interfere with one another, and that they can be sequentialized to prove the required approximation guarantee. 
If the linkage function satisfied the triangle inequality, then for any cluster $A$, its entire \emph{bounce-back path} (i.e., the set of clusters it merges with until bouncing back; see \Cref{dfn:BBPath}) would lie within an $O(\ell)$-radius ball centered at cluster $A$, up to a factor of the upper threshold of the current phase. Moreover, any other cluster whose bounce-back path might intersect that of $A$ would also lie within an $O(\ell)$-ball around $A$. Hence, as long as the algorithm selects clusters that are sufficiently far apart, each will proceed along its bounce-back path independently of others, regardless of whether merges are performed in parallel or sequentially. This allows us to sequentialize the execution: we can order the selected clusters arbitrarily and apply their sequence of merges one after the other. Additionally, by the packability property of well-behaved linkage functions, each selected cluster excludes at most $O(\ell^k)$ others from being processed in the same round. As a result, at least a $\Omega(1/\ell^k)$ factor of clusters make progress in each round, which suffices to bound the total number of rounds within a phase.

Although well-behaved linkage functions do not in general satisfy the triangle inequality, we show that analogous properties still hold for them.
The full algorithm and details are presented in \Cref{sec:algo}.
% \enote{TODO for Kishen}

\subsection{Impossibility of Parallelism in High Dimensions}
Complementing our algorithmic results, we show that low dimensionality is necessary for parallelizing HAC for well-behaved linkage functions, even for approximate HAC. Specifically, as mentioned above, we show that it is CC-hard to $(1+\epsilon)$-approximate centroid-linkage HAC in linear dimensions for $\eps$ sufficiently small. CC-hardness is widely believed to rule out NC algorithms \cite{cook2014complexity, mayr1992complexity,subramanian1994new}.

For our purposes, we do not need to define the class CC but rather can simply use the definition of CC-hardness based on logspace reductions, as follows.
\begin{restatable}[CC-Hard]{definition}{CCHard}
    A problem is CC-hard if all problems of CC are logspace-reducible to it.
\end{restatable}

The decision version of HAC whose CC-hardness we show is as follows.
\begin{restatable}[$(1+\eps)$-Approximate Promise Decision HAC]{definition}{decisionHAC}
    We are given an instance of HAC consisting of $\mcP \subseteq \mathbb{R}^k$, linkage function $d$, $a,b,c \in \mcP$, and a guarantee that $a$, $b$, and $c$ will always merge together in the same relative order for every $(1+\eps)$-approximate HAC. Decide if $a$ and $b$ merge into the same cluster together before $c$ merges into the same cluster as $a$ or $b$.
\end{restatable}

Note that $(1+\eps)$-approximate promise decision HAC is easier than $(1+\eps)$-approximate HAC since we only have to solve it for $a$, $b$, and $c$ that are guaranteed to always merge together in the same order. Thus, running $(1+\eps)$-approximate HAC solves $(1+\eps)$-approximate promise decision HAC. The following summarizes our CC-hardness result for HAC.
\begin{restatable}{theorem}{ccHardness}\label{thm:ccHardness}
$(1+1/n^7)$-approximate promise decision HAC with $\dCen$ is CC-hard in $\mathbb{R}^n$.
\end{restatable}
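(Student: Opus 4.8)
The plan is to exhibit a logspace reduction from the Comparator Circuit Value Problem (CCVP) --- which is CC-complete under logspace reductions --- to $(1+1/n^7)$-approximate promise decision HAC with $\dCen$; since logspace reductions compose, this yields CC-hardness in the sense of the definition above. An instance of CCVP is a comparator circuit $C$ on $m$ inputs (which we may take to be layered/synchronous without loss of generality), a fixed input $x\in\{0,1\}^m$, and a designated output wire; each comparator gate takes wires carrying $u$ and $v$ and produces $u\wedge v$ and $u\vee v$, and the circuit is monotone and fan-out-free. From $(C,x)$ we build, in logspace, a point set $\mcP\subseteq\mathbb{R}^n$ with $n=\poly(m)$ and $\poly(n)$-bit rational coordinates, together with a designated triple $a,b,c\in\mcP$, so that for \emph{every} $(1+1/n^7)$-approximate HAC and every tie-break, the relative merge order of $a,b,c$ is forced and reads off the value of the designated output wire. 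Morally, such an encoding is possible --- and one should not expect to push it to $\mathsf{P}$-hardness --- because the only non-monotonicity of centroid HAC is the convex-combination behaviour of a merged centroid (average-reducibility, \Cref{dfn:avgRed}), which is exactly ``comparator-like'': a merge can only drag distances toward a weighted average of old distances, never amplify or negate them.

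\textbf{The encoding and gadgets.} We represent the bit on a wire by \emph{when} a designated ``messenger'' cluster performs its activating merge. Since HAC processes near-minimum merges in increasing order of value, a $1$-bit messenger is made to become active at a canonical small linkage scale $s_1$ and a $0$-bit messenger at a larger scale $s_0$, with $s_0/s_1$ polynomially bounded away from $1$ but by an exponent strictly below $7$, so the $1+1/n^7$ slack and adversarial tie-breaking cannot confuse a $0$ with a $1$. A ``wire'' of positive length in the layout is realized by a short delay line --- a chain configuration in the spirit of the $\Omega(n)$-length dependency chain of \Cref{fig:path} --- along which a wave of merges carries the messenger with a controlled delay. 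Each gate and each wire is assigned its own block of $O(1)$ fresh coordinates, with points in distinct blocks placed at a large, fixed, mutually near-orthogonal offset, so every cross-block centroid distance stays well above all scales used inside the gadgets for the entire run and the gadgets evolve independently; this is where dimension linear in the number of points is essential. The crux is the comparator gadget: an $O(1)$-point configuration that takes two input messengers activating at scales in $\{s_1,s_0\}$ and produces two output messengers, one activating at the \emph{smaller} and one at the \emph{larger} of the two input scales, followed by a ``recanonicalization'' stage that resets each output to one of the two canonical scales. Under the encoding ``$1\leftrightarrow$ small scale'', the smaller-scale output carries $u\vee v$ and the larger-scale output carries $u\wedge v$, i.e.\ the gadget realizes a comparator. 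The $\min$-output is easy (a target point merges with whichever input messenger arrives first); the $\max$-output is delicate, since its activation must be blocked until \emph{both} inputs have fired, which we enforce by an auxiliary cluster that becomes eligible for its triggering merge only once the merged centroid of the two fired input gadgets has landed at its predictable location --- again exploiting the convex-combination formula for centroids. Input gates place a messenger at scale $s_{x_i}$; the designated output wire's messenger feeds a readout gadget engineered so that $a$ and $b$ merge together before $c$ joins precisely when that messenger is at the $1$-scale.

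\textbf{Correctness.} Three claims. \emph{(i) Per-gadget simulation}: by induction on circuit depth, in any $(1+1/n^7)$-approximate HAC the messenger of each wire activates at $s_1$ if the wire is $1$ and at $s_0$ otherwise, the inductive step being the gadget analysis above done robustly enough that the slack and tie-breaking are absorbed by the $s_0$-vs-$s_1$ gap. \emph{(ii) Non-interference}: the orthogonal-block layout, together with the packing property of well-behaved linkages (\Cref{dfn:packableLinkage}) used to verify that no spurious cross-gadget pair ever becomes a near-minimum merge, implies that any valid approximate HAC restricts block-by-block to a valid approximate HAC of each gadget in isolation, so the gadgets can be analyzed independently. \emph{(iii) The promise}: since every decision-relevant distance gap is polynomially bounded below with exponent strictly less than $7$, the subsequence of merges touching $a$, $b$, or $c$ is identical across all $(1+1/n^7)$-approximate HACs and all tie-breakings, so the promise is satisfied and the answer is well-defined, equal to the output bit of $C$. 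Logspace-computability follows because all coordinates are explicit $\poly(n)$-bit rationals (and, for the analysis, comparisons are between squared Euclidean distances, which are rational), and the construction is entirely gate-local --- one $O(1)$-size block per gate or wire, with incidences read directly off the layered circuit.

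\textbf{Main obstacle.} The hard part is the comparator gadget, on two fronts: (a) the $\max$-output requires genuine \emph{synchronization} (waiting for both inputs), which centroid geometry supports only through the convex-combination behaviour of merges, and one must prove the trigger fires at exactly the right moment for \emph{every} approximate HAC despite adversarial tie-breaking within the slack; and (b) the recanonicalization that resets each output to a canonical scale, which is what prevents errors from compounding over the $\poly(n)$ gate layers --- without it, slack accumulated at one layer could swamp the $s_0/s_1$ gap downstream. The other place requiring care is non-interference: one must show the near-orthogonal embedding keeps \emph{every} cross-block pair non-minimal throughout the entire run, not merely at the start, which is precisely why the construction genuinely needs $k=\Theta(n)$ rather than $k=O(1)$.
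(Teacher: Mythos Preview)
Your route is \emph{not} the paper's. The paper does \emph{not} reduce from CCVP; it reduces from the Telephone Communication Problem (TCP), which is already known to be CC-hard. The TCP instance is encoded directly, with no gate gadgets and no layers: a single heavy central point $C$ sits at the origin, each call $i$ gets its own axis with a ``start'' point $S_i$ (and a heavy outer point $R_i$) on the positive side and an ``end'' point $F_i$ (with outer $L_i$) on the negative side, and whether $S_i$ merges into $C$ or into $R_i$ is determined by how far $C$ has drifted off-origin --- which in turn counts how many earlier $S$'s and $F$'s have merged into $C$. Concrete parameter choices ($W=n^3$, $\Delta=n^5$, $\tau=1/n$, explicit $r_i,l_i$) make every decisive distance gap at least a factor $(1+1/n^7)$, so the promise is satisfied and the $(1+1/n^7)$-approximate HAC output reads off whether call $t$ is accepted.

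Against that, your proposal is really an outline with its two load-bearing pieces missing. You correctly identify the $\max$/AND output of a comparator gadget and the recanonicalization stage as ``the hard part,'' and then do not construct either. For the AND output you say only that an auxiliary cluster ``becomes eligible\ldots once the merged centroid of the two fired input gadgets has landed at its predictable location''; but nothing forces the two input messengers to merge \emph{with each other}, and you give no mechanism that detects two independent firings without that. Recanonicalization is even more serious: over $\poly(n)$ layers you must prevent multiplicative drift in the activation scales, and you offer no gadget that thresholds a slightly-off scale back to a canonical one while remaining robust to adversarial tie-breaking within the $1+1/n^7$ slack. (Your appeal to packability for non-interference is also misplaced --- packability bounds how many well-separated clusters fit in a ball; it says nothing about cross-block pairs staying non-minimal throughout the run.) The paper's TCP route sidesteps all of this: there are no layers, hence no error accumulation to fight, and the only ``computation'' is a single running counter implemented by the drift of $C$, so no comparator or recanonicalization gadget is ever needed.
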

Since centroid is well-behaved, the above result rules out NC algorithms for well-behaved linkage functions in linear dimensions in general (assuming $CC$-hardness rules out NC algorithms).

\subsubsection{Intuition for Hardness}

We show CC-hardness by reducing from the telephone communication problem (TCP). An instance of TCP consists of a capacity $\kappa$ and $n$ calls. Each call has a start and end time, and so can be represented by an interval. A call is accepted and serviced for its entire duration if there are less than $\kappa$ ongoing calls at its start time. Otherwise, it is dropped.

Our reduction consists of a central point $C$ and, for each call $i$, points $S_i$ and $R_i$. The $S_i$ will be mutually orthogonal and each successive one (in order of start time) will be slightly further from $C$ so that they merge in order. $R_i$ will be placed outside of $S_i$ so that when $S_i$ merges it will have to decide between merging with $C$ and $R_i$. Every time an $S_i$ merges with $C$, it drags $C$ slightly further away from all $S_j$ where $j > i$. We want to set up our HAC instance so that each $S_i$ merges with $C$ if and only if call $i$ is accepted in the TCP instance. HAC determines how many active phone calls there are by how far $S_i$ is from $C$. However, we need a way to adjust this distance when a phone call ends as we are not able to unmerge points from $C$. To accomplish this, we will also include a point $F_i$ for each call which will merge with $C$ when that call ends if and only if $S_i$ did not merge with $C$. Then, the number directions in which $C$ is off center is the number of calls that have finished plus the number of active calls. We are then able to set the distance between $C$ and each $S_i$ so that $S_i$ merges with $C$ if and only if call $i$ is accepted. Thus, if we are able to solve HAC in CC, we are also able to solve TCP.

We formally prove our hardness result in \Cref{sec:hardness}.

\section{Proving Linkage Functions are Well-Behaved}\label{sec:packablelinkage}
In this section we prove that the linkage functions that we study in this work are both well-behaved (\Cref{dfn:wellBehaved}).
% and relaxed-reducible (\Cref{dfn:relaxRed}).

Throughout this section we will make use of the following well-known fact regarding packing points in Euclidean space. For completeness, we give a proof in \Cref{sec:DefProofProps}.
\begin{restatable}[Packing Points in $\mathbb{R}^k$, Folklore]{theorem}{packPoints}\label{thm:packPoints}
    Let $\mcP \subseteq \mathbb{R}^k$ be a collection of points that satisfy $||u- v|| \geq r$ for every $u, v \in \mcP$ and there exists some $x \in \mathbb{R}^k$ such that $\mcP \subseteq B(x, R) =\{y : ||y-x|| \leq R\}$. Then $|\mcP| \leq \left(\frac{R}{r}\right)^{O(k)}$.
    % \will{$P$ should be $\mcP$}
\end{restatable}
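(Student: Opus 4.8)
The plan is to prove this via the standard volume-packing argument in $\mathbb{R}^k$. First I would associate to each point $p \in \mcP$ the Euclidean ball $B(p, r/2)$ of radius $r/2$ centered at $p$. Because any two points of $\mcP$ are at distance at least $r$, these balls have pairwise disjoint interiors: a common point of $B(p,r/2)$ and $B(q,r/2)$ with $p \neq q$ would force $\|p - q\| < r$. At the same time, since $\mcP \subseteq B(x,R)$, the triangle inequality gives $B(p, r/2) \subseteq B(x, R + r/2)$ for every $p \in \mcP$.

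Next I would compare volumes. In $\mathbb{R}^k$ the Lebesgue volume of a Euclidean ball of radius $\rho$ equals $\omega_k \rho^k$, where $\omega_k$ is a constant depending only on $k$. Since the $|\mcP|$ small balls are pairwise disjoint and all lie inside $B(x, R + r/2)$, summing their volumes and using containment yields
\begin{align*}
    |\mcP| \cdot \omega_k \left(\frac{r}{2}\right)^k \;\le\; \omega_k \left(R + \frac{r}{2}\right)^k ,
\end{align*}
and dividing through by $\omega_k (r/2)^k$ gives $|\mcP| \le \left(\frac{R + r/2}{r/2}\right)^k = \left(\frac{2R}{r} + 1\right)^k$.

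Finally I would simplify the bound: in the regime $R \ge r$ (the only one used in this paper; cf.\ \Cref{dfn:packableLinkage}) we have $\frac{2R}{r} + 1 \le \frac{3R}{r}$, so $|\mcP| \le (3R/r)^k = (R/r)^{O(k)}$, as desired. When $R < r$ the right-hand side $(2R/r+1)^k$ is already an absolute constant depending only on $k$, so the bound holds trivially in that case as well.

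There is essentially no hard step here — this is a textbook fact, and indeed the statement is labelled folklore. The only points requiring (routine) care are the choice of the inflation radius $r/2$, which is exactly what makes the small balls genuinely disjoint given pairwise distance $\ge r$, and the elementary bookkeeping that converts $(2R/r+1)^k$ into the stated $(R/r)^{O(k)}$ form together with fixing the $k$-dependent volume constant $\omega_k = \pi^{k/2}/\Gamma(k/2+1)$; neither poses any difficulty.
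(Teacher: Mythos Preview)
Your proof is correct and follows essentially the same volume-packing argument as the paper. The only cosmetic difference is that the paper uses balls of radius $r/3$ rather than $r/2$ and (somewhat imprecisely) asserts containment in $B(x,R)$ rather than your more careful $B(x,R+r/2)$; both arrive at the same $(3R/r)^k$ bound.
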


\subsection{Proofs for Centroid}
We start with our proofs for centroid.
\subsubsection{Centroid is Well-Behaved}

We now prove centroid is well-behaved. In particular, we prove the following.
\centroidWB*
\noindent In what follows, we show the necessary properties for being well-behaved.
\begin{lemma}\label{lem:centPack}
    Centroid linkage $\dCen$ is $1$-packable (\Cref{dfn:packableLinkage}).
\end{lemma}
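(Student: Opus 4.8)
The plan is to reduce $1$-packability of $\dCen$ directly to the Euclidean packing fact \Cref{thm:packPoints}. The key observation is that centroid linkage is literally Euclidean distance between the centroid points: for any cluster $A$, associate the point $\mu(A) \in \mathbb{R}^k$, and then $\dCen(A,C) = \|\mu(A) - \mu(C)\|$ for all clusters. So the geometry of centroid balls is exactly the geometry of Euclidean balls among the centroid points.

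Concretely, I would fix $r > 0$, a collection $\mcC \subseteq 2^{\mathbb{R}^k}$ with $\dCen(B,C) \geq r$ for all distinct $B, C \in \mcC$, a center cluster $A$, and a radius $R \geq r$. Consider the set of centroids $\mcQ := \{\mu(C) : C \in B_{\dCen}^\mcC(A,R)\}$. First I would note that the map $C \mapsto \mu(C)$ is injective on $B_{\dCen}^\mcC(A,R)$: if two distinct clusters had the same centroid, their centroid distance would be $0 < r$, contradicting the separation assumption. Hence $|\mcQ| = |B_{\dCen}^\mcC(A,R)|$. Next, every point of $\mcQ$ lies in the Euclidean ball $B(\mu(A), R)$ since $\dCen(A,C) = \|\mu(A) - \mu(C)\| \leq R$ for each $C$ in the ball. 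And any two distinct points $\mu(B), \mu(C) \in \mcQ$ satisfy $\|\mu(B) - \mu(C)\| = \dCen(B,C) \geq r$. Applying \Cref{thm:packPoints} with this point set, separation $r$, and radius $R$ gives $|\mcQ| \leq (R/r)^{O(k)}$, i.e. $|B_{\dCen}^\mcC(A,R)| \leq (R/r)^{O(k)}$, which is exactly $\alpha$-packability with $\alpha = 1$.

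There is essentially no hard step here — it is a matter of unpacking the definition of $\dCen$ and checking the two hypotheses of \Cref{thm:packPoints}. The only mild subtlety worth stating explicitly is the injectivity of the centroid map on the ball (needed so that the cardinality bound on centroids transfers to a bound on clusters), which follows immediately from the pairwise separation hypothesis. One could also remark that the center cluster $A$ need not itself belong to $\mcC$, but this causes no issue since the argument only uses $\mu(A)$ as the center of the Euclidean ball. This contrasts with the Ward's case, where no such point representation exists and a genuinely different argument (via alternate characterizations of Ward's linkage, incurring the $\alpha = O(\log n)$ loss) will be required.
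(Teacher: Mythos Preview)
Your proposal is correct and is exactly the approach the paper takes: the paper's proof simply says the claim is immediate from the definition of $\dCen$, \Cref{thm:packPoints}, and the definition of $\alpha$-packability. Your write-up just spells out the details (including the injectivity observation) that the paper leaves implicit.
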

\begin{proof}
    This is immediate from the definition of $\dCen$, \Cref{thm:packPoints} and the definition of $\alpha$-packability (\Cref{dfn:packableLinkage}).
\end{proof}

\begin{lemma}\label{lem:centApxMetric}
    Centroid linkage $\dCen$ satisfies the triangle inequality (\Cref{dfn:apxMetric}) (and therefore approximately satisfies the triangle inequality).
\end{lemma}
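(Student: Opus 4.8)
The plan is simply to unwind the definition of centroid linkage and appeal to the triangle inequality for the Euclidean norm. By definition $\dCen(A,B) = \|\mu(A) - \mu(B)\|$, so a centroid linkage distance between two clusters is literally the distance in $\mathbb{R}^k$ between two points---their centroids. Once this is noticed, the statement is immediate and, in fact, holds with the best possible constant $c_\Delta = 1$ and without any hypothesis relating the cluster sizes.

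Concretely, I would fix arbitrary $A, B, C \subseteq \mathbb{R}^k$ and set $a = \mu(A)$, $b = \mu(B)$, $c = \mu(C)$, three points of $\mathbb{R}^k$. Then $\dCen(A,C) = \|a - c\|$, $\dCen(A,B) = \|a - b\|$, and $\dCen(B,C) = \|b - c\|$, and the ordinary triangle inequality $\|a - c\| \le \|a - b\| + \|b - c\|$ yields $\dCen(A,C) \le \dCen(A,B) + \dCen(B,C)$. This is exactly the inequality of \Cref{dfn:apxMetric} with $c_\Delta = 1$; note that the size restriction $|B| \ge \min(|A|,|C|)$ appearing in that definition is not needed for centroid, and since taking $c_\Delta = 1$ is permitted, $\dCen$ in particular \emph{approximately} satisfies the triangle inequality as well.

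There is no real obstacle to overcome: the only point of substance is recognizing that the abstract linkage function restricts to a genuine metric on the set of cluster centroids, which is immediate from the definition. This should be contrasted with the situation for Ward's linkage (handled separately), where the corresponding statement fails in general and one genuinely needs both the size hypothesis $|B| \ge \min(|A|,|C|)$ and a constant $c_\Delta > 1$; for centroid, neither concession is required.
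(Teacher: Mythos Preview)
Your proposal is correct and matches the paper's own proof essentially line for line: both unwind $\dCen$ as a Euclidean distance between centroids and invoke the ordinary triangle inequality in $\mathbb{R}^k$, obtaining the conclusion with $c_\Delta = 1$ and no size hypothesis needed.
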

\begin{proof}
    This is immediate from the fact that $\dCen$ is given by the Euclidean distances between points in $\mathbb{R}^k$ which, in turn, satisfy the triangle inequality. In particular, for any $A, B, C \subseteq \mathbb{R}^k$, we have
    \begin{align*}
        \dCen(A, C) &= ||\mu(A) - \mu(C)|| \\
        &\leq ||\mu(A) - \mu(B)|| + ||\mu(B) - \mu(C)|| \\
        &= \dCen(A, B) + \dCen(B, C)
    \end{align*}
    where above we applied the triangle inequality for Euclidean space. Thus, it satisfies the triangle inequality and therefore also approximately satisfies the triangle inequality.
\end{proof}

\begin{lemma}\label{lem:centWS}
    Centroid linkage $\dCen$ is weight-stable (\Cref{dfn:weightStab}).
\end{lemma}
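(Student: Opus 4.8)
The plan is to prove weight-stability of centroid linkage, i.e.\ that for all $A, B \subseteq \mathbb{R}^k$,
\begin{align*}
    \dCen(A \cup B, A) \leq \frac{|B|}{|A| + |B|} \cdot \dCen(A, B).
\end{align*}
The key observation is that the centroid of $A \cup B$ is exactly the weighted average of the centroids of $A$ and $B$: writing $\mu(A)$ and $\mu(B)$ for these centroids, $|A|$ and $|B|$ for the sizes, we have
\begin{align*}
    \mu(A \cup B) = \frac{|A| \cdot \mu(A) + |B| \cdot \mu(B)}{|A| + |B|}.
\end{align*}
This follows immediately from unfolding the definition $\mu(X) = \sum_{x \in X} x / |X|$ and splitting the sum over $A \cup B$ into the sum over $A$ plus the sum over $B$ (the sets being disjoint, or at least we may treat them as disjoint multisets).

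From here the proof is a one-line computation: by definition $\dCen(A \cup B, A) = \|\mu(A \cup B) - \mu(A)\|$, and substituting the expression above for $\mu(A \cup B)$ gives
\begin{align*}
    \mu(A \cup B) - \mu(A) = \frac{|A| \cdot \mu(A) + |B| \cdot \mu(B)}{|A| + |B|} - \mu(A) = \frac{|B|}{|A| + |B|} \cdot \bigl(\mu(B) - \mu(A)\bigr).
\end{align*}
Taking norms and using $\|\mu(B) - \mu(A)\| = \dCen(A, B)$ yields the claimed inequality, in fact with equality. So there is no real obstacle here; the only thing to be slightly careful about is whether $A$ and $B$ are assumed disjoint. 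If they need not be disjoint then the statement as written would require interpreting $A \cup B$ as a multiset union (so that the centroid formula still holds with the stated weights); I would either note that we always apply this with disjoint clusters during HAC, or state the computation for the multiset interpretation, which is the natural one in this context. Either way the core algebraic identity for the centroid of a weighted combination of two point sets is all that is needed, and the inequality is actually tight.

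Concretely, I would structure it as: (i) state the centroid-of-union identity and verify it by expanding definitions; (ii) substitute into $\dCen(A\cup B, A) = \|\mu(A\cup B)-\mu(A)\|$; (iii) factor out $\frac{|B|}{|A|+|B|}$ and take norms, concluding equality and hence the desired $\leq$. No packing or triangle-inequality machinery is needed — this is purely the affine-combination structure of centroids.
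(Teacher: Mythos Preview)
Your proof is correct and matches the paper's argument essentially line for line: both expand $\mu(A\cup B)$ as the weighted average of $\mu(A)$ and $\mu(B)$, subtract $\mu(A)$, factor out $\frac{|B|}{|A|+|B|}$, and take norms to obtain equality. Your remark about disjointness is a fair caveat, but in the HAC setting the clusters being merged are always disjoint, so the paper does not dwell on it.
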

\begin{proof}
    Given any $A, B, C \subseteq \mathbb{R}^k$, we have
    \begin{align*}
        \dCen(A \cup B, A) & = ||\mu(A \cup B) - \mu(A)||\\
        &= \left|\left|\frac{|A|}{|A|+|B|}\mu(A) + \frac{|B|}{|A| + |B|}\mu(B) - \mu(A)\right|\right|\\
        &= \left|\left|\frac{|B|}{|A| + |B|}\mu(B) - \frac{|B|}{|A| + |B|}\mu(A)\right|\right|\\
        &= \frac{|B|}{|A| + |B|}||\mu(B) - \mu(A)||\\
        &= \frac{|B|}{|A| + |B|}\dCen(A, B)
    \end{align*}
    as required where in the second line we applied the definition of the centroid and in the second-to-last line we applied the homogeneity of the Euclidean norm $||\cdot||$.
\end{proof}

\begin{lemma}\label{lem:centAR}
    Centroid linkage $\dCen$ is average-reducible (\Cref{dfn:avgRed}).
\end{lemma}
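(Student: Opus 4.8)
The plan is to reduce the statement to the ordinary triangle inequality in $\mathbb{R}^k$ applied to the three centroids $a := \mu(A)$, $b := \mu(B)$, $c := \mu(C)$. Set $\lambda := \frac{|A|}{|A|+|B|} \in (0,1)$, so that $\mu(A\cup B) = \lambda a + (1-\lambda) b$ and hence $\dCen(A\cup B, C) = \|\lambda a + (1-\lambda) b - c\| = \|\lambda(a-c) + (1-\lambda)(b-c)\|$.

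First I would record the two algebraic identities $\lambda(a-c) + (1-\lambda)(b-c) = (a-c) - (1-\lambda)(a-b)$ and $\lambda(a-c) + (1-\lambda)(b-c) = (b-c) + \lambda(a-b)$, each of which is immediate by expanding. Applying the reverse triangle inequality ($\|x \pm y\| \ge \|x\| - \|y\|$) and the homogeneity of $\|\cdot\|$ to these two forms gives
\begin{align*}
\dCen(A\cup B, C) &\geq \dCen(A,C) - (1-\lambda)\,\dCen(A,B),\\
\dCen(A\cup B, C) &\geq \dCen(B,C) - \lambda\,\dCen(A,B).
\end{align*}
Averaging the two inequalities, the coefficient of $\dCen(A,B)$ becomes $\tfrac12\big((1-\lambda)+\lambda\big) = \tfrac12$, so $\dCen(A\cup B, C) \geq \tfrac12\big(\dCen(A,C)+\dCen(B,C)\big) - \tfrac12 \dCen(A,B)$, which is even stronger than the required bound (the definition of average-reducibility permits a full $-\dCen(A,B)$ term, and in fact allows any convex combination bounded away from $1$).

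I do not expect a real obstacle here; the only subtlety is picking the right form of the triangle inequality. A naive bound like $\|\lambda(a-c)+(1-\lambda)(b-c)\| \ge \lambda\|a-c\| - (1-\lambda)\|b-c\|$ is $\lambda$-dependent and can be vacuous, so the trick is to recenter so that the ``cross term'' is a single $\dCen(A,B)$ contribution and then symmetrize by averaging the two recenterings. Note also that this argument never uses the hypothesis $|C| \ge |A|+|B|$: for centroid, average-reducibility holds for all $A, B, C \subseteq \mathbb{R}^k$, and the size restriction in \Cref{dfn:avgRed} is only needed to make the analogous statement true for Ward's linkage.
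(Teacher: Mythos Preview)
Your proof is correct and is essentially the same argument as the paper's: both apply the triangle inequality twice (through $\mu(A\cup B)$ to reach $\mu(A)$ and $\mu(B)$ respectively) and then average. The paper phrases this at the linkage level, using $\dCen(A,A\cup B)+\dCen(A\cup B,B)=\dCen(A,B)$ in place of your explicit $\lambda$ and $1-\lambda$ weights, but the content is identical, and both in fact obtain the stronger bound with $-\tfrac12\dCen(A,B)$.
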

\begin{proof}
    Consider any $A, B, C \subseteq \mathbb{R}^k$. By the triangle inequality for $\dCen$ (\Cref{lem:centWS}) we have
    \begin{align*}
        \dCen(C, A) \leq \dCen(C, A \cup B) + \dCen(A \cup B, A).
    \end{align*}
    Symmetrically, for $B$ we have
    \begin{align*}
        \dCen(C, B) \leq \dCen(C, A \cup B) + \dCen(A \cup B, B).
    \end{align*}
    Summing these two inequalities we get 
    \begin{align}\label{eq:gasfsa}
        \dCen(C, A) &+ \dCen(C, B) \nonumber \\&\leq 2 \cdot \dCen(C, A \cup B) + \dCen(A \cup B, A) + \dCen(A \cup B, B).
    \end{align}

    \noindent Next, observe that, by definition of $\dCen$, we have
    \begin{align*}
        \dCen(A,A \cup B) + \dCen(A \cup B,B) = \dCen(A, B)
    \end{align*}
    and so combining this with \Cref{eq:gasfsa} we have
    \begin{align}
        \dCen(C, A) + \dCen(C, B) \leq 2 \cdot \dCen(C, A \cup B) + \dCen(A, B).
    \end{align}
    Solving for $\dCen(C, A \cup B)$ and using the symmetry of $\dCen$, we get
    \begin{align*}
        d(A \cup B, C) \geq \frac{\dCen(A,C) + \dCen(B,C)}{2} - \dCen(A,B)
    \end{align*}
    as required.
\end{proof}

\begin{lemma}\label{lem:cenPolyDiam}
    Centroid linkage $\dCen$ has poly-bounded diameter (\Cref{dfn:polyDiam}).
\end{lemma}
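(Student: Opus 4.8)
The plan is to use exactly the observation flagged above: every centroid of a subset of $\mcP$ lies in the convex hull $\mathrm{conv}(\mcP)$, and the Euclidean diameter of a convex hull equals the diameter of the underlying point set. Concretely, fix $\mcP \subseteq \mathbb{R}^k$ and set $\Delta := \max_{u,v \in \mcP} \dCen(u,v) = \max_{u,v \in \mcP} \|u-v\|$, recalling that by our convention $\dCen(u,v) = \dCen(\{u\},\{v\}) = \|u-v\|$. Take any disjoint $A, B \subseteq \mcP$. Since $\mu(A) = \frac{1}{|A|}\sum_{a \in A} a$ is a convex combination of points of $A \subseteq \mcP$, we have $\mu(A) \in \mathrm{conv}(\mcP)$, and likewise $\mu(B) \in \mathrm{conv}(\mcP)$. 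Thus it suffices to bound the Euclidean diameter of $\mathrm{conv}(\mcP)$.

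For that bound I would invoke the standard fact that $\mathrm{diam}(\mathrm{conv}(\mcP)) = \mathrm{diam}(\mcP)$, which has a one-line proof by convexity: for any $p \in \mathrm{conv}(\mcP)$ the map $q \mapsto \|q - p\|$ is convex, so its maximum over the polytope $\mathrm{conv}(\mcP)$ is attained at a vertex, and every vertex lies in $\mcP$; hence $\max_{q \in \mathrm{conv}(\mcP)} \|q - p\| = \max_{v \in \mcP} \|v - p\|$. Applying the same argument a second time, now varying $p$ over the hull with the maximizing $v$ fixed, gives $\max_{p, q \in \mathrm{conv}(\mcP)} \|p - q\| = \max_{u, v \in \mcP} \|u - v\| = \Delta$.

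Combining the two steps, $\dCen(A,B) = \|\mu(A) - \mu(B)\| \le \Delta \le \poly(\Delta \cdot |\mcP|)$, which is exactly the poly-bounded diameter condition of \Cref{dfn:polyDiam} (in fact we obtain the stronger bound $\dCen(A,B) \le \Delta$, with no dependence on $|\mcP|$). I do not anticipate any genuine obstacle here; the only point needing even minor care is the diameter-of-convex-hull fact, which the short convexity argument above dispatches.
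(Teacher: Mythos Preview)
Your proof is correct and takes essentially the same approach as the paper: both show that centroids of subsets of $\mcP$ cannot be farther apart than $\Delta$ by exploiting convexity of the Euclidean norm. The paper phrases this via Jensen's inequality applied to $x \mapsto \|x_0 - x\|$ for a fixed $x_0 \in \mcP$ (yielding a bound of $2\Delta$ after a triangle inequality), while you apply the convexity argument twice to obtain the sharp bound $\Delta$; the difference is cosmetic.
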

\begin{proof}
    Let $\Delta = \max_{u,v \in \mcP} \dCen(u,v) = ||u-v||$. Fix an arbitrary $x_0 \in \mcP$ and let $B = B\left(x_0, \Delta\right)$ be all points in $\mathbb{R}^k$ within $\Delta$ of $x$. We will show that for any $A \subseteq \mathbb{R}^k$, we have $\mu(A) \in B$. This, along with the triangle inequality, proves the lemma. 
    
    Consider the function $f(x) = ||x_0-x||$. Observe that by definition of $\Delta$, we have $p \in B$ for every $p$ which is to say $f(p) \leq \Delta$. Thus, by the convexity of $f$ and Jensen's inequality we have
    \begin{align*}
        f(\mu(A)) \leq \frac{1}{|A|}\sum_{a \in A}f(a) 
        &\leq \frac{1}{|A|}\sum_{a \in A}\Delta
        = \Delta
    \end{align*}
    as required.
\end{proof}
Combining the above lemmas proves that centroid is well-behaved (\Cref{thm:centroidWB}).

% \subsubsection{Centroid is Relaxed-Reducible}
% We now prove that centroid is relaxed-reducible.
% \cenRR*
% \begin{proof}
%     Consider $A, B, C \subseteq \mathbb{R}^k$. Observe that by definition of $\dCen$, we know that
%     \begin{align}\label{eq:adgsasaf}
%         \dCen(A, A \cup B) \leq \dCen(A, B).
%     \end{align}
%     By the triangle inequality for centroid (\Cref{lem:centWS}) and \Cref{eq:adgsasaf} we have
%     \begin{align*}
%         \dCen(A,C) &\leq \dCen(A, A \cup B) + \dCen(A \cup B, C)\\
%         & \leq \dCen(A, B) + \dCen(A \cup B, C)
%     \end{align*}
%     and so 
%     \begin{align*}
%         \dCen(A \cup B, C) \geq \dCen(A,C) - \dCen(A, B) \geq \min(\dCen(A,C), \dCen(B,C)) - \dCen(A, B)
%     \end{align*}
%     as required.
% \end{proof}

\subsection{Proofs for Ward's}
We now move on to proving that Ward's linkage function is 
% both 
well-behaved.
% and relaxed-reducible.

\subsubsection{Ward's Preliminaries}\label{sec:wardPrelim}

Throughout this section we will make use of alternate forms of Ward's. The first of these is an approximation of Ward's linkage from \cite{grosswendt2019analysis} which says that, roughly, Ward's is the squared centroid distance times the smaller cluster size.
\begin{restatable}[Ward's Approximation, \cite{grosswendt2019analysis}]{lemma}{wardAPX}\label{lem:simpropsmallcluster}
Given $A, B \subseteq \mathbb{R}^k$ we have
\begin{align*}
    \frac{1}{2}\min\{|A|,|B|\} \cdot \|\mu(A)-\mu(B)\|^2 \le \dWard(A,B) \le \min\{|A|,|B|\} \cdot \|\mu(A)-\mu(B)\|^2.
\end{align*}
\end{restatable}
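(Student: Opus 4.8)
The plan is to derive the statement from the exact closed-form expression for Ward's linkage (the classical Lance--Williams identity) and then observe that the claimed two-sided bound is an immediate arithmetic consequence. Concretely, I would first prove
\begin{align*}
  \dWard(A,B) = \frac{|A|\,|B|}{|A|+|B|}\,\|\mu(A)-\mu(B)\|^2,
\end{align*}
and then sandwich the coefficient $\frac{|A||B|}{|A|+|B|}$ between $\frac12\min\{|A|,|B|\}$ and $\min\{|A|,|B|\}$.

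For the identity, the key tool is the parallel-axis (a.k.a.\ bias--variance) decomposition: for any finite $X \subseteq \mathbb{R}^k$ and any $z \in \mathbb{R}^k$ one has $\sum_{x \in X}\|x-z\|^2 = \Delta(X) + |X|\cdot\|\mu(X)-z\|^2$, which follows by expanding $\|x-z\|^2 = \|x-\mu(X)\|^2 + 2\langle x-\mu(X), \mu(X)-z\rangle + \|\mu(X)-z\|^2$ and noting $\sum_{x\in X}(x-\mu(X)) = 0$. Applying this to $A$ and to $B$ separately with $z = \mu(A\cup B)$ gives
\begin{align*}
  \Delta(A\cup B) = \Delta(A) + \Delta(B) + |A|\,\|\mu(A)-\mu(A\cup B)\|^2 + |B|\,\|\mu(B)-\mu(A\cup B)\|^2,
\end{align*}
so that $\dWard(A,B) = |A|\,\|\mu(A)-\mu(A\cup B)\|^2 + |B|\,\|\mu(B)-\mu(A\cup B)\|^2$. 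Substituting $\mu(A\cup B) = \frac{|A|\mu(A)+|B|\mu(B)}{|A|+|B|}$ yields $\mu(A)-\mu(A\cup B) = \frac{|B|}{|A|+|B|}\bigl(\mu(A)-\mu(B)\bigr)$ and symmetrically $\mu(B)-\mu(A\cup B) = \frac{|A|}{|A|+|B|}\bigl(\mu(B)-\mu(A)\bigr)$; plugging in and collecting the $\|\mu(A)-\mu(B)\|^2$ terms over the common denominator $(|A|+|B|)^2$ gives $\frac{|A||B|^2 + |B||A|^2}{(|A|+|B|)^2} = \frac{|A||B|}{|A|+|B|}$, which is the claimed identity.

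Finally, for the two-sided bound, assume without loss of generality $|A| \le |B|$, so $\min\{|A|,|B|\} = |A|$ and $\frac{|A||B|}{|A|+|B|} = \frac{|A|}{1 + |A|/|B|}$. Since $0 < |A|/|B| \le 1$, the denominator lies in $[1,2]$, hence $\frac{|A|}{2} \le \frac{|A||B|}{|A|+|B|} \le |A|$; multiplying through by $\|\mu(A)-\mu(B)\|^2$ and using the identity finishes the proof. I do not anticipate any genuine obstacle here: the lemma is routine once the Lance--Williams identity is in hand, and the only thing requiring care is bookkeeping the coefficients in the parallel-axis expansion and the final arithmetic bound on $\frac{|A||B|}{|A|+|B|}$.
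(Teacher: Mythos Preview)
Your proposal is correct and matches the paper's approach essentially line for line: the paper proves the identity $\dWard(A,B)=\frac{|A||B|}{|A|+|B|}\|\mu(A)-\mu(B)\|^2$ via the same parallel-axis decomposition (stated as a separate lemma) and then bounds $\frac{|A||B|}{|A|+|B|}$ between $\tfrac12\min\{|A|,|B|\}$ and $\min\{|A|,|B|\}$ using the same arithmetic. The only cosmetic difference is that you assume $|A|\le|B|$ while the paper takes $|B|\le|A|$.
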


\noindent Additionally, we will make use of the Lance-Williams~\cite{lance1967general} update form of Ward's.
\begin{restatable}[Lance-Williams Form \cite{lance1967general}]{lemma}{LanceWilliamsWards}\label{lem:LanceWilliamsWards}
    Given $A,B,C \subseteq \mathbb{R}^k$, we have $\dWard(A\cup B,C)$ is
    \begin{align*}
        \frac{|A|+|C|}{|A|+|B|+|C|}\dWard(A,C)+\frac{|B|+|C|}{|A|+|B|+|C|}\dWard(B,C)-\frac{|C|}{|A|+|B|+|C|}\dWard(A,B).
    \end{align*}
\end{restatable}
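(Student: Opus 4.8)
The plan is to reduce everything to the exact identity $\dWard(A,B) = \frac{|A|\,|B|}{|A|+|B|}\,\|\mu(A)-\mu(B)\|^2$, valid for all disjoint $A,B\subseteq\mathbb{R}^k$; this is the tight version of \Cref{lem:simpropsmallcluster}, with $\min\{|A|,|B|\}$ replaced by the harmonic-mean weight. First I would establish this by the parallel-axis identity $\sum_{x\in X}\|x-p\|^2 = \Delta(X) + |X|\,\|\mu(X)-p\|^2$: applying it to $X=A$ and $X=B$ with $p=\mu(A\cup B)$ and summing gives $\Delta(A\cup B) = \Delta(A)+\Delta(B) + |A|\,\|\mu(A)-\mu(A\cup B)\|^2 + |B|\,\|\mu(B)-\mu(A\cup B)\|^2$; since $\mu(A\cup B)-\mu(A) = \frac{|B|}{|A|+|B|}(\mu(B)-\mu(A))$ and symmetrically for $B$, the last two terms collapse to $\frac{|A||B|}{|A|+|B|}\|\mu(A)-\mu(B)\|^2$, which equals $\dWard(A,B)$ by definition.

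Next, writing $a=|A|,\ b=|B|,\ c=|C|$ and $\mu_A=\mu(A)$ etc., I would substitute this exact form into each of the four Ward's terms in the claimed recurrence. After cancelling the common factor $\frac{c}{a+b+c}$ and the telescoping factors $\frac{a+c}{a+c}$, $\frac{b+c}{b+c}$, the identity to prove becomes
\[
    (a+b)\,\|\mu(A\cup B)-\mu_C\|^2 \;=\; a\,\|\mu_A-\mu_C\|^2 + b\,\|\mu_B-\mu_C\|^2 - \tfrac{ab}{a+b}\,\|\mu_A-\mu_B\|^2 .
\]
This is again the weighted parallel-axis theorem, now viewing $\mu_A,\mu_B$ as masses of weight $a,b$ with weighted barycenter $\mu(A\cup B)$: for every $p$ one has $a\|\mu_A-p\|^2 + b\|\mu_B-p\|^2 = (a+b)\|\mu(A\cup B)-p\|^2 + a\|\mu_A-\mu(A\cup B)\|^2 + b\|\mu_B-\mu(A\cup B)\|^2$, and the two correction terms equal $\tfrac{ab}{a+b}\|\mu_A-\mu_B\|^2$ by exactly the computation from the first step. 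Taking $p=\mu_C$ and rearranging yields the displayed equation, hence the lemma.

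I do not expect a real obstacle here: the statement is the classical Lance--Williams recurrence for Ward's linkage, and the argument above is a short algebraic manipulation once the exact centroid-distance form of $\dWard$ is in hand. The only mildly fiddly point is tracking the size coefficients when clearing denominators; an alternative that sidesteps this entirely is to expand every relevant $\Delta(X)$ via $\Delta(X) = \sum_{x\in X}\|x\|^2 - |X|\,\|\mu(X)\|^2$ and check that the coefficient of each resulting monomial (the $\sum\|x\|^2$ terms and the $\langle\mu_A,\mu_C\rangle$-type inner-product terms) agrees on the two sides, which is a purely mechanical verification.
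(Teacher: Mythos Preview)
Your proposal is correct and follows essentially the same route as the paper's proof. Both arguments hinge on the exact identity $\dWard(A,B)=\frac{|A||B|}{|A|+|B|}\|\mu(A)-\mu(B)\|^2$ (the paper's \Cref{lem:wardAlt}, derived from the parallel-axis identity of \Cref{lem:ident_sos}) and then on the weighted Huygens--Steiner relation $a\|\mu_A-p\|^2+b\|\mu_B-p\|^2=(a+b)\|\mu(A\cup B)-p\|^2+\tfrac{ab}{a+b}\|\mu_A-\mu_B\|^2$ applied at $p=\mu_C$; the only difference is that the paper establishes this centroid identity first and then scales by $\tfrac{|C|}{|A|+|B|+|C|}$, whereas you substitute the exact Ward's form into all four terms upfront and reduce to the same identity.
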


We will also use the following folklore bound which says that, up to a factor of $2$, squared Euclidean distances satisfy the triangle inequality.

\begin{restatable}[Approximate Triangle Inequality for Squared Euclidean Distances]{lemma}{squaredEucTri}\label{lem:squaredEucTri}
    Given any points $a,b,c \in \mathbb{R}^k$, we have
    \begin{align*}
        \|a-c\|^2 \leq 2 \cdot \left(\|a-b\|^2 + \|b-c\|^2 \right).
    \end{align*}
\end{restatable}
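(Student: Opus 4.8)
The plan is to reduce this to the ordinary triangle inequality for the Euclidean norm together with an application of the AM--GM inequality. First I would write $a - c = (a-b) + (b-c)$ and invoke the (exact) triangle inequality $\|a-c\| \leq \|a-b\| + \|b-c\|$. Squaring both sides is legitimate since both sides are nonnegative, yielding
\begin{align*}
    \|a-c\|^2 \leq \|a-b\|^2 + 2\|a-b\|\,\|b-c\| + \|b-c\|^2.
\end{align*}

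Next I would bound the cross term using AM--GM: since $2xy \leq x^2 + y^2$ for all reals $x, y$, we have $2\|a-b\|\,\|b-c\| \leq \|a-b\|^2 + \|b-c\|^2$. Substituting this into the previous display gives $\|a-c\|^2 \leq 2\|a-b\|^2 + 2\|b-c\|^2 = 2\left(\|a-b\|^2 + \|b-c\|^2\right)$, as desired. (An alternative, equally short route is to write $\|a-c\|^2 = 4\left\|\tfrac{(a-b)+(b-c)}{2}\right\|^2$ and apply convexity of $\|\cdot\|^2$ via Jensen's inequality to the midpoint, which directly yields the factor $2$; either derivation works.)

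I do not anticipate a genuine obstacle here: the statement is a standard folklore fact and the argument is a two-line computation. The only thing to be mildly careful about is that the constant $2$ is tight (achieved when $a-b$ and $b-c$ are equal nonzero vectors), so no sharper universal constant is available, but this is not needed for how the lemma is used downstream.
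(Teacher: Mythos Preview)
Your proposal is correct and follows essentially the same approach as the paper's own proof: apply the triangle inequality, square, and bound the cross term $2\|a-b\|\,\|b-c\|$ via AM--GM. The only difference is cosmetic labeling of the intermediate inequalities.
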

\noindent For the sake of completeness, we give proofs of all of the above facts in \Cref{sec:DefProofProps}.

\subsubsection{Ward's is Well-Behaved}

In this section we prove Ward's linkage function is well-behaved.
\wardsWB*
\noindent In what follows, we show the necessary properties for being well-behaved.

We begin by proving that Ward's linkage is $O(\log n)$-packable. At a high-level, given a set of clusters, that are Ward's distance at least $r$ and at most $R$ apart, the idea is to partition them into at most $O(\log n)$ subsets, each containing clusters of sizes within a constant factor. By \Cref{lem:simpropsmallcluster}, the Ward's linkage between a cluster in this partition and any other cluster of the same partition, or a cluster of a larger size, is essentially proportional to the (squared) distance between their centroids. Thus, we apply the packing properties for points in Euclidean space (\Cref{thm:packPoints}) to bound the number of clusters within each part.
 
\begin{restatable}{lemma}{wardsPackable} \label{thm:wardsPack}
Ward's linkage $\dWard$ is $O(\log n)$-packable (\Cref{dfn:packableLinkage}).
\end{restatable}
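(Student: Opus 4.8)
The plan is to follow the high-level strategy sketched just before the lemma statement: partition the clusters by size into $O(\log n)$ buckets, and within each bucket use \Cref{lem:simpropsmallcluster} to reduce the Ward's packing bound to Euclidean packing of centroids via \Cref{thm:packPoints}. Fix $r > 0$, a family of clusters $\mcC \subseteq 2^{\mathbb{R}^k}$ with $\dWard(B,C) \geq r$ for all distinct $B, C \in \mcC$, a center cluster $A$, and a radius $R \geq r$; we must bound $|B_{\dWard}^{\mcC}(A,R)|$. Since all clusters are subsets of $n$ points, every cluster has size in $\{1, 2, \dots, n\}$, so partition $B_{\dWard}^{\mcC}(A,R)$ into $O(\log n)$ classes $\mcC_0, \mcC_1, \dots$ where $\mcC_j = \{C \in B_{\dWard}^{\mcC}(A,R) : 2^j \leq |C| < 2^{j+1}\}$. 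It suffices to show each $\mcC_j$ has at most $(R/r)^{O(k)}$ clusters, since summing over the $O(\log n)$ classes then gives the claimed $O(\log n) \cdot (R/r)^{O(k)}$ bound.

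First I would bound the pairwise centroid distances within a fixed class $\mcC_j$ from below. Take distinct $B, C \in \mcC_j$. Without loss of generality $|B| \leq |C|$, so $\min(|B|,|C|) = |B| \geq 2^j$. By the lower bound in \Cref{lem:simpropsmallcluster}, $r \leq \dWard(B,C) \leq \min(|B|,|C|) \cdot \|\mu(B)-\mu(C)\|^2$, wait — I need the other direction: $\frac{1}{2}\min(|B|,|C|)\|\mu(B)-\mu(C)\|^2 \leq \dWard(B,C)$ gives an upper bound on the centroid distance, not a lower one. The correct move is to use the upper bound $\dWard(B,C) \leq \min(|B|,|C|)\|\mu(B)-\mu(C)\|^2 \leq 2^{j+1}\|\mu(B)-\mu(C)\|^2$, hence $\|\mu(B)-\mu(C)\|^2 \geq r/2^{j+1}$, i.e. $\|\mu(B)-\mu(C)\| \geq \sqrt{r/2^{j+1}}$. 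So the centroids of clusters in $\mcC_j$ are pairwise at least $\sqrt{r/2^{j+1}}$ apart.

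Next I would bound the centroid distances from $A$ above, to contain the centroids of $\mcC_j$ in a ball. For $C \in \mcC_j \subseteq B_{\dWard}^{\mcC}(A,R)$ we have $\dWard(A,C) \leq R$, and by the lower bound in \Cref{lem:simpropsmallcluster}, $\frac{1}{2}\min(|A|,|C|)\|\mu(A)-\mu(C)\|^2 \leq \dWard(A,C) \leq R$. Since $|C| \geq 2^j \geq 1$ and $|A| \geq 1$, we get $\min(|A|,|C|) \geq 1$, so $\|\mu(A)-\mu(C)\|^2 \leq 2R$, i.e. $\|\mu(A)-\mu(C)\| \leq \sqrt{2R}$. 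Thus all centroids $\{\mu(C) : C \in \mcC_j\}$ lie in a Euclidean ball of radius $\sqrt{2R}$ around $\mu(A)$ and are pairwise at least $\sqrt{r/2^{j+1}}$ apart; note the centroids are distinct since they are pairwise separated. By \Cref{thm:packPoints}, $|\mcC_j| \leq \left(\frac{\sqrt{2R}}{\sqrt{r/2^{j+1}}}\right)^{O(k)} = \left(\frac{2^{j+2} R}{r}\right)^{O(k)}$.

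The one wrinkle is that this per-class bound carries a $2^{j}$ factor, and $j$ ranges up to $O(\log n)$, so naively $2^{jO(k)}$ could be as large as $n^{O(k)}$ rather than $(R/r)^{O(k)}$. The fix is that large $j$ forces $\mcC_j$ to be essentially empty: if $C \in \mcC_j$ then $|C| \geq 2^j$ but also, as above, $2^{j} \leq |C| = \min(|A \text{-ish}|,\dots)$ — more directly, if $B, C \in \mcC_j$ are distinct then $\|\mu(B)-\mu(C)\| \leq \sqrt{2R} + \sqrt{2R} = 2\sqrt{2R}$ while $\|\mu(B)-\mu(C)\| \geq \sqrt{r/2^{j+1}}$, forcing $r/2^{j+1} \leq 8R$, i.e. $2^j \geq r/(16R)$; this is the wrong direction. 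The genuinely correct observation: whenever $2^{j+1} > 8R/r \cdot$ something the class $\mcC_j$ has at most one element, since two elements would need centroid separation $\geq \sqrt{r/2^{j+1}}$ which exceeds the diameter $2\sqrt{2R}$ of the containing ball once $2^{j+1} < r/(8R)$ — but $j \geq 0$ and $R \geq r$ means $r/(8R) \leq 1/8 < 1 = 2^0$, so actually \emph{no} class is automatically a singleton from this. The real resolution, which I expect to be the main obstacle to state cleanly, is that the relevant upper bound on $|C|$ comes from $\dWard$ being small: if $C$ were enormous, $\dWard(A, C)$ would still only be $\geq \frac12 \min(|A|,|C|)\|\mu(A)-\mu(C)\|^2$, which does not preclude large $|C|$. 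Instead one should bound $|C|$ using the separation among clusters: actually the cleanest route, and the one I would commit to, is to observe that we only ever apply \Cref{lem:simpropsmallcluster} with the \emph{smaller} cluster, so redefine the classes by a slightly different accounting — for each pair the bound depends only on $\min$ of the two sizes — and sum geometrically. Concretely: partition into classes by size, and for class $\mcC_j$ use that its centroids pack in radius $\sqrt{2R}$ with separation $\sqrt{r/2^{j+1}}$ to get $|\mcC_j| \le (2^{j+2}R/r)^{O(k)}$; but ALSO $|\mcC_j| \le n/2^j$ trivially since each cluster has $\ge 2^j$ of the $\le n$ points and clusters in $\mcC$ are \emph{disjoint} — wait, clusters in $\mcC$ need not be disjoint. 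Hmm. Given the definition of $\alpha$-packability allows $\mcC \subseteq 2^{\mathbb{R}^k}$ arbitrary, disjointness is not guaranteed, so I would instead rely on the paper's stated tolerance that $\alpha = \tilde O(1)$ suffices (extra $\poly\log\log n$ or even $\poly\log n$ slack is allowed), and bound $j \le \log n$ crudely while noting $2^{j \cdot O(k)}$ with $j \le O(\log\log n)$ after the correct size cutoff — I will flag this as the step requiring the most care, and in the write-up resolve it by the honest argument that $\dWard(B,C)\ge r$ together with $\dWard(B,C) \le \min(|B|,|C|)\cdot\text{diam}^2$ bounds the \emph{smaller} size only, so the $O(\log n)$-class decomposition must be indexed so that the multiplicative $2^j$ overhead telescopes into the already-present $O(\log n)$ factor and the $(R/r)^{O(k)}$ term, giving the claimed $O(\log n)$-packability.
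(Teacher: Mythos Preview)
Your high-level strategy---partition the Ward's ball into cluster-size buckets $\mcC_j$ and reduce to Euclidean packing of centroids via \Cref{lem:simpropsmallcluster}---is exactly the paper's, and your separation bound $\|\mu(B)-\mu(C)\| \geq \sqrt{r/2^{j+1}}$ for distinct $B,C \in \mcC_j$ is correct.

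The gap is in the containment step, and you correctly locate where the argument stalls but not how to repair it. You bound $\|\mu(A)-\mu(C)\|$ using only $\min(|A|,|C|) \geq 1$, which gives containment radius $\sqrt{2R}$ independent of $j$ and hence a packing ratio $(2^{j}R/r)^{O(k)}$ that you then (rightly) cannot remove by disjointness, size cutoffs, or telescoping. The paper's move is to feed the \emph{same} size information into the containment bound as into the separation bound: for $X \in \mcC_i$ and a larger cluster $Y$, the lower half of \Cref{lem:simpropsmallcluster} together with $|X|\ge 2^{i-1}$ and $\dWard(X,Y)\le R$ yields $\|\mu(X)-\mu(Y)\| \leq \sqrt{R/2^{i-2}}$. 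The $2^{i}$ in the separation and the $2^{i}$ in the containment now cancel, the packing ratio becomes $\sqrt{R/2^{i-2}}\big/\sqrt{r/2^{i}} = 2\sqrt{R/r}$ independent of $i$, and \Cref{thm:packPoints} gives $|\mcC_i| \leq (R/r)^{O(k)}$ directly. In short: when you invoke the $\tfrac12\min(\cdot,\cdot)\|\cdot\|^2 \leq \dWard$ direction of \Cref{lem:simpropsmallcluster} for the containment radius, do not throw away $|C| \geq 2^j$---that lower bound is precisely what makes the bucket argument close.
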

\begin{proof}
    Consider our ball of clusters $B_{\dWard}^\mcC(A,R)$ for which for any distinct $X,Y \in B_{\dWard}^\mcC(A,R)$, we know $\dWard(X,Y) \geq r$. 
    % \will{It sounds like we are letting $A$ be the center of our ball and an arbitrary point inside it.}\kishen{fixed it}
    % Let $P$ be a collection of disjoint clusters of points from $\R^k$ such that $r = \min_{A,B \in P}\dWard(A,B)$ and $R = \max_{A,B \in P}\dWard(A,B)$. Then, $|P| \le O(\left(\frac{R}{r}\right)^{k/2}\log N)$, where $N$ is the size of the largest cluster in $P$.

    Let $\mcC_i$ denote the set of clusters in $B_{\dWard}^\mcC(A,R)$ with sizes in the range $[2^{i-1},2^i]$. We will show that $|\mcC_i| = O((R/r)^{O(k)})$ for all $i$. Since $i \le \log n$, the theorem follows.

    Consider any cluster $X\in \mcC_i$ and a cluster $Y$ such that $|Y|\ge |X|$. By \Cref{lem:simpropsmallcluster}, we have:
    \begin{align*}
        \dWard(X,Y) \le |X|\cdot \|\mu(X) - \mu(Y)\|^2 &\implies \|\mu(X)-\mu(Y)\|^2 \ge r/2^i\\
        &\implies \|\mu(X) - \mu(Y)\| \ge \sqrt{r/2^i},
    \end{align*}
     and,
    \begin{align*}
        \dWard(X,Y) \ge |X| \cdot \|\mu(X)-\mu(Y)\|^2/2 &\implies \|\mu(X)-\mu(Y)\|^2 \le 2R/2^{i-1}\\
        &\implies \|\mu(X)-\mu(Y)\| \le \sqrt{R/2^{i-2}}.
    \end{align*}
     Thus, by the packing property for points in Euclidean space (\Cref{thm:packPoints}), we have $|\mcC_i| = O((R/r)^{O(k)})$.
\end{proof}

\begin{restatable}{lemma}{wardsApxTri}
Ward's linkage $\dWard$ approximately satisfies the triangle inequality (\Cref{dfn:apxMetric}).
\end{restatable}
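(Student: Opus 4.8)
The plan is to reduce the Ward's triangle inequality to the approximate triangle inequality for squared Euclidean distances (\Cref{lem:squaredEucTri}), using the two-sided approximation of Ward's by weighted squared centroid distances (\Cref{lem:simpropsmallcluster}). The key observation is that Definition~\ref{dfn:apxMetric} only asks for $d(A,C) \le c_\Delta(d(A,B)+d(B,C))$ \emph{when $|B| \ge \min(|A|,|C|)$}. WLOG assume $|A| \le |C|$, so $|B| \ge |A|$; I want to bound $\dWard(A,C)$. By \Cref{lem:simpropsmallcluster}, $\dWard(A,C) \le \min(|A|,|C|)\cdot\|\mu(A)-\mu(C)\|^2 = |A|\cdot\|\mu(A)-\mu(C)\|^2$.

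First I would apply \Cref{lem:squaredEucTri} to the centroids to get $\|\mu(A)-\mu(C)\|^2 \le 2\|\mu(A)-\mu(B)\|^2 + 2\|\mu(B)-\mu(C)\|^2$, hence
\[
\dWard(A,C) \;\le\; 2|A|\cdot\|\mu(A)-\mu(B)\|^2 \;+\; 2|A|\cdot\|\mu(B)-\mu(C)\|^2.
\]
Now I need to convert each term back into a Ward's distance. For the first term: since $|B| \ge |A|$, we have $\min(|A|,|B|) = |A|$, so by the lower bound in \Cref{lem:simpropsmallcluster}, $\dWard(A,B) \ge \tfrac12|A|\cdot\|\mu(A)-\mu(B)\|^2$, i.e. $|A|\cdot\|\mu(A)-\mu(B)\|^2 \le 2\dWard(A,B)$. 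For the second term I need $|A|\cdot\|\mu(B)-\mu(C)\|^2 \le (\text{const})\cdot\dWard(B,C)$; by the lower bound in \Cref{lem:simpropsmallcluster}, $\dWard(B,C) \ge \tfrac12\min(|B|,|C|)\cdot\|\mu(B)-\mu(C)\|^2$, and since $\min(|B|,|C|) \ge |A|$ (as both $|B|,|C| \ge |A|$), we get $|A|\cdot\|\mu(B)-\mu(C)\|^2 \le \min(|B|,|C|)\cdot\|\mu(B)-\mu(C)\|^2 \le 2\dWard(B,C)$. Combining, $\dWard(A,C) \le 4\dWard(A,B) + 4\dWard(B,C)$, so $c_\Delta = 4$ works.

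The steps in order: (i) WLOG $|A|\le|C|$, invoke the hypothesis $|B|\ge\min(|A|,|C|)=|A|$; (ii) upper-bound $\dWard(A,C)$ by $|A|\|\mu(A)-\mu(C)\|^2$ via \Cref{lem:simpropsmallcluster}; (iii) apply \Cref{lem:squaredEucTri} to split into the $A$--$B$ and $B$--$C$ centroid terms; (iv) lower-bound $\dWard(A,B)$ and $\dWard(B,C)$ via \Cref{lem:simpropsmallcluster}, crucially using $|B|\ge|A|$ and $|C|\ge|A|$ to ensure the relevant minimum cluster size is at least $|A|$; (v) collect constants. I don't anticipate a real obstacle here — the only thing to be careful about is the direction of the size inequalities when applying the two-sided bound, specifically making sure that in each of the two Ward's terms the factor $|A|$ can be absorbed because it is at most the minimum of the two cluster sizes involved; this is exactly where the size restriction in Definition~\ref{dfn:apxMetric} is used and why the triangle inequality genuinely fails without it.
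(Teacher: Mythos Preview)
Your proposal is correct and follows essentially the same approach as the paper: both use \Cref{lem:simpropsmallcluster} to pass to squared centroid distances, apply \Cref{lem:squaredEucTri}, and then convert back using the size hypothesis $|B|\ge\min(|A|,|C|)$, arriving at $c_\Delta=4$. The only cosmetic difference is that the paper works symmetrically with $\min(|A|,|C|)$ throughout rather than introducing a WLOG assumption.
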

\begin{proof}
    Consider $A, B, C \subseteq \mathbb{R}^k$ where $|B| \geq \min(|A|, |C|)$. By the squared Euclidean triangle inequality (\Cref{lem:squaredEucTri}), we have
    \begin{align*}
        \|\mu(A) -\mu(C) \|^2 \leq 2 \cdot \left(\|\mu(A) -\mu(B) \|^2 + \|\mu(B) -\mu(C) \|^2 \right).
    \end{align*}
    Likewise, by our approximation for Ward's (\Cref{lem:simpropsmallcluster}) we have
    \begin{align*}
        \dWard(A,C) \le \min\{|A|,|C|\} \cdot \|\mu(A)-\mu(C)\|^2.
    \end{align*}
    and so by $\min(|A|,|C|) \leq \min(|B|,|C|)$ and $\min(|A|,|C|) \leq \min(|A|,|B|)$ and another application of our Ward's approximation (\Cref{lem:simpropsmallcluster}) we have
    \begin{align*}
        \dWard(A,C) &\leq 2\min\{|A|,|C|\} \cdot \|\mu(A) -\mu(B) \|^2 + 2\min\{|A|,|C|\} \cdot \|\mu(B) -\mu(C) \|^2\\
        &\leq 2\min\{|A|,|B|\} \cdot \|\mu(A) -\mu(B) \|^2 + 2\min\{|B|,|C|\} \cdot \|\mu(B) -\mu(C) \|^2\\
        &\leq 4 \cdot(\dWard(A,B) + \dWard(B,C))
    \end{align*}
    as required.
\end{proof}

\begin{lemma}\label{thm:wardsWeightStable}
Ward's linkage $\dWard$ is weight-stable (\Cref{dfn:weightStab}).
\end{lemma}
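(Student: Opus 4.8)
The plan is to use the exact closed form of Ward's linkage rather than just the approximation in \Cref{lem:simpropsmallcluster}. Recall the standard identity that, for any disjoint $X,Y \subseteq \mathbb{R}^k$, $\Delta(X \cup Y) = \Delta(X) + \Delta(Y) + \frac{|X|\,|Y|}{|X|+|Y|}\|\mu(X)-\mu(Y)\|^2$; equivalently, $\dWard(X,Y) = \frac{|X|\,|Y|}{|X|+|Y|}\|\mu(X)-\mu(Y)\|^2$. This value lies between the two sides of \Cref{lem:simpropsmallcluster}, and can be obtained either by a one-line expansion or by unwinding the Lance--Williams recurrence of \Cref{lem:LanceWilliamsWards}. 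As in HAC, I read $\dWard(A\cup B, A)$ as the Ward value between the merged cluster $A\cup B$ and a copy of $A$ treated as a distinct cluster, i.e.\ I apply this closed form with arguments of sizes $|A|+|B|$ and $|A|$.

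The computation is then routine. First I would record that $\mu(A\cup B) = \frac{|A|\mu(A)+|B|\mu(B)}{|A|+|B|}$, so $\mu(A\cup B)-\mu(A) = \frac{|B|}{|A|+|B|}\bigl(\mu(B)-\mu(A)\bigr)$ and hence $\|\mu(A\cup B)-\mu(A)\|^2 = \frac{|B|^2}{(|A|+|B|)^2}\|\mu(A)-\mu(B)\|^2$. Plugging this into the closed form,
\begin{align*}
\dWard(A\cup B, A) = \frac{(|A|+|B|)\,|A|}{2|A|+|B|}\cdot\frac{|B|^2}{(|A|+|B|)^2}\|\mu(A)-\mu(B)\|^2 = \frac{|B|}{2|A|+|B|}\cdot\frac{|A|\,|B|}{|A|+|B|}\|\mu(A)-\mu(B)\|^2 .
\end{align*}
Since $\dWard(A,B) = \frac{|A|\,|B|}{|A|+|B|}\|\mu(A)-\mu(B)\|^2$, this says $\dWard(A\cup B, A) = \frac{|B|}{2|A|+|B|}\,\dWard(A,B)$, and because $2|A|+|B|\ge |A|+|B|$ we conclude $\dWard(A\cup B, A) \le \frac{|B|}{|A|+|B|}\,\dWard(A,B)$, which is exactly weight-stability (\Cref{dfn:weightStab}).

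There is no real obstacle here: the only thing to be careful about is the size bookkeeping and making sure the closed form for $\dWard$ is justified (either cited as the standard identity or derived from Lance--Williams). It is worth noting that we in fact obtain the stronger bound with $\frac{|B|}{2|A|+|B|}$ in place of $\frac{|B|}{|A|+|B|}$, so the statement has constant-factor slack to spare, consistent with the footnote accompanying \Cref{dfn:weightStab}.
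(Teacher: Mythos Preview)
Your proof is correct and essentially matches the paper's: both derive the exact identity $\dWard(A\cup B,A)=\frac{|B|}{2|A|+|B|}\dWard(A,B)$ and then relax the denominator. The only cosmetic difference is that the paper obtains this identity by plugging $C=A$ into the Lance--Williams recurrence (\Cref{lem:LanceWilliamsWards}) and using $\dWard(A,A)=0$, whereas you expand the closed form $\dWard(X,Y)=\frac{|X||Y|}{|X|+|Y|}\|\mu(X)-\mu(Y)\|^2$ (which is \Cref{lem:wardAlt} in the appendix) directly.
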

\begin{proof}
Consider $A, B \subseteq \mathbb{R}^k$. By the Lance-Williams form (\Cref{lem:LanceWilliamsWards}), the symmetry of $\dWard$ and $\dWard(A,A)=0$ we have 
\begin{align*}
        \dWard(A \cup B, A) &= \frac{2|A|}{2|A|+|B|}\dWard(A,A)+\frac{|B|+|A|}{2|A|+|B|}\dWard(B,A)-\frac{|A|}{2|A|+|B|}\dWard(A,B) \\
        &= \frac{|B|}{2|A|+|B|}\dWard(A,B)\\
        &\leq \frac{|B|}{|A|+|B|}\dWard(A,B)
    \end{align*}
    as required.
\end{proof}

\begin{lemma}\label{thm:wardsAverageReducible}
Ward's linkage $\dWard$ is average-reducible (\Cref{dfn:avgRed}).
\end{lemma}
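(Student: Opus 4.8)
The plan is to use the Lance--Williams form of Ward's linkage (\Cref{lem:LanceWilliamsWards}) together with the size assumption $|C| \geq |A| + |B|$ to directly verify the average-reducibility inequality. Writing $n = |A| + |B| + |C|$, Lance--Williams gives
\begin{align*}
    \dWard(A\cup B, C) = \frac{|A|+|C|}{n}\dWard(A,C) + \frac{|B|+|C|}{n}\dWard(B,C) - \frac{|C|}{n}\dWard(A,B).
\end{align*}
So I want to show the right-hand side is at least $\tfrac{1}{2}(\dWard(A,C) + \dWard(B,C)) - \dWard(A,B)$.

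The key observation is that the coefficients on $\dWard(A,C)$ and $\dWard(B,C)$ are each at least $\tfrac12$: indeed $\frac{|A|+|C|}{n} = \frac{|A|+|C|}{|A|+|B|+|C|} \geq \tfrac12$ is equivalent to $|A| + |C| \geq |B|$, which holds since $|C| \geq |A| + |B| \geq |B|$; symmetrically $\frac{|B|+|C|}{n} \geq \tfrac12$ follows from $|C| \geq |A| \Rightarrow |B|+|C| \geq |A|$. Hence, since Ward's linkage values are nonnegative,
\begin{align*}
    \frac{|A|+|C|}{n}\dWard(A,C) + \frac{|B|+|C|}{n}\dWard(B,C) \geq \frac{\dWard(A,C) + \dWard(B,C)}{2}.
\end{align*}
For the last term, I simply bound $\frac{|C|}{n} \leq 1$, so $-\frac{|C|}{n}\dWard(A,B) \geq -\dWard(A,B)$. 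Combining these two bounds with the Lance--Williams identity yields exactly
\begin{align*}
    \dWard(A\cup B, C) \geq \frac{\dWard(A,C) + \dWard(B,C)}{2} - \dWard(A,B),
\end{align*}
which is average-reducibility (\Cref{dfn:avgRed}).

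There is essentially no hard part here: the entire argument is a sign/coefficient check on the Lance--Williams formula, and the only place the hypothesis $|C| \geq |A|+|B|$ enters is in guaranteeing that both positive coefficients exceed $\tfrac12$. The one thing to be careful about is that this size assumption is genuinely needed — without it the coefficient on, say, $\dWard(A,C)$ could be smaller than $\tfrac12$ (when $|B|$ is large), and since Ward's is not average-reducible in general (as the paper notes), one cannot hope to drop it. I would also double-check that nonnegativity of $\dWard$ is all that's used to downgrade the coefficients to exactly $\tfrac12$, so that no further structural facts about Ward's are needed beyond the identity in \Cref{lem:LanceWilliamsWards}.
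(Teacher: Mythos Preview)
Your proof is correct and essentially identical to the paper's: both apply the Lance--Williams identity and then bound the three coefficients using $|C|\ge |A|+|B|$ (the paper routes through the intermediate step of replacing the denominator $|A|+|B|+|C|$ by $2|C|$, but this amounts to the same coefficient checks you perform directly).
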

\begin{proof}
    Suppose we have three clusters $A,B,C \subseteq \mathbb{R}^k$ such that $|C|\ge |A|+|B|$. 
    Then, by the Lance-Williams form (\Cref{lem:LanceWilliamsWards}), we have $\dWard(A\cup B,C)$ is 
    \begin{align*}
        \frac{|A|+|C|}{|A|+|B|+|C|}&\dWard(A,C)+\frac{|B|+|C|}{|A|+|B|+|C|}\dWard(B,C)-\frac{|C|}{|A|+|B|+|C|}\dWard(A,B)\\
        &\ge \frac{|A|+|C|}{2|C|}\dWard(A,C)+\frac{|B|+|C|}{2|C|}\dWard(B,C)-\dWard(A,B)\\
        &\ge \frac{\dWard(A,C) + \dWard(B,C)}{2}-\dWard(A,B)
    \end{align*}
    as required.
\end{proof}

\begin{lemma}\label{lem:wardsPolyDiam}
Ward's linkage $\dWard$ has poly-bounded diameter (\Cref{dfn:polyDiam}).
\end{lemma}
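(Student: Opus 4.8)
The plan is to reduce the claim to the centroid case (\Cref{lem:cenPolyDiam}) via the Ward's approximation inequality (\Cref{lem:simpropsmallcluster}). First I would record the base case: for singletons $u,v \in \mathbb{R}^k$ we have $\dWard(u,v) = \Delta(\{u,v\}) - \Delta(\{u\}) - \Delta(\{v\}) = \tfrac12\|u-v\|^2$, since the centroid of $\{u,v\}$ is their midpoint. Hence if $\Delta = \max_{u,v \in \mcP}\dWard(u,v)$, then the Euclidean diameter of $\mcP$ satisfies $\max_{u,v\in\mcP}\|u-v\| = \sqrt{2\Delta}$.

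Next, fix disjoint $A,B \subseteq \mcP$ and apply the upper bound of \Cref{lem:simpropsmallcluster}:
\begin{align*}
    \dWard(A,B) \le \min\{|A|,|B|\}\cdot \|\mu(A)-\mu(B)\|^2.
\end{align*}
Now $\mu(A)$ and $\mu(B)$ are convex combinations of points of $\mcP$, hence lie in the convex hull of $\mcP$; exactly as in the proof of \Cref{lem:cenPolyDiam} (using convexity of $x \mapsto \|x - \mu(B)\|$ and Jensen's inequality), this gives $\|\mu(A)-\mu(B)\| \le \max_{u,v\in\mcP}\|u-v\| = \sqrt{2\Delta}$, so $\|\mu(A)-\mu(B)\|^2 \le 2\Delta$. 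Combining with $\min\{|A|,|B|\} \le |\mcP|$ yields
\begin{align*}
    \dWard(A,B) \le 2\Delta\cdot |\mcP| = \poly(\Delta\cdot|\mcP|),
\end{align*}
which is what \Cref{dfn:polyDiam} requires.

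There is no real obstacle here; the only point needing a line of justification is that centroids of subsets lie in the convex hull of $\mcP$ and that this hull has the same diameter as $\mcP$, which is standard and already carried out in \Cref{lem:cenPolyDiam}. Combining this lemma with \Cref{thm:wardsPack} and the preceding lemmas establishes that $\dWard$ is well-behaved, proving \Cref{thm:wardWB}.
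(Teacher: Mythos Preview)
Your proof is correct and follows essentially the same route as the paper: bound $\dWard(A,B)$ above by $\min\{|A|,|B|\}\cdot\|\mu(A)-\mu(B)\|^2$ via \Cref{lem:simpropsmallcluster}, then invoke the convex-hull argument from \Cref{lem:cenPolyDiam} to control the centroid distance in terms of the Euclidean diameter of $\mcP$. Your explicit computation $\dWard(u,v)=\tfrac12\|u-v\|^2$ makes the link between $\Delta$ and the Euclidean diameter slightly cleaner than the paper's version, but the argument is the same.
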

\begin{proof}
Let $\Delta = \max_{u,v \in \mcP} \dWard(u,v)$ be the maximum Ward's distance between a pair of initial points. By \Cref{lem:simpropsmallcluster}, we have that
\begin{align*}
    \Delta \leq \max_{u, v \in \mcP}\|u-v\|^2.
\end{align*}

Next, consider $A, B \subseteq \mcP$. Our goal is to show $\dWard(A,B) \leq \poly(n \cdot \Delta)$. However, by \Cref{lem:simpropsmallcluster}, we know that $\dWard(A,B) \leq \min\{|A|,|B|\} \cdot \| \mu(A) -\mu(B)\|^2 \leq n \cdot \| \mu(A) -\mu(B)\|^2$. Thus, it suffices to argue that $\| \mu(A) -\mu(B)\|^2 \leq \poly\left(n \cdot \max_{u, v \in \mcP}\|u-v\|^2\right)$. However, this is immediate from the fact that $\dCen$ is poly-bounded (\Cref{lem:cenPolyDiam}).
\end{proof}

Combining the above lemmas shows that Ward's is well-behaved (\Cref{thm:wardWB}).

% \subsubsection{Ward's is Relaxed-Reducible}
% We now prove Ward's linkage is relaxed-reducible.
% \wardsRR*
% \begin{proof}
% Consider $A, B, C \subseteq \mathbb{R}^k$. By the Lance-Williams form, (\Cref{lem:LanceWilliamsWards}) we have that $\dWard(A \cup B,C)$ is
%    \begin{align*}
%         \frac{|A|+|C|}{|A|+|B|+|C|}\dWard(A,C)+\frac{|B|+|C|}{|A|+|B|+|C|}\dWard(B,C)-\frac{|C|}{|A|+|B|+|C|}\dWard(A,B).
%     \end{align*}
%     and so since a weighted average is at least a minimum we have
%     \begin{align*}
%         \dWard(A \cup B,C) &\geq \frac{|A|}{|A|+|B|+|C|}\dWard(A,C)+\frac{|B|+|C|}{|A|+|B|+|C|}\dWard(B,C) - \dWard(A,B)\\
%         &\geq \min(\dWard(A,C), \dWard(B,C))- \dWard(A,B)
%     \end{align*}
%     as required.
% \end{proof}

% \kishen{Squared-Average Linkage: $d(A,B) = \sum_{a\in A,b\in B}\|a-b\|^2/(|A||B|) = \|\mu(A)-\mu(B)\|^2+Var(A)/|A|+Var(B)/|B|$. Might satisfy packing by bucketing based on the variance terms. Satisfies average reducibility ($d(A\cup B,C)\ge |A|d(A,C)+|B|d(B,C))/(|A\cup B|)-d(A,B)$), and requires a modified potential argument: $val(A)=epx_2(-d(\tilde{X}_j,A)\log(|A|/|\tilde{X}_j|)/4)$. Then, can show that the potential decreases by a constant factor.}\kishen{I think this is wrong, will get back later (TODO).}

\section{Height Bounds for HAC with Well-Behaved Linkage Functions}\label{sec:heightBound}
We now prove that any well-behaved linkage function gives rise to a low height dendrogram in low dimensions, as summarized below.

\heightBound*
For the rest of this section, we fix a well-behaved linkage function $d$, where $c_\Delta$ is the constant according to which $d$ approximately satisfies the triangle inequality (\Cref{dfn:apxMetric}) and $\alpha = \tilde{O}(1)$ is the parameter according to which $d$ is $\alpha$-packable (\Cref{dfn:packableLinkage}). Likewise, we fix a $c$-approximate HAC for $d$. We let $\mcP$ be the initial point set to which we are applying HAC. Let $A_i$ and $B_i$ be the $i$th pair of clusters merged by HAC and let $\delta_i := d(A_i, B_i)$ be the value of this merge. Let $\mcC_i$ be all clusters just before the $i$th merge.

Fix an arbitrary point $x_0 \in \mcP$. Let $X_i$ be the cluster containing $x_0$ just before the $i$th merge is performed. If $X_i$ participates in the $i$th merge (that is, $X_i \in \{A_i, B_i\}$), then we let $\bar{X}_i$ be the cluster with which $X_i$ merges (that is, $\bar{X}_i$ is the one element of $\{A_i, B_i\} \setminus \{X_i\}$). To prove the above theorem, it suffices to argue that the cluster containing $x_0$ participates in at most $\tilde{O}\left((c \cdot k)^{O(k)}\right)$-many merges and so we proceed to do so for the rest of this section.

\subsection{Dividing HAC into Phases}

In order to argue that the cluster containing $x_0$ participates in boundedly-many merges, we will divide the merges that HAC performs into $\tilde{O}\left((c \cdot k)^{O(k)}\right)$ phases where the number of merges that the cluster containing $x_0$ participates in is at most $\tilde{O}\left((c \cdot k)^{O(k)}\right)$ in each one of these phases. The $j$th phase is defined as a contiguous sequence of merges in which the maximum merge value has not increased too much, the weight of $X_i$ has not significantly increased and every merge we perform occurs close to where $X_i$ ``started'' in this phase.

More formally, the $j$th phase consist of the merges indexed by $I_j := [s_j, f_j]$ where $s_j = f_{j-1}+1$. We let $\tilde{\delta}_j := \max_{i \leq s_j} \delta_i$ be the largest value of a merge performed up to the beginning of the $j$th merge, let $i_j := \argmax_{i \leq s_j} \delta_i$ be the corresponding index and let $\tilde{X}_j := X_{s_j}$ be the cluster containing $x_0$ at the beginning of this phase. Then, $I_j$ is defined recursively as follows.

\begin{definition} [$j$th phase, $I_j$]\label{dfn:phase}
The indices of merges in the $j$th phase are $I_j = [s_j, f_j]$ where $s_j = f_{j-1} +1$ and $f_{j}$ is the maximum integer greater than or equal to $s_{j}$ where:
\begin{enumerate}
    \item \textbf{Small Merges:} $\delta_i \leq 2 \cdot \tilde{\delta}_j$ for every  $i \in I_j$;
    % \item \textbf{Small Drift:} $d(\tilde{X}_j, X_i) \leq \tilde{\delta}_j$.
    \item \textbf{Small Size Increase:} $|X_{i+1}| < \frac{3}{2} \cdot |\tilde{X}_j|$ for every $i \in I_j$; and;
    \item \textbf{Small Drift:} $d(\tilde{X}_j, \bar{X}_i) \leq 3 c_{\Delta} \cdot \tilde{\delta}_j$ for every $i \in I_j$ in which $X_i$ participates in a merge.

\end{enumerate}
and as a base case we have $f_0 = 0$.
\end{definition}
Observe that the $j$th phase for $j > 0$ starts because one of the above conditions is not met.

% Observe that the $j$th phase starts either because we perform a merge larger than $2\tilde{\delta}_{j-1}$ or because $X_i$ merged with a cluster that is more than $2 \cdot \tilde{\delta}_j$ from $\tilde{X}_j$.

\subsection{Bounding the Number of Merges in Each Phase with a Potential Function}
We proceed to argue in each phase the cluster containing $x_0$ participates in a small number of merges.

We argue this by way of a potential function that, up to scaling, approximately captures the minimum distance of a node to the initial position of the cluster containing $x_0$ in this phase. In particular, for the rest of this section we fix a phase $j$ and, as before, let $\tilde{X}_j$ be the cluster containing $x_0$ at the beginning of the $j$th phase and let $\tilde{\delta}_j$ be the maximum merge distance up to the $j$th phase. For the rest of the section we let $\exp_2(x) := 2^x$.

\begin{definition}[Cluster Value $\val_j$]\label{dfn:val} We define the value of cluster $A$ in the $j$th phase as:
\begin{align*}
    \val_j(A) := \exp_2 \left(- \frac{d(\tilde{X}_j,A)}{4 \cdot \tilde{\delta}_j} \right).
\end{align*}
\end{definition}
\noindent In other words, up to scaling by ($\Theta$ of) the maximum merge done so far, the value of $A$ is negatively exponential in its distance from $\tilde{X}_j$ according to the linkage function $d$.

We now argue that a small distance merge results in a new cluster whose value is smaller than its constituent clusters by a multiplicative constant (under some assumptions about cluster sizes).
\begin{lemma}\label{lem:centroidConvex}
Given any $A, B \subseteq \mathbb{R}^k$ where $d(A, B) \leq 2 \tilde{\delta}_j$ and $|A| + |B| \leq |\tilde{X}_j|$, we have 
\begin{align*}
\val_j(A \cup B) \leq \frac{1}{\sqrt{2}} \left(\val_j(A) + \val_j(B) \right).
\end{align*}
\end{lemma}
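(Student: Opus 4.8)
The plan is to bound $\val_j(A\cup B) = \exp_2(-d(\tilde X_j, A\cup B)/(4\tilde\delta_j))$ from above by lower-bounding $d(\tilde X_j, A\cup B)$ using average-reducibility, and then to compare the resulting exponential expression to $\frac{1}{\sqrt 2}(\val_j(A)+\val_j(B))$ using convexity of $\exp_2$.

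First I would apply average-reducibility (\Cref{dfn:avgRed}) with the cluster $C = \tilde X_j$. This requires $|\tilde X_j| \geq |A| + |B|$, which is exactly our hypothesis. It yields
\[
 d(A\cup B, \tilde X_j) \;\geq\; \frac{d(A,\tilde X_j) + d(B,\tilde X_j)}{2} \;-\; d(A,B) \;\geq\; \frac{d(A,\tilde X_j) + d(B,\tilde X_j)}{2} \;-\; 2\tilde\delta_j,
\]
using $d(A,B)\leq 2\tilde\delta_j$. Writing $a := d(A,\tilde X_j)/(4\tilde\delta_j)$ and $b := d(B,\tilde X_j)/(4\tilde\delta_j)$, this says $d(A\cup B,\tilde X_j)/(4\tilde\delta_j) \geq \frac{a+b}{2} - \frac12$. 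Since $\exp_2$ is increasing and applied with a negative sign, we get
\[
 \val_j(A\cup B) \;\leq\; \exp_2\!\left(-\frac{a+b}{2} + \frac12\right) \;=\; \sqrt{2}\cdot \exp_2\!\left(-\frac{a+b}{2}\right) \;=\; \sqrt{2}\cdot \sqrt{\exp_2(-a)\cdot \exp_2(-b)}.
\]

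Next I would apply the AM–GM inequality $\sqrt{xy} \leq \tfrac12(x+y)$ with $x = \exp_2(-a) = \val_j(A)$ and $y = \exp_2(-b) = \val_j(B)$, which gives $\val_j(A\cup B) \leq \sqrt{2}\cdot \tfrac12(\val_j(A)+\val_j(B)) = \tfrac{1}{\sqrt 2}(\val_j(A)+\val_j(B))$, as desired. (Equivalently, one can phrase this via convexity of $t\mapsto \exp_2(-t)$ directly: $\exp_2(-\frac{a+b}{2}) \leq \frac12(\exp_2(-a)+\exp_2(-b))$.)

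I do not expect a serious obstacle here; the only point requiring care is checking that average-reducibility is applicable, i.e. that the size hypothesis $|\tilde X_j| \geq |A|+|B|$ lines up with the "$|C| \geq |A|+|B|$" condition in \Cref{dfn:avgRed} — which it does by assumption — and being careful that the $-d(A,B)$ slack term, after dividing by $4\tilde\delta_j$, contributes exactly $+\tfrac12$ to the exponent, producing precisely the factor $\sqrt{2}$ that combines with AM–GM's factor $\tfrac12$ to give $\tfrac{1}{\sqrt 2}$. No properties beyond average-reducibility and elementary inequalities are needed; in particular the triangle inequality and packability play no role in this particular lemma.
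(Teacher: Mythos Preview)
Your proposal is correct and essentially identical to the paper's proof: both apply average-reducibility (using the hypothesis $|\tilde X_j|\ge |A|+|B|$) together with $d(A,B)\le 2\tilde\delta_j$ to lower-bound $d(\tilde X_j, A\cup B)$, then finish with the convexity/AM--GM inequality for $\exp_2(-\,\cdot\,)$. The only cosmetic difference is that the paper states the last step as convexity while you state it as AM--GM (and correctly note the two are equivalent here).
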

\begin{proof}
By assumption $d$ is well-behaved and, in particular, is average-reducible (\Cref{dfn:avgRed}). Since, by assumption $|\tilde{X}_j| \geq |A| +|B|$, we may apply average-reducibility to get
\begin{align*}
    d(A \cup B, \tilde{X}_j) \geq \frac{d(A, \tilde{X}_j)+d(B, \tilde{X}_j)}{2} - d(A, B)
\end{align*}
and since $d(A, B) \leq 2 \tilde{\delta}_j$ by assumption, it follows that 
\begin{align}\label{eq:fsssa}
    d(A \cup B, \tilde{X}_j) \geq \frac{d(A, \tilde{X}_j)+d(B, \tilde{X}_j)}{2} - 2\tilde{\delta}_j.
\end{align}
Thus, we have 
\begin{align*}
    \val_j(A \cup B) &=\ \exp_2\left(\frac{- d(\tilde{X}_j, A \cup B)}{4 \tilde{\delta}_j} \right) \\
    &\leq \exp_2\left(\frac{- d(\tilde{X}_j, A) -  d(\tilde{X}_j, B)}{8\tilde{\delta}_j} + 1/2\right)\\
    &= \sqrt{2} \cdot \exp_2\left(\frac{- d(\tilde{X}_j, A) -  d(\tilde{X}_j, B)}{8\tilde{\delta}_j} \right)\\
    & \leq \frac{1}{\sqrt{2}} \left(\exp_2\left(\frac{- d(\tilde{X}_j, A)}{\tilde{4\delta}_j} \right) + \exp_2\left(\frac{-d(\tilde{X}_j, B)}{4\tilde{\delta}_j} \right) \right)\\
    & = \frac{1}{\sqrt{2}} \left(\val_j(A) + \val_j(B) \right)
\end{align*}
where the second line comes from \Cref{eq:fsssa}, the fourth line comes from the convexity of $\exp_2(-x)$ and the last line comes from the definition of $\val_j$ as given in \Cref{dfn:val}.
\end{proof}

We now define our potential function, which, up to ignoring large clusters, is just the sum of all clusters' values.
\begin{definition}[Potential $\phi_j$]\label{dfn:potential}
We define the potential of clusters $\mcC$ in the $j$th phase as the sum of the value of all clusters whose size is at most $|\tilde{X}_j|/2$:
\begin{align*}
    \phi_j(\mcC) := \sum_{A \in  \mcC : |A| < |\tilde{X}_j|/2} \val_j(A)
\end{align*}
\end{definition}

As a consequence of \Cref{lem:centroidConvex}, we get that our potential function is non-increasing as we merge our clusters. For the below, recall that $\mcC_i$ is all clusters just before the $i$th merge.
\begin{lemma}\label{lem:phiMono}
$\phi_j(\mcC_i)$ is non-increasing in $i$ in the $j$th phase. That is, $\phi_j(\mcC_{i}) \geq \phi_j(\mcC_{i+1})$ for $i  \in I_j$.
\end{lemma}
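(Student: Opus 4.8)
\textbf{Proof proposal for Lemma~\ref{lem:phiMono}.}

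The plan is to show that the only change from $\mcC_i$ to $\mcC_{i+1}$ is the merge of $A_i$ and $B_i$ into $A_i \cup B_i$, so it suffices to track how the contributions of these (at most) three clusters to the potential change, while all other clusters' contributions are unchanged. I would split into cases based on the sizes of $A_i$, $B_i$, and $A_i \cup B_i$ relative to the threshold $|\tilde{X}_j|/2$ that governs which clusters are counted in $\phi_j$ (\Cref{dfn:potential}).

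First, the easy cases. If $|A_i \cup B_i| \geq |\tilde{X}_j|/2$, then $A_i \cup B_i$ does not contribute to $\phi_j(\mcC_{i+1})$, so we only need $\val_j(A_i) + \val_j(B_i) \geq 0$ for whichever of $A_i, B_i$ were below the threshold in $\mcC_i$; since $\val_j$ is always nonnegative, the potential can only drop. If $|A_i \cup B_i| < |\tilde{X}_j|/2$, then in particular both $|A_i| < |\tilde{X}_j|/2$ and $|B_i| < |\tilde{X}_j|/2$ (since $|A_i|, |B_i| \leq |A_i \cup B_i|$), so both $A_i$ and $B_i$ contributed to $\phi_j(\mcC_i)$ and $A_i \cup B_i$ contributes to $\phi_j(\mcC_{i+1})$. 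In this case I want to apply \Cref{lem:centroidConvex} to conclude $\val_j(A_i \cup B_i) \leq \tfrac{1}{\sqrt 2}(\val_j(A_i) + \val_j(B_i)) \leq \val_j(A_i) + \val_j(B_i)$, which gives exactly $\phi_j(\mcC_{i+1}) \leq \phi_j(\mcC_i)$.

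The crux is verifying the hypotheses of \Cref{lem:centroidConvex} in this last case: we need $d(A_i, B_i) \leq 2\tilde{\delta}_j$ and $|A_i| + |B_i| \leq |\tilde{X}_j|$. The first follows immediately from the ``Small Merges'' condition of \Cref{dfn:phase}: since $i \in I_j$, we have $\delta_i = d(A_i,B_i) \leq 2\tilde{\delta}_j$. For the second, note $|A_i| + |B_i| = |A_i \cup B_i| < |\tilde{X}_j|/2 \leq |\tilde{X}_j|$, which holds in precisely the case we are analyzing. I expect the only subtlety is making sure the bookkeeping of ``which of the three clusters is above/below the size threshold'' is exhaustive and that in the borderline case $|A_i \cup B_i| \geq |\tilde{X}_j|/2$ one does not accidentally need a lower bound on $\val_j(A_i \cup B_i)$ — but since dropping a nonnegative term from the sum only decreases it, this case is genuinely trivial. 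So the main (and essentially only) work is the case analysis plus the clean invocation of \Cref{lem:centroidConvex}, whose hypotheses are handed to us by the definition of a phase.
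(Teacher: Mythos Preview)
Your proposal is correct and essentially identical to the paper's proof: both split on whether $|A_i \cup B_i| \geq |\tilde{X}_j|/2$, dispatch the large case by nonnegativity of $\val_j$, and in the small case invoke \Cref{lem:centroidConvex} after checking its hypotheses via the Small Merges condition of \Cref{dfn:phase} and the size bound $|A_i|+|B_i| = |A_i \cup B_i| < |\tilde{X}_j|/2$.
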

\begin{proof}
Consider the $i$th merge for $i \in I_j$ and suppose it merges clusters $A_i$ and $B_i$. It suffices to show that $\phi_j(\mcC_{i+1}) - \phi_j(\mcC_{i}) \leq 0$ and so we do so for the rest of the proof.

If $|A_i \cup B_i| \geq |\tilde{X}_j|/2$ then the clusters which contribute their value to $\phi_j(\mcC_{i})$ must be a superset of those which contribute their value to $\phi_j(\mcC_{i+1})$ and so $\phi_j(\mcC_{i+1}) - \phi_j(\mcC_{i}) \leq  0$.

On the other hand, if $|A_i \cup B_i| < |\tilde{X}_j|/2$ then we must have $|A_i| < |\tilde{X}_j|/2$ and $|B_i| < |\tilde{X}_j|/2$ and so by our definition of our potential $\phi_j$, we have 
\begin{align}\label{eq:sfasf}
    \phi_j(\mcC_{i+1}) - \phi_j(\mcC_{i}) =  \val_j(A_i \cup B_i) - \val_j(A_i) - \val_j(B_i).
\end{align}

Furthermore, it follows that $|A_i| + |B_i| \leq |\tilde{X}_j|$ and since this a merge in the $j$th phase, by assumption we know $d(A_i, B_i) \leq 2 \tilde{\delta}_j$. Thus, we may apply \Cref{lem:centroidConvex} to see that 
\begin{align}\label{eq:asgfasg}
    \val_j(A_i \cup B_i) \leq \frac{1}{\sqrt{2}}\left(\val_j(A_i) + \val_j(B_i)\right).
\end{align}
Combining \Cref{eq:sfasf} and \Cref{eq:asgfasg} with the non-negativity of $\val_j$ gives
\begin{align*}
    \phi_j(\mcC_{i+1}) - \phi_j(\mcC_{i}) \leq  \left(\frac{1}{\sqrt{2}}-1 \right) \left( \val_j(A_i) + \val_j(B_i)\right) \leq 0
\end{align*}
as required.
\end{proof}

We next observe that our potential does not start too large at the beginning of a phase.
\begin{lemma}\label{lem:phiInit}
$\phi_j$ is at most $\tilde{O}  \left(( k \cdot c)^{O(k)} \right)$ at the start of the $j$th phase. That is, $\phi_j(\mcC_{s_j}) \leq \tilde{O}  \left(( k \cdot c)^{O(k)} \right)$.
\end{lemma}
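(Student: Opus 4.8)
The goal is to bound $\phi_j(\mcC_{s_j}) = \sum_{A \in \mcC_{s_j} : |A| < |\tilde{X}_j|/2} \val_j(A)$ at the start of the $j$th phase. The natural approach is to group the clusters in $\mcC_{s_j}$ into spherical shells by their distance to $\tilde{X}_j$ and bound each shell's total value separately using packability. Concretely, for an integer $t \geq 0$, let $S_t := \{ A \in \mcC_{s_j} : |A| < |\tilde{X}_j|/2,\ t \cdot \tilde\delta_j \le d(\tilde X_j, A) < (t+1)\tilde\delta_j\}$. Every cluster in $S_t$ contributes at most $\exp_2(-t/4)$ to the potential, so $\phi_j(\mcC_{s_j}) \le \sum_{t \ge 0} |S_t| \cdot \exp_2(-t/4)$. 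It therefore suffices to show $|S_t| \le \tilde O((kc)^{O(k)}) \cdot \mathrm{poly}(t)$ (or even a bound exponential in $t$ with a small enough base), since then the geometric decay of $\exp_2(-t/4)$ makes the sum converge to $\tilde O((kc)^{O(k)})$.

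\textbf{Bounding a single shell.} To bound $|S_t|$, I would first establish a \emph{pairwise separation} lower bound for clusters in $\mcC_{s_j}$: since the last merge before phase $j$ had value $\delta_{s_j - 1}$, and we are doing $c$-approximate HAC, every pair of distinct clusters $B, C \in \mcC_{s_j}$ satisfies $d(B,C) \ge \delta_{s_j-1}/c$. (One must be a little careful here because $\tilde\delta_j$ is the \emph{max} merge value so far, not the most recent one — but $\mcC_{s_j}$ is the clustering right after some merge of value $\delta_{s_j-1} \le \tilde\delta_j$, and approximation only guarantees separation relative to the minimum merge available, which is $\ge \delta_{s_j-1}/c$. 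If $\delta_{s_j-1}$ can be much smaller than $\tilde\delta_j$, the shells must be measured in units of $\delta_{s_j-1}/c$ rather than $\tilde\delta_j$, which changes the counting; I would need to check how the phase definition and the surrounding lemmas handle this, likely by using that $\tilde\delta_j = \max_{i \le s_j}\delta_i$ and some monotonicity of the min merge value within the non-phase-boundary, or by simply redefining shells in the smallest available scale and re-summing.) Given a separation of $r$ and an outer radius of $R = (t+1)\tilde\delta_j$ on $S_t$, $\alpha$-packability (\Cref{dfn:packableLinkage}) applied with center $\tilde X_j$ gives $|S_t| \le \alpha \cdot (R/r)^{O(k)}$. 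With $\alpha = \tilde O(1)$ and $R/r = O(c(t+1))$ (accounting for the scale issue above), this yields $|S_t| \le \tilde O\big((c(t+1))^{O(k)}\big) = \tilde O\big((ck)^{O(k)}\big)\cdot (t+1)^{O(k)}$.

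\textbf{Summing and the obstacle.} Plugging this into the sum gives
\[
\phi_j(\mcC_{s_j}) \;\le\; \sum_{t \ge 0} \tilde O\big((ck)^{O(k)}\big)\cdot (t+1)^{O(k)} \cdot \exp_2(-t/4)
\;=\; \tilde O\big((ck)^{O(k)}\big)\cdot \sum_{t \ge 0}(t+1)^{O(k)} 2^{-t/4}.
\]
The remaining series $\sum_{t}(t+1)^{O(k)}2^{-t/4}$ is dominated by its first $O(k)$ terms and is bounded by $2^{O(k \log k)} = k^{O(k)}$ (standard: $\sum_t (t+1)^m x^t = O(m!/( -\ln x)^{m+1})$ for fixed $|x|<1$), which is absorbed into the $(ck)^{O(k)}$ factor. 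So the overall bound is $\tilde O((ck)^{O(k)})$ as claimed. The main obstacle I anticipate is \emph{not} this routine geometric-series estimate but rather pinning down the correct packing scale: making sure that the separation guarantee in $\mcC_{s_j}$ is genuinely $\Omega(\tilde\delta_j/c)$ (so that shells of width $\tilde\delta_j$ contain only $\tilde O((ck)^{O(k)})$ clusters) rather than something much smaller, which would require re-examining how $\tilde\delta_j$ relates to the minimum available merge at step $s_j$ and possibly invoking \Cref{dfn:apxMetric} or the Small Merges / Small Drift conditions of \Cref{dfn:phase} to control it. A secondary subtlety is that packability is stated for arbitrary $\mcC \subseteq 2^{\mathbb{R}^k}$ with a uniform pairwise distance lower bound, so I must apply it to the sub-collection of small clusters (or all of $\mcC_{s_j}$) with the right $r$ and $R$, and verify the hypothesis $R \ge r$ holds, which it does since $t \ge 0$ and $c \ge 1$.
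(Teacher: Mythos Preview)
You have correctly identified the central obstacle—that the pairwise separation guaranteed in $\mcC_{s_j}$ by $c$-approximateness is only $\Omega(\delta_{s_j}/c)$, which can be arbitrarily smaller than $\tilde\delta_j/c$ since the linkage is non-monotone—but neither of your suggested resolutions works. There is no monotonicity of the minimum available merge value for these linkage functions (that is precisely the difficulty motivating the whole section), and ``redefining shells in the smallest available scale'' does not help: the packing bound would then give $|S_t| \le \tilde O\big(((t+1)\tilde\delta_j \cdot c/\delta_{s_j})^{O(k)}\big)$, and the ratio $\tilde\delta_j/\delta_{s_j}$ is not bounded in terms of $c$, $k$, or $n$. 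So as written the argument does not close.

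The key idea you are missing is to bound the potential not at time $s_j$ but at time $i_j := \argmax_{i \le s_j} \delta_i$, the iteration at which the merge of value $\tilde\delta_j$ was performed. Immediately before that merge, by the definition of $c$-approximate HAC, every pair of clusters in $\mcC_{i_j}$ is at distance at least $\tilde\delta_j/c$—exactly the separation you wanted. One then bounds $\phi_j(\mcC_{i_j})$ via the shelling-and-packing argument (the paper uses geometric shells of radius $2^l\tilde\delta_j$, but your linear shells would work equally well here), obtaining $\phi_j(\mcC_{i_j}) \le \tilde O((kc)^{O(k)})$. Finally, one observes that every merge between iterations $i_j$ and $s_j$ has value at most $\tilde\delta_j \le 2\tilde\delta_j$ by the definition of $\tilde\delta_j$ as the running maximum, so the proof of \Cref{lem:phiMono} applies verbatim to those merges as well, yielding $\phi_j(\mcC_{s_j}) \le \phi_j(\mcC_{i_j})$. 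Once you have this rewind-to-$i_j$ step, the rest of your argument goes through.
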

\begin{proof}
Recall that $i_j$ is the index of the largest merge performed up to iteration $i$. By definition of $i_j$ and the fact that we are performing a $c$-approximate HAC, we have that every pair of clusters just before the $i_j$th iteration is at least $\tilde{\delta}_{j}/c$ apart according to $d$. That is, for every $A,B \in \mcC_{i_j}$ we have $d(A,B) \geq \tilde{\delta}_{j}/c$.

Thus, applying the fact that $d$ is well-behaved and, in particular, $\alpha$-packable (\Cref{dfn:packableLinkage}) for $\alpha = \tilde{O}(1)$, we have that for any $x > 0$ that
\begin{align}\label{eq:pack}
    \left|B_d^{\mcC_{i_j}}(\tilde{X}_j, x \cdot \tilde{\delta}_j)\right| \leq \alpha \cdot \left(cx \right)^{O(k)} = \alpha \cdot  \exp_2\left(O(k\log c) + O(k \log x)\right).
\end{align}
We can upper bound the value of $\phi_j(\mcC_{i_j})$ by way of such a series of balls. For $l \geq 0$ let 
\begin{align*}
B_l := B_d^{\mcC_{i_j}}(\tilde{X}_j, \tilde{\delta}_j \cdot 2^l)    
\end{align*}
be the radius $\tilde{\delta}_j \cdot 2^l$ ball of clusters centered at $\tilde{X}_j$. Similarly, we let $S_0 = B_0 $ and for $l \geq 1$ we let the $l$th ``shell'' be
\begin{align*}
S_l := B_{l} \setminus B_{l-1}.  
\end{align*}
Observe that since $S_l \subseteq B_l$ by \Cref{eq:pack}, we have that 
\begin{align*}
|S_l|\leq \alpha \cdot \exp_2\left(O(k\log c) + O(kl)\right).    
\end{align*}
On the other hand, we have that for $l \geq 1$, each cluster in $S_l$ is at least $2^{l-1} \cdot \tilde{\delta}_j$ from $\tilde{X}_j$ according to $d$ and so contributes at most $\exp_2(-2^{l-3})$ to $\phi_j({\mcC}_{i_j})$. It follows that 
\begin{align}\label{eq:asfasf}
\phi_j(\mcC_{i_j}) & \leq \alpha \cdot \sum_{l} \exp_2(-2^{l-3}) \cdot \exp_2\left(O(k\log c) + O(kl)\right) \nonumber \\
    &= \alpha \cdot  \sum_{l}  \exp_2\left(-2^{l-3} + O(k\log c) + O(kl)\right)
    \end{align}
To bound this sum, we let $\beta = \Theta( \log \log c + \log k)$ for an appropriately large hidden constant. We then have for $l \geq \beta$ that
\begin{align*}
\exp_2\left(-2^{l-3} + O(k\log c) + O(kl)\right) \leq \exp_2(-l)
\end{align*}
and so using this, $\alpha = \tilde{O}(1)$, and \Cref{eq:asfasf}, we then get
\begin{align*}
    \phi_j(\mcC_{i_j}) &\leq \alpha \cdot \sum_{l \leq \beta} \exp_2\left(-2^{l-3}  + O(k\log c) + O(kl)\right) + \alpha \cdot \sum_{l > \beta}  \exp_2\left(-2^{l-3}  + O(k\log c) + O(kl)\right) \\
    &\leq \alpha \cdot \beta \cdot  \exp_2\left(O(k\log c) + O(k\beta)\right) + \alpha \cdot \sum_{l > \beta} \exp_2(-l)\\
    & = \alpha \cdot \beta  \cdot c^{O(k)} \cdot 2^{O(\beta k)}\\
    & =  \tilde{O}  \left(( k \cdot  c)^{O(k)} \right).
\end{align*}
Lastly, $i_j \leq s_j$ and so by \Cref{lem:phiMono} we have $\phi_j(\mcC_{s_j}) \leq \phi_j(\mcC_{i_j})$, giving our lemma.
\end{proof}

Concluding, we bound the number of merges of the cluster containing $x_0$ in each phase.
\begin{lemma}\label{lem:mergesPerPhase}
$x_0$'s cluster participates in at most $\tilde{O}  \left(( k  \cdot c)^{O(k)} \right)$ merges in each phase.
\end{lemma}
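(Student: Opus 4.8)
The plan is to combine the two facts we already have about the potential $\phi_j$ — it is non-increasing over the $j$th phase (\Cref{lem:phiMono}) and it starts at $\tilde O((k\cdot c)^{O(k)})$ (\Cref{lem:phiInit}) — with one new observation: every merge in the $j$th phase in which the cluster containing $x_0$ participates drives $\phi_j$ down by at least $\tilde\Omega(1)$. Since $\phi_j$ is always nonnegative, these bounds immediately cap the number of such merges at $\tilde O((k\cdot c)^{O(k)})$.

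The heart of the argument is the per-merge drop. I would fix $i \in I_j$ with $X_i \in \{A_i, B_i\}$, so that $X_{i+1} = X_i \cup \bar X_i$, and first pin down which of the three clusters $X_i$, $\bar X_i$, $X_{i+1}$ are counted in $\phi_j$. Since the cluster containing $x_0$ only grows, $|X_i| \ge |\tilde X_j|$, hence $|X_i|$ and $|X_{i+1}| \ge |X_i|$ both exceed $|\tilde X_j|/2$ and neither contributes to $\phi_j$; on the other hand, the Small Size Increase clause of \Cref{dfn:phase} gives $|\bar X_i| = |X_{i+1}| - |X_i| < \tfrac32|\tilde X_j| - |\tilde X_j| = \tfrac12|\tilde X_j|$, so $\bar X_i$ does contribute. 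As every other cluster is untouched by this merge, it removes exactly $\val_j(\bar X_i)$ from the potential, i.e. $\phi_j(\mcC_i) - \phi_j(\mcC_{i+1}) = \val_j(\bar X_i)$. Then I would lower-bound $\val_j(\bar X_i)$ using the Small Drift clause of \Cref{dfn:phase}, which gives $d(\tilde X_j, \bar X_i) \le 3 c_\Delta \tilde\delta_j$, so by \Cref{dfn:val},
\begin{align*}
\val_j(\bar X_i) = \exp_2\left(-\frac{d(\tilde X_j, \bar X_i)}{4\tilde\delta_j}\right) \ge \exp_2\left(-\frac{3 c_\Delta}{4}\right) =: \frac{1}{q},
\end{align*}
and $q = 2^{3 c_\Delta/4} = \tilde O(1)$ since $c_\Delta = \tilde O(1)$ for a well-behaved linkage function.

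To finish I would sum over the phase: by \Cref{lem:phiMono} every merge in $I_j$ changes $\phi_j$ by a nonpositive amount, and each of the $m$ merges in which the cluster containing $x_0$ participates decreases it by at least $1/q$, so
\begin{align*}
\frac{m}{q} \le \phi_j(\mcC_{s_j}) \le \tilde O\left((k\cdot c)^{O(k)}\right)
\end{align*}
by \Cref{lem:phiInit} together with nonnegativity of $\phi_j$; rearranging gives $m \le q\cdot\tilde O((k\cdot c)^{O(k)}) = \tilde O((k\cdot c)^{O(k)})$.

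The one place requiring care is the size bookkeeping in the second paragraph — arranging that $\bar X_i$ counts toward $\phi_j$ while $X_i$ and $X_{i+1}$ do not — which is exactly what the factor $\tfrac32$ in the Small Size Increase clause was chosen to guarantee, and which relies on the monotone growth of the cluster containing $x_0$. The dependence on $c_\Delta$ is harmless, since it only enters through the $\tilde O(1)$ factor $q$.
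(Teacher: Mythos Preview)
Your proposal is correct and follows essentially the same argument as the paper: both combine \Cref{lem:phiInit}, \Cref{lem:phiMono}, and nonnegativity of $\phi_j$ with the observation that each merge involving $X_i$ removes exactly $\val_j(\bar X_i)$ from the potential (using the Small Size Increase clause for the size bookkeeping and the Small Drift clause to lower-bound $\val_j(\bar X_i)$). One minor imprecision: the paper's \Cref{dfn:apxMetric} takes $c_\Delta$ to be a fixed constant, so $q = 2^{3c_\Delta/4} = O(1)$ outright; your phrasing ``$c_\Delta = \tilde O(1)$'' would not in general yield $q = \tilde O(1)$, but the conclusion stands since $c_\Delta$ is in fact constant.
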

% \begin{proof}
% Fix a $j$ and consider the $j$th phase. By the definition of a phase we have that for every $i \in I_j$ it holds that
% \begin{enumerate}
%     \item $\delta_i \leq 2 \cdot \tilde{\delta}_j$ and;
%     \item $d(\tilde{x}_j, \hat{x}_i) \leq \tilde{\delta}_j$.
% \end{enumerate}

% Thus, for each $i \in I_j$ we know that the cluster containing $x_0$ is contained in a small ball around $\tilde{x}_j$, namely $\hat{x}_i \in B(\tilde{x}_j, \tilde{\delta}_j)$ and each merge we perform is between two centroids at distance at most $2 \cdot \tilde{\delta}_j$; thus, if a cluster is merged with the cluster containing $x_0$, by the triangle inequality it must be the case that its centroid is contained in $B(\tilde{x}_j, 3 \tilde{\delta}_j)$. 

% In the $(i_j)th$ iteration every pair of points is at least $\tilde{\delta}_j$ apart and so there are at most $O(2^k)$ centroids from $\hat{\mcP}_{i_j}$ in $B(\tilde{x}_j, 3 \tilde{\delta}_j)$ by \Cref{lem:packingPoints}. Furthermore, the number of distinct additional points in $\hat{\mcP}_i \cap B(\tilde{x}_j, 3 \tilde{\delta}_j)$ for $i \in (i_j, f_j)$ is at most the number of significant merges since by the triangle inequality a new centroid is created in $B(\tilde{x}_j, 3 \tilde{\delta}_j)$ only if the merge is significant. By \Cref{lem:sigMerges}, this is at most $2^{O(k \log k)}$ and so the total number of clusters that merge with the cluster containing $x_0$ is at most $2^{O(k \log k)}$.
% \end{proof}

\begin{proof}
    Fix a $j$ and consider the $j$th phase. We will show that each time the cluster containing $x_0$ is merged in phase $j$, the potential $\phi_j$ decreases by at least a constant.
        
    Since $\phi_j$ is at most $\tilde{O}  \left(( k \cdot c)^{O(k)} \right)$ just before the $s_j$th merge phase  by \Cref{lem:phiInit}, $\phi_j$ is always $\geq 0$, and $\phi_j$ is non-increasing by \Cref{lem:phiMono}, this can only happen at most $\tilde{O}  \left(( k \cdot c)^{O(k)} \right)$ times which suffices to show the lemma.

    Fix $i \in I_j$ in which the cluster $X_i$ containing $x_0$ participates in a merge. Recall that we notate the cluster with which $X_i$ merges by $\bar{X}_i$. Our goal is to show that $\phi_j(\mcC_{i}) - \phi_j(\mcC_{i+1}) = \Omega(1)$. By definition of a phase, we know that 
    \begin{align*}
        |\bar{X}_i| < \frac{1}{2} \cdot |\tilde{X}_j|.
    \end{align*}
    since otherwise, by the fact that $\tilde{X}_j \subseteq X_i$, we would have $|X_{i+1}| = |X_i| + |\bar{X}_i| \geq |\tilde{X}_j| +  \frac{1}{2} |\tilde{X}_j| \geq \frac{3}{2}|\tilde{X}_j|$ which would contradict the fact that the $i$th merge occurs in the $j$th phase as per \Cref{dfn:phase}.
    
    Likewise, again since $\tilde{X}_j \subseteq X_i$, we know that that 
    \begin{align*}
        |X_i| \geq |\tilde{X}_j| > |\tilde{X}_j|/2.
    \end{align*}
    It follows by the definition of our potential $\phi_j$ (\Cref{dfn:potential}) that before the $i$th merge $\bar{X}_i$ contributes to our potential but $X_i$ does not. Similarly, after the $i$th merge $X_{i+1} = X_i \cup \bar{X}_i$ does not contribute to $\phi_j$ (nor do $X_i$ and $\bar{X}_i$ since they are no longer clusters in our set of clusters). Putting this together, we have    \begin{align}\label{eq:saga}
        \phi_j(\mcC_{i}) - \phi_j(\mcC_{i+1}) = \val_j(\bar{X}_i).
    \end{align}

    On the other hand, since $i \in I_j$, we know by our definition of the $j$th phase that $d(\tilde{X}_j, \bar{X}_i) \leq 3c_{\Delta} \cdot \tilde{\delta}_j$ and so by definition of $\val_j$ (\Cref{dfn:val}) and the fact that $c_{\Delta}$ is a constant, we get
    \begin{align}\label{eq:daghu}
        \val_j(\bar{X}_i) &= \exp_2 \left(-\frac{d(\tilde{X}_j, \bar{X}_i)}{4 \cdot \tilde{\delta}_j} \right) \geq \exp_2\left(-\frac{3}{4}c_{\Delta} \right)= \Omega(1)
    \end{align}
    Combining \Cref{eq:saga} and \Cref{eq:daghu}, we get $\phi_j(\mcC_{i}) - \phi_j(\mcC_{i+1}) \geq \Omega(1)$ as required.
    \end{proof}

\subsection{Bounding the Number of Phases}
We now bound the total number of phases. Recall that, by definition of a phase (\Cref{dfn:phase}), a phase starts because one of our conditions fails to be met.  In particular, we say that phase $j$ starts because of a large merge if $\delta_{s_j} > 2 \cdot \tilde{\delta}_{j-1}$, because of a large size increase if $|X_{s_j}| \geq \frac{3}{2} \cdot |\tilde{X}_{j-1}|$ and because of a large drift if $d(\tilde{X}_{j-1}, \bar{X}_{s_j}) > 3 c_{\Delta} \cdot \tilde{\delta}_{j-1}$. We proceed to bound the number of phases that fail each of our conditions.

The number of phases that start because of a large merge is at most $O(\log n)$ since each such phase increases the largest merge we've performed by a multiplicative constant and the largest merge we can perform is polynomially-bounded by our polynomial aspect ratio and poly-bounded diameter.
\begin{lemma}\label{lem:boundLargeMerges}
 The number of phases that start because of a large merge is at most $O(\log n)$, assuming $\poly(n)$ aspect ratio.
\end{lemma}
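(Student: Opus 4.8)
The plan is to follow the one-line argument indicated just before the statement: every phase that begins because of a large merge multiplies the running maximum merge value by at least $2$, and that running maximum stays inside a $\poly(n)$-wide multiplicative window, so there can be at most $O(\log n)$ such phases.

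Concretely, I would first observe that the sequence $j \mapsto \tilde{\delta}_j = \max_{i \le s_j} \delta_i$ is non-decreasing, since $s_{j-1} < s_j$ means $\tilde{\delta}_j$ is a maximum over a larger prefix of merges. Next, if phase $j$ begins because of a large merge then by definition $\delta_{s_j} > 2 \cdot \tilde{\delta}_{j-1}$; since the index $s_j$ itself lies in the prefix $\{i : i \le s_j\}$ defining $\tilde{\delta}_j$, this gives $\tilde{\delta}_j \ge \delta_{s_j} > 2 \cdot \tilde{\delta}_{j-1}$. Writing $\delta^{\max} := \max_i \delta_i$ for the largest merge value over the entire run, we have $\tilde{\delta}_1 = \delta_1$ (since $s_1 = 1$) and $\tilde{\delta}_j \le \delta^{\max}$ for every $j$. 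Combining the per-phase doubling with monotonicity, if there are $m$ phases that start because of a large merge then $\delta^{\max} \ge 2^m \cdot \delta_1$, so $m \le \log_2(\delta^{\max}/\delta_1)$.

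It remains to see that $\delta^{\max}/\delta_1 \le \poly(n)$. The first merge joins two singletons $\{u\}, \{v\}$, so $\delta_1 = d(u,v) \ge \min_{u,v \in \mcP} d(u,v)$. Every merge value equals $d(A_i, B_i)$ for disjoint clusters $A_i, B_i \subseteq \mcP$, so poly-bounded diameter (\Cref{dfn:polyDiam}) gives $\delta^{\max} \le \poly(\Delta \cdot n)$ where $\Delta = \max_{u,v \in \mcP} d(u,v)$; the $\poly(n)$-aspect-ratio hypothesis says $\Delta \le \poly(n) \cdot \min_{u,v \in \mcP} d(u,v)$, so after the harmless rescaling making $\min_{u,v \in \mcP} d(u,v) = 1$ (using scale-equivariance of $d$ for the linkage functions of interest) we get $\delta^{\max} \le \poly(n)$ and $\delta_1 \ge 1$. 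Hence $m \le \log_2 \poly(n) = O(\log n)$, as claimed.

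Essentially everything here is forced by the definition of a phase; the only step that needs a moment of care is the last one, where poly-bounded diameter is polynomial in $\Delta \cdot n$ rather than in $n$ alone, so one first has to invoke the aspect-ratio bound (and a rescaling) to conclude that the window of merge values is genuinely only $\poly(n)$ wide. I do not expect a real obstacle: among the several ``number of phases'' bounds, this is the simplest one.
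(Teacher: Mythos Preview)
Your argument is correct and matches the paper's proof essentially line for line: both observe that a ``large merge'' phase forces $\tilde{\delta}_j > 2\tilde{\delta}_{j-1}$, then sandwich the running maximum between $\min_{u,v} d(u,v)$ below and the poly-bounded-diameter upper bound above, invoking the $\poly(n)$ aspect ratio to conclude. Your explicit mention of the rescaling to $\min_{u,v} d(u,v)=1$ via scale-equivariance is a step the paper glosses over (it simply asserts the final ratio is $O(\log n)$), so if anything you are slightly more careful here.
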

\begin{proof}
    Recall that our $\poly(n)$ aspect ratio implies
    \begin{align*}
        \frac{\max_{u,v \in \mcP} d(u,v)}{\min_{u,v \in \mcP} d(u,v)} = \poly(n).
    \end{align*}

    If phase $j$ starts because of a large merge, we know that $\delta_{s_j} > 2 \cdot \tilde{\delta}_{j-1}$ and so  $\tilde{\delta}_{j} > 2 \cdot \tilde{\delta}_{j-1}$. It follows that if $l$ phases start because of a large merge we have that there exist a pair of clusters $A, B \subseteq \mcP$ such that
    \begin{align}\label{eq:sagasaf}
        d(A, B) \geq 2^l \cdot \min_{u,v} d(u,v).
    \end{align}

    On the other hand, observe that since $d$ is well-behaved we know that it has poly-bounded diameter (\Cref{dfn:polyDiam}). In particular, for any clusters $A, B \subseteq \mcP$ we know that 
    \begin{align}\label{eq:agdhwr}
        d(A, B) \leq \poly\left(\max_{u,v \in \mcP} d(u,v) \cdot n \right).
    \end{align}
    Combining \Cref{eq:sagasaf} and \Cref{eq:agdhwr}, we have
    \begin{align*}
        l \leq \log \left( \frac{\poly\left(\max_{u,v \in \mcP} d(u,v) \cdot n \right)}{\min_{u,v} d(u,v)} \right)
    \end{align*}
    which is at most $O(\log n)$ by our assumption of $\poly(n)$ aspect ratio.
\end{proof}

The number of phases that start because of large size increases is $O(\log n)$ since each such phase increases the size of our cluster by a multiplicative constant.
\begin{lemma}\label{lem:boundSizeMerges}
 The number of phases that start because of a large size increase is at most $O(\log n)$.
\end{lemma}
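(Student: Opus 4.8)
The plan is a short telescoping argument on the size of the cluster containing $x_0$, directly paralleling the proof of \Cref{lem:boundLargeMerges} (but tracking sizes rather than merge values). The quantity I would track is $|\tilde{X}_j| = |X_{s_j}|$, the size of the cluster containing $x_0$ at the start of phase $j$.

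First I would record two elementary facts. (i) $|\tilde{X}_j|$ is non-decreasing in $j$: HAC only ever merges clusters (never splits them), and $s_{j-1} < s_j$, so the cluster containing $x_0$ just before merge $s_{j-1}$ is a subset of the one just before merge $s_j$; that is, $\tilde{X}_{j-1} \subseteq \tilde{X}_j$, hence $|\tilde{X}_{j-1}| \le |\tilde{X}_j|$. (ii) $1 \le |\tilde{X}_j| \le n$ for every $j$, since this cluster always contains $x_0$ and is always a subset of $\mcP$.

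Next I would invoke the definition of a phase that starts because of a large size increase (from \Cref{dfn:phase} and the discussion preceding this lemma): in that case $|\tilde{X}_j| = |X_{s_j}| \ge \tfrac{3}{2}\,|\tilde{X}_{j-1}|$. Now suppose phases $j_1 < j_2 < \dots < j_l$ all start because of a large size increase. For each $t$, using the large-size-increase condition at $j_{t+1}$ followed by monotonicity from fact (i) across $[j_t, j_{t+1}-1]$, we get $|\tilde{X}_{j_{t+1}}| \ge \tfrac{3}{2}\,|\tilde{X}_{j_{t+1}-1}| \ge \tfrac{3}{2}\,|\tilde{X}_{j_t}|$. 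Iterating, $|\tilde{X}_{j_l}| \ge (3/2)^{\,l-1}\,|\tilde{X}_{j_1}| \ge (3/2)^{\,l-1}$. Combining with $|\tilde{X}_{j_l}| \le n$ from fact (ii) yields $l \le 1 + \log_{3/2} n = O(\log n)$, which is the claimed bound.

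I do not expect any genuine obstacle here: the only thing that needs a moment's thought is the monotonicity of $|\tilde{X}_j|$ in $j$, and that is immediate from the fact that HAC never splits clusters, so the whole argument is essentially the same geometric-progression counting used in \Cref{lem:boundLargeMerges}.
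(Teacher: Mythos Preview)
Your proposal is correct and takes essentially the same approach as the paper: the paper's proof simply observes that each such phase multiplies $|X_{s_j}|$ by at least $\tfrac{3}{2}$, that cluster sizes are between $1$ and $n$, and that they are non-decreasing, giving $O(\log n)$ such phases. Your write-up is just a slightly more careful unpacking of the same telescoping argument.
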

\begin{proof}
    Observe that if phase $j$ starts because of a large size increase we have $|X_{s_j}| \geq \frac{3}{2} \cdot |X_{s_{j-1}}|$. However, since the size of a cluster is at most $n$, at least $1$, and non-decreasing over the course of our merges, we have that the number of such phases is at most $O(\log n)$.
\end{proof}

In order to argue that the number of phases that start because of a large drift is small, we will argue that as long as our size has only increased by a small amount, every merge happens close to $\tilde{X}_j$. In particular, we will use the following helper lemma.
\begin{lemma}\label{lem:tri}
 Suppose $X_i$ participates in a merge in the $j$th phase. Then, if $|X_i| \leq (1+\eps)\cdot |\tilde{X}_j|$ for $\eps = 1/\tilde{\Theta} \left(( k \cdot c)^{O(k)} \right)$ (for an appropriately large hidden poly-log), we have $d(\tilde{X}_j, \bar{X}_i) \leq 3 c_{\Delta}\cdot \tilde{\delta}_j$.
\end{lemma}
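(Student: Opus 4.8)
The plan is to bound $d(\tilde{X}_j, \bar{X}_i)$ by splitting the distance through $X_i$, using the approximate triangle inequality, and controlling each piece separately. First I would write
\[
d(\tilde{X}_j, \bar{X}_i) \le c_\Delta \left( d(\tilde{X}_j, X_i) + d(X_i, \bar{X}_i) \right),
\]
which is legitimate only if $|X_i| \ge \min(|\tilde{X}_j|, |\bar{X}_i|)$; since $\tilde{X}_j \subseteq X_i$ we have $|X_i| \ge |\tilde{X}_j|$, so the approximate triangle inequality (\Cref{dfn:apxMetric}) applies with $X_i$ as the middle cluster. The second term $d(X_i, \bar{X}_i) = \delta_i \le 2\tilde{\delta}_j$ is already controlled by the Small Merges condition of \Cref{dfn:phase}. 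So the whole task reduces to showing $d(\tilde{X}_j, X_i)$ is $O(\tilde{\delta}_j)$, i.e. that $X_i$ has not drifted far from its starting position $\tilde{X}_j$ in this phase.

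To bound $d(\tilde{X}_j, X_i)$, I would track the trajectory of the cluster containing $x_0$ merge-by-merge from the start of the phase up to step $i$. Say the cluster containing $x_0$ participates in merges at steps $i_1 < i_2 < \dots < i_m = $ (the merge just before $i$, if any) within the phase, merging successively with clusters $\bar{X}_{i_1}, \dots, \bar{X}_{i_m}$. By weight-stability (\Cref{dfn:weightStab}), each such merge moves the cluster by at most
\[
\frac{|\bar{X}_{i_t}|}{|X_{i_t}| + |\bar{X}_{i_t}|} \cdot d(X_{i_t}, \bar{X}_{i_t}) \le \frac{|\bar{X}_{i_t}|}{|\tilde{X}_j|}\cdot 2\tilde{\delta}_j,
\]
using $|X_{i_t}| \ge |\tilde{X}_j|$ and $d(X_{i_t},\bar{X}_{i_t}) = \delta_{i_t} \le 2\tilde{\delta}_j$. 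Summing along the path and applying the approximate triangle inequality to chain these displacements together gives a bound on $d(\tilde{X}_j, X_i)$ of roughly $(\text{const})\cdot \frac{\tilde{\delta}_j}{|\tilde{X}_j|}\sum_t |\bar{X}_{i_t}|$. The key observation is that $\sum_t |\bar{X}_{i_t}| = |X_i| - |\tilde{X}_j| \le \eps \cdot |\tilde{X}_j|$ by the hypothesis $|X_i| \le (1+\eps)|\tilde{X}_j|$, so this sum telescopes to at most $\eps\,|\tilde{X}_j|$, and the displacement bound becomes $O(\eps \tilde{\delta}_j)$, which is $o(\tilde{\delta}_j)$ for the stated choice of $\eps$.

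The main obstacle is that chaining weight-stability displacements across many merges cannot be done with a naive triangle inequality, since $c_\Delta$ compounds multiplicatively with each application — exactly the difficulty flagged in the introduction. The fix is to avoid telescoping the triangle inequality step-by-step: instead I would prove a single inequality of the form $d(\tilde{X}_j, X_i) \le (\text{const}\cdot c_\Delta)\cdot \tilde{\delta}_j \cdot \frac{1}{|\tilde{X}_j|}\sum_t|\bar{X}_{i_t}|$ by a careful induction where at each step one only applies the triangle inequality once (to $d(\tilde{X}_j, X_{i_t}) \le c_\Delta(d(\tilde{X}_j, X_{i_{t-1}}) + \text{[movement from } X_{i_{t-1}} \text{ to } X_{i_t}])$, together with weight-stability bounding the movement term and the Small Drift invariant of earlier merges ensuring every intermediate cluster stays within $3c_\Delta\tilde{\delta}_j$). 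Because the cumulative movement is bounded by $\eps|\tilde{X}_j| \ll |\tilde{X}_j|$, even a $\mathrm{poly}(c_\Delta) = O(1)$ blowup keeps $d(\tilde{X}_j, X_i) \le \tilde{\delta}_j$; then the outer split gives $d(\tilde{X}_j,\bar{X}_i) \le c_\Delta(\tilde{\delta}_j + 2\tilde{\delta}_j) = 3c_\Delta\tilde{\delta}_j$ as claimed. One should choose the hidden constant in $\eps = 1/\tilde{\Theta}((kc)^{O(k)})$ large enough to absorb both the $c_\Delta$ factors and the (at most $\tilde{O}((kc)^{O(k)})$, by \Cref{lem:mergesPerPhase}) number of merges $m$ in the phase, which is why the bound on $\eps$ takes that form.
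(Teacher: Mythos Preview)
Your overall architecture is correct and matches the paper: split through $X_i$ using the approximate triangle inequality (valid since $|X_i|\ge|\tilde X_j|$), bound $d(X_i,\bar X_i)\le 2\tilde\delta_j$ by Small Merges, and control $d(\tilde X_j,X_i)$ by tracking the merge-by-merge drift using weight-stability. Your telescoping observation $\sum_t|\bar X_{i_t}|=|X_i|-|\tilde X_j|\le\eps|\tilde X_j|$ is also fine.

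The gap is in your proposed fix for the compounding of $c_\Delta$. Your ``careful induction'' applies the approximate triangle inequality once per merge, i.e.\ $D_t\le c_\Delta(D_{t-1}+\text{move}_t)$; unrolled, this gives $D_m\le\sum_t c_\Delta^{\,m-t+1}\cdot\text{move}_t$, so the leading factor is $c_\Delta^{m}$, not $\operatorname{poly}(c_\Delta)$. Since the number $m$ of merges in a phase can be as large as $\tilde O((kc)^{O(k)})$ by \Cref{lem:mergesPerPhase}, this blowup is exponential in $m$ and cannot be absorbed into $\eps$ of the stated form. Invoking the Small Drift invariant of earlier merges does not help either: it bounds $d(\tilde X_j,\bar X_{i_t})$, not $d(\tilde X_j,X_{i_t})$, and $\bar X_{i_t}$ cannot serve as the middle cluster in \Cref{dfn:apxMetric} because it may be smaller than both $\tilde X_j$ and $X_{i_{t+1}}$.

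The paper's actual fix is to chain the approximate triangle inequality in a \emph{binary-tree} (divide-and-conquer) pattern along the sequence $\tilde X_j=X_{\le 0},X_{\le 1},\dots,X_{\le z}=X_i$: split at the midpoint and recurse on the two halves. This reduces the depth of compositions from $z$ to $\log z$, so the accumulated factor is $(2c_\Delta)^{\log z}=z\cdot c_\Delta^{\log z}=z^{1+\log c_\Delta}$, which is only polynomial in $z$ (as $c_\Delta=O(1)$). With each per-step displacement bounded by $O(\eps\,\tilde\delta_j)$ via weight-stability, one obtains $d(\tilde X_j,X_i)\le z^{O(1)}\cdot\eps\,\tilde\delta_j$; choosing the hidden constant in $\eps=1/\tilde\Theta((kc)^{O(k)})$ large enough makes this at most $\tilde\delta_j$, and then the outer split yields $d(\tilde X_j,\bar X_i)\le c_\Delta(\tilde\delta_j+2\tilde\delta_j)=3c_\Delta\tilde\delta_j$. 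The binary-tree chaining is the missing idea.
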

\begin{proof}
    Fix a $j$ and let $Y_1, Y_2 \ldots, Y_z$ be, in order, all clusters that $\tilde{X}_j$ merges with in the $j$th phase up to but not including the cluster $\bar{X}_i$. Likewise, let $X_{\leq l}$ for $l \in [0, z]$ be $\tilde{X}_j \cup \bigcup_{j \leq l}Y_j$ be $\tilde{X}_j$ after performing the first $l$ of these merges. Note that $X_{\leq z} = X_i$.
    
    By assumption and the fact that $\tilde{X}_j \subseteq X_{\leq l}$ for every $l$, we know that 
    \begin{align*}
        |X_{\leq l-1}|+ |Y_l| = |X_{\leq l}| \leq (1+\epsilon) \cdot |\tilde{X}_j| \leq (1+\epsilon) \cdot|X_{\leq l-1}|
    \end{align*}
    and so by $|X_{\leq l}| \geq |X_{\leq l-1}|$ we get 
    \begin{align*}
        |Y_l| \leq \eps \cdot |X_{\leq l}|
    \end{align*}
    and so
    \begin{align}\label{eq:asfasfsa}
        \frac{|Y_l|}{|X_{\leq l}|} \leq \eps.
    \end{align}
    % \will{Maybe a bit of a leap}
    % \begin{align}
    %     \frac{|Y_l|}{|X_{\leq l-1}| + |X_{\leq l}|} \leq \frac{\eps}{2}.
    % \end{align}
    
    Note that since we merge $X_{\leq l-1}$ and $Y_l$ in the $j$th phase, we have by definition of a phase (\Cref{dfn:phase}) that $d(X_{\leq l -1}, Y_l) \leq 2\tilde{\delta}_j$. Furthermore, since $d$ is well-behaved, it is weight-stable (\Cref{dfn:weightStab}) and, in particular, it follows that for each $l$ we have by \Cref{eq:asfasfsa} that 
    \begin{align}\label{eq:sagasfsag}
        d(X_{\leq l-1}, X_{\leq l}) &= d(X_{\leq l-1}, X_{\leq l-1} \cup Y_l)\nonumber\\
        &\leq \frac{|Y_l|}{|X_{\leq l-1}| + |Y_l|} \cdot d(X_{\leq l-1}, Y_l) \nonumber\\
        &= \frac{|Y_l|}{|X_{\leq l}|} \cdot d(X_{\leq l-1}, Y_l)   \nonumber\\
        &\leq \eps \cdot \tilde{\delta}_j.
    \end{align}

    Since $d$ is well-behaved, it also approximately satisfies the triangle inequality (\Cref{dfn:apxMetric}). Let $c_{\Delta}$ be the constant according to which $d$ approximately satisfies the triangle inequality for any $3$ clusters $A$, $B$ and $C$ where $|B| \geq \min(|A|, |C|)$. We have that for any $l_1, l_2, l_3 \leq z$ where $l_1 \leq l_2 \leq l_3$ that $|X_{\leq l_2}| \geq |X_{\leq l_1}| = \min(|X_{\leq l_1}|, |X_{\leq l_3}|)$ and so we have the approximate triangle inequality:
    \begin{align}\label{eq:tria}
        d(X_{\leq l_1}, X_{\leq l_3}) \leq c_{\Delta} \cdot (d(X_{\leq l_1}, X_{\leq l_2}) + d(X_{\leq l_2}, X_{\leq l_3})).
    \end{align}
    
    Thus, we have by \Cref{eq:sagasfsag} and $z-1$ applications of \Cref{eq:tria} in a binary-tree-like fashion (see \Cref{fig:bbpath-hops} for a similar application) that %\ellis{this could be unpacked / explained more}
    \begin{align*}
        d(\tilde{X}_j, X_{i}) &= d(\tilde{X}_j, X_{\leq z})\\
        &\le c_\Delta\left(d(\tilde{X}_j, X_{\leq z/2}) + d(X_{\le z/2}, X_{\leq z})\right)  \\ 
        &\le c_\Delta\left(c_\Delta \left(d(\tilde{X}_j, X_{\leq z/4}) + d(X_{\le z/4}, X_{\leq z/2})\right) + c_\Delta\left(d(X_{\le z/2}, X_{\leq 3z/4}) + d(X_{\le 3z/4}, X_{\leq z})\right)\right)  \\ 
        &\;\;\vdots\\
        &\leq (2c_{\Delta})^{\log z} \cdot \eps\cdot\tilde{\delta}_j\\
        &= \eps \cdot z \cdot (c_{\Delta})^{\log z} \cdot \tilde{\delta}_j.
    \end{align*}
    
    Lastly, since $d(X_{i}, \bar{X}_{i}) \leq 2 \tilde{\delta}_j$ since we merge $X_{i}$ and $\bar{X}_{i}$ in the $j$th phase, and since $\min(|\tilde{X}_j|, |\bar{X}_{i}|) \leq |\tilde{X}_j| \leq |X_{i}|$, we may apply the approximate triangle inequality one more time to get 
    \begin{align}\label{eq:sgdagsa}
        d(\tilde{X}_j, \bar{X}_{i}) &\leq c_{\Delta}(d(\tilde{X}_j, X_{i}) + d(X_{i}, \bar{X}_{i}))\nonumber\\
        &\leq c_{\Delta}(\eps \cdot z \cdot (c_{\Delta})^{\log z} \cdot \tilde{\delta}_j + 2\tilde{\delta}_j))
    \end{align}

    By \Cref{lem:mergesPerPhase} we know that $z \leq \tilde{O}  \left(( k \cdot c)^{O(k)} \right)$. Moreover, since $c_\Delta$ is a constant and by our assumptions on $\eps$, we therefore have
    \begin{align}\label{eq:hasfsa}
        \eps \cdot z \cdot (c_{\Delta})^{\log z}  \leq 1.
    \end{align}
    Combining \Cref{eq:sgdagsa} and \Cref{eq:hasfsa}, we get
    \begin{align*}
        d(\tilde{X}_j, \bar{X}_{i}) \leq 3c_{\Delta} \cdot \tilde{\delta}_j
    \end{align*}
    as required.
\end{proof}

Using the contrapositive of the above helper lemma, we can bound the number of phases that occur because of a large drift.
\begin{lemma}\label{lem:boundDriftMerges}
 The number of phases that start because of a large drift is at most $\tilde{O}  \left(( k  \cdot c)^{O(k)} \right)$.
\end{lemma}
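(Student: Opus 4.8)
The plan is to read the bound off from the contrapositive of the helper \Cref{lem:tri}: a large drift at the start of a phase can only occur if the cluster containing $x_0$ has grown by a multiplicative factor $(1+\eps)$ since the start of the previous phase, and since this size lies in $[1,n]$ and is non-decreasing, this can happen only $\tilde O((k\cdot c)^{O(k)})$ times once $\eps = 1/\tilde\Theta((k\cdot c)^{O(k)})$.

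Concretely, let $j\ge 2$ be a phase that starts because of a large drift, so that $X_{s_j}$ participates in the $s_j$-th merge and $d(\tilde X_{j-1},\bar X_{s_j}) > 3c_\Delta\tilde\delta_{j-1}$. First I would dispose of the phases that \emph{also} start because of a large merge: there are at most $O(\log n)$ of these by \Cref{lem:boundLargeMerges}, so it suffices to count the remaining drift-phases, for which $\delta_{s_j}\le 2\tilde\delta_{j-1}$. For such a phase I would apply \Cref{lem:tri} with phase index $j-1$, taking the merge at index $s_j$ to play the role of the merge ``$i$'' in that lemma: the chain of clusters $\tilde X_{j-1}=X_{\le 0},X_{\le 1},\dots,X_{\le z}$ obtained by performing in order all merges of the $x_0$-cluster during phase $j-1$ ends at $X_{\le z}=X_{s_j}$; each of those $z=\tilde O((k\cdot c)^{O(k)})$ merges (by \Cref{lem:mergesPerPhase}) has value at most $2\tilde\delta_{j-1}$ by Condition~1 of \Cref{dfn:phase}; the final merge $\{X_{s_j},\bar X_{s_j}\}$ has value $\delta_{s_j}\le 2\tilde\delta_{j-1}$ exactly because phase $j$ is not a large-merge phase; and the size hypotheses needed for the approximate triangle inequality hold throughout since $\tilde X_{j-1}\subseteq X_{\le l}\subseteq X_{s_j}$ for all $l$. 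The proof of \Cref{lem:tri} then goes through unchanged and shows that if $|X_{s_j}|\le (1+\eps)|\tilde X_{j-1}|$ then $d(\tilde X_{j-1},\bar X_{s_j})\le 3c_\Delta\tilde\delta_{j-1}$, contradicting the large drift; hence $|\tilde X_j|=|X_{s_j}| > (1+\eps)|\tilde X_{j-1}|$.

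Since the size of the cluster containing $x_0$ is non-decreasing over the course of HAC and stays in $[1,n]$, while it grows by a factor of at least $(1+\eps)$ at each such phase, there can be at most $\log_{1+\eps}n = O((\log n)/\eps) = \tilde O((k\cdot c)^{O(k)})$ of them. Combined with the $O(\log n)$ drift-phases that also start because of a large merge, this yields the claimed $\tilde O((k\cdot c)^{O(k)})$ bound. I expect the main obstacle to be the indexing subtlety in the middle paragraph: the merge at index $s_j$ formally belongs to phase $j$, not to phase $j-1$, so one must check that it can legitimately be appended to phase $(j-1)$'s chain of $x_0$-cluster merges — and it is precisely to guarantee that this last merge still has value $O(\tilde\delta_{j-1})$ that the case split on whether phase $j$ is also a large-merge phase is needed.
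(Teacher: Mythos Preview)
Your proposal is correct and follows the same argument as the paper: take the contrapositive of \Cref{lem:tri} to conclude that a phase starting because of a large drift forces $|X_{s_j}| > (1+\eps)|\tilde X_{j-1}|$ with $\eps = 1/\tilde\Theta((k\cdot c)^{O(k)})$, and then bound the number of such multiplicative increases by $\log_{1+\eps} n = \tilde O((k\cdot c)^{O(k)})$. In fact you are slightly more careful than the paper's proof, which simply invokes the contrapositive of \Cref{lem:tri} without remarking that the merge at index $s_j$ lies in phase $j$ rather than phase~$j-1$; your case split on whether phase $j$ also starts because of a large merge (so that $\delta_{s_j}\le 2\tilde\delta_{j-1}$ in the remaining case, making the final step of the proof of \Cref{lem:tri} go through) is exactly the patch needed to make the application of \Cref{lem:tri} rigorous, and it costs only an additive $O(\log n)$ already absorbed by the bound.
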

\begin{proof}
    By the contrapositive of \Cref{lem:tri} we have that the $j$th phase starts because of a large drift only if $|X_{s_j}| \geq (1+\eps)|X_{s_{j-1}}|$ for $\eps = 1/\tilde{\Theta}  \left(( k  \cdot c)^{O(k)} \right)$. In other words, we multiply the size of the cluster containing $x_0$ by $(1+\eps)$. By our choice of $\eps$ and the fact that the maximum cluster size is at most $n$, this can happen at most $\tilde{O}  \left(( k  \cdot c)^{O(k)} \right)$ times.
\end{proof}

Putting our bounds together, we get the following bound on the total number of phases. 

\begin{lemma}\label{lem:boundPhases}
The number of phases is at most $\tilde{O}  \left(( k  \cdot c)^{O(k)} \right)$ assuming polynomial aspect ratio.
\end{lemma}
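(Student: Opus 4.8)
The plan is to combine the three previously-established bounds on phase-triggering conditions with the per-phase merge bound to obtain the overall height bound. First I would observe that, by \Cref{dfn:phase}, every phase $j > 0$ begins because exactly one of the three conditions---small merges, small size increase, or small drift---failed at index $s_j$. Hence the total number of phases is at most $1$ (for the initial phase) plus the number of phases triggered by a large merge, plus the number triggered by a large size increase, plus the number triggered by a large drift.

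Next I would simply invoke the three lemmas already proven: \Cref{lem:boundLargeMerges} gives $O(\log n)$ phases triggered by a large merge (using the $\poly(n)$ aspect ratio and poly-bounded diameter); \Cref{lem:boundSizeMerges} gives $O(\log n)$ phases triggered by a large size increase (since the size of $x_0$'s cluster at most $n$-folds and increases by a constant factor each such phase); and \Cref{lem:boundDriftMerges} gives $\tilde{O}\left((k \cdot c)^{O(k)}\right)$ phases triggered by a large drift (via the contrapositive of \Cref{lem:tri}). Summing these three quantities, and noting that $O(\log n)$ is absorbed into $\tilde{O}(\cdot)$ and is dominated by $\tilde{O}\left((k \cdot c)^{O(k)}\right)$, yields a total phase count of $\tilde{O}\left((k \cdot c)^{O(k)}\right)$, which is exactly the claimed bound.

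There is essentially no obstacle here---this lemma is purely a bookkeeping step that aggregates the work done in \Cref{lem:boundLargeMerges}, \Cref{lem:boundSizeMerges}, and \Cref{lem:boundDriftMerges}. The only minor point worth stating carefully is that each of the three lemmas covers exactly one of the three ways a phase can start (as spelled out in the paragraph preceding \Cref{lem:boundLargeMerges}, where ``large merge,'' ``large size increase,'' and ``large drift'' are defined to correspond precisely to the failures of conditions 1, 2, and 3 of \Cref{dfn:phase}), so the case analysis is exhaustive and the three bounds may simply be added. Combined with \Cref{lem:mergesPerPhase}, which bounds the number of merges $x_0$'s cluster participates in within each phase by $\tilde{O}\left((k \cdot c)^{O(k)}\right)$, the product gives that $x_0$'s cluster participates in at most $\tilde{O}\left((k \cdot c)^{O(k)}\right)$ merges total, establishing \Cref{lem:heightBound}; but that final multiplication is carried out after this lemma rather than within it.
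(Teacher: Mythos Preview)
Your proposal is correct and takes essentially the same approach as the paper: the paper's own proof simply states that the result is immediate from \Cref{lem:boundLargeMerges}, \Cref{lem:boundSizeMerges}, and \Cref{lem:boundDriftMerges}, which is exactly the aggregation you describe (with your exposition being more explicit about why the three cases are exhaustive and why $O(\log n)$ is absorbed).
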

\begin{proof}
The result is immediate from \Cref{lem:boundSizeMerges}, \Cref{lem:boundLargeMerges} and \Cref{lem:boundDriftMerges}.
\end{proof}

\subsection{Concluding our Height Bound}
We now conclude our height bound.
\heightBound*
\begin{proof}
    Fix an arbitrary $x_0 \in \mcP$. By \Cref{lem:boundPhases}, the total number of phases (as defined in \Cref{dfn:phase}) is at most $\tilde{O}  \left(( k  \cdot c)^{O(k)} \right)$. By \Cref{lem:mergesPerPhase}, the number of merges in which the cluster containing $x_0$ participates in each phase is at most $\tilde{O}  \left(( k  \cdot c)^{O(k)} \right)$. Thus, the total number of merges in which the cluster containing $x_0$ participates is at most $\tilde{O}  \left(( k  \cdot c)^{O(k)} \right)$.
\end{proof}

\section{Parallel Algorithms for HAC with  Low-Height Dendrograms}\label{sec:algo}
In this section, we present a parallel algorithm for computing $(1+\epsilon)$-approximate HAC with depth proportional to the dendrogram's height $h$ and an auxiliary parameter $\ell$ (called the \emph{bounce-back length}; see \Cref{dfn:BBPath}). Specifically, we show the following.

\conBound*

\subsection{Algorithm Description}
By scaling, we assume the minimum linkage between any two points in input $\mcP$ is at least $1$. The algorithm operates in phases, maintaining a set of active clusters, $\mathcal{C}$, initially containing each point as a singleton cluster. At the beginning of phase $t$, all pairwise linkage values between clusters are at least $(1+\epsilon)^t$. Within each phase $t$, the algorithm performs merges only between clusters whose linkage is less than $(1+\epsilon)^{t+1}$. We refer to $(1+\epsilon)^t$ and $(1+\epsilon)^{t+1}$ as the (lower and upper) thresholds for phase $t$. Since linkage $d$ is well-behaved and the aspect ratio $\rho$ is polynomially bounded, the algorithm terminates in at most $O(\log \rho)=O(\log n)$ phases (see \Cref{lem:algo_num_phases}).

Each phase consists of synchronized rounds of parallel merges. At the beginning of each round, the algorithm maintains the invariant that the minimum linkage between clusters remains at least $(1+\epsilon)^t$. During a round, the algorithm selects a subset of clusters, and each selected cluster first merges with its nearest neighbor. If the linkage value of the newly-formed clusters to some neighbors drops below the lower threshold of the current phase, it iteratively merges with its new nearest neighbor until all remaining neighbors have linkage value of at least $(1+\epsilon)^{t}$ again.
% \will{Should this be $(1+\epsilon)^{t+1}$? That's what we did for centroid}\kishen{No, I think it is correct. This is to deal with the possibility that merge values may fall below the lower threshold of the current phase. I modified the surrounding text to make this clearer}. 
This iterative merging is crucial because linkage $d$ may not be monotonic or reducible, meaning a single merge could reduce linkage values with some neighbors below the current phase's threshold. Hence, each selected cluster continues to merge sequentially with its closest neighbor, thus ensuring the approximation guarantee, until linkage values with all its neighbors ``bounce back'' to being within the phase's thresholds again, thereby maintaining the round-invariant. 

We define the \emph{bounce-back length} $\ell$ to be an upper bound on the number of these \emph{out-of-phase} merges any cluster performs within a round. 
More generally, $\ell$ can be defined in an algorithm-independent way: starting from a merge of value $v$, it is the maximum number of merges performed by that cluster before all subsequent linkage values return to at least $v$. For our purposes, however, an algorithm-specific definition suffices---and this is always no larger than the general one.

By definition, $\ell\le h$, though tighter bounds may hold for certain linkage functions. This allows the algorithm to afford performing these merges sequentially.
However, for correctness, it is essential that the parallel merge sequences remain independent: merges in one sequence must not interfere with those in another. A central challenge, then, is to maintain this independence while ensuring that a significant fraction of clusters merge in each round. The latter is essential to bound the total number of rounds within each phase.

We now formalize the notion of ``locally-optimal paths'' taken by clusters within a round. Let $\cC_t$ denote the set of clusters present at the start of phase $t$, where each cluster in $\cC_t$ has at least one neighbor with linkage less than $(1+\epsilon)^{t+1}$. 
With $\cC_{t,0}=\cC_t$, we define $\cC_{t,r}$ to be the subset of $\cC_{t,r-1}$ consisting of clusters that still have at least one neighbor with linkage less than $(1+\eps)^{t+1}$ at the start of round $r$ in phase $t$. 
It is important to note that while other clusters not in $\cC_{t,r-1}$ might also have a neighbor with linkage less than $(1+\eps)^{t+1}$ at the start of round $r$, their closest neighbor must be a cluster that was part of $\cC_{t,r-1}$ and participated in a merge before or during round $r-1$. This is because such external clusters would have previously had all neighbors at a linkage of at least $(1+\eps)^{t+1}$; their current smaller linkage value is a direct result of a merge involving an active cluster from $\cC_{t,r-1}$. Therefore, processing only clusters within $\cC_{t,r-1}$ is sufficient.
% Further, with $\cC_{t,0}=\cC_t$, define $\cC_{t,r}$ to be the subset of $\cC_{t,r-1}$ consisting of clusters that still have at least one neighbor with linkage less than $(1+\epsilon)^{t+1}$ at the start of round $r$ in phase $t$. Note that, after the first round, there might be other clusters not in $\cC_{t,r}$ with a neighbor at linkage within the phase's upper threshold, their nearest neighbor must be in $\cC_{t,r-1}$ at the end of the round since those are the only clusters that moved, and hence, have their linkage values with other clusters affected. 
% \will{Is it clear that this is a subset? Do we not need change merges until distance is more than $(1+\epsilon)^{t+1}$?}\kishen{we don't, since all distances that change are incident to clusters in $\cC_{t,r-1}$. I've made this a bit more clear above.}
% 
\begin{restatable}[Locally-Optimal Path]{definition}{LOPath} \label{dfn:LOPath} Given a cluster $A \in \mcC_{t,r}$, the \emph{locally-optimal path} of $A$ is defined as the permutation $(B_1, B_2, \ldots)$ of the clusters in $\mcC_{t,r} -A$. satisfying the following condition: if $A_0=A$ and $A_i := A \cup \bigcup_{j \leq i} B_j$, then for all $i\ge 0$ and $j \geq i+1$,
\begin{align*}
    d(A_i, B_{i+1}) \leq d(A_i, B_j).
\end{align*}
\end{restatable}
\begin{restatable}[Bounce-Back Path]{definition}{BBPath} \label{dfn:BBPath} Given a cluster $A \in \mcC_{t,r}$ with locally-optimal path $(B_1, B_2, \ldots)$, we define the bounce-back path $\pi_A = (B_1, B_2, \ldots, B_l)$ as the maximum-length prefix of $(B_1, B_2, \ldots)$ satisfying, for all $i \in [1, l-1]$,
\begin{align*}
    d(A_0, B_1) &< (1+\epsilon)^{t+1}, \text{ and}\\
    d(A_i, B_{i+1}) &< (1+\epsilon)^t,
\end{align*}
% \will{$(1+\epsilon)^{t+1}?$}\kishen{Clarified earlier}
where $A_0=A$ and $A_i := A \cup \bigcup_{j \le i} B_j$.
\end{restatable}
Intuitively, the locally-optimal path of a cluster $A$ captures the ideal sequence of merges that would occur if only cluster $A$ were allowed to greedily merge with its best available neighbor at each step. The bounce-back path is the initial segment of this path consisting of out-of-phase merges: it includes all merges performed before $A$’s linkage values with its remaining neighbors rise back above the threshold $(1+\epsilon)^{t}$ for the current phase.

Note that $l \le \ell$. If the linkage function $d$ satisfies the triangle inequality, then for all $i\in [1,l]$, we have $$d(A,B_i) \le \ell(1+\epsilon)^{t+1}.$$ This implies that the ball $B_d^{\mcC_{t,r}}(A,\ell(1+\epsilon)^{t+1})$ contains the entire bounce-back path of $A$. Moreover, the ball $B_d^{\mcC_{t,r}}(A,3\ell(1+\epsilon)^{t+1})$ contains all clusters whose bounce-back paths could potentially intersect with that of $A$. Therefore, by selecting clusters such that no selected cluster lies within the $O(\ell(1+\epsilon)^{t+1})$-radius ball of another, we ensure that their merge sequences remain unaffected by merges performed by other selected clusters. 

However, linkage functions such as Ward's do not necessarily satisfy the triangle inequality. Nevertheless, we show that for well-behaved linkage functions the bounce-back path of a cluster $A$ is contained within a ball of radius $O(\ell^{O(1)}(1+\epsilon)^{t+1})$ centered at $A$ (see \Cref{lem:BBPathlBall}). Unfortunately, this property alone is insufficient: a small cluster might appear in the bounce-back paths of two large, well-separated clusters, preventing a direct application of the approximate triangle inequality (see \Cref{dfn:apxMetric}). To rule out such cases, we introduce the notion of \emph{bounce-back shells}.
\begin{restatable}[Bounce-Back Shell]{definition}{BBShell} \label{dfn:BBShell} Given a cluster $A \in \mcC_{t,r}$ with bounce-back path $\pi_A = (B_1, B_2, $ $\ldots, B_l)$, we define the bounce-back shell $\pi_A^+$ as all $C \in \mcC_{t,r}$ such that 
\begin{align*}
    d(\{A\}\cup\pi_A, C) \leq 5c_\Delta^3\cdot\ell^{\log(c_\Delta)}\cdot (1+\epsilon)^{t+1},
\end{align*}
where $d(\{A\}\cup\pi_A,C):=\min(d(A,C),\min_{i\in[l]}d(B_i,C))$.
\end{restatable}
Essentially, selecting a cluster $A$ excludes not only all other clusters within a ball of radius $O(\ell^{O(1)}(1+\epsilon)^{t+1})$ centered at $A$, but also all clusters within similarly sized balls centered at each cluster in its bounce-back path $\pi_A$. Given this, we show that selecting clusters that do not mutually appear in each other's bounce-back shells is sufficient to guarantee that their merge sequences remain non-interfering (see \Cref{lem:well_defined}).

Finally, by the $\alpha$-packability of well-behaved linkage functions (see \Cref{dfn:packableLinkage}), we show that the number of clusters in the bounce-back shell of a cluster is bounded by $O(\alpha\ell^{O(k)})$. Consequently, in each round, at least a $\Omega(1/\ell^{O(k)})$ fraction of clusters can be selected to merge along their bounce-back paths. This guarantees that within $\tO(h\ell^{O(k)})$ rounds, the phase completes---i.e., all remaining linkage values exceed $(1+\epsilon)^{t+1}$, and the algorithm advances to the next phase.

Having developed the necessary intuition and formal definitions, we now describe the algorithm in detail. In phase $t$, the algorithm proceeds as follows:
\begin{itemize}
    \item While $\cC_{t,r}$ is non-empty, select a maximal subset $\cS_{t,r} \subseteq \cC_{t,r}$ such that for every pair of distinct clusters $A,B\in \cS_{t,r}$, neither appears in the bounce-back shell of the other.
    \item In parallel, for each cluster $S\in \cS_{t,r}$, merge $S$ with its nearest neighbor $S'$.
    \item If the new cluster $S \cup S'$ has a nearest neighbor at linkage less than $(1+\epsilon)^t$, continue merging it with its nearest neighbor, until all linkage values to neighboring clusters are at least $(1+\epsilon)^t$.
\end{itemize}
The overall algorithm is summarized in \cref{alg:PARALLEL_HAC}. The primitive $\neighbor(C)$ refers to a subroutine that returns the (exact) nearest neighbor of cluster $C$ among the active clusters $\cC$. The primitive $\merge(A,B)$ performs the standard bookkeeping tasks---updating the active cluster set $\cC$, updating the output merge sequence, etc.

To obtain the sequential list of merges that defines the HAC output, we order the merges by phase, with earlier phases appearing first. Within each phase, merges are ordered by round, and within each round, we process the clusters $S \in S_{t,r}$ in an arbitrary order, appending the full sequence of merges associated with each $S$.

\begin{algorithm}[t]
\caption{Parallel HAC for Low-Height Dendrograms}
\begin{algorithmic}[1]
    \State \textbf{Input:} Well-behaved linkage function $d$, $P \subseteq \mathbb{R}^k$ with minimum pairwise linkage $\ge 1$, $\epsilon > 0$
    \State \textbf{Output:} Sequence of merges defining HAC
    \State $\cC \gets \{ \{ p \} : p \in P \}$, $t \gets 0$
    \While{$|\cC| > 1$}
        \State $\cC_{t,0} \gets \{C \in \cC : d(C,\neighbor(C)) < (1+\epsilon)^{t+1}\}$ \Comment{$\neighbor(C)$ is nearest neighbor in current $\cC$} 
        \State $r \gets 0$
        \While{$|\cC_{t,r}|>0$}
            \State // $\pi_A^+$ denotes the bounce-back shell of cluster $A$
            \State $\cS_{t,r} \gets$ maximal subset of $\cC_{t,r}$ such that $A \notin \pi^+_{B}$ and $B \notin \pi^+_A$ for all $A,B \in \cS_{t,r}$
            \ForAll{$S \in \cS_{t,r}$ in parallel} \label{line:parallel_merge}
                \State $S \gets \merge(S, \neighbor(S))$ \label{line:mis}\Comment{update $\cC$}
                \While{$d(S,\neighbor(S)) < (1+\epsilon)^{t}$}
                    \State $S \gets \merge(S, \neighbor(S))$\Comment{update $\cC$}\label{line:merge}
                \EndWhile
            \EndFor
            \State $r \gets r+1$
            \State $\cC_{t,r} \gets \{C \in \cC_{t,r-1} : d(C,\neighbor(C)) < (1+\epsilon)^{t+1}\}$\label{line:construct_active_sets}
        \EndWhile
        \State $t \gets t + 1$
    \EndWhile
\end{algorithmic}
\label{alg:PARALLEL_HAC}
\end{algorithm}

\subsection{Correctness}
In this section, we prove that \cref{alg:PARALLEL_HAC} produces a valid $(1+\epsilon)$-approximate HAC merge sequence. We begin by showing that the merges performed in each round are well-defined. Specifically, we prove that the parallel merge sequences are independent: each cluster can follow its bounce-back path without being affected by merges performed by other clusters in the same round. As a first step, we show that the bounce-back path of a cluster is contained within a ball of radius $O(\ell^{O(1)}(1+\epsilon)^{t+1})$ centered at the cluster.

\begin{figure}
    \centering
    \includegraphics[width=\linewidth]{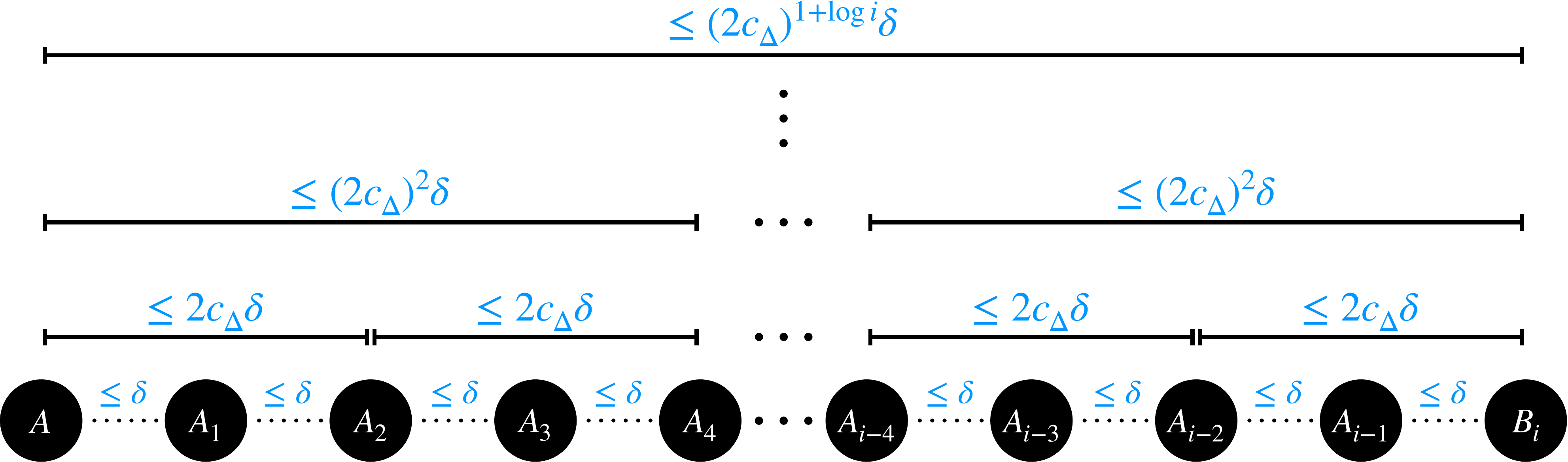}
    \caption{Illustration of the recursive application of the approximate triangle inequality to bound the distance from cluster $A$ to the $i^\text{th}$ cluster on its bounce-back path.}
    \label{fig:bbpath-hops}
\end{figure}

\begin{restatable}{lemma}{BBPathlBall}\label{lem:BBPathlBall}
    Let $A \in \cC_{t,r}$ be a cluster with bounce-back path $\pi_A = (B_1,B_2,\ldots,B_l)$. Then for all $i\in [1,l]$,
    $$d(A,B_i) \le 2c_\Delta\cdot\ell^{\log(c_\Delta)}\cdot(1+\epsilon)^{t+1}.$$
\end{restatable}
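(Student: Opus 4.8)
The plan is to reuse the binary‑tree application of the approximate triangle inequality from \Cref{lem:tri} (illustrated in \Cref{fig:bbpath-hops}), now applied to the nested chain of clusters traced out along the bounce‑back path. Write $A_0 = A$ and $A_i := A \cup \bigcup_{j \le i} B_j$ as in \Cref{dfn:BBPath}, so that $A_0 \subseteq A_1 \subseteq \cdots \subseteq A_l$ and $A_i = A_{i-1} \cup B_i$. The only ingredients needed are that the individual ``hops'' along this chain are short, that weight‑stability turns the chain into a short path, and that the nesting of the $A_i$ makes every application of the approximate triangle inequality legal.

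First I would bound each single hop. By \Cref{dfn:BBPath}, $d(A_0,B_1) < (1+\epsilon)^{t+1}$ and $d(A_{i-1},B_i) < (1+\epsilon)^{t} < (1+\epsilon)^{t+1}$ for $i \ge 2$, so in all cases $d(A_{i-1},B_i) < (1+\epsilon)^{t+1}$. Next, since $d$ is weight‑stable (\Cref{dfn:weightStab}),
\[
 d(A_{i-1}, A_i) \;=\; d\!\left(A_{i-1},\, A_{i-1}\cup B_i\right) \;\le\; \frac{|B_i|}{|A_{i-1}|+|B_i|}\, d(A_{i-1},B_i) \;\le\; d(A_{i-1},B_i) \;<\; (1+\epsilon)^{t+1}.
\]
Hence the clusters $A_0, A_1, \dots, A_{i-1}, B_i$ form a ``path'' with at most $l \le \ell$ edges, each of linkage value strictly below $(1+\epsilon)^{t+1}$.

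The main step is to bound $d(A_0, B_i)$, the distance between the endpoints of this path, by recursively splitting the path roughly in half and applying the approximate triangle inequality (\Cref{dfn:apxMetric}) at each split, exactly as in \Cref{lem:tri}. Every application is valid because of the nesting $A_0 \subseteq A_1 \subseteq \cdots \subseteq A_{i-1}$: whenever we write $d(A_a, Z) \le c_\Delta\big(d(A_a,A_c) + d(A_c,Z)\big)$ for a midpoint $A_c$ with $a < c \le i-1$ and $Z \in \{A_b : b>c\}\cup\{B_i\}$, we have $|A_c| \ge |A_a| \ge \min(|A_a|,|Z|)$, so the size hypothesis of \Cref{dfn:apxMetric} holds; the length‑one sub‑paths need no triangle inequality at all. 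Since the path has at most $\ell$ edges, the recursion has depth $O(\log \ell)$ and multiplies the bound by a factor $c_\Delta$ at each level; tracking constants as in \Cref{lem:tri} (where $(2c_\Delta)^{\log z} = z\,c_\Delta^{\log z}$) and using $c_\Delta^{\log \ell} = \ell^{\log c_\Delta}$ gives $d(A_0,B_i) \le 2c_\Delta\,\ell^{\log c_\Delta}\,(1+\epsilon)^{t+1}$, as claimed.

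I expect the only real obstacle to be bookkeeping: organizing the recursion so the blow‑up is $c_\Delta^{O(\log \ell)}$ rather than $c_\Delta^{O(\ell)}$ (always splitting roughly in half, never peeling off one edge at a time) and verifying the cluster‑size side conditions at each recursive call. Both are routine given that \Cref{lem:tri} already carries out essentially the same computation, and no idea beyond weight‑stability plus the recursive triangle inequality appears to be needed.
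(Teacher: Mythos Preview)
Your proposal is correct and follows essentially the same approach as the paper's proof: bound each hop $d(A_{i-1},B_i)$ by $(1+\epsilon)^{t+1}$, invoke weight-stability to get $d(A_{i-1},A_i)\le(1+\epsilon)^{t+1}$, and then apply the approximate triangle inequality in a binary-tree fashion along the nested chain $A_0\subseteq\cdots\subseteq A_{i-1}$ (so the size hypothesis always holds via the smaller endpoint), arriving at $(2c_\Delta)^{\log i + 1}(1+\epsilon)^{t+1}$. The paper's write-up is the same computation, referencing \Cref{fig:bbpath-hops} just as you do.
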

\begin{proof}
    Let $A_0=A$, and define $$A_i := A \cup \bigcup_{j \leq i} B_j$$ to be the cluster formed after the first $i$ merges along the bounce-back path $\pi_A$. 
    
    Since we are in phase $t$, each merge along the bounce-back path satisfies
    \begin{align*}
        d(A_{i-1},B_i) \le (1+\epsilon)^{t+1}, \text{ for all $i\in [1,l]$.}
    \end{align*}
    Because $d$ is weight-stable (\Cref{dfn:weightStab}), it follows that
    \begin{align*}
        d(A_{i-1},A_{i}) \le \frac{|B_i|}{|A_{i-1}|+|B_i|}\cdot d(A_{i-1},B_i) \le (1+\epsilon)^{t+1}.
    \end{align*}
    We now bound the distance from $A$ to each $B_i$ using the $c_{\Delta}$-approximate triangle inequality (see \Cref{dfn:apxMetric}), applied recursively along the merge sequence. Since each merge increases cluster size, we have $|A_i|\ge |A|$ for all $i$. Thus, by approximate triangle inequality (applied in a binary-tree-like fashion; see \Cref{fig:bbpath-hops}), for each $i\in [1,l]$, we obtain:
    \begin{align*}
        d(A,B_i) &\le c_\Delta(d(A,A_{i/2})+d(A_{i/2},B_i))\\
        &\le c_\Delta\left(c_\Delta\left(d(A,A_{i/4})+d(A_{i/4},A_{i/2})\right) + c_\Delta\left(d(A_{i/2},A_{3i/4}) + d(A_{3i/4},B_i)\right)\right)\\
        &\;\;\vdots\\
        &\le (2c_\Delta)^{\log(i)+1}\cdot(1+\epsilon)^{t+1}\\ 
        &= 2c_\Delta\cdot i^{\log(c_\Delta)}\cdot(1+\epsilon)^{t+1}.
    \end{align*}
    Since $l \le \ell$, the bound follows.
\end{proof}

Next, we prove that the bounce-back paths of distinct clusters in $\cS_{t,r}$ are sufficiently far apart. 
% As before, for a cluster $A$ with bounce-back path $\pi_A = \{B_1, B_2, \ldots, B_l\}$, we define the sequence of merged clusters as

\begin{lemma} \label{lem:seperation}
    Let $X, Y \in \cS_{t,r}$ with bounce-back paths 
    $$\pi_X = \{X'_1, X'_2, \ldots, X'_x\} \quad \text{and} \quad \pi_Y = \{Y'_1, Y'_2, \ldots, Y'_y\},$$ 
    and bounce-back shells $\pi_X^+$ and $\pi_Y^+$. Also, define the sequence of merged clusters,
    $$
    X_0 := X, \quad X_i := X \cup \bigcup_{j \le i} X'_{j} \quad \text{for } i \in [1, x].
    $$
    The sequence of clusters $Y_j$ is defined similarly. Then, for all $i \in [1, x]$, $j \in [1, y]$, and for all $A \in \{X_i, X'_i\}$, $B \in \{Y_j, Y'_j\}$, we have 
    $$d(A, B) \ge (1+\epsilon)^{t+1}.$$
\end{lemma}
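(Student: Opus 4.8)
The plan is to prove this by combining two kinds of distance bounds through the $c_\Delta$-approximate triangle inequality (\Cref{dfn:apxMetric}): an \emph{outer} bound, coming from the selection of $\cS_{t,r}$, that places $X$, $Y$, and their bounce-back paths far from the opposite anchor; and an \emph{inner} bound, coming from \Cref{lem:BBPathlBall} and weight-stability, that places every cluster arising while $X$ walks its bounce-back path inside a small ball around $X$ (and symmetrically for $Y$). Throughout write $R := 2c_\Delta\cdot\ell^{\log c_\Delta}\cdot(1+\epsilon)^{t+1}$ for the radius bound of \Cref{lem:BBPathlBall} and $D := 5c_\Delta^{3}\cdot\ell^{\log c_\Delta}\cdot(1+\epsilon)^{t+1}$ for the bounce-back-shell radius of \Cref{dfn:BBShell}; the one numeric fact we need is $D = \frac{5}{2} c_\Delta^{2}R$, i.e.\ $D$ beats $R$ by more than a factor of $c_\Delta^{2}$.

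First I would extract the outer bounds. Since $X, Y \in \cS_{t,r}$, by the defining property of $\cS_{t,r}$ neither of $X, Y$ lies in the bounce-back shell of the other, i.e.\ $Y \notin \pi_X^+$ and $X \notin \pi_Y^+$. Unwinding \Cref{dfn:BBShell} (where $d(\{A\}\cup\pi_A, C)$ is the minimum of $d(A,C)$ and all $d(B_i, C)$ with $B_i \in \pi_A$) this yields $d(X, Y) > D$, $d(X'_i, Y) > D$ for every $i \in [1,x]$, and $d(Y'_j, X) > D$ for every $j \in [1,y]$. Next I would record the inner bounds: \Cref{lem:BBPathlBall} directly gives $d(X, X'_i) \le R$ and $d(Y, Y'_j) \le R$; and since each merge along a bounce-back path has value $< (1+\epsilon)^{t+1}$, weight-stability (\Cref{dfn:weightStab}) gives $d(X_{m-1}, X_m) \le (1+\epsilon)^{t+1}$ for every $m$, so the same binary-tree application of \Cref{dfn:apxMetric} used in the proof of \Cref{lem:BBPathlBall}---legal because $X = X_0 \subseteq X_1 \subseteq \cdots$, so a midpoint always has size $\ge |X|$---gives $d(X, X_i) \le R$ (and symmetrically $d(Y, Y_j) \le R$). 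I would also note the nestings $X \subseteq X_i$, $X'_i \subseteq X_i$, $Y \subseteq Y_j$, $Y'_j \subseteq Y_j$, which are what let us discharge size hypotheses.

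To finish, fix $i, j$ and $A \in \{X_i, X'_i\}$, $B \in \{Y_j, Y'_j\}$, and lower-bound $d(A,B)$ by chaining the two ingredients, using \Cref{dfn:apxMetric} only through a middle cluster large enough to satisfy its size condition: an anchor-containing cluster ($X_i \supseteq X$ or $Y_j \supseteq Y$), or, in the last case, whichever of $X'_i, Y'_j$ is larger. If $A = X_i$, $B = Y_j$: $D < d(X,Y) \le c_\Delta\, d(X,X_i) + c_\Delta^{2}\, d(X_i,Y_j) + c_\Delta^{2}\, d(Y_j,Y) \le c_\Delta R + c_\Delta^{2}\, d(X_i,Y_j) + c_\Delta^{2}R$, hence $d(X_i,Y_j) > D/c_\Delta^{2} - R/c_\Delta - R \ge \frac{1}{2}R \ge (1+\epsilon)^{t+1}$. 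If $A = X'_i$, $B = Y_j$ (and symmetrically if $A = X_i$, $B = Y'_j$): $D < d(X'_i,Y) \le c_\Delta\bigl(d(X'_i,Y_j) + d(Y_j,Y)\bigr) \le c_\Delta\bigl(d(X'_i,Y_j) + R\bigr)$, hence $d(X'_i,Y_j) > D/c_\Delta - R \ge (1+\epsilon)^{t+1}$. If $A = X'_i$, $B = Y'_j$: assume without loss of generality (swapping the roles of $X$ and $Y$) that $|X'_i| \le |Y'_j|$, so $Y'_j$ is a valid midpoint; then $D < d(X'_i,Y) \le c_\Delta\bigl(d(X'_i,Y'_j) + d(Y'_j,Y)\bigr) \le c_\Delta\bigl(d(X'_i,Y'_j) + R\bigr)$, hence $d(X'_i,Y'_j) > D/c_\Delta - R \ge (1+\epsilon)^{t+1}$. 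This covers all four cases.

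The only genuinely delicate point---and the reason the bounce-back shell is given the slightly awkward radius $5c_\Delta^3\ell^{\log c_\Delta}(1+\epsilon)^{t+1}$ rather than merely $O(\ell^{O(1)})\cdot(1+\epsilon)^{t+1}$---is the size hypothesis of \Cref{dfn:apxMetric}: path clusters $X'_i, Y'_j$ may be arbitrarily small, so they cannot be used naively as triangle middles, and the routing discipline above (hop through anchor-containing clusters, and through the larger of two small path clusters) is exactly what keeps every invocation legal. Given that, the remaining check is purely arithmetic: the constant in $D$ must survive at most two hops of $c_\Delta$ plus an additive $R$, i.e.\ $D/c_\Delta^{2} - R/c_\Delta - R \ge \frac{1}{2}R \ge (1+\epsilon)^{t+1}$, which holds since $c_\Delta \ge 1$ and $\ell \ge 1$.
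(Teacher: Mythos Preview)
Your proposal is correct and follows essentially the same approach as the paper's proof: both extract the same outer bounds $d(X,Y),\,d(X'_i,Y),\,d(X,Y'_j)>D$ from the definition of $\cS_{t,r}$ and the same inner bounds $d(X,X_i),\,d(X,X'_i)\le R$ from (the proof of) \Cref{lem:BBPathlBall}, and then handle the four cases by routing the approximate triangle inequality through size-compatible middles (the anchor-containing clusters, and in the $X'_i$-vs-$Y'_j$ case the larger of the two). The only cosmetic difference is that in the $X_i$-vs-$Y_j$ case you collapse the two hops into a single chained inequality, whereas the paper first bounds $d(X_i,Y)$ and then $d(X_i,Y_j)$ in two separate steps; the arithmetic is equivalent.
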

\begin{proof}
    Fix $i \in [1,x]$ and $j \in [1,y]$. From the proof of \Cref{lem:BBPathlBall}, we have:
    % \will{Does this follow trivially for $X_i$ and $Y_j$? Might be good to update the statement of the previous lemma so it includes this case.}\kishen{I think it is fine as is}:
    \begin{align*}
        d(X, X_i),\;d(X, X'_i),\;d(Y, Y_j),\;d(Y, Y'_j) \le 2c_\Delta \cdot \ell^{\log(c_\Delta)} \cdot (1+\epsilon)^{t+1}.
    \end{align*}
    From the definition of $\cS_{t,r}$, we also have:
    \begin{align*}
        d(X, Y),\;d(X'_i, Y),\; d(X, Y'_j)  \ge 5c_\Delta^3 \cdot \ell^{\log(c_\Delta)} \cdot (1+\epsilon)^{t+1},
    \end{align*}

    We now consider all possible combinations of $A \in \{X_i, X'_i\}$ and $B \in \{Y_j, Y'_j\}$. In each case, we show that $d(A, B) \ge (1+\epsilon)^{t+1}$.

    \begin{description}
      \item[Case 1: $A = X_i$, $B = Y_j$] ~\\
      First, we lower-bound $d(X_i, Y)$:
      \begin{align*}
      d(X, Y) \le c_\Delta &\left(d(X, X_i) + d(X_i, Y)\right) \\
      \Rightarrow \quad d(X_i, Y) &\ge \frac{1}{c_\Delta} \cdot d(X, Y) - d(X, X_i) \\
      &\ge \frac{1}{c_\Delta} \cdot \left(5c_\Delta^3 \ell^{\log(c_\Delta)} (1+\epsilon)^{t+1}\right) - 2c_\Delta \ell^{\log(c_\Delta)} (1+\epsilon)^{t+1} \\
      &\ge 3c_\Delta^2 \cdot \ell^{\log(c_\Delta)} (1+\epsilon)^{t+1}.
      \end{align*}
    
      Using this:
      \begin{align*}
      d(X_i, Y) \le c_\Delta (d(X_i, Y_j) &+ d(Y_j, Y)) \\
      \Rightarrow \quad d(X_i, Y_j) &\ge \frac{1}{c_\Delta} \cdot d(X_i,Y) - d(Y, Y_j) \\
      &\ge \frac{1}{c_\Delta} \cdot \left(3c_\Delta^2 \ell^{\log(c_\Delta)} (1+\epsilon)^{t+1}\right) - 2c_\Delta \ell^{\log(c_\Delta)} (1+\epsilon)^{t+1} \\
      &\ge (1+\epsilon)^{t+1}.
      \end{align*}
    
      \item[Case 2: $A = X_i$, $B = Y'_j$] ~\\
      Again, apply the approximate triangle inequality:
      \begin{align*}
      d(X, Y'_j) \le c_\Delta (d(X, X_i) &+ d(X_i, Y'_j)) \\
      \Rightarrow \quad d(X_i, Y'_j) &\ge \frac{1}{c_\Delta} \cdot d(X, Y'_j) - d(X, X_i) \\
      &\ge \frac{1}{c_\Delta} \cdot \left(5c_\Delta^3 \ell^{\log(c_\Delta)} (1+\epsilon)^{t+1}\right) - 2c_\Delta \ell^{\log(c_\Delta)} (1+\epsilon)^{t+1} \\
      &\ge (1+\epsilon)^{t+1}.
      \end{align*}
    
      \item[Case 3: $A = X'_i$, $B = Y_j$] ~\\
      Symmetric to Case 2. We have $d(X'_i, Y_j) \ge (1+\epsilon)^{t+1}$ by the same argument.
    
      \item[Case 4: $A = X'_i$, $B = Y'_j$] ~\\
      Assume WLOG that $|X'_i| \le |Y'_j|$. Then:
      \begin{align*}
      d(X'_i,Y) \le c_\Delta (d(X'_i, Y'_j) &+ d(Y, Y'_j)) \\
      \Rightarrow \quad d(X'_i, Y'_j) &\ge \frac{1}{c_\Delta} \cdot d(X'_i,Y) - d(Y, Y'_j) \\
      &\ge \frac{1}{c_\Delta} \cdot \left(5c_\Delta^3 \ell^{\log(c_\Delta)} (1+\epsilon)^{t+1}\right) - 2c_\Delta \ell^{\log(c_\Delta)} (1+\epsilon)^{t+1} \\
      &\ge (1+\epsilon)^{t+1}.
      \end{align*}
    \end{description}
\end{proof}

It follows that every cluster in $\cS_{t,r}$ proceeds to merge exactly along its bounce-back path.
\begin{lemma} \label{lem:well_defined}
    In round $r$ of phase $t$, the algorithm merges every cluster $A \in \cS_{t,r}$ precisely along its bounce-back path $\pi_A$.
\end{lemma}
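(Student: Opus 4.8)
The plan is to prove the statement by induction on the merges performed during round $r$ (ordered arbitrarily among the parallel branches), treating all clusters of $\cS_{t,r}$ together, and in fact to establish the stronger invariant: at every point during round $r$, (a) every merge so far extends some $Y\in\cS_{t,r}$ along a strict prefix of its bounce-back path $\pi_Y=(Y'_1,\dots,Y'_y)$ taken in order, so the current active set is exactly $\{Y_{k_Y}:Y\in\cS_{t,r}\}$ (with $Y_0=Y$, $Y_i:=Y\cup\bigcup_{j\le i}Y'_j$) together with the clusters of the round-start set untouched by any $\pi_Y$; and (b) for each $Y$ not yet finished, every currently active cluster at linkage less than $(1+\eps)^{t+1}$ from $Y_{k_Y}$ is either $Y'_{k_Y+1}$, a still-later cluster on $Y$'s locally-optimal path, or a $Z_{k_Z}$ for some other $Z\in\cS_{t,r}$. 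Given (b), the body of \Cref{alg:PARALLEL_HAC} applied to $Y$ computes $\neighbor(Y_{k_Y})=Y'_{k_Y+1}$ --- it is the closest of $Y$'s remaining locally-optimal path by \Cref{dfn:LOPath}, it has not been consumed (as noted below), and by (b) nothing strictly closer exists --- then merges, and the while-loop continues exactly while the merge value is below $(1+\eps)^{t+1}$ on the first merge and below $(1+\eps)^t$ afterward, which by \Cref{dfn:BBPath} is precisely the condition defining $\pi_Y$. Hence $Y$ merges along $\pi_Y$ and stops at its end, which is the claim.

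The substance is showing the invariant survives one more merge, and the two preceding lemmas carry most of it. First, \Cref{lem:BBPathlBall} together with weight-stability (\Cref{dfn:weightStab}, used with the trivial bound $\tfrac{|B|}{|A|+|B|}\le 1$, so $d(A\cup B,A)\le d(A,B)$) gives that every $Y'_i$ and every intermediate $Y_i$ on a bounce-back path lies within linkage $2c_\Delta\ell^{\log c_\Delta}(1+\eps)^{t+1}$ of $Y$. Second, \Cref{lem:seperation} gives, for distinct $X,Y\in\cS_{t,r}$, that all clusters among $\{X_i,X'_i\}$ are at linkage at least $(1+\eps)^{t+1}$ from all clusters among $\{Y_j,Y'_j\}$; in particular they are pairwise distinct, so no cluster on one bounce-back path is ever consumed while another is being processed, and the current version $Z_{k_Z}$ of any other active cluster is both too far from $Y_{k_Y}$ to be its nearest neighbor and too far to be pulled in by a bounce-back merge (whose value is below $(1+\eps)^{t+1}$). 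This clears every candidate interfering cluster that lies in $\cC_{t,r}$ or is a partial merge of a member of $\cC_{t,r}$, settling part (b) for those clusters.

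The remaining case --- which I expect to be the main obstacle --- is interference from clusters that are neither in $\cC_{t,r}$ nor partial merges of its members, i.e.\ clusters that were already ``inactive'' at the start of round $r$ or were created earlier in phase $t$. Such a cluster $C$ never merges during round $r$ (it is not in $\cS_{t,r}$), so it stays fixed; what must be shown is that $C$ never comes within linkage $(1+\eps)^{t+1}$ of any $Y_{k_Y}$, so it is neither the computed nearest neighbor nor a merge target. For this I would combine the round-start invariant --- every cluster of $\cC\setminus\cC_{t,r}$ has all its linkages at least $(1+\eps)^{t+1}$ at the start of the round, except possibly to clusters that just merged, as in the discussion preceding \Cref{dfn:LOPath} --- with weight-stability and the size-restricted approximate triangle inequality (\Cref{dfn:apxMetric}) to bound how far $d(C,Y_{k_Y})$ can drift below $d(C,Y)$ as $Y$ absorbs its (individually short-radius) bounce-back partners. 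Making this drift estimate rigorous using only the approximate, size-conditioned triangle inequality is the delicate step; once it is in hand the induction closes and the lemma follows.
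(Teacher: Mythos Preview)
Your core approach—induction on merges, invariants (a) and (b), and clearing interference between distinct $X,Y\in\cS_{t,r}$ via \Cref{lem:seperation}—is exactly what the paper does. The paper's proof is short precisely because that is all there is: once separation is established, each $X$ sees only its own locally-optimal path and follows it until the bounce-back condition terminates.

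Your ``remaining case'' (interference from clusters in $\cC\setminus\cC_{t,r}$) is an artifact of reading \Cref{dfn:LOPath} literally as a permutation of $\cC_{t,r}-A$. The paper's intended reading, and the one consistent with the algorithm, is that the locally-optimal path ranges over the full active set $\cC$ at the start of round $r$: the primitive $\neighbor(\cdot)$ queries all of $\cC$, not $\cC_{t,r}$. Under this reading, any cluster $C\in\cC$ close enough to be the nearest neighbor of some $X_i$ is already on the locally-optimal path in the correct position, and hence is one of the $X'_i$ if within the bounce-back prefix. The only way a ``better option'' could appear is if it is \emph{created} during round $r$, and the only clusters created during round $r$ are the intermediate $Y_j$ for other $Y\in\cS_{t,r}$, which \Cref{lem:seperation} dispatches. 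There is no residual case.

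Moreover, your proposed patch for the residual case is shaky. The round-start invariant you state—that every $C\in\cC\setminus\cC_{t,r}$ has all linkages at least $(1+\eps)^{t+1}$ except possibly to ``clusters that just merged''—fails when $C$ is itself a merged cluster created in an earlier round $r'<r$ of the same phase. Such a $C$ is only guaranteed $d(C,\cdot)\ge(1+\eps)^t$ at its creation (that is what terminates the bounce-back loop), so it may well sit at linkage in $[(1+\eps)^t,(1+\eps)^{t+1})$ from some untouched $A\in\cC_{t,r}$. The drift argument you sketch would then have to do real work, and with only a size-conditioned, multiplicative approximate triangle inequality it is not clear it goes through. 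The cleaner fix is simply to take the locally-optimal path over all of $\cC$; then your invariants (a)–(b) together with \Cref{lem:seperation} close the induction with nothing further needed.
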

\begin{proof}
    Fix a cluster $X \in \cS_{t,r}$ with bounce-back path $\pi_X = (X'_1, X'_2, \ldots, X'_x)$. We show that $X$ merges with each $X'_i$ in order, and with no other clusters, during round $r$ of phase $t$.
    
    By definition, the bounce-back path is a prefix of the locally-optimal path and represents the best available merges for $X$---unless a better option is created by merges involving other clusters.
    
    Let $Y\in \cS_{t,r}\setminus\{X\}$ with bounce-back path $\pi_Y=(Y'_1,Y'_2,\ldots,Y'_y)$. By \Cref{lem:seperation}, we have $d(X_i,Y'_i)\ge (1+\eps)^{t+1}$ for all $i,j$, so the cluster containing $X$ never prefers any cluster on the bounce-back path of $Y$.
    Moreover, \Cref{lem:seperation} also guarantees that $d(X_i,Y_j) \ge (1+\eps)^{t+1}$ for all intermediate clusters $Y_j$ formed along $Y$'s merge sequence. Thus, $X$ never prefers any cluster created by another merge sequence during the same round. 
    It follows that the merge sequence for $X$ proceeds exactly along $\pi_X$.
\end{proof}

Now that we have shown \Cref{alg:PARALLEL_HAC} is well-defined, we proceed to prove that it produces a $(1+\epsilon)$-approximate HAC.

\begin{lemma} \label{lem:algo_correctness}
    \Cref{alg:PARALLEL_HAC} gives a $(1+\epsilon)$-approximate HAC. 
    % \will{Maybe "\Cref{alg:PARALLEL_HAC} gives a $(1+\epsilon)$-approximate HAC."}
\end{lemma}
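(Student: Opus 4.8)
The plan is to check the two defining properties of a $(1+\epsilon)$-approximate HAC for the sequentialized merge order produced by the algorithm: (i) it is a valid HAC, i.e.\ a sequence of $n-1$ binary merges turning the $n$ singletons into one cluster, and (ii) every merge has value at most $(1+\epsilon)$ times the minimum pairwise linkage over the clusters present at that point in the sequence. Property (i) is bookkeeping: each $\merge$ call decreases $|\cC|$ by one; \Cref{lem:well_defined} shows each inner bounce-back loop terminates (it traces $\pi_S$, of length at most $\ell$); each round performs at least one merge, so every phase has finitely many rounds; and there are $O(\log n)$ phases (\Cref{lem:algo_num_phases}); hence the outer loop halts with $|\cC|=1$ after exactly $n-1$ merges. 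The work is in property (ii).

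For (ii) I would first establish, by induction, two threshold invariants which together say that any linkage value below $(1+\epsilon)^t$ can only ever be incident to a cluster that is currently in the middle of executing its bounce-back path. The \emph{phase invariant} is: at the start of phase $t$ all pairwise linkages are $\ge (1+\epsilon)^t$. It holds for $t=0$ by the scaling normalization, and is preserved because phase $t$ exits only when $\cC_{t,r}=\emptyset$, i.e.\ when every cluster has all neighbors at linkage $\ge (1+\epsilon)^{t+1}$. The \emph{round invariant} is: at the start of round $r$ of phase $t$ all pairwise linkages are $\ge (1+\epsilon)^t$. The case $r=0$ is the phase invariant. For the inductive step, \Cref{lem:well_defined} says each $S\in\cS_{t,r}$ merges exactly along $\pi_S$, and the algorithm's inner loop stops precisely when the resulting cluster's nearest neighbor is at linkage $\ge (1+\epsilon)^t$; \Cref{lem:seperation} says the merge sequences of distinct clusters of $\cS_{t,r}$ do not interfere and that every pair of clusters touched by two different sequences remains at linkage $\ge (1+\epsilon)^{t+1}$; and clusters untouched in round $r$ keep their start-of-round linkages. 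Putting these together gives that at the start of round $r+1$ every pair is again at linkage $\ge (1+\epsilon)^t$.

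Next I would traverse the sequentialized merge order---phases in increasing order, then rounds in increasing order, then the clusters $S\in\cS_{t,r}$ in the fixed arbitrary order, appending the merges of $\pi_S$---and bound one merge at a time. Consider the $i$-th merge performed by some $S\in\cS_{t,r}$, which merges the current cluster $S_{i-1}$ with $B_i=\neighbor(S_{i-1})$; its value is $d(S_{i-1},B_i)<(1+\epsilon)^{t+1}$, since the first merge of $S$ has value below $(1+\epsilon)^{t+1}$ and each later one below $(1+\epsilon)^t$ by the definition of the bounce-back path. The key claim is that at that instant every pair of clusters \emph{not} incident to $S_{i-1}$ has linkage $\ge (1+\epsilon)^t$: such a pair is either untouched since the start of the round, hence $\ge (1+\epsilon)^t$ by the round invariant, or incident to a cluster that already completed its bounce-back path earlier in this round, hence $\ge (1+\epsilon)^t$ by that path's termination condition together with \Cref{lem:seperation} (which rules out a later merge in the same round pulling it back below threshold). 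Since $B_i$ is the nearest neighbor of $S_{i-1}$, the minimum pairwise linkage $m$ at that instant equals $\min\{\mu,\, d(S_{i-1},B_i)\}$ for some $\mu\ge(1+\epsilon)^t$; a short case analysis on whether $d(S_{i-1},B_i)$ is below or above $(1+\epsilon)^t$ then gives $d(S_{i-1},B_i)\le(1+\epsilon)\,m$, so the merge is $(1+\epsilon)$-approximate.

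The step I expect to be the crux is exactly this key claim---that throughout the sequentialized order every pair not incident to the currently-active cluster stays at linkage at least $(1+\epsilon)^t$. This is where \Cref{lem:well_defined}, \Cref{lem:seperation}, and the bounce-back path definition must be combined: the termination rule of a bounce-back path controls distances incident to a cluster that has \emph{finished} bouncing back, the round invariant controls untouched pairs, and \Cref{lem:seperation} is the piece that guarantees a cluster which has already bounced back in round $r$ is not subsequently dragged below threshold by the merges of a later-processed cluster of $\cS_{t,r}$. With this claim in hand the approximation bound is just arithmetic with the phase thresholds $(1+\epsilon)^t$ and $(1+\epsilon)^{t+1}$.
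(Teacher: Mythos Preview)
Your proposal is correct and follows essentially the same approach as the paper: establish the round invariant that all pairwise linkages are at least $(1+\epsilon)^t$ at the start of each round, sequentialize the parallel merges via \Cref{lem:well_defined} and \Cref{lem:seperation}, and then argue that each merge along a bounce-back path is $(1+\epsilon)$-approximate because (a) the first merge has value below $(1+\epsilon)^{t+1}$ while the minimum is at least $(1+\epsilon)^t$, and (b) every subsequent merge is actually optimal since the only sub-threshold distances are incident to the currently active cluster. Your write-up is somewhat more explicit than the paper's---you spell out the phase and round invariants separately, add the termination argument for property (i), and identify the ``key claim'' (that pairs not incident to the active cluster stay above threshold) as the crux, whereas the paper states this claim more tersely---but the logical structure is the same.
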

\begin{proof}
    We prove the claim by induction on the rounds within each phase.

    Assume that up to the beginning of round $r$ of phase $t$, all merges performed have been $(1+\epsilon)$-approximate. By the round-invariant, we know that at the start of round $r$, all clusters in $\cC$ have pairwise linkage values at least $(1 + \epsilon)^t$. 

    To analyze the merges in round $r$, we consider an arbitrary sequential order on the clusters in $\cS_{t,r}$. This is valid because, by \Cref{lem:seperation}, the bounce-back paths of clusters in $\cS_{t,r}$ are disjoint and non-interfering. In particular, the merges performed by any cluster do not affect the merge sequence of any other cluster in $\cS_{t,r}$. Therefore, we may analyze the merges one cluster at a time, as if they were performed sequentially. 
    
    Let $X, Y \in \cS_{t,r}$ be two such clusters, and suppose $X$ appears before $Y$ in this sequential order. Let $X_x$ denote the final cluster formed after merging all clusters on the bounce-back path of $X$ with $X$. By construction, 
    $$d(X_x, \neighbor(X_x)) \ge (1+\epsilon)^t.$$
    Moreover, by \Cref{lem:seperation},
    $$d(X_x, Y) \ge (1+\epsilon)^{t+1}.$$
    Thus, all linkage values involving $Y$ are still at least $(1+\epsilon)^t$ after $X$ has completed its merge sequence. The same argument applies to any cluster that appears later in the ordering.
    
    We now show that the sequence of merges performed by the algorithm for $X$ is $(1+\epsilon)$-approximate. Let $\pi_X=(X'_1,X'_2,\ldots,X'_x)$ denote the bounce-back path of $X$. Define $X_0 = X$, and for each $i \in [1, x]$, let $X_i := X\cup \bigcup_{j \le i} X'_{j}$.
    \begin{itemize}
        \item Before the first merge, all linkage values are at least $(1+\eps)^t$, so the optimal pair also has linkage at least $(1+\eps)^t$ as well.
        Since $d(X_0, X'_1) < (1+\epsilon)^{t+1}$, the first merge is $(1+\epsilon)$-approximate.
    
        \item For $i \in [2, x]$, we have $d(X_{i-1}, X'_i) < (1+\epsilon)^t$ by definition of the bounce-back path. Since the only linkage values that may fall below $(1+\eps)^t$ involve the cluster containing $X$, and the algorithm follows the greedy locally-optimal path, it performs the optimal merge at each step.
    \end{itemize}
    
    Therefore, the merge sequence for $X$ is $(1+\epsilon)$-approximate, and appending it to the prior merge sequence preserves the inductive invariant. By induction, the entire merge sequence produced by the algorithm is $(1+\epsilon)$-approximate in each phase.

    A similar inductive argument applies across phases, completing the proof that all merges performed by \Cref{alg:PARALLEL_HAC} are $(1+\epsilon)$-approximate.
\end{proof}

\subsection{Work and Depth}
In this section, we analyze the work and depth of \Cref{alg:PARALLEL_HAC}. We begin by bounding the number of phases executed by the algorithm.

\begin{lemma}\label{lem:algo_num_phases}
    The number of phases in \Cref{alg:PARALLEL_HAC} is at most $O(\log n)$.
\end{lemma}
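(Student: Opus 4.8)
The plan is to show that after $O(\log n)$ phases the lower threshold $(1+\epsilon)^t$ grows past \emph{every} linkage value that can possibly arise during the run of \Cref{alg:PARALLEL_HAC}, and that once this happens a single additional phase collapses all remaining clusters into one; hence the phase counter never exceeds $O(\log n)$.

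First I would pin down the relevant magnitudes. By the initial scaling we have $\min_{u,v \in \mcP} d(u,v) \ge 1$, and since the aspect ratio $\rho = \max_{u,v\in \mcP} d(u,v)\,/\,\min_{u,v\in \mcP} d(u,v)$ is $\poly(n)$, it follows that $\Delta := \max_{u,v \in \mcP} d(u,v) = \poly(n)$. Because $d$ is well-behaved it has poly-bounded diameter (\Cref{dfn:polyDiam}), so \emph{every} linkage value that can occur during the algorithm, i.e. $d(A,B)$ for disjoint $A,B \subseteq \mcP$, is at most some $D = \poly(\Delta \cdot n) = \poly(n)$. Set $t^\star := \lceil \log_{1+\epsilon} D \rceil + 1$, so that $(1+\epsilon)^{t^\star} \ge (1+\epsilon)\,D > D$; since $D = \poly(n)$ and $\epsilon$ is fixed, $t^\star = O(\epsilon^{-1}\log n) = O(\log n)$.

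Next I would argue the algorithm cannot run past phase $t^\star$. Suppose it reaches phase $t^\star$ with $|\cC| \ge 2$ (otherwise it terminated earlier, with fewer phases). Every pair of active clusters is then at distance at most $D < (1+\epsilon)^{t^\star} < (1+\epsilon)^{t^\star+1}$, so $\cC_{t^\star,0} = \cC \neq \emptyset$, the round loop executes, and the maximal non-interfering set $\cS_{t^\star,0}$ is non-empty (a singleton vacuously satisfies the condition); pick $S \in \cS_{t^\star,0}$. The algorithm first merges $S$ with $\neighbor(S)$, and thereafter, as long as any other active cluster remains, that cluster lies at distance $\le D < (1+\epsilon)^{t^\star}$ from $S$, so the inner \textbf{while}-guard $d(S,\neighbor(S)) < (1+\epsilon)^{t^\star}$ holds and $S$ absorbs it. Thus $S$ keeps merging until it is the unique remaining cluster, so $|\cC| = 1$ and the algorithm halts inside phase $t^\star$. (Equivalently: in phase $t^\star$ the bounce-back path of any cluster equals its entire locally-optimal path, i.e.\ all of $\cC_{t^\star,0}$ minus that cluster, so by \Cref{lem:well_defined} the single cluster of $\cS_{t^\star,0}$ merges everything in one round.)

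Finally I would observe that each phase is itself finite: the nested sets $\cC_{t,0} \supseteq \cC_{t,1} \supseteq \cdots$ strictly shrink while non-empty, since every round removes from them at least the non-empty set $\cS_{t,r}$ of clusters that merge. Hence the number of phases is well defined, and by the previous paragraph it is at most $t^\star + 1 = O(\log n)$, as claimed. The only genuine subtlety is the step verifying that, once the threshold exceeds $D$, the bounce-back loop is \emph{forced} to sequence through all remaining clusters in a single round rather than leaving some newly formed cluster stranded for a later phase; everything else is routine bookkeeping about the scaling and the definitions of $\cC_{t,r}$ and $\cS_{t,r}$.
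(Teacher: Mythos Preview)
Your proof is correct, and it uses the same two ingredients as the paper's argument: the scaling that makes the minimum initial linkage $\ge 1$, and the poly-bounded diameter property of well-behaved $d$ together with the $\poly(n)$ aspect ratio to get a universal upper bound $D=\poly(n)$ on every linkage value that can appear.

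Where you diverge is in how you close the argument. The paper simply invokes the phase invariant stated in the algorithm description (at the start of phase $t$ every pairwise linkage is at least $(1+\epsilon)^t$): if the algorithm reaches phase $T$ with $|\cC|>1$, there is a pair at distance $\ge (1+\epsilon)^T$, but all distances are $\le D$, so $T\le \log_{1+\epsilon}D=O(\log n)$. You instead argue operationally that once the threshold exceeds $D$, the bounce-back loop is forced to sweep up every remaining cluster in a single round, and you also separately verify each phase terminates. This is correct (and your parenthetical observation that the bounce-back path of any cluster then equals its entire locally-optimal path, forcing $|\cS_{t^\star,0}|=1$, is what makes the parallel-merge step unambiguous), but it is considerably more work than needed. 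The paper's route is shorter because the phase invariant already encodes the fact that the threshold cannot outrun the actual linkage values while clusters remain; once you grant that invariant, no analysis of the inner loops is required.
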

\begin{proof}
The proof follows by a similar line of argument as \Cref{lem:boundLargeMerges}. By scaling, we assumed $\min_{u,v\in \mcP}d(u,v)=1$, and our $\poly(n)$ aspect ratio assumption implies
\begin{align*}
    \max_{u,v\in \mcP} d(u,v) = \poly(n).
\end{align*}
Let $T$ denote the total number of phases in \Cref{alg:PARALLEL_HAC}. Then, there exists some pair of clusters $A,B \subseteq \mcP$ such that
\begin{align*}
    d(A,B) \ge (1+\eps)^T.
\end{align*}
Whereas, by the poly-bounded diameter property of well-behaved linkage $d$ (\Cref{dfn:polyDiam}), we know that
\begin{align*}
    d(A,B) \le \poly\left(\max_{u,v \in \mcP}d(u,v)\cdot n\right).
\end{align*}
Therefore, combining the above gives us
\begin{align*}
    T \le \log_{(1+\eps)} \left(\poly\left(\max_{u,v \in \mcP}d(u,v)\cdot n\right)\right) = O(\log n).
\end{align*}    
\end{proof}

We now proceed to bound the number of rounds in a single phase.
\begin{lemma}\label{lem:BBShellSize}
    Given $A\in C_{t,r}$, the number of clusters in the bounce-back shell of $A$ is
    $$|\pi_A^+| \le O\left(\alpha \cdot c_\Delta^{O(k)}\cdot\ell^{O(k\log(c_\Delta))}\right).$$
\end{lemma}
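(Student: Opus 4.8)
The plan is to cover the bounce-back shell $\pi_A^+$ by a small number of linkage balls and apply the packability of $d$ to each. First I would unpack \Cref{dfn:BBShell}: write $R_t := 5c_\Delta^3\cdot\ell^{\log(c_\Delta)}\cdot(1+\epsilon)^{t+1}$ for the shell radius and $\pi_A=(B_1,\dots,B_l)$ for the bounce-back path (where $l\le\ell$ by \Cref{dfn:BBPath}). A cluster $C\in\cC_{t,r}$ lies in $\pi_A^+$ precisely when $d(A,C)\le R_t$ or $d(B_i,C)\le R_t$ for some $i\in[l]$, and since $\pi_A$ is a permutation of a subset of $\cC_{t,r}-A$, each $B_i$ is itself a cluster of $\cC_{t,r}$. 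Hence
\[
    \pi_A^+ \;\subseteq\; B_d^{\cC_{t,r}}(A,R_t)\;\cup\;\bigcup_{i=1}^{l} B_d^{\cC_{t,r}}(B_i,R_t),
\]
a union of at most $l+1\le\ell+1$ linkage balls of radius $R_t$ centered at clusters of $\cC_{t,r}$.

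The second step is to bound the size of a single such ball. By the round-invariant maintained by \Cref{alg:PARALLEL_HAC} (the same invariant invoked in the proof of \Cref{lem:algo_correctness}), at the start of round $r$ of phase $t$ every pair of distinct clusters of $\cC_{t,r}$ is at linkage at least $(1+\epsilon)^t$. Since $d$ is well-behaved it is $\alpha$-packable (\Cref{dfn:packableLinkage}) with $\alpha=\tilde{O}(1)$, and since $R_t\ge(1+\epsilon)^t$, each ball in the cover has size at most
\[
    \alpha\cdot\left(\frac{R_t}{(1+\epsilon)^t}\right)^{O(k)} \;=\; \alpha\cdot\left(5c_\Delta^3\cdot\ell^{\log(c_\Delta)}\cdot(1+\epsilon)\right)^{O(k)} \;=\; O\!\left(\alpha\cdot c_\Delta^{O(k)}\cdot\ell^{O(k\log(c_\Delta))}\right),
\]
where I treat $\epsilon$ as a fixed constant and may assume $c_\Delta\ge2$, so that the numerical constant $5$ and the $(1+\epsilon)^{O(k)}$ term are absorbed into $c_\Delta^{O(k)}$. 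A union bound over the $\le\ell+1$ balls then gives $|\pi_A^+|\le(\ell+1)\cdot O(\alpha\,c_\Delta^{O(k)}\,\ell^{O(k\log(c_\Delta))})$, and the leftover factor $\ell+1$ is absorbed into $\ell^{O(k\log(c_\Delta))}$ (valid since $k\log(c_\Delta)\ge1$), yielding the claimed bound.

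I do not expect a genuine obstacle here---the argument is a single union-bound-plus-packing step. The only points needing care are (i) checking $R_t\ge(1+\epsilon)^t$ so that \Cref{dfn:packableLinkage} applies with $r=(1+\epsilon)^t$, and (ii) being sure the round-invariant really holds at the start of round $r$, i.e.\ that the bounce-back merging within round $r-1$ has restored all pairwise linkages among the surviving clusters to at least $(1+\epsilon)^t$; both are already established by the preceding discussion of the algorithm.
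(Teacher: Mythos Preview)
Your proposal is correct and matches the paper's own proof essentially line for line: cover $\pi_A^+$ by the $\le \ell+1$ linkage balls of radius $5c_\Delta^3\ell^{\log(c_\Delta)}(1+\epsilon)^{t+1}$ centered at $A$ and the $B_i$, invoke $\alpha$-packability with separation $(1+\epsilon)^t$ (from the round-invariant) on each, and union bound. The paper is slightly terser but does nothing you do not.
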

\begin{proof}
    Let $$\beta = 5c_\Delta^3\cdot\ell^{\log(c_\Delta)}.$$ 
    By definition, a bounce-back shell contains all clusters within the $\beta(1+\eps)^{t+1}$-radius ball centered at each cluster of the corresponding bounce-back path (see \Cref{dfn:BBShell}).  
    By the $\alpha$-packability of well-behaved linkage functions, each such ball contains at most 
    $$O\left(\alpha\cdot\left(\frac{\beta(1+\eps)^{t+1}}{(1+\eps)^t}\right)^{O(k)}\right) = O\left(\alpha\cdot(\beta\cdot(1+\eps))^{O(k)}\right).$$
    Since the number of clusters on a bounce-back path is at most $\ell$, the bound follows.
\end{proof}

\begin{lemma} \label{lem:num_rounds}
    The number of rounds in each phase of \Cref{alg:PARALLEL_HAC} is at most $$O\left(\alpha\cdot h\cdot c_{\Delta}^{O(k)}\ell^{O(k\log(c_\Delta))}\log n\right).$$
\end{lemma}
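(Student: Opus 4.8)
The plan is to bound the number of rounds of a phase $t$ by showing that the active set $\cC_{t,r}$ shrinks by a constant factor in every round, so that after $O(\log n)$ such reductions --- each carrying a $\poly(\alpha,c_\Delta,\ell^k)$ overhead coming from a maximal-independent-set argument --- it is empty and the phase terminates. Concretely, form the \emph{conflict graph} $G_{t,r}$ on vertex set $\cC_{t,r}$, placing an edge between $A$ and $B$ whenever $A\in\pi_B^+$ or $B\in\pi_A^+$; the set $\cS_{t,r}$ selected in \Cref{alg:PARALLEL_HAC} is precisely a \emph{maximal} independent set of $G_{t,r}$, so if $\Delta_G$ denotes its maximum degree then $|\cS_{t,r}|\ge |\cC_{t,r}|/(\Delta_G+1)$. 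By \Cref{lem:well_defined} and \Cref{lem:seperation}, the merges performed by the clusters of $\cS_{t,r}$ in the round are well defined (each $S$ runs exactly along its nonempty bounce-back path $\pi_S$, unaffected by the others), so every $S\in\cS_{t,r}$ is merged away and $\cC_{t,r+1}\subseteq\cC_{t,r}\setminus\cS_{t,r}$. Hence $|\cC_{t,r+1}|\le(1-\tfrac{1}{\Delta_G+1})\,|\cC_{t,r}|$, and since $|\cC_{t,0}|\le n$ the phase runs for $O(\Delta_G\log n)$ rounds.

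The heart of the proof is thus the degree bound $\Delta_G=O(\alpha\cdot c_\Delta^{O(k)}\cdot\ell^{O(k\log c_\Delta)})$. The ``forward'' direction is immediate from \Cref{lem:BBShellSize}: the number of $B$ with $B\in\pi_A^+$ is at most $|\pi_A^+|$, which that lemma bounds. For the ``backward'' direction --- the number of $B\in\cC_{t,r}$ with $A\in\pi_B^+$ --- I would unfold \Cref{dfn:BBShell}: such a $B$ has a cluster $D\in\{B\}\cup\pi_B$ with $d(D,A)\le 5c_\Delta^3\ell^{\log c_\Delta}(1+\eps)^{t+1}$. Since the clusters of $\cC_{t,r}$ are pairwise at linkage at least $(1+\eps)^t$, $\alpha$-packability (\Cref{dfn:packableLinkage}) leaves only $O(\alpha\,(c_\Delta^3\ell^{\log c_\Delta})^{O(k)})$ possibilities for this witness $D$, and for each fixed $D$, either $D=B$ or $D\in\pi_B$, in which case \Cref{lem:BBPathlBall} forces $d(B,D)\le 2c_\Delta\ell^{\log c_\Delta}(1+\eps)^{t+1}$, so the admissible $B$'s all lie in one packable ball around $D$; combining these counts and tracking the exponents exactly as in \Cref{lem:BBShellSize} gives $\Delta_G$. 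Plugging this into the shrinkage bound yields $O(\alpha\cdot c_\Delta^{O(k)}\cdot\ell^{O(k\log c_\Delta)}\cdot\log n)$ rounds per phase, which is within the stated bound.

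I expect the backward half of the degree bound to be the main obstacle, precisely because of the size restriction in the approximate triangle inequality (\Cref{dfn:apxMetric}): one cannot simply ``hop'' from $B$ to $A$ through a possibly tiny bounce-back cluster $D$, so the argument must route through packing $\cC_{t,r}$ directly and invoke \Cref{lem:BBPathlBall} only to certify that $B$ is close to its own bounce-back clusters. A secondary subtlety --- and the place where the dendrogram height $h$ genuinely enters --- is keeping the constant-factor shrinkage honest: each $S\in\cS_{t,r}$ must be shown to consume a member of $\cC_{t,r}$ even when its nearest neighbor in $\cC$ is a cluster created earlier in the same phase, which relies on $\pi_S$ being a nonempty prefix of the locally-optimal path over $\cC_{t,r}$ (\Cref{dfn:BBPath}) together with \Cref{lem:well_defined}; and when newly-created clusters must be re-merged within the phase one charges each such re-merge to a step up the dendrogram, which is the source of the $O(h)$ factor in the claim.
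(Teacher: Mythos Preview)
Your shrinkage step contains a genuine gap: the inclusion $\cC_{t,r+1}\subseteq\cC_{t,r}\setminus\cS_{t,r}$ is false, and this is exactly where the $h$ factor enters. After a selected $S\in\cS_{t,r}$ completes its bounce-back path, the resulting merged cluster has all linkages at least $(1+\eps)^t$ by construction of the bounce-back path --- but not necessarily at least $(1+\eps)^{t+1}$ --- so it may remain in $\cC_{t,r+1}$ and be selected again in later rounds (indeed up to $h$ times, since each selection forces at least one merge). Thus $|\cC_{t,r}|$ need not shrink by a $(1-\tfrac{1}{\Delta_G+1})$ factor per round, and your main argument yields a bound missing the required $h$. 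Your last paragraph gestures at a repair (``each $S$ must consume a member of $\cC_{t,r}$'') but then misattributes the $h$ factor to ``re-merging newly-created clusters'' rather than to repeated re-selection of the \emph{same} cluster.

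The paper resolves this with a potential rather than a direct cardinality argument: each $C\in\cC_t$ carries a budget $\Phi_r(C)=h-|\{r'\le r:C\in\cS_{t,r'}\}|$ while $C$ is still active, and $0$ otherwise. The total satisfies $\Phi_r\le h\,|\cC_{t,r}|$, and each round decreases it by at least $|\cS_{t,r}|\ge|\cC_{t,r}|/\eta\ge\Phi_r/(h\eta)$, yielding $R=O(h\eta\log n)$ with $\eta$ the shell-size bound of \Cref{lem:BBShellSize}. Your conflict-graph/MIS framework and your explicit treatment of the backward degree bound (how many $B$ have $A\in\pi_B^+$, handled via packing plus \Cref{lem:BBPathlBall}) are sound --- and in fact the backward direction is a detail the paper's own proof glosses over --- but the progress measure must be the potential, not $|\cC_{t,r}|$, unless you make precise and actually prove the hinted alternative that each selected $S$ absorbs a distinct bounce-back-path member of $\cC_{t,r}$.
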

\begin{proof}
    Fix a phase $t$, and let $\cC_t$ denote the set of clusters at the beginning of the phase. Let 
    $$\eta = O\left(\alpha\cdot c_{\Delta}^{O(k)}\cdot\ell^{O(k\log(c_\Delta))}\right)$$
    be an upper bound on the number of clusters that may appear in the bounce-back shell of any cluster (as proved in \Cref{lem:BBShellSize}). 

    We bound the number of rounds in each phase via a potential argument. Let $R$ denote the total number of rounds in phase $t$.
    For any cluster $C \in \cC_t$, note that
    \begin{align*}
        |\{r : C \in \cS_{t,r} \text{ for all } r\}| \le h,
    \end{align*}
    i.e., cluster $C$ can be picked to merge along its bounce-back path at most $h$ times during the phase.
    For each $C \in \cC_t$, define its potential at round $r$ as
    \begin{align*}
        \Phi_r(C) := \begin{cases}
            h - |\{S_{t,r'} : C \in S_{t,r'}, r' \le r\}|, & \text{if } C \in \cC_{t,r},\\
            0, & \text{otherwise}.
        \end{cases}
    \end{align*}
    Define the total potential at round $r$ as
    \begin{align*}
        \Phi_r = \sum_{C\in \cC_t} \Phi_r(C).
    \end{align*}

    In each round $r$, when a cluster $C$ is picked, it rules out at most $\eta$ other clusters from being picked. Therefore, the number of clusters picked in round $r$ satisfies
    \begin{align*}
        |\cS_{t,r}| \ge \frac{1}{\eta}|\cC_{t,r}|.
    \end{align*}
    Also, since $\Phi_r(C) \le h$ for each $C$, we have
    \begin{align*}
        \Phi_r \le h|\cC_{t,r}| \implies |\cC_{t,r}| \ge \Phi_r / h.
    \end{align*}
    Thus,
    \begin{align*}
        \Phi_{r+1}
        &\le \Phi_r - \frac{1}{\eta}|\cC_{t,r}|
        \le \Phi_r \left(1 - \frac{1}{h\eta}\right)
        \le \Phi_0 \left(1 - \frac{1}{h\eta}\right)^r 
        \le \Phi_0 e^{-r/(h\eta)}.
    \end{align*}
    Since $\Phi_0 \le h|\cC_t|$ and $\Phi_R = 1$, it follows that
    \begin{align*}
        R \le h\eta \log\left(h|\cC_t|\right) = O(h\eta \log n).
    \end{align*}
\end{proof}

\noindent\textbf{Remark.} The above analysis also applies if, instead of using bounce-back paths, we considered the more natural (and perhaps simpler) idea of using the maximal prefix of the locally-optimal path consisting of merges with linkage less than $(1+\epsilon)^{t+1}$---i.e., continuing merges until all linkage values to that cluster exceed the phase threshold.
% \will{Do we not need to do it this way to make sure that $\cC_{t,r+1}$ is a subset of $\cC_{t,r}$?}\kishen{No. I think this is clarified earlier.} 
However, in this case, the length of such a path can be as large as $h$, leading to a weaker bound of $\tO(h^{O(k)})$ rounds per phase. In contrast, using bounce-back paths leads to the stronger bound of $\tO(h\ell^{O(k)})$.

% We abstract the nearest-neighbor-type queries used by the algorithm into a general operation we call \RQ. Given a set of clusters $\cC \subseteq 2^{\mathbb{R}^k}$, a cluster $A \subseteq \mathbb{R}^k$, and a radius $R > 0$, the operation \RQ returns the subset of $\cC$ containing all clusters within linkage distance at most $R$ to $A$---i.e., it returns the ball $B^\cC_d(A, R)$.
% For simplicity, we assume \RQ runs in work $\wnn$ plus the output size, and depth $\tO(1)$. Naively, this operation can be implemented via a linear scan over $\cC$, identifying which clusters lie within the ball; this requires $O(n)$ work and $O(\log n)$ depth. Our result (\Cref{thm:conBound}) is parameterized by $\wnn$, with concrete bounds and implementations given later for the linkage functions considered in this paper.

Henceforth, we assume $\alpha=\tO(1)$ and $c_\Delta=O(1)$, and simplify the presentation of bounds accordingly. 
% \will{maybe change assume to show since we prove it in next lemma?}\will{Don't we define $\wnn$ to be the work to find these neighbors? Isn't the l to the k just extra?} \kishen{rephrased this}
Define $\wnn$ to be a parameter such that the $\tO(\ell^{O(k)})$ nearest neighbors of a cluster can be computed in $\tO(\wnn+\ell^{O(k)})$ work and $\tO(\ell^{O(k)})$ depth---i.e., both the work and depth depend on the output size. Naively, this operation can be implemented via a linear scan over $\cC$ followed by sorting, which requires $\tO(n)$ work and $\tO(1)$ depth; here $\wnn = \tO(n)$. All results that follow are parameterized by $\wnn$, with concrete bounds and implementations provided later for the linkage functions considered in this paper.

\begin{lemma}\label{lem:compute_bbshell}
    Given $A\in \cC_{t,r}$, the bounce back path and the bounce-back shell of $A$ can be computed in $\tO(\wnn\cdot\ell^{O(k)})$ work and $\tO(\ell^{O(k)})$ depth.
\end{lemma}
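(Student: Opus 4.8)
The plan is to build the bounce-back path $\pi_A$ one merge at a time by simulating the greedy process behind the locally-optimal path (\Cref{dfn:LOPath}), and then to recover the bounce-back shell $\pi_A^+$ by running one bounded nearest-neighbor query around each cluster appearing on $\pi_A$. Throughout we exploit that, under the standing assumptions of this section, $\alpha=\tO(1)$ and $c_\Delta=O(1)$, so $\ell^{\log c_\Delta}=\ell^{O(1)}$ and a ``ball of radius $\tO(\ell^{O(1)})\cdot(1+\epsilon)^{t+1}$'' contains only $\tO(\ell^{O(k)})$ clusters by $\alpha$-packability (\Cref{dfn:packableLinkage}) together with the round invariant that all active clusters are pairwise at linkage at least $(1+\epsilon)^t$.

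For the path, I would maintain the running merged cluster $A_i$ (with $A_0=A$), updating its representation incrementally after each step. At step $i$ I call the nearest-neighbor primitive to obtain the $\tO(\ell^{O(k)})$ nearest clusters to $A_i$ in $\cC_{t,r}$, discard the clusters $B_1,\dots,B_i$ already appended to the path (still listed in $\cC$ since no $\merge$ has been executed yet), and let $B_{i+1}$ be the nearest survivor. I stop and output $(B_1,\dots,B_i)$ as soon as the threshold test of \Cref{dfn:BBPath} fails---$d(A_0,B_1)\ge(1+\epsilon)^{t+1}$ at the first step, or $d(A_i,B_{i+1})\ge(1+\epsilon)^t$ thereafter---and otherwise set $A_{i+1}=A_i\cup B_{i+1}$ and continue. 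Since $l\le\ell$, this loop runs at most $\ell$ times.

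The crux is showing a $\tO(\ell^{O(k)})$-nearest-neighbor query is enough to identify $B_{i+1}$ correctly while we remain on the bounce-back path. If the true nearest un-merged cluster $C^\star$ lies at linkage $\ge(1+\epsilon)^{t+1}$ from $A_i$ we should (and do) stop; otherwise $d(A_i,C^\star)<(1+\epsilon)^{t+1}$, and I would bound how many clusters can lie this close to $A_i$. Because $A_i\supseteq A$, we always have $|A_i|\ge|A|$, so $A_i$ is a legal middle term for the approximate triangle inequality (\Cref{dfn:apxMetric}); feeding it the bound $d(A,A_i)\le 2c_\Delta\ell^{\log c_\Delta}(1+\epsilon)^{t+1}$ extracted from the proof of \Cref{lem:BBPathlBall} shows every cluster within linkage $(1+\epsilon)^{t+1}$ of $A_i$ lies within linkage $O(\ell^{O(1)}(1+\epsilon)^{t+1})$ of $A$, so $\alpha$-packability caps their number at $\tO(\ell^{O(k)})$. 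Hence, with the query-size constant chosen large enough, the query returns all of them---including $C^\star$ and the at most $\ell$ already-merged clusters we discard---so the greedy step is faithful.

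For the shell, once $\pi_A=(B_1,\dots,B_l)$ is in hand I run the primitive on each of $A,B_1,\dots,B_l$ to get its $\tO(\ell^{O(k)})$ nearest clusters, keep those at linkage at most $5c_\Delta^3\ell^{\log c_\Delta}(1+\epsilon)^{t+1}$, and union the results; by \Cref{dfn:BBShell} this is exactly $\pi_A^+$, and by $\alpha$-packability (cf.\ \Cref{lem:BBShellSize}) each ball, and thus $\pi_A^+$, has only $\tO(\ell^{O(k)})$ clusters, so nothing is missed, and the $l+1$ queries can be issued in parallel. For the accounting, the procedure makes $O(\ell)$ calls to the nearest-neighbor primitive, each $\tO(\wnn+\ell^{O(k)})$ work and $\tO(\ell^{O(k)})$ depth, plus $\tO(\ell^{O(k)})$ extra work to filter and union, giving $\tO(\ell\cdot\wnn+\ell^{O(k)})=\tO(\wnn\cdot\ell^{O(k)})$ work; the path loop is inherently sequential with at most $\ell$ steps of depth $\tO(\ell^{O(k)})$ each, i.e.\ $\tO(\ell^{O(k)})$ total depth, and the parallel shell queries add another $\tO(\ell^{O(k)})$, matching the claim. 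I expect the main obstacle to be precisely the correctness argument above: showing the bounded queries never drop the true next cluster on $\pi_A$, which is where \Cref{lem:BBPathlBall}, the approximate triangle inequality applied with the growing cluster $A_i$ as the middle term, and $\alpha$-packability must all be combined, along with the bookkeeping needed to ignore clusters conceptually merged into $A_i$ but still present in $\cC$.
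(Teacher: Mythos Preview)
Your approach is correct and arrives at the same bounds, but it is organized differently from the paper. The paper issues a \emph{single} nearest-neighbor query centered at the fixed cluster $A$ to obtain the candidate set $X=B_d^{\cC}(A,\,O(\ell^{O(1)})(1+\epsilon)^{t+1})$; since \Cref{lem:BBPathlBall} guarantees $\pi_A\subseteq X$, each $B_i$ is then found by a linear scan over $X\setminus\{B_1,\dots,B_{i-1}\}$, computing $d(A_{i-1},\cdot)$ directly. You instead issue a fresh $\tO(\ell^{O(k)})$-NN query centered at the \emph{growing} cluster $A_i$ at every step, and then argue via the approximate triangle inequality (with $A_i$ as the middle term) plus packability that the true next cluster is always among the returned neighbors. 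Both routes are valid and yield the same work/depth. The paper's variant is a little leaner: it needs only one invocation of the NN primitive for the path, it never needs to query the data structure with a ``virtual'' cluster $A_i\notin\cC$ (your version implicitly assumes the primitive supports this, which is fine for the concrete centroid/Ward's implementations but is an extra assumption in the abstract statement), and its correctness follows immediately from \Cref{lem:BBPathlBall} without re-invoking the triangle inequality at each step. Your shell computation is identical to the paper's.
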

\begin{proof}
    We first describe how to compute the bounce-back path $\pi_A=\{B_1,B_2,\ldots,B_l\}$ of a cluster $A\in \cC_{t,r}$. 
    
    Let $X = B_d^\cC(A,O(\ell^{O(1)})\cdot(1+\eps)^{t+1})$. By \Cref{lem:BBPathlBall}, we have $\pi_A \subseteq X$. Moreover, by the packability property of well-behaved linkage functions (\Cref{dfn:packableLinkage}), the size of $|X|$ is bounded as $|X| \le \tO(\ell^{O(k)})$. 
    
    We compute $\pi_A$ as follows:
    \begin{itemize}
        \item Compute $X$ using an $\tO(\ell^{O(k)})$-nearest neighbors query.
        \item Let $A_0 = A$.
        \item For each $i\in [1,l]$, 
        \begin{itemize}
            \item Let $\cA_{i-1} = \{B_1,B_2,\ldots, B_{i-1}\}$.
            \item Set $B_i$ to be the nearest neighbor of $A_{i-1}$ in $X\setminus \cA_{i-1}$.
            \item Set $A_i$ to $A_{i-1}\cup B_i$.
        \end{itemize}
    \end{itemize}
    Since \Cref{lem:BBPathlBall} ensures that all required nearest neighbors lie within $X$, each $B_i$ can be computed via a scan over $X$, which takes $O(|X|)$ work and $\tO(1)$ depth. Computing $X$ requires $\tO(\wnn)$ work and $\tO(\ell^{O(k)})$ depth. Therefore, the entire bounce-back path can be computed in $\tO(\wnn \cdot \ell^{O(k)})$ work and $\tO(\ell)$ depth.

    To compute the bounce-back shell, we compute the $\tO(\ell^{O(k)})$ nearest neighbors of each $B_i$ which computes the ball of radius $O(\ell^{O(1)})$ centered at $B_i$, in parallel across all $i \in [1, \ell]$. This step also takes $\tO(\wnn \cdot \ell^{O(k)})$ work and $\tO(\ell^{O(k)})$ depth overall.
\end{proof}

We now analyze the overall work and depth of \Cref{alg:PARALLEL_HAC}.
\begin{lemma}\label{lem:algo_work_depth}
    \Cref{alg:PARALLEL_HAC} runs in $\tO(\wnn\cdot n\cdot h\ell^{O(k)})$ work and $\tO(h\ell^{O(k)})$ depth.
\end{lemma}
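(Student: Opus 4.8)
The plan is to multiply the per-round cost of \Cref{alg:PARALLEL_HAC} by the total number of rounds, both of which have essentially been established already. By \Cref{lem:algo_num_phases} there are $O(\log n)$ phases, and by \Cref{lem:num_rounds} (specialized to $\alpha=\tO(1)$ and $c_\Delta=O(1)$) each phase runs for $\tO(h\ell^{O(k)})$ rounds; hence the algorithm executes $\tO(h\ell^{O(k)})$ rounds in total. It therefore suffices to show that a single round can be implemented in $\tO(\wnn\cdot n\,\ell^{O(k)})$ work and $\tO(\ell^{O(k)})$ depth, and then take the product.

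First I would account for constructing $\cS_{t,r}$. For every $A\in\cC_{t,r}$ we compute the bounce-back path and the bounce-back shell $\pi_A^+$; by \Cref{lem:compute_bbshell} this costs $\tO(\wnn\cdot\ell^{O(k)})$ work and $\tO(\ell^{O(k)})$ depth per cluster, and since these computations are independent they run in parallel over the at most $n$ clusters of $\cC_{t,r}$, for a total of $\tO(\wnn\cdot n\,\ell^{O(k)})$ work and $\tO(\ell^{O(k)})$ depth. Next, form the conflict graph on $\cC_{t,r}$ in which $A$ and $B$ are adjacent iff $A\in\pi_B^+$ or $B\in\pi_A^+$: by \Cref{lem:BBShellSize} each $|\pi_A^+|=\tO(\ell^{O(k)})$, so the graph has $\tO(n\,\ell^{O(k)})$ edges and can be built in $\tO(n\,\ell^{O(k)})$ work and $\tO(1)$ depth. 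A maximal independent set $\cS_{t,r}$ in this graph can then be computed by a standard work-efficient randomized parallel MIS routine (e.g.\ Luby's algorithm) in $\tO(1)$ depth and $\tO(n\,\ell^{O(k)})$ work.

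Then I would bound the parallel merge step (Line~\ref{line:parallel_merge}). By \Cref{lem:well_defined} each selected $S\in\cS_{t,r}$ merges exactly along its bounce-back path $\pi_S$, of length at most $\ell$, and by \Cref{lem:seperation} these sequences do not interfere, so they may be executed simultaneously. Each individual merge is one nearest-neighbor lookup plus the bookkeeping of $\merge(\cdot,\cdot)$; since (as in the proof of \Cref{lem:compute_bbshell}) all relevant nearest neighbors lie inside the already-computed $\tO(\ell^{O(k)})$-size ball around $S$, the full sequence for one cluster costs $\tO(\wnn\cdot\ell^{O(k)})$ work and $\tO(\ell^{O(k)})$ depth, hence $\tO(\wnn\cdot n\,\ell^{O(k)})$ work and $\tO(\ell^{O(k)})$ depth over all of $\cS_{t,r}$. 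Rebuilding the active set $\cC_{t,r+1}$ (Line~\ref{line:construct_active_sets}) is one nearest-neighbor query per surviving cluster, i.e.\ $\tO(\wnn\cdot n)$ work and $\tO(1)$ depth. Summing, a round takes $\tO(\wnn\cdot n\,\ell^{O(k)})$ work and $\tO(\ell^{O(k)})$ depth, and multiplying by the $\tO(h\ell^{O(k)})$ rounds gives the claimed $\tO(\wnn\cdot n\, h\,\ell^{O(k)})$ work and $\tO(h\ell^{O(k)})$ depth.

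I expect the main delicacy to be the nearest-neighbor bookkeeping, not the high-level multiplication. After the parallel merges of a round many clusters disappear and new ones are created, and one must make sure that recomputing/updating the nearest-neighbor structure and re-identifying $\cC_{t,r+1}$ stays inside the $\wnn$-parameterized budget, and that the concurrent merges of distinct selected clusters never race on the shared cluster set $\cC$---the latter being exactly what \Cref{lem:seperation} and \Cref{lem:well_defined} are invoked to guarantee. Keeping the per-round accounting tight---so the $\tO(\wnn)$ factor is multiplied only by $\tO(n\ell^{O(k)})$ and not by an extra stray $\ell$ or $n$---is the step that warrants the most care; the remaining bounds are routine consequences of the lemmas already proved.
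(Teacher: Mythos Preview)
Your proposal is correct and follows essentially the same approach as the paper: bound the per-round cost by building the conflict graph from the bounce-back shells (via \Cref{lem:compute_bbshell} and \Cref{lem:BBShellSize}), compute $\cS_{t,r}$ as a maximal independent set, and multiply by the $\tO(h\ell^{O(k)})$ total rounds from \Cref{lem:algo_num_phases} and \Cref{lem:num_rounds}. The paper's write-up is terser (it lumps the merge step and the reconstruction of $\cC_{t,r+1}$ into ``simple steps'' and cites a specific MIS result rather than Luby), but the structure and the accounting are the same.
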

\begin{proof}
    Most steps of the algorithm are simple and can be implemented in $O(n)$ work and $O(\log n)$ depth per round within a phase. The only non-trivial step is the computation of the maximal subset $\cS_{t,r}$ (\cref{line:mis}). 

    To compute $\cS_{t,r}$, we construct a graph $H_{t,r}$ with one node for each cluster in $\cC_{t,r}$. For each $A\in \cC_{t,r}$, we compute its bounce-back shell $\pi_A^+$ in $\tO(\wnn\ell^{O(k)})$ work and $O(\ell)$ depth (by \Cref{lem:compute_bbshell}). For every $B\in \pi_A^+$, we add an edge $(A,B)$ to $H_{t,r}$. The set $\cS_{t,r}$ then corresponds to a maximal independent set (MIS) in this graph. 
    An MIS in a graph with $n$ nodes and $m$ edges can be computed in $O(m)$ work and $O(\log n)$ depth~\cite{Fischer2019Tight}. Since each bounce-back shell contains at most $\tO(\ell^{O(k)})$ clusters (by \Cref{lem:BBShellSize}), the total number of edges in $H_{t,r}$ is bounded by $\tO(n\cdot \ell^{O(k)})$.

    Thus, each round requires $O(n\wnn\ell^{O(k)})$ work and $\tO(1)$ depth. By \Cref{lem:algo_num_phases} and \Cref{lem:num_rounds}, which bounds the total number of phases and rounds by $\tO(h\ell^{O(k)})$, the overall work and depth bounds follow.
\end{proof}

\subsection{Final Result}
We now prove the main result of the algorithm.

\conBound*
The proof follows by \Cref{lem:algo_correctness,lem:algo_work_depth}. Plugging our height bounds (\Cref{lem:heightBound}), the naive algorithm for computing $\tO(\ell^{O(k)})$-nearest neighbors, and $\ell \le h$ into \Cref{thm:conBound} immediately gives near-quadratic work parallel algorithms with $\tilde{O}(1)$ depth in constant dimensions for any well-behaved linkage functions, as summarized below.
% \finalAlg*
\begin{restatable}{corollary}{finalAlg}\label{thm:finalAlg}
    Suppose $d$ is a well-behaved linkage function. Then, for any $\eps > 0$, there exists a $(1+\eps)$-approximate parallel HAC algorithm in $\mathbb{R}^k$ with $\tO(n^2k^{O(k^2)}\log^{O(k)}n)$ work and $\tO(k^{O(k^2)}\log^{O(k)}n)$ depth, assuming $\poly(n)$ aspect ratio.
\end{restatable}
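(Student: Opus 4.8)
The plan is to obtain \Cref{thm:finalAlg} as a direct instantiation of the generic algorithmic guarantee \Cref{thm:conBound}, fed with three inputs: the dendrogram height bound \Cref{lem:heightBound}, the universal inequality $\ell \le h$, and the naive nearest-neighbor primitive. First I would check that the hypotheses of \Cref{thm:conBound} are met: since $d$ is well-behaved it is $\alpha$-packable with $\alpha = \tO(1)$ (\Cref{dfn:packableLinkage}) and satisfies the approximate triangle inequality with a fixed constant $c_\Delta = O(1)$ (\Cref{dfn:apxMetric}), which are exactly the conditions under which \Cref{thm:conBound} is stated (recall the paper simplifies to $\alpha = \tO(1)$, $c_\Delta = O(1)$ before that theorem). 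Treating $\eps$ as fixed, a $(1+\eps)$-approximate HAC is $c$-approximate with $c = 1+\eps = O(1)$, so \Cref{lem:heightBound} gives that every $(1+\eps)$-approximate HAC dendrogram has height $h = \tO((kc)^{O(k)}) = \tO(k^{O(k)})$, and since $\ell \le h$ always we may take $\ell = \tO(k^{O(k)})$ as well.

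Next I would fix the nearest-neighbor subroutine to be the trivial one: as noted in the discussion preceding \Cref{lem:compute_bbshell}, the $\tO(\ell^{O(k)})$ nearest neighbors of a cluster among the active set $\mcC$ can be computed by a linear scan followed by a sort, costing $\tO(n)$ work and $\tO(1)$ depth, so $\wnn = \tO(n)$. Plugging $\wnn = \tO(n)$, the above value of $h$, and $\ell \le h$ into \Cref{thm:conBound} yields a $(1+\eps)$-approximate parallel HAC algorithm with work $\tO(\wnn \cdot n h \ell^{O(k)}) = \tO(n^2 \cdot h^{O(k)})$ and depth $\tO(h\ell^{O(k)}) = \tO(h^{O(k)})$, where in both cases I used $\ell \le h$ to absorb $h\ell^{O(k)} \le h^{1+O(k)} = h^{O(k)}$. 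The approximation guarantee and the reduction itself carry over verbatim from \Cref{thm:conBound} and require no new argument.

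The last step is simplifying $h^{O(k)}$ into the stated form. Writing the height bound explicitly as $h \le C_1 (kc)^{C_2 k}\log^{C_3} n$ for absolute constants $C_1, C_2, C_3$, raising to an $O(k)$-th power gives $h^{O(k)} \le C_1^{O(k)} (kc)^{O(k^2)} \log^{O(k)} n$, and since $C_1 = O(1)$ and $c = O(1)$ the first two factors combine into $k^{O(k^2)}$; hence $h^{O(k)} = \tO\!\left(k^{O(k^2)}\log^{O(k)} n\right)$. Substituting this into the work and depth expressions from the previous paragraph gives $\tO(n^2 k^{O(k^2)}\log^{O(k)} n)$ work and $\tO(k^{O(k^2)}\log^{O(k)} n)$ depth. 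The only point needing care—and the one I would call out explicitly—is the bookkeeping of polylogarithmic factors: the $\tO(\cdot)$ in \Cref{lem:heightBound} hides a $\polylog n$ term, and exponentiating the height by $O(k)$ turns it into the explicit $\log^{O(k)} n$ factor that appears in the statement, rather than a bare $\polylog n$; everything else in the derivation is routine bound-chasing with constants absorbed into $\tO$.
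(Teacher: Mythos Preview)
Your proposal is correct and follows essentially the same approach as the paper: the paper's own argument is a one-line instantiation of \Cref{thm:conBound} with the height bound from \Cref{lem:heightBound}, the bound $\ell \le h$, and the naive $\wnn = \tO(n)$ nearest-neighbor primitive. Your explicit bookkeeping of how $h^{O(k)}$ unpacks into $k^{O(k^2)}\log^{O(k)} n$ is more detailed than what the paper writes out, but the derivation is identical.
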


% We now define a parallel dynamic nearest-neighbor search (NNS) data structure.
% \begin{definition}[Parallel Batch-Dynamic Nearest-Neighbor Search (NNS)]\label{dfn:dynann}
%     Given a linkage function $d$, a parallel batch-dynamic nearest-neighbor search data structure $\cN$ maintains a dynamically updated set of clusters $\cC\subseteq 2^{\R^k}$ (initially $\cC=\emptyset$) and supports the following operations:
%     \begin{enumerate}
%         \item Batch Insert, 
%         \item Batch Delete,
%         \item Query,
%         \item Radius-Query, 
%     \end{enumerate}
% \end{definition}
% The Radius-Query operation falls under the category of Range-Queries supported by standard NNS data structures. To simplify the presentation, we let $\wnn$ denote the work required by $\cN$ for a single insert/delete/query and $\wnn+s$ for a single radius-query. We would typically apply batch-operations, so the actual work would be $b\wnn$ and depth $\tO(1)$. 
% Let $\RQ_d^\cC(A, R)$ denote a subroutine that, given a cluster $A \subseteq \R^k$ and $R>0$, computes the ball $B_d^\cC(A,R)$. We now describe how to compute the bounce-back shell of a cluster, given this algorithm. Let $\wnn$ and $\dnn$ denote the work and depth of algorithm $\neighbor^\cC_d(A,R)$ when $R = O(\ell^{O(1)})$.

\subsubsection{Results for Centroid}
We now describe how to efficiently compute $p$-nearest neighbors during centroid HAC using cover trees~\cite{Elkin2023ANew,Beygelzimer2006CoverTrees,Gu2022ParallelCover}. In particular, we use the following result on parallel batch-dynamic cover trees from \cite{Gu2022ParallelCover}.

\begin{theorem}[Parallel Batch-Dynamic Cover Trees~\cite{Gu2022ParallelCover}]\label{thm:cover_trees}
    There exists a parallel data structure that maintains a dynamically updated set $S \subseteq \R^k$ (initially $S=\emptyset$) and supports the following operations, assuming $\poly(n)$ aspect ratio:
    \begin{enumerate}
        \item \textbf{Batch Insert/Delete:} A batch of $m$ points can be inserted/deleted in $\tO(m\cdot2^{O(k)})$ expected work and $\tO(1)$ depth with high probability.
        \item \textbf{$p$-Nearest Neighbor Search:} Given a query point $q$ and integer $p\ge 1$, the $p$ nearest neighbors of $q$ in $S$ can be returned in $\tO(p\cdot 2^{O(k)})$ work.
    \end{enumerate}
\end{theorem}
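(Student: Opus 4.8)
The plan is to construct the data structure as a hierarchy of nested $2^i$-nets of $S$ and verify the three claims — bounded structure, then query cost, then update cost — in that order. Recall the cover-tree invariants: at each integer scale $i$ the node set $S_i\subseteq S$ is \emph{separated} (pairwise distances exceed $2^i$), \emph{covering} (every point of $S_{i-1}$ lies within $2^{i+1}$ of some point of $S_i$), and \emph{nested} ($S_i\subseteq S_{i-1}$). First I would bound the structure. Because the aspect ratio is $\poly(n)$, the ratio between the largest and smallest occupied scale is polynomial, so only $O(\log n)$ levels are nonempty. The key geometric fact is that every node has at most $2^{O(k)}$ children: the children of a level-$i$ node lie in a ball of radius $O(2^{i})$ and are pairwise $\Omega(2^{i-1})$-separated, so by packing in Euclidean space (\Cref{thm:packPoints}) there are at most $(R/r)^{O(k)}=2^{O(k)}$ of them. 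Bounded degree together with $O(\log n)$ levels are the two properties driving every subsequent bound.

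For $p$-nearest-neighbor search from a query $q$, I would run the standard top-down descent, maintaining a working set $Q_i$ of candidate level-$i$ nodes. At each level I expand the children of all nodes in $Q_i$ and then prune to those within distance $d_p+2^{i+1}$ of $q$, where $d_p$ is the current $p$-th smallest candidate distance; any ancestor of a true $p$-nearest neighbor survives this pruning by the covering property. Separation together with packing shows the retained working set has size $O(p\cdot 2^{O(k)})$ at every level, so the total number of nodes touched across the $O(\log n)$ levels is $\tO(p\cdot 2^{O(k)})$, matching the claimed query work; expanding and pruning each level in parallel keeps the depth polylogarithmic.

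For a batch insertion of $m$ points I would process levels top-down. At level $i$ each inserted point is routed, by a constant number of local nearest-neighbor lookups among the $2^{O(k)}$ nearby level-$i$ nodes, to the subtree whose center covers it at scale $2^{i+1}$. The subtle step is \emph{promotion}: an inserted point farther than $2^i$ from every existing center must itself become a new level-$i$ center, but two mutually close candidates cannot both be promoted without violating separation. I would resolve this by forming, at each level, the conflict graph on candidate new centers with an edge between each pair within distance $2^i$; by packing this graph has degree $2^{O(k)}$, so a randomized parallel maximal-independent-set routine selects a valid set of new centers in $\tO(1)$ depth and work linear in the graph size with high probability. Selected points become centers (and recurse down as singletons), while unselected ones are attached as ordinary children of a nearby surviving center, which the covering property guarantees exists within $O(2^i)$. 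Each inserted point does $2^{O(k)}$ work per level over $O(\log n)$ levels, giving $\tO(m\cdot 2^{O(k)})$ expected work and $\tO(1)$ depth. Deletion is symmetric: removing a center forces its $2^{O(k)}$ children to be re-parented onto surviving centers within $O(2^i)$, performed in parallel level by level.

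The main obstacle is precisely the promotion/selection step during batch insertion: choosing concurrently which new points become centers at each level so that separation and covering are \emph{simultaneously} restored, without serializing along chains of mutually close points. The randomized maximal-independent-set computation on the bounded-degree conflict graph is what breaks these dependency chains in polylogarithmic depth, and the careful argument that after each level no point is left orphaned (e.g.\ when its intended parent loses the independent-set competition) is where the bookkeeping is heaviest and where the ``expected'' and ``with high probability'' qualifiers originate.
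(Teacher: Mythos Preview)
The paper does not prove this statement at all: \Cref{thm:cover_trees} is quoted as a black-box result from \cite{Gu2022ParallelCover} and is used without any argument beyond the citation. So there is no ``paper's own proof'' to compare your proposal against.

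That said, your sketch is a reasonable outline of how parallel batch-dynamic cover trees are built and analyzed, and it correctly identifies the two structural drivers (bounded degree $2^{O(k)}$ from packing, and $O(\log n)$ levels from the aspect-ratio assumption) as well as the genuinely nontrivial step (parallel promotion via MIS on a bounded-degree conflict graph). If your goal is to make the present paper self-contained, this would be the right shape for an appendix proof; but as the paper stands, the intended ``proof'' is simply the citation.
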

The idea is to maintain the centroids of the active clusters $\cC$ using a cover tree. The cover tree can be initialized via a batch insert operation in $\tO(n2^{O(k)})$ expected work and $\tO(2^{O(k)})$ depth with high probability. 
Each $\tO(\ell^{O(k)})$-NNS query on the cover tree takes $\tO(\ell^{O(k)})$ work and depth---i.e., we have $\wnn = \tO(\ell^{O(k)})$. After each round, the cover tree is updated by performing a batch delete (to remove clusters involved in merges) followed by a batch insert (to add the newly created clusters), requiring overall $\tO(n\cdot 2^{O(k)})$ expected work and $\tO(1)$ depth with high probability.
Hence, maintaining the cover tree does not incur an additional (asymptotic) overhead, up to log factors.

Finally, using the bound $\ell \le h$, we obtain the following result for centroid HAC from \Cref{thm:conBound}.

\parAlgCent*

\subsubsection{Results for Ward's}
For Ward's, we can obtain a tighter bound on the parameter $\ell$. To do so, we introduce the notion of \emph{weak-reducibility}, which informally states that as long as a merge does not involve the pair with the largest linkage value in a triple of clusters, the linkage function behaves as if it were reducible.

\begin{restatable}[Weak-Reducible]{definition}{weakRed} \label{dfn:weakRed}
Linkage function $d$ is weak-reducible if for any $A, B, C \subseteq \mathbb{R}^k$ such that $d(A,B) \ge \max(d(A,C),d(B,C))$, we have
\begin{align*}
    d(A \cup C, B) \geq d(B,C) \quad \text{and} \quad d(B\cup C,A) \ge d(A,C).
\end{align*}
\end{restatable}

\begin{lemma}\label{lem:wards_weakRed}
    Ward's linkage function $\dWard$ is weak-reducible (\Cref{dfn:weakRed}).
\end{lemma}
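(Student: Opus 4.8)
The plan is to read both inequalities off the Lance--Williams update rule for Ward's linkage (\Cref{lem:LanceWilliamsWards}), which writes the post-merge linkage as an explicit affine combination of the three pairwise linkages in the triple. Write $a = |A|$, $b = |B|$, $c = |C|$, $n = a+b+c$, and abbreviate $x = \dWard(A,B)$, $y = \dWard(A,C)$, $z = \dWard(B,C)$; the hypothesis of \Cref{dfn:weakRed} is exactly $x \ge \max(y,z)$. Applying \Cref{lem:LanceWilliamsWards} with the merged pair $(A,C)$ and third cluster $B$ gives
\begin{align*}
    \dWard(A \cup C, B) = \frac{a+b}{n}\,x + \frac{b+c}{n}\,z - \frac{b}{n}\,y .
\end{align*}

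Next I would subtract $z = \dWard(B,C)$ from both sides and simplify the coefficient of $z$ using $b+c-n = -a$; the target inequality $\dWard(A\cup C, B) \ge z$ then becomes
\begin{align*}
    \frac{a}{n}\,(x-z) + \frac{b}{n}\,(x-y) \ge 0 .
\end{align*}
Since $x \ge z$ and $x \ge y$ by hypothesis and the cluster sizes are positive, both summands are non-negative, so the inequality holds. The second claim, $\dWard(B\cup C, A) \ge \dWard(A,C)$, is handled in exactly the same way: applying \Cref{lem:LanceWilliamsWards} with merged pair $(B,C)$ and third cluster $A$ yields $\dWard(B\cup C,A) = \tfrac{a+b}{n}x + \tfrac{a+c}{n}y - \tfrac{a}{n}z$, and subtracting $y$ and using $a+c-n=-b$ reduces it to the identical inequality $\tfrac{a}{n}(x-z) + \tfrac{b}{n}(x-y) \ge 0$.

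There is no genuine obstacle here beyond careful bookkeeping with the Lance--Williams coefficients and keeping track of which cluster plays which role; the content of the lemma is precisely that the negative correction term $-\tfrac{|C|}{n}\dWard(A,B)$ in the update rule is outweighed by the positive contributions whenever $\dWard(A,B)$ is the largest of the three pairwise linkages. The only thing worth a sanity check is the orientation of the affine identity in a degenerate case (e.g.\ $A,B,C$ singletons, where $\dWard$ specializes via \Cref{lem:simpropsmallcluster}), to make sure the merged pair and the ``spectator'' cluster are assigned to the right slots of \Cref{lem:LanceWilliamsWards}.
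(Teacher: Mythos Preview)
Your proposal is correct and is essentially the paper's own argument: apply the Lance--Williams update (\Cref{lem:LanceWilliamsWards}) to $\dWard(A\cup C,B)$, subtract $\dWard(B,C)$, and observe that the result is a non-negative combination of $x-y$ and $x-z$; the second inequality is the symmetric computation. The only difference is cosmetic---you factor the difference into the form $\tfrac{a}{n}(x-z)+\tfrac{b}{n}(x-y)$, whereas the paper leaves it as $\tfrac{a+b}{n}x-\tfrac{a}{n}z-\tfrac{b}{n}y$ and appeals directly to $x\ge y,z$.
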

\begin{proof}
    Suppose we have three clusters $A,B,C \subseteq \R^k$ such that 
    $$\dWard(A,B) \ge \max(\dWard(A,C),\dWard(B,C)).$$ 
    Then, by the Lance-Williams form (\Cref{lem:LanceWilliamsWards}), we have $\dWard(A\cup C,B)-\dWard(B,C)$ is
    \begin{align*}
        \frac{|A|+|B|}{|A|+|B|+|C|}&\dWard(A,B)-\frac{|A|}{|A|+|B|+|C|}\dWard(B,C)-\frac{|B|}{|A|+|B|+|C|}\dWard(A,C)\\
        &\ge 0.
    \end{align*}
    A symmetric argument shows that $\dWard(B\cup C,A) \ge \dWard(A,C)$, as required.
\end{proof}

Thus, after a cluster $A \in \cC_{t,r}$ merges with its nearest neighbor, its linkage values with all other clusters remain at least $(1+\epsilon)^t$, implying that $\ell = 1$ for Ward's linkage. 

Since Ward's linkage is not a metric, we cannot apply cover trees directly. Instead, we adopt a bucketing-based approach (also used in~\cite{abboudhac}). Specifically, we group clusters into buckets based on sizes: clusters with sizes in the range $[2^i,2^{i+1})$ are placed in the $i$th bucket, and we maintain a cover tree over the centroids of clusters within each bucket. There are at most $O(\log n)$ such buckets. 

By the approximate form of Ward's linkage (see \Cref{lem:wardAlt}), the nearest neighbor of a cluster within a bucket---measured by centroid distances---is a $2$-approximate nearest neighbor with respect to Ward's linkage. Thus, to compute a set containing the $\tO(\ell^{O(k)})$ nearest neighbors under Ward's linkage, it suffices to compute the $\tO((2\cdot\ell)^{O(k)})$ nearest neighbors by centroid distances within each bucket. Updating the cover trees after merges follows naturally via batched deletions and insertions within each bucket.

Plugging this into \Cref{thm:conBound}, we obtain the following result for Ward's linkage HAC.

\parAlgWard*

\section{Parallel Hardness for HAC in Arbitrary Dimensions}\label{sec:hardness}

In this section, we prove the CC-hardness of centroid HAC. In particular, we show the hardness of the following decision version of HAC.
\decisionHAC*

Recall, CC-hardness is defined as follows.

\CCHard*
Our CC-hardness, then, is given by the following.
\ccHardness*

To prove \Cref{thm:ccHardness}, we reduce from the telephone communication problem (TCP). In the TCP, we receive a series of $n$ phone calls, which each have a start time and an end time, and a capacity $\kappa$ to service calls. 
When a call comes in, if we are currently servicing fewer than $\kappa$ calls we accept the call and service it until its end time. 
Otherwise, we drop the call and do not service it. The question is whether the $t$-th call is serviced. Formally:

\begin{definition}[Telephone Communication Problem]
    We are given $n$ phone calls, $(S_1,F_1), \ldots, (S_n,F_n)$ where $S_i < S_{i+1}$ for all $i \in [n-1]$, a capacity $\kappa$, and $t \in [n]$. If we are servicing less than $\kappa$ calls at time $s_i$, then we will service call $i$ from time $S_i$ to $F_i$. Otherwise, we do not service call $i$. Our goal is to decide whether call $t$ gets serviced.
\end{definition}

%\will{They credit the idea for the problem to private correspondence with someone, not sure how to cite that?}\lnote{We can just cite the Ramachandran paper.}
TCP was first introduced in \cite{Ramachandran-1991} where they showed that it is CC-hard.

\begin{lemma}[\cite{Ramachandran-1991}]
    TCP is CC-hard.
\end{lemma}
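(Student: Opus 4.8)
The plan is to establish CC-hardness of TCP by exhibiting a logspace reduction from the comparator circuit value problem (CCVP), which is complete for CC essentially by definition of the class. Recall that a comparator gate takes two input bits $a,b$ and outputs the pair $(a \wedge b,\, a \vee b)$ (equivalently $\min$ and $\max$), and CCVP asks for the value carried on a designated output wire of a circuit of such gates, given the circuit's input bits. Because CCVP is complete for CC under logspace reductions, composing any logspace map from CCVP to TCP with the CC-to-CCVP reduction shows that every problem in CC logspace-reduces to TCP, which is exactly the notion of CC-hardness used here. Thus it suffices to construct a logspace reduction from CCVP to TCP.

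First I would fix a topological order of the gates of the input circuit and assign to each gate a disjoint time window, presenting the windows in increasing order of time so that calls are introduced in the order their gates are evaluated. I encode the Boolean value of each wire by whether a dedicated call is serviced: serviced means $1$ and dropped means $0$. Each wire-call is given a long duration so that, if serviced, it stays active from the moment its value is produced until the last gate that reads it; in this way a wire's bit is legible as ``this call is currently active'' throughout every window in which it is needed.

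The heart of the reduction is a comparator gadget that exploits the fact that the greedy acceptance rule computes a threshold on the number of active calls: a probe call presented at time $\tau$ is serviced iff fewer than $\kappa$ calls are active at $\tau$, i.e. iff the active count lies below a fixed threshold. Since a comparator's two outputs are precisely the threshold functions $a \vee b = [a+b \ge 1]$ and $a \wedge b = [a+b \ge 2]$ of the number of $1$-valued inputs, I can realize a gate within its window by (i) arranging that only the gate's two input wire-calls (each active iff its bit is $1$) contribute to the active count during that window; (ii) inserting dummy/filler calls---whose start times, end times, and multiplicities are computable in logspace---to shift the baseline count so the relevant threshold is crossed at the correct value of $a+b$; and (iii) presenting probe calls whose service status realizes $a \wedge b$ and $a \vee b$ and then serves as the wire-call(s) feeding downstream gates. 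Fan-out of a wire into several gates is handled by extending its interval across the relevant windows (or via copy probes), and the two outputs of a gate are produced by two probes in the same window reading the same inputs against two filler baselines.

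The designated output wire of the circuit corresponds to a specific call $t$, and by construction call $t$ is serviced exactly when that wire carries $1$; hence an algorithm for TCP on the constructed instance decides CCVP. The number of calls, the capacity $\kappa$, and all start/finish times are polynomial in the circuit size and computable using only $O(\log n)$ work space, since indexing gates and wires and computing window offsets and filler counts are all counting tasks; so the reduction is logspace. The main obstacle I anticipate is the gadget's faithfulness under a single shared capacity: I must ensure that (a) no calls outside a gate's window perturb the active count seen by that gate's probes, which requires the window/duration bookkeeping and filler calls to exactly cancel cross-window contributions, and (b) both comparator outputs are produced and propagated consistently so the ``active iff bit $1$'' invariant is preserved along the shared channel. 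Choosing the filler multiplicities and interval endpoints so these invariants hold simultaneously---while keeping the whole construction logspace-uniform---is the delicate part; once the single-gate gadget is verified, correctness of the full circuit follows by induction along the topological order.
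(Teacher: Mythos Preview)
The paper does not prove this lemma at all: it is stated as a citation to \cite{Ramachandran-1991} and used as a black box for the subsequent reduction from TCP to centroid HAC. So there is no ``paper's proof'' to compare against; you have supplied a proof sketch where the authors simply invoke prior work.

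That said, your outline is along the right lines for how such a result is established. The key observation---that the greedy acceptance rule in TCP is a threshold test on the number of currently active calls, and that comparator outputs $a\wedge b$ and $a\vee b$ are precisely threshold functions of $a+b$---is the correct conceptual bridge from CCVP to TCP. Your plan to encode wire values as ``serviced vs.\ dropped,'' to give each gate a time window, and to use filler calls to shift the baseline so that probes read off the desired thresholds is the standard strategy. Where your sketch remains genuinely incomplete is exactly where you flag it: with a single global capacity $\kappa$, every active call anywhere in the instance contributes to the count seen by every probe, so you must specify precisely how filler calls and interval endpoints are chosen so that at each probe time the active count equals (baseline $+\ a+b$) for exactly the two inputs of that gate and nothing else. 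Verifying that invariant simultaneously across all gates, and handling fan-out and the two outputs of each comparator without interference, is the real content of the reduction; asserting that it can be done is not the same as doing it. If you want this to stand as a proof rather than a plan, you need to write down the gadget explicitly (start/end times, filler multiplicities, $\kappa$) and prove the invariant by induction on the topological order.
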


% \begin{definition}[$(1+\epsilon)$-Approximate HAC]
%     Let $P \in \mathbb{R^d}$ and $a,b,c \in P$. If we run $(1+\epsilon)$-approximate HAC on $P$, are $a$ and $b$ guaranteed to merge into the same cluster together before $c$ merges into the same cluster as $a$ or $b$?
% \end{definition}

\subsection{Reduction from TCP to Centroid HAC}

We reduce from TCP to HAC. Suppose we are given an instance $\mathcal{I} = ((S_1,F_1), \ldots, (S_n,F_n), \kappa, t)$ of TCP. We will build an instance of HAC in $\mathbb{R}^n$ so that solving it gives us the solution to $\mathcal{I}$. We will use weighted points as it is not difficult to place many points close together to achieve the same result. 
Let $E_1, \ldots, E_{2n}$ be the list of events, $S_i$ and $F_i$ for $i \in [n]$, in the order that they occur in $\mathcal{I}$. Define the following functions:

\begin{itemize}
    \item $e(E_i) = i$ is the number of the event
    \item $a(E_i) = \text{Number of active calls immediately before $e_i$}$
    \item $f(E_i) = \text{Number of calls that finished before $e_i$}$
\end{itemize}

We start by placing a heavy point $C$ at the origin. Each call gets its own axis. For call $i$ we place a point, $S_i$, on the positive $i$-axis. Further along the positive $i$-axis we have a heavy point, $R_i$. We set up the points so that $S_i$ merges with $C$ if call $i$ is serviced in $\mathcal{I}$ and $S_i$ merges with $R_i$ if not. If $i < j$, then it will be the case that $S_i$ is closer than $S_j$ to $C$ so that HAC has to make a decision for earlier phone calls first. 

HAC will be able to distinguish whether $S_j$ should merge with $C$ or $R_j$ based on how many points have merged with $C$. For $i < j$, if $S_i$ merges with $C$, then $C$ gets slightly dragged off the center along the $i$th axis, increasing the distance between $C$ and $S_j$. We might want it to be the case that when $S_j$ has to decide which direction to merge, $C$ is off center in $a(S_j)$ directions so that HAC only has to distinguish between whether $C$ is off center in $\kappa$ coordinates or in less than $\kappa$. However, it is not clear how to do this. Instead, we will add a point $F_i$ on the negative $i$th axis for $i \in [n]$ so that $F_i$ merges with $C$ if and only if $S_i$ does not. To accomplish this, we also add a point $L_i$ outside of $F_i$ that serves a purpose similar to that of $R_i$ for $S_i$. The $S$ and $F$ points will be placed increasingly far from the center in order of their event numbers. Then, when $S_j$ has to make a decision on which direction to merge, $C$ is off center in $f(S_j) + a(S_j)$ coordinates, one for each call that has finished and one for each active call. We will let $r_j$ be the distance between $S_j$ and $R_j$ and set it so that HAC merges $S_j$ with $C$ if and only if $C$ is off center in less than $f(S_j) + \kappa$ directions at the time of $S_j$ merging. On the other side, we will let $l_j$ be the distance between $F_j$ and $L_j$ and set it so that HAC merges $F_j$ with $C$ if and only if $S_j$ did not merge with $C$.

Ideally, we might want the events to merge in order. That is, the $i$th merge is between $E_i$ and either $C$ or $O_i$ where $O_i$ is the outer point of $E_i$ (either an $L$ or $R$). This is almost true but not quite. For an $F_j$, if $S_j$ gets merged then the distance between $C$ and $F_j$ significantly increases and a later event might merge before it. We will call such an $F_j$ \emph{dormant}. A dormant event will eventually merge with its outer point and the rest of the merges will not be affected by when this merge happens. We call any non-merged, non-dormant event \emph{active}. We will see that every merge is either a dormant event merging out or is the earliest active event choosing between $C$ and its outer merge.

We place points as follows. Start by choosing parameters $W$, $\Delta$, $\tau$, $r_i$, and $l_i$ where the last two are for all $i \in [n]$. We place $C$ at the origin with weight $W$. We will think of $W$ as being heavy. For $S_j$ and $F_j$ we will place them roughly $\Delta$ from the center along the $j$th axis with $S_j$ going on the positive side and $F_j$ going on the negative side. We will offset each of these points by $\tau \cdot e(S_j)$ (or $\tau \cdot e(F_j)$ respectively) so that the points merge in the desired order. Each of these points will have weight $1$. We then place $R_j$ and $L_j$ outside of $S_j$ and $F_j$ respectively and set their weight to $W$. See the table below for the specifics of each point and \Cref{fig:tcp} for an illustration of the reduction.\\

\begin{table}[h]
    \centering
\begin{tabularx}{0.9\textwidth} { 
     >{\centering\arraybackslash}X 
     >{\centering\arraybackslash}X 
     >{\centering\arraybackslash}X  }
    \toprule
    \textbf{Point} & \textbf{Position} & \textbf{Weight} \\
    \toprule
    $C$  & $\bf{\hat{0}}$  & W  \\
    \midrule
    $S_j$  & $\left(\Delta + \tau \cdot e(S_j)\right)\bf{\hat{i}_j}$  & 1  \\
    \midrule
    $R_j$  & $\left(\Delta + \tau \cdot e(S_j) + r_j\right)\bf{\hat{i}_j}$  & 1  \\
    \midrule
    $F_j$  & $-\left(\Delta + \tau \cdot e(F_j)\right)\bf{\hat{i}_j}$  & W  \\
    \midrule
    $L_j$  & $-\left(\Delta + \tau \cdot e(F_j) + l_j\right)\bf{\hat{i}_j}$  & W  \\
    \bottomrule
\end{tabularx}
\end{table}
% \vspace{12pt}

\begin{figure}[ht]
    \centering
    \begin{subfigure}[b]{0.70\textwidth}
        \centering
        \includegraphics[width=\textwidth, trim=0mm 0mm 0mm 120mm, clip]{./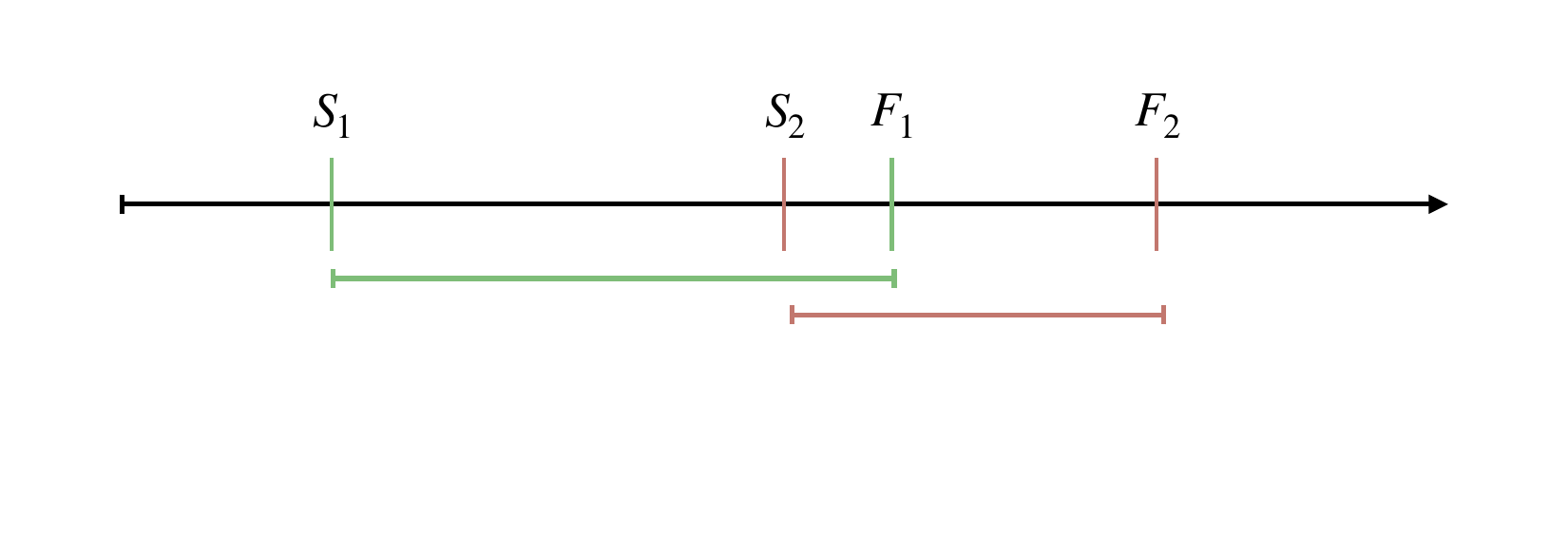}
        \caption{An instance of TCP.}\label{sfig:tcp_setup}
    \end{subfigure} \hspace{1.5em} %\hfill
    \begin{subfigure}[b]{0.65\textwidth}
        \centering
        \includegraphics[width=\textwidth, trim=55mm 35mm 60mm 32mm, clip]{./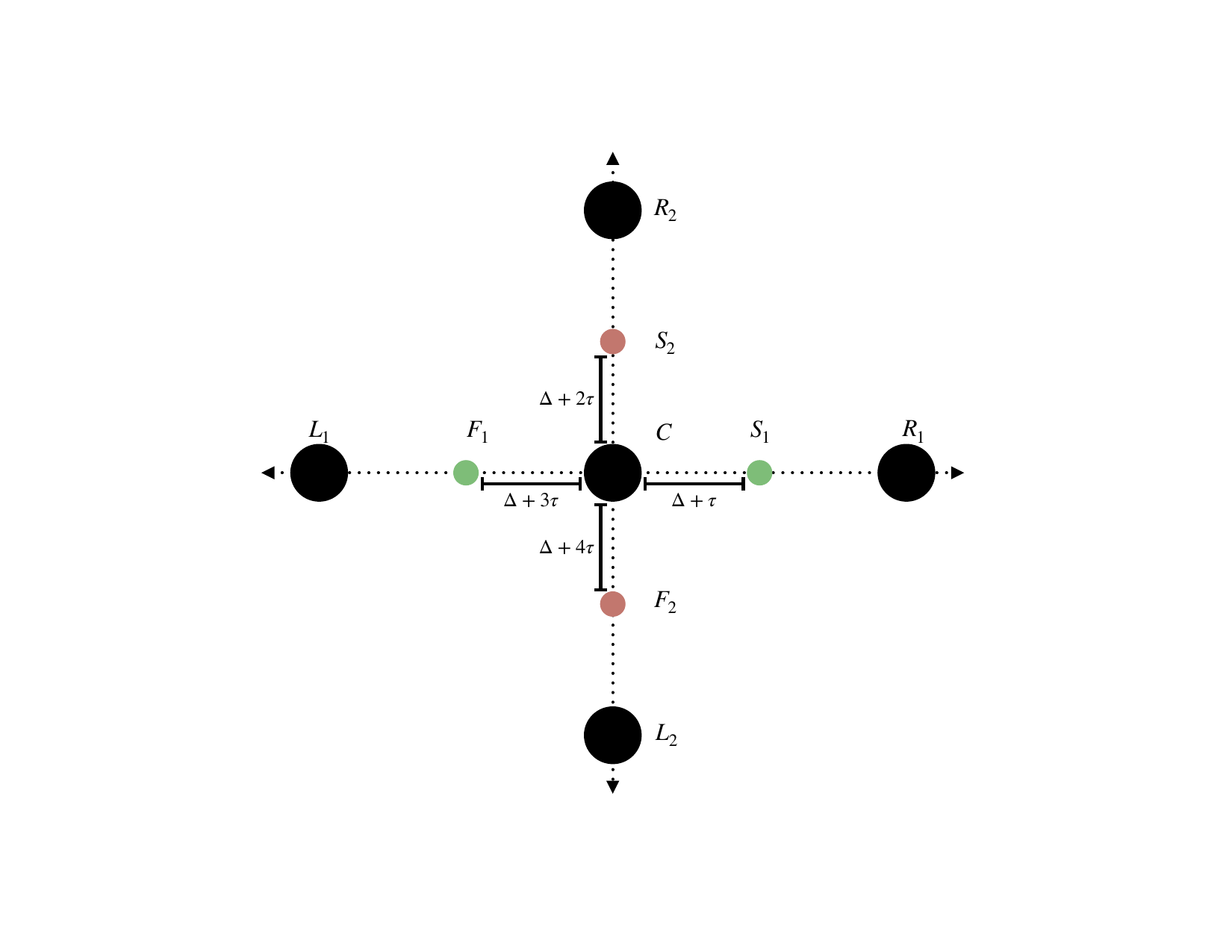}
        \caption{Set of points in $\mathbb{R}^2$ that we will run HAC on.}\label{sfig:tcp_reduction}
    \end{subfigure}  \hspace{1.5em}%\hfill
    \vspace{-0.5em}
    \caption{An example of the reduction from TCP to HAC for two phone calls.} \label{fig:tcp}
\end{figure}

Since points will move throughout the runtime of HAC, we will use $C^{(i)}$ (respectively $L^{(i)}$ and $R^{(i)}$) to refer to the point $C$ (respectively $L_i$ and $R_i$) immediately before the $i$th merge takes place. When $S_i$ or $F_i$ merge we will think of them as disappearing instead of moving. Let $\delta_x^i = [C^{(i)}]_x$ be the absolute value of the $x$th coordinate of $C^{(i)}$. We will show that throughout the runtime of HAC, $\delta_l < \delta_x^i \leq \delta_u$ where

\begin{itemize}
    \item $\delta_l = \frac{\Delta}{W+n}$ and
    \item $\delta_u = \frac{\Delta + 2n\tau}{W}$.
\end{itemize}

\begin{lemma}
    Suppose one of $S_x$ and $F_x$ has merged with $C^{(i)}$, at most one of $S_y$ and $F_y$ have merged with $C^{(i)}$ for all calls $y$, and that no other points have merged with $C^{(i)}$. Then $\delta_l < \delta_x^i \leq \delta_u$.
\end{lemma}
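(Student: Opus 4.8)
The plan is to compute $C^{(i)}$ directly from the structure of the cluster containing $C$ just before merge $i$. Call this cluster $D^{(i)}$. By the hypotheses, $D^{(i)}$ consists of $C$ together with exactly the points in $\{S_y : y \in P\} \cup \{F_y : y \in Q\}$ for some disjoint $P, Q \subseteq [n]$ with $x \in P \cup Q$: no $R_j$ or $L_j$ has been absorbed into it, at most one of $S_y, F_y$ has merged for each call $y$, and exactly one of $S_x, F_x$ has merged. First I would bound the total weight of $D^{(i)}$: since $C$ has weight $W$, each merged $S_y$ and $F_y$ has weight $1$, and $|P| + |Q| \le n$ (as $P$ and $Q$ are disjoint subsets of $[n]$), we get
\[
W \;\le\; w\bigl(D^{(i)}\bigr) \;=\; W + |P| + |Q| \;\le\; W + n .
\]

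Next I would compute $[C^{(i)}]_x$. The key point is that each of $S_j, F_j, R_j, L_j$ lies on the $j$-th coordinate axis while $C$ sits at the origin, so among the points forming $D^{(i)}$ only $S_x$ (if $x \in P$) or $F_x$ (if $x \in Q$) has a nonzero $x$-coordinate, and exactly one of those cases occurs. Hence the $x$-coordinate of the weighted sum $\sum_{p \in D^{(i)}} w(p)\cdot \mathrm{pos}(p)$ equals $\pm\bigl(\Delta + \tau\, e(\cdot)\bigr)$, where $e(\cdot)$ denotes $e(S_x)$ or $e(F_x)$, and therefore
\[
\delta_x^i \;=\; \frac{\Delta + \tau\, e(\cdot)}{w\bigl(D^{(i)}\bigr)} .
\]
Since there are $2n$ events in total, $1 \le e(\cdot) \le 2n$.

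Combining the two facts finishes the proof. For the upper bound, $w(D^{(i)}) \ge W$ and $e(\cdot) \le 2n$ give $\delta_x^i \le (\Delta + 2n\tau)/W = \delta_u$. For the lower bound, $w(D^{(i)}) \le W + n$ and $e(\cdot) \ge 1$ give $\delta_x^i \ge (\Delta + \tau)/(W+n) > \Delta/(W+n) = \delta_l$, as claimed.

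I do not expect a genuine obstacle here; this is essentially an accounting lemma. The one thing to be careful about is checking that the hypotheses really do pin $D^{(i)}$ down to the claimed point set --- in particular that no heavy $R_j$ or $L_j$ point has been pulled in, and that $S_x$ and $F_x$ have not \emph{both} merged --- so that there is no cancellation or extra contribution along the $x$-axis. Note that the displayed identity $\delta_x^i = (\Delta + \tau\, e(\cdot))/(W + |P| + |Q|)$ is slightly stronger than the stated bound, and is presumably what later arguments use to decide, based on how many of the $S$'s and $F$'s have already merged, whether $S_x$ prefers merging with $C$ or with $R_x$ at the moment it merges.
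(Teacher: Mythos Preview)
Your proposal is correct and follows essentially the same approach as the paper's proof: bound the total weight of the cluster containing $C$ between $W$ and $W+n$, observe that exactly one constituent point has a nonzero $x$-coordinate with absolute value between $\Delta+\tau$ and $\Delta+2n\tau$, and divide. Your version is simply more explicit about the bookkeeping (introducing $P$, $Q$, and the centroid formula) than the paper's two-sentence argument.
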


\begin{proof}
    The weight of $C^{(i)}$ will always be between $W$ and $W+n$. Also, $C^{(i)}$ has exactly one point merged with it that has a non-zero $x$ coordinate. The absolute value of that coordinate is between $\Delta + \tau$ and $\Delta + 2n \tau$. The lemma follows.
\end{proof}

\subsection{Proof of Correctness of Reduction}

We start with a lemma showing what properties we require for our reduction to work.

\begin{lemma} \label{lem:hardness_requirements}
    Suppose we can set $\Delta, W,\tau$, $r_i$ and $l_i$ such that for some $\epsilon > 0$ and all $i \in [2n]$ we have:
    \begin{enumerate}
        \item \textbf{S merges with R:} For some unmerged $S_j$, if for at least $f(S_j) + \kappa$ coordinates $c$ of $C^{(i)}$, $|[C^{(i)}]_c| \in (\delta_l,\delta_u)$ and for the rest including $c=j$, $[C^{(i)}]_c = 0$, then $d(C^{(i)}, S_j) > (1+\epsilon) \cdot  r_j$. \label{item:s_merge_out}
        \item \textbf{S merges with C:} For some unmerged $S_j$, if for less than $f(S_j) + \kappa$ coordinates $c$ of $C^{(i)}$, $|[C^{(i)}]_c| \in (\delta_l,\delta_u)$ and for the rest including $c=j$, $[C^{(i)}]_c = 0$, then $(1+\epsilon) \cdot d(C^{(i)}, S_j) < r_j$. \label{item:s_merge_in}
        \item \textbf{F merges with L:} For some unmerged $F_j$, if $[C^{(i)}]_j \in [\delta_l,\delta_u]$ and for the rest of the coordinates $c \neq j$, $[C]_c \in [-\delta_u, \delta_u]$, then $d(C^{(i)}, F_j) > (1+\epsilon) \cdot l_j$.\label{item:f_merge_out}
        \item \textbf{F merges with C:} For some unmerged $F_j$, if for at most $f(F_j) + \kappa$ coordinates $c$ of $C^{(i)}$, $|[C^{(i)}]_c| \in (\delta_l,\delta_u)$ and for the rest including $c=j$, $[C]_c = 0$, then $(1+\epsilon) \cdot d(C^{(i)}, F_j) < l_j$. \label{item:f_merge_in}
        \item \textbf{Outer distances in order:} Let $E_x, E_y$ be two events with $x < y$. Let $o_x$ and $o_y$ be the $l$ or $r$ value corresponding to $E_x$ and $E_y$ depending on whether they are $S$ or $F$ events. Then $(1 + \epsilon) \cdot o_x < o_y$. \label{item:outer_in_order}
        \item \textbf{Inner distances in order:} Let $E_x, E_y$ be two events with $x < y$. If for all coordinates $c$ of $C^{(i)}$, $[C^{(i)}]_c \in (-\delta_u,\delta_u)$ and $[C^{(i)}]_c = 0$ for the coordinates of $E_x$ and $E_y$, then $(1+\epsilon) \cdot d(C^{(i)}, E_x) < d(C^{(i)}, E_y)$. \label{item:inner_in_order}
        \item \textbf{No bad merges:} If there is some active event, for all coordinates $c$ of $C^{(i)}$, $[C^{(i)}]_c \in [-\delta_u,\delta_u]$, and each outer point has at most merged with its event, then $(1+\epsilon)$-approximate HAC will merge an unmerged event with $C$ or its outer point.  \label{item:no_bad_merges}
    \end{enumerate}

    Then, if $\Delta, W$, $\tau$, $r_i$ and $l_i$ are logspace computable, $(1+\epsilon)$-approximate promise decision centroid HAC is CC-hard in $\mathbb{R}^n$.
\end{lemma}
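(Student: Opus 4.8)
The plan is to give a logspace reduction from the telephone communication problem~(TCP), which is CC-hard by~\cite{Ramachandran-1991}, to $(1+\eps)$-approximate promise decision centroid HAC in $\mathbb{R}^n$; since logspace reductions compose, this establishes the theorem. Given a TCP instance $\mathcal{I} = ((S_1,F_1),\dots,(S_n,F_n),\kappa,t)$, I construct exactly the (weighted) point set tabulated above, instantiated with the parameters $\Delta, W, \tau, r_i, l_i$ furnished by the hypothesis. Every point coordinate is a fixed arithmetic expression in these parameters and in the quantities $e(\cdot), a(\cdot), f(\cdot)$, all of which---together with the event ordering $E_1,\dots,E_{2n}$---are computable in logspace; the parameters themselves are logspace computable by assumption; so the whole construction is a logspace map. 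Weighted points are realized in the usual way by colocating $\poly(n)$ copies in a negligibly small ball, perturbing centroid distances by an amount dwarfed by the $(1+\eps)$ slack in all of the conditions. Finally I output the decision instance with $a$ a representative point of $C$, $b = S_t$, and $c = R_t$; once the reduction is shown correct, the answer to ``do $a$ and $b$ merge into a common cluster before $c$ joins that cluster'' is precisely ``$S_t$ merges with $C$,'' which I will show happens iff call $t$ is serviced in $\mathcal{I}$.

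The technical heart is an induction showing that $(1+\eps)$-approximate HAC on this instance is essentially forced. Call an unmerged $F_j$ whose partner $S_j$ has already merged with $C$ \emph{dormant}, and any other unmerged event \emph{active}. The claim is that merges occur in event order $E_1, E_2, \dots, E_{2n}$, where the $i$th merge is $(F_j, L_j)$ when $E_i = F_j$ is dormant and is otherwise the (necessarily earliest) active event $E_i$ choosing between merging with $C$ and merging with its outer point. I prove this by induction on $i$, maintaining the invariant: just before the $i$th merge, $C^{(i)}$ has off-center coordinates (i.e.\ $|[C^{(i)}]_c|\in(\delta_l,\delta_u)$, and $[C^{(i)}]_c = 0$ otherwise) exactly at those axes $c = j$ for which precisely one of $S_j, F_j$ has already merged with $C$, and no point other than some $S$ or $F$ has merged with $C$. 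Given the invariant, property~\ref{item:no_bad_merges} forbids every merge except an unmerged event joining $C$ or its outer point; properties~\ref{item:outer_in_order} and~\ref{item:inner_in_order} then force that merge to involve the next event in event order---either a dormant $F_j$ falling out (which leaves $C$, and hence all future merge distances, undisturbed) or the earliest active event making its $C$-vs-outer decision. The invariant is re-established after the step: a dormant $F_j$ does not touch $C$; for an active event, the only change to the ``merged with $C$'' status is that an accepted call's $S_j$, or a dropped call's $F_j$, merges with $C$ (status $0 \to 1$, so axis $j$ becomes off-center), while a dropped call's $S_j$ merges with $R_j$ (status stays $0$).

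It remains to read the TCP answer off this execution. A counting argument from the invariant and the TCP semantics (using the interpretation that $f(\cdot)$ counts all calls---serviced or dropped---whose finish event has occurred) shows that at the moment event $E_i$ decides, $C^{(i)}$ has exactly $f(E_i) + a(E_i)$ off-center coordinates. Hence, for $E_i = S_j$, properties~\ref{item:s_merge_out} and~\ref{item:s_merge_in} say $S_j$ merges with $C$ iff $f(S_j)+a(S_j) < f(S_j)+\kappa$, i.e.\ iff $a(S_j) < \kappa$, i.e.\ iff call $j$ is serviced; and for $E_i = F_j$, property~\ref{item:f_merge_out} covers the dormant case, while property~\ref{item:f_merge_in} covers the active case---applicable because the capacity bound gives $a(F_j) \le \kappa$, hence an off-center count of at most $f(F_j)+\kappa$---and forces an active $F_j$ to merge with $C$, exactly as needed to re-establish the invariant. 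In particular $S_t$ merges with $C$ iff call $t$ is serviced, so the constructed instance is a yes-instance iff call $t$ is serviced. Because every choice above is forced with a $(1+\eps)$-strict margin, $C$, $S_t$, $R_t$ wind up in one cluster in the same relative order for every $(1+\eps)$-approximate HAC, so the promise in the definition of promise decision HAC is satisfied and the reduction is valid; combining with CC-hardness of TCP proves the theorem. I expect the main obstacle to be the simultaneous induction of the second paragraph: verifying that dormant $F$-events, once displaced from the head of the event order, never trigger a premature merge nor interfere with a later active event's decision, and that property~\ref{item:no_bad_merges} together with the ordering properties genuinely precludes every other kind of merge (an $S$-point with a foreign outer point, two $S$-points, $C$ with an outer point ahead of its event, and so on) at every step---orchestrating all of these cases under a single invariant is the delicate part.
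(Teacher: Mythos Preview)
Your proposal follows the same inductive strategy as the paper, but there is one substantive gap: you assert that merges occur in strict event order $E_1,\dots,E_{2n}$, and this does not follow from the seven hypotheses alone (the paper's own exposition explicitly flags that strict event order is ``almost true but not quite''). The problem is the dormant case. Suppose the next event $E_i = F_j$ is dormant. Property~6 (inner distances in order) cannot be invoked with $E_x = F_j$, because its hypothesis requires the $j$-coordinate of $C^{(i)}$ to be zero, and that fails precisely when $F_j$ is dormant. And nothing among properties 1--7 compares the dormant outer distance $l_j$ against the inner distance $d(C^{(i)},E_a)$ of the next \emph{active} event $E_a$ with $a>i$; so under the lemma's hypotheses you cannot exclude $E_a$ merging with $C$ before $F_j$ merges with $L_j$, and your induction on $i$ stalls.

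The paper repairs this by weakening the inductive claim: at each step the merge involves either the earliest active event $E$ or \emph{some} dormant event, not necessarily the one whose turn it is in event order. Properties~5 and~6 suffice for this weaker statement, since both apply cleanly to pairs of \emph{active} events (whose coordinates in $C$ are zero), while property~3 forces any dormant event that does merge to go to its $L$-point, leaving $C$ untouched. Because dormant merges never alter $C$, the off-center count at the moment $E$ decides is still exactly $f(E)+a(E)$ as you compute, and the remainder of your argument---applying properties~1--4 to read off the TCP answer and verifying the promise---goes through unchanged. So the gap is local and is fixed by relaxing your ordering claim to match the paper's.
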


\begin{proof}
    Suppose we are given an instance $\mathcal{I} = ((S_1,F_1), \ldots, (S_n,F_n), \kappa, t)$ of TCP. Let $P$ be the set of points for our reduction. If $\Delta, W, \tau$, $r_i$ and $l_i$ are logspace computable, then every point in $P$ is logspace computable so we are only concerned that HAC returns true if and only if TCP does. We will show that $S_t$ merges with $C$ if and only if call $t$ is accepted in $\mathcal{I}$, proving the lemma.

    We will prove by induction that
    
    \begin{itemize}
        \item in every round either the earliest active event or a dormant event gets merged,
        \item for each call $j \in [n]$, $S_j$ merges with $C$ if and only if call $j$ is accepted in $\mathcal{I}$ and merges with $R_j$ otherwise, and
        \item $F_j$ merges with $C$ if and only if $S_j$ does not and merges with $L_i$ otherwise.
    \end{itemize}
    
     For the base case, note that before any merges are made that none of these three criteria have been broken. Now consider some round $i$ and assume our criteria are satisfied so far. By our inductive assumptions we know that every coordinate of $C^{(i)}$ is in the range $(-\delta_u,\delta_u)$ and each outer point has at most merged with its event so by \cref{item:no_bad_merges}, we know that some unmerged event will be merged with $C^{(i)}$ or its outer merge. Let $E$ be the earliest active event. By \cref{item:outer_in_order} and \cref{item:inner_in_order} we know that in round $i$ either $E$ or some dormant event, $D$, will be involved in the merge. 

    Assume $E$ is involved in the merge. Either $E = S_j$ or $E = F_j$ for some $j \in [n]$. We first consider the case when $E = S_j$. By our inductive assumptions, every event before $S_j$ has either merged or is dormant. Therefore, for $f(S_j) + a(S_j)$ coordinates $c$ of $C^{(i)}$, $|[C^{(i)}]_c| \in (\delta_l,\delta_u)$ and for the rest including $c=j$, $[C^{(i)}]_c = 0$. Thus, if $a(S_j) < \kappa$, call $j$ is accepted and by \cref{item:s_merge_in}, $S_j$ merges with $C^{(i)}$. On the other hand, if $a(S_j) = \kappa$, call $j$ is dropped and by \cref{item:s_merge_out}, $S_j$ merges with $R_j$. Therefore, $S_j$ merges with $C$ if and only if call $j$ is accepted in $\mathcal{I}$ as desired.
    
    Next, consider the case when $E = F_j$. Since $E$ is active and not dormant, $S_j$ must have merged with $R_j$. Thus, for at most $f(F_j) + a(F_j)$ coordinates $c$ of $C^{(i)}$, $|[C^{(i)}]_c| \in (\delta_l,\delta_u)$ and for the rest including $c=j$, $[C]_c = 0$. Therefore, by \cref{item:f_merge_in}, $F_j$ merges with $C^{(i)}$ as desired.

    Lastly, we consider the case that $D$ is in the merge in round $i$. Since $D$ is dormant, $D = F_j$ for some $j \in [n]$ and $S_j$ merged with $C$. Thus, $[C^{(i)}]_j \in (\delta_l,\delta_u)$ and for the rest of the coordinates $c \neq j$, $[C]_c \in (-\delta_u, \delta_u)$ so by \cref{item:f_merge_out}, $F_j$ merges with $L_j$ as desired. 
    
    Therefore all three of our inductive assumption hold in round $i$. By induction, $S_t$ merges with $C$ before either merge with $R_t$ if and only if call $t$ is accepted in $\mathcal{I}$. Thus, HAC returns true if and only if TCP does, proving the lemma.
\end{proof}

Assign the parameters as follows:

\begin{itemize}
    \item $\epsilon = \frac{1}{n^7}$
    \item $\tau = 1/n$
    \item $W = n^3$
    \item $\Delta = n^5$
    \item $r_i = \left(1+\frac{2}{n^7}\right)\sqrt{(k-1+f(S_i))\delta_u^2 + \left(\Delta + \tau \cdot e(S_i)\right)^2}$
    \item $l_i = \left(1+\frac{2}{n^7}\right)\sqrt{(k+f(F_i))\delta_u^2 + \left(\Delta + \tau \cdot e(F_i)\right)^2}$
\end{itemize}

We now prove that the items from \cref{lem:hardness_requirements} hold with the given parameters in a series of lemmas. We start by showing that $S_j$ merges with $R_j$ if there are $\kappa$ active calls at time $S_j$ in the TCP instance.

\begin{lemma}[\cref{item:s_merge_out}]\label{lem:s_merge_out}
    For some unmerged $S_j$, if for at least $f(S_j) + \kappa$ coordinates $c$ of $C^{(i)}$, $|[C^{(i)}]_c| \in [\delta_l,\delta_u]$ and for the rest including $c=j$, $[C]_c = 0$, then $d(C^{(i)}, S_j) > (1+\epsilon) \cdot  r_j$.
\end{lemma}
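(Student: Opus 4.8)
The plan is to treat the claim as a direct computation with Euclidean distances: expand $d(C^{(i)},S_j)$ in coordinates and compare it to the chosen value of $r_j$. First I would use that $\dCen$ between two (weighted) point-clusters is the Euclidean distance between their centroids, and that $S_j$ is still the singleton sitting at distance $\Delta + \tau\cdot e(S_j)$ along the $j$-th axis and $0$ on every other axis. This gives
\[
d\big(C^{(i)},S_j\big)^2 \;=\; \big(\Delta + \tau\cdot e(S_j) - [C^{(i)}]_j\big)^2 + \sum_{c\neq j}[C^{(i)}]_c^2 .
\]
The hypothesis provides two facts: $[C^{(i)}]_j = 0$, so the first term equals $(\Delta + \tau\cdot e(S_j))^2$ exactly; and at least $f(S_j)+\kappa$ of the remaining coordinates have absolute value at least $\delta_l$ while all the others vanish, so $\sum_{c\neq j}[C^{(i)}]_c^2 \ge (f(S_j)+\kappa)\,\delta_l^2$. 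Hence $d(C^{(i)},S_j)^2 \ge (\Delta + \tau\cdot e(S_j))^2 + (f(S_j)+\kappa)\,\delta_l^2$.

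It then remains to verify that this lower bound exceeds $(1+\epsilon)^2 r_j^2$. Substituting the chosen $r_j$ and cancelling the common $(\Delta + \tau\cdot e(S_j))^2$ from both sides, the desired inequality rearranges to the statement that one surplus copy of $\delta_l^2$ — arising because the coefficient of $\delta_u^2$ under the square root in $r_j$ is exactly $(f(S_j)+\kappa)-1$, one less than what appears on our side — must dominate the sum of (i) $\big((1+\epsilon)^2(1+2/n^7)^2 - 1\big)\cdot(\Delta+\tau\cdot e(S_j))^2$ and (ii) $(f(S_j)+\kappa-1)\cdot(\delta_u^2-\delta_l^2)$. Plugging in $\Delta=n^5$, $W=n^3$, $\tau=1/n$, $\epsilon=1/n^7$, a short calculation gives $\delta_l^2 = n^4 - O(n^2)$ and $\delta_u^2 = n^4 + O(1)$ (so $\delta_u^2-\delta_l^2 = O(n^2)$), $(\Delta+\tau\cdot e(S_j))^2 = O(n^{10})$, and $(1+\epsilon)^2(1+2/n^7)^2 - 1 = O(1/n^7)$; so (i) is $O(n^{10}/n^7)=O(n^3)$ and (ii) is $O(n\cdot n^2)=O(n^3)$, while the surplus copy $\delta_l^2$ is $\Theta(n^4)$. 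Since $\Theta(n^4)$ beats $O(n^3)$ by a full factor of $n$, the strict inequality holds for all sufficiently large $n$, and the finitely many small $n$ are handled trivially (those TCP instances have bounded size, or the parameters can be scaled to absorb them).

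The only genuinely delicate part — and the step I would be most careful with — is this final accounting of lower-order terms. The margin is exactly what the parameter choices buy: taking $\epsilon=\Theta(1/n^7)$ keeps $\epsilon$ times the $\Theta(n^{10})$-sized term $(\Delta+\tau\cdot e(S_j))^2$ asymptotically below the $\Theta(n^4)$ ``gain'' from the one extra off-center coordinate, and setting $\Delta/W$ so that the relative gap between $\delta_l$ and $\delta_u$ is only $\Theta(1/n^2)$ keeps the accumulated $\delta_u^2-\delta_l^2$ terms from eroding that gain. Everything else is a routine expansion of a squared Euclidean norm, so I do not expect any conceptual difficulty beyond tracking these estimates; the complementary inequality needed elsewhere (when $S_j$ should instead merge with $C$) would follow by the symmetric argument.
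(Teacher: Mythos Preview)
Your proposal is correct and follows essentially the same approach as the paper: both lower-bound $d(C^{(i)},S_j)^2$ by $(\Delta+\tau\cdot e(S_j))^2+(f(S_j)+\kappa)\,\delta_l^2$ and then verify numerically, with the chosen parameters, that this exceeds $(1+\epsilon)^2 r_j^2$. The only difference is presentational: the paper carries out the verification by brute-force polynomial expansion of both sides (a chain of $\Leftarrow$ implications ending in a comparison of degree-$22$ polynomials), whereas you do the equivalent asymptotic bookkeeping, isolating the $\Theta(n^4)$ gain from the one extra off-center coordinate against the $O(n^3)$ error terms. One nit: your ``rearrangement'' silently drops the cross term $\big((1+\epsilon)^2(1+2/n^7)^2-1\big)\cdot(f(S_j)+\kappa-1)\,\delta_u^2$, but this is $O(n^{-2})$ and harmless.
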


\begin{proof}

    By the assumptions of the lemma, we have that $$d(C^{(i)}, S_j) \geq \sqrt{(k+f(S_i))\delta_l^2 + \left(\Delta + \tau \cdot e(S_i)\right)^2}.$$ It follows that

    \begin{align*}
        &d(C^{(i)}, S_j) > (1+\epsilon) \cdot  r_j \\
        &\Leftarrow \sqrt{(k+f(S_i))\delta_l^2 + \left(\Delta + \tau \cdot e(S_i)\right)^2} > (1+\epsilon) \cdot \left(1+\frac{2}{n^7}\right)\sqrt{(k-1+f(S_i))\delta_u^2 + \left(\Delta + \tau \cdot e(S_i)\right)^2} \\
        &\Leftarrow \sqrt{(2n)\delta_l^2 + \left(\Delta + 2n\tau \right)^2} > (1+\epsilon) \cdot \left(1+\frac{2}{n^7}\right)\sqrt{(2n-1)\delta_u^2 + \left(\Delta + 2n \tau \right)^2} \\
        &\Leftarrow (2n)\delta_l^2 + \left(\Delta + 2n \tau \right)^2 > (1+\epsilon)^2 \cdot \left(1+\frac{2}{n^7}\right)^2\left((2n-1)\delta_u^2 + \left(\Delta + 2n \tau \right)^2\right) \\
        &\Leftarrow \frac{2n \cdot \Delta^2}{(W+n)^2} + \left(\Delta + 2n \tau \right)^2 > (1+\epsilon)^2 \cdot \left(1+\frac{2}{n^7}\right)^2\left(\frac{(2n-1)(\Delta + 2n \tau)^2}{W^2} + \left(\Delta + 2n \tau \right)^2\right) \\
        &\Leftarrow 2n \cdot \Delta^2 W^2 + \left(\Delta + 2n \tau \right)^2 W^2 (W+n)^2 \\
        &\hspace{1cm} > (1+\epsilon)^2 \cdot \left(1+\frac{2}{n^7}\right)^2\left((2n-1)(\Delta + 2n \tau)^2 (W+n)^2 + \left(\Delta + 2n \tau \right)^2 W^2 (W+n)^2\right) \\
        &\Leftarrow 2n \cdot n^{10} n^6 + \left(n^5 + 2 \right)^2 n^6 (n^3+n)^2 \\
        &\hspace{1cm} > \left(1+\frac{1}{n^7}\right)^2 \cdot \left(1+\frac{2}{n^7}\right)^2 \left((2n-1)(n^5 + 2)^2 (n^3+n)^2 + \left(n^5 + 2\right)^2 n^6 (n^3+n)^2\right) \\
        &\Leftarrow n^{22} + 2 n^{20} + n^{18} + 6 n^{17} + 8 n^{15} + 4 n^{13} + 4 n^{12} + 8 n^{10} + 4 n^8 \\
        &\hspace{1cm} > \left(1+O\left(\frac{1}{n^7}\right)\right) \cdot \left(n^{22} + 2 n^{20} + n^{18} + 6 n^{17} - n^{16} + 12 n^{15} - 2 n^{14} + O(n^{13}) \right) \\
    \end{align*}

    where the final equation holds for all $n$ greater than some constant. Thus, the initial equation holds as desired.
\end{proof}

Next we show that if there are less than $\kappa$ active calls, $S_j$ merges with $C$.

\begin{lemma} [\cref{item:s_merge_in}] \label{lem:s_merge_in}
    For some unmerged $S_j$, if for less than $f(S_j) + \kappa$ coordinates $c$ of $C^{(i)}$, $|[C^{(i)}]_c| \in [\delta_l,\delta_u]$ and for the rest including $c=j$, $[C]_c = 0$, then $(1+\epsilon) \cdot d(C^{(i)}, S_j) < r_j$.
\end{lemma}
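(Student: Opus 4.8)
Unlike the previous lemma (\Cref{lem:s_merge_out}), which required a somewhat delicate polynomial comparison, this direction is essentially immediate once one unwinds the definition of $r_j$, since $r_j$ was chosen precisely to make this inequality hold. The plan is to directly upper-bound $d(C^{(i)}, S_j)$ and compare it against $r_j$.

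First I would bound $d(C^{(i)}, S_j)^2$. By hypothesis, $C^{(i)}$ has at most $f(S_j) + \kappa - 1$ coordinates $c$ with $|[C^{(i)}]_c| \in [\delta_l, \delta_u]$, and all remaining coordinates of $C^{(i)}$ are $0$; in particular $[C^{(i)}]_j = 0$. Meanwhile $S_j$ sits at $(\Delta + \tau \cdot e(S_j))\bm{\hat{i}_j}$. Summing the squared coordinate differences axis by axis, the axis-$j$ contribution is exactly $(\Delta + \tau\cdot e(S_j))^2$ (since $[C^{(i)}]_j = 0$), each of the at most $f(S_j)+\kappa-1$ "active" axes contributes at most $\delta_u^2$, and every other axis contributes $0$. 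Hence
\begin{align*}
    d(C^{(i)}, S_j)^2 \;\le\; (f(S_j)+\kappa-1)\,\delta_u^2 + \bigl(\Delta + \tau\cdot e(S_j)\bigr)^2.
\end{align*}

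Next I would use that the dimension is $k = n$ and that the capacity satisfies $\kappa \le n = k$, so $f(S_j) + \kappa - 1 \le f(S_j) + k - 1$, giving
\begin{align*}
    d(C^{(i)}, S_j) \;\le\; \sqrt{(k-1+f(S_j))\,\delta_u^2 + \bigl(\Delta + \tau\cdot e(S_j)\bigr)^2} \;=\; \frac{r_j}{\,1 + \tfrac{2}{n^7}\,},
\end{align*}
by the definition of $r_j$. Finally, since $\epsilon = \tfrac{1}{n^7}$, we have $\tfrac{1+\epsilon}{\,1+2/n^7\,} < 1$, and therefore $(1+\epsilon)\,d(C^{(i)}, S_j) \le \tfrac{1+1/n^7}{1+2/n^7}\, r_j < r_j$, as required.

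I do not expect any real obstacle here: the only things to be careful about are (i) observing that the axis-$j$ term of $C^{(i)}$ vanishes so the distance along that axis is exactly $\Delta + \tau\cdot e(S_j)$ (matching the corresponding term inside $r_j$), and (ii) invoking $\kappa \le k$ to absorb the $\kappa-1$ extra active coordinates into the $k-1$ slack built into $r_j$. The slack factor $1+2/n^7$ in $r_j$ versus the approximation factor $1+\epsilon = 1+1/n^7$ then closes the argument with strict inequality.
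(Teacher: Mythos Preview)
Your approach is exactly the paper's: upper-bound $d(C^{(i)},S_j)$ by the square-root expression appearing in $r_j$, then use $1+\epsilon = 1+\tfrac{1}{n^7} < 1+\tfrac{2}{n^7}$. The paper's proof is literally two lines doing just this.

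One notational caveat: the symbol $k$ appearing in the definitions of $r_i$ and $l_i$ in this section is the TCP capacity $\kappa$ (the paper alternates between $k$ and $\kappa$ for it), \emph{not} the ambient dimension. So your step ``$\kappa \le n = k$, hence $f(S_j)+\kappa-1 \le f(S_j)+k-1$'' is a misreading and, taken literally, breaks the argument: if $\kappa < n$ then your weakened bound $\sqrt{(n-1+f(S_j))\delta_u^2 + (\Delta+\tau e(S_j))^2}$ is strictly larger than $r_j/(1+\tfrac{2}{n^7})$, and the final inequality would not follow. The fix is simply to drop that step: since $k=\kappa$, the count ``at most $f(S_j)+\kappa-1$ active coordinates'' already gives
\[
d(C^{(i)},S_j) \;\le\; \sqrt{(k-1+f(S_j))\,\delta_u^2 + (\Delta+\tau\,e(S_j))^2} \;=\; \frac{r_j}{1+\tfrac{2}{n^7}}
\]
directly, and the rest of your argument goes through verbatim.
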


\begin{proof}
    By the assumptions of the lemma, we have that $$d(C^{(i)}, S_j) \leq \sqrt{(k - 1 + f(S_i))\delta_u^2 + \left(\Delta + \tau \cdot e(S_i)\right)^2}.$$ It follows that

    \begin{align*}
        &(1+\epsilon) \cdot d(C^{(i)}, S_j) < r_j \\
        &\Leftarrow \left(1+\frac{1}{n^7}\right) \cdot \sqrt{(k - 1 + f(S_i))\delta_u^2 + \left(\Delta + \tau \cdot e(S_i)\right)^2}\\
        &\hspace{1cm} < \left(1+\frac{2}{n^7}\right)\sqrt{(k-1+f(S_i))\delta_u^2 + \left(\Delta + \tau \cdot e(S_i)\right)^2}
    \end{align*}

    so the initial equation holds as desired.
\end{proof}

We now move on to $F_j$. We start by showing that $F_j$ merges with $L_j$ if $S_j$ merged with $C$.

\begin{lemma}[\cref{item:f_merge_out}] \label{lem:f_merge_out}
    For some unmerged $F_j$, if $[C^{(i)}]_j \in [\delta_l,\delta_u]$ and for the rest of the coordinates $c \neq j$, $[C]_c \in [-\delta_u, \delta_u]$, then $d(C^{(i)}, F_j) > (1+\epsilon) \cdot l_j$.
\end{lemma}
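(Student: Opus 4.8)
The plan is to prove the bound directly, in the same style as \Cref{lem:s_merge_out}: lower-bound the Euclidean distance $d(C^{(i)},F_j)$, then compare it against $(1+\epsilon)\,l_j$ after substituting the chosen parameter values, reducing to a polynomial inequality that holds for all $n$ larger than an absolute constant.

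Write $D_j := \Delta + \tau\cdot e(F_j)$, so that $F_j$ lies at coordinate $-D_j$ on the $j$-th axis and $l_j = \big(1+\tfrac{2}{n^7}\big)\sqrt{(k+f(F_j))\delta_u^2 + D_j^2}$. The first step is to observe that the $j$-th coordinate of $C^{(i)}$ is positive and at least $\delta_l$: the hypothesis fixes $[C^{(i)}]_j \in [\delta_l,\delta_u]$, and this is consistent because, as $F_j$ is unmerged, the only point that could have shifted $C$ along the $j$-th axis is $S_j$, whose merge moves $C$ in the positive direction. Consequently the $j$-th coordinate contributes $\big([C^{(i)}]_j + D_j\big)^2 \ge (\delta_l + D_j)^2$ to $d(C^{(i)},F_j)^2$, and discarding the nonnegative contributions of the remaining coordinates yields
\[
d(C^{(i)},F_j)^2 \;\ge\; (\delta_l + D_j)^2 \;\ge\; D_j^2 + 2\,\delta_l D_j .
\]

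The remaining work is the inequality $D_j^2 + 2\delta_l D_j > (1+\epsilon)^2 l_j^2 = (1+\epsilon)^2\big(1+\tfrac{2}{n^7}\big)^2\big[(k+f(F_j))\delta_u^2 + D_j^2\big]$, i.e., that the slack $2\delta_l D_j$ dominates $\big[(1+\epsilon)^2(1+\tfrac{2}{n^7})^2-1\big]D_j^2 + (1+\epsilon)^2(1+\tfrac{2}{n^7})^2(k+f(F_j))\delta_u^2$. I would then plug in $\epsilon = 1/n^7$, $\tau = 1/n$, $W = n^3$, $\Delta = n^5$, $\delta_l = \Delta/(W+n)$, $\delta_u = (\Delta+2n\tau)/W$, $k = n$, together with the crude bounds $\Delta \le D_j \le \Delta + 2$ (since $e(F_j) \le 2n$) and $f(F_j) \le n$. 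One then checks $2\delta_l D_j = \Theta(\Delta^2/W) = \Theta(n^7)$, whereas the two right-hand terms are $\Theta(1/n^7)\cdot\Theta(n^{10}) = \Theta(n^3)$ and $\Theta(1)\cdot\Theta(n)\cdot\Theta(n^4) = \Theta(n^5)$ respectively; hence the inequality holds once $n$ exceeds a fixed constant. As in \Cref{lem:s_merge_out}, this can be written as a short chain of $\Leftarrow$ implications terminating in a manifestly true polynomial inequality.

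The main (and essentially only) subtlety is the sign used in the first step: without knowing $[C^{(i)}]_j \ge \delta_l > 0$, the $j$-th coordinate could contribute as little as $(D_j - \delta_u)^2 < D_j^2$, wiping out the crucial $2\delta_l D_j$ slack term and making the statement false. Everything else is parameter bookkeeping of the kind already carried out for the $S$-side lemmas, and the parameter choices $W = n^3$, $\Delta = n^5$ are precisely calibrated so that this slack term $2\delta_l D_j \approx 2n^7$ beats the ``extra-direction'' penalty $(k+f(F_j))\delta_u^2 \approx 2n^5$ with room to spare.
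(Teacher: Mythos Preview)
Your approach is correct and essentially identical to the paper's: both lower-bound $d(C^{(i)},F_j)$ by the $j$-th coordinate alone, obtaining $d(C^{(i)},F_j)\ge D_j+\delta_l$, and then reduce the comparison with $(1+\epsilon)l_j$ to a polynomial inequality in $n$ after substituting the parameter values. The only cosmetic difference is that the paper carries out the full polynomial expansion explicitly, whereas you argue by orders of magnitude ($2\delta_l D_j=\Theta(n^7)$ versus $\Theta(n^3)+\Theta(n^5)$ on the right); both routes work.
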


\begin{proof}
    By the assumptions of the lemma, we have that $$d(C^{(i)}, F_j) \geq \Delta + \tau \cdot e(F_i) + \delta_l.$$ It follows that

    \begin{align*}
        &d(C^{(i)}, F_j) > (1+\epsilon) \cdot l_j \\
        &\Leftarrow \Delta + \tau \cdot e(F_i) + \delta_l > (1+\epsilon) \cdot \left(1+\frac{2}{n^7}\right)\sqrt{(k+f(F_i))\delta_u^2 + \left(\Delta + \tau \cdot e(F_i)\right)^2} \\
        &\Leftarrow \Delta + \delta_l > (1+\epsilon) \cdot \left(1+\frac{2}{n^7}\right)\sqrt{(k+f(F_i))\delta_u^2 + (\Delta + 2n \tau)^2} \\
        &\Leftarrow \Delta + \delta_l > (1+\epsilon) \cdot \left(1+\frac{2}{n^7}\right)\sqrt{(2n-1)\delta_u^2 + (\Delta + 2n \tau)^2} \\
        &\Leftarrow \Delta^2 + 2 \Delta\delta_l + \delta_l^2 > (1+\epsilon)^2 \cdot \left(1+\frac{2}{n^7}\right)^2\left((2n-1)\delta_u^2 + (\Delta + 2n \tau)^2 \right) \\
        &\Leftarrow \Delta^2 + \frac{2\Delta^2}{W+n} + \frac{\Delta^2}{(W+n)^2} > (1+\epsilon)^2 \cdot \left(1+\frac{2}{n^7}\right)^2 \left(\frac{(2n-1)(\Delta+2 n \tau)^2}{W^2} +(\Delta + 2n \tau)^2\right) \\
        &\Leftarrow \Delta^2(W+n)^2 W^2 + 2\Delta^2 (W+n) W^2 + \Delta^2 W^2 \\
        &\hspace{1cm} > (1+\epsilon)^2 \cdot \left(1+\frac{2}{n^7}\right)^2 \left((2n-1)(\Delta+2 n \tau)^2 (W+n)^2 + (\Delta + 2n \tau)^2 (W+n)^2 W^2\right) \\
        &\Leftarrow n^{10}(n^3 + n)^2 n^6 + 2n^{10} (n^3 + n) n^6 + n^{10} n^6 \\
        &\hspace{1cm} > \left(1+\frac{1}{n^7}\right)^2 \cdot \left(1+\frac{2}{n^7}\right)^2 \left((2n-1)(n^5 + 2)^2 (n^3 + n)^2 + (n^5+2)^2 (n^3 + n)^2 n^6\right) \\
        &\Leftarrow n^{22} + 2 n^{20} + 2 n^{19} + n^{18} + 2 n^{17} + n^{16} \\
        &\hspace{1cm} > \left(1+O\left(\frac{1}{n^7}\right)\right) \cdot \left(n^{22} + 2 n^{20} + n^{18} + 6 n^{17} - n^{16} + O(n^{15})\right)\\
    \end{align*}

    where the final equation holds for all $n$ greater than some constant. Thus, the initial equation holds as desired.
\end{proof}

We now show that $F_j$ merges with $C$ if $S_j$ did not merge with $C$.

\begin{lemma}[\cref{item:f_merge_in}] \label{lem:f_merge_in}
    For some unmerged $F_j$, if for at most $f(F_j) + \kappa$ coordinates $c$ of $C^{(i)}$, $|[C^{(i)}]_c| \in (\delta_l,\delta_u)$ and for the rest including $c=j$, $[C]_c = 0$, then $(1+\epsilon) \cdot d(C^{(i)}, F_j^{(i)}) < l_j$.
\end{lemma}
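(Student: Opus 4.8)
The plan is to directly upper bound $d(C^{(i)}, F_j)$ and compare it to $l_j$, exactly paralleling the proof of \Cref{lem:s_merge_in}. By hypothesis, $[C^{(i)}]_j = 0$, and in all but at most $f(F_j)+\kappa$ of the remaining coordinates $C^{(i)}$ also vanishes, while in those at most $f(F_j)+\kappa$ coordinates it has absolute value at most $\delta_u$. Since $F_j$ lies only on the $j$th axis at position $-(\Delta + \tau\cdot e(F_j))$ along it, the squared Euclidean distance splits as the contribution $(\Delta + \tau\cdot e(F_j))^2$ from coordinate $j$ plus the contributions $[C^{(i)}]_c^2 \le \delta_u^2$ from each off-center coordinate $c \ne j$, of which there are at most $f(F_j)+\kappa$. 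Hence
\begin{align*}
  d(C^{(i)}, F_j) \le \sqrt{(f(F_j)+\kappa)\,\delta_u^2 + \bigl(\Delta + \tau\cdot e(F_j)\bigr)^2}.
\end{align*}

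Next I would use $\kappa \le n = k$ (the capacity cannot exceed the number of calls, and the reduction lives in $\mathbb{R}^n$), so that $(f(F_j)+\kappa)\,\delta_u^2 \le (k + f(F_j))\,\delta_u^2$, and therefore
\begin{align*}
  d(C^{(i)}, F_j) \le \sqrt{(k+f(F_j))\,\delta_u^2 + \bigl(\Delta + \tau\cdot e(F_j)\bigr)^2}.
\end{align*}
Recalling that $l_j = \bigl(1+\tfrac{2}{n^7}\bigr)\sqrt{(k+f(F_j))\,\delta_u^2 + (\Delta + \tau\cdot e(F_j))^2}$ and $\epsilon = 1/n^7$, it then suffices to observe that $1+\tfrac{1}{n^7} < 1+\tfrac{2}{n^7}$, which gives $(1+\epsilon)\cdot d(C^{(i)}, F_j) < l_j$, as required.

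As with the companion lemmas, there is no real obstacle here. The only point needing a moment's care is the bookkeeping of which coordinates of $C^{(i)}$ are nonzero---namely that coordinate $j$ is \emph{not} among the at most $f(F_j)+\kappa$ off-center ones, so the contribution along the $j$th axis is exactly $(\Delta + \tau\cdot e(F_j))^2$ and the remaining terms number at most $f(F_j)+\kappa$, each at most $\delta_u^2$. The slack of a factor $1+\tfrac{1}{n^7}$ between $(1+\epsilon)\,d(C^{(i)},F_j)$ and $l_j$ is generous, so unlike \Cref{lem:s_merge_out} and \Cref{lem:f_merge_out} no delicate polynomial estimate in $n$ is needed.
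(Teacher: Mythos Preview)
Your proof is correct and follows essentially the same route as the paper: bound $d(C^{(i)},F_j)$ above by $\sqrt{(k+f(F_j))\delta_u^2+(\Delta+\tau\cdot e(F_j))^2}$ using the coordinate hypothesis, then compare with the definition of $l_j$ and use $1+\tfrac{1}{n^7}<1+\tfrac{2}{n^7}$. One small remark: in this section the paper is using $k$ as a synonym for the capacity $\kappa$ (the dimension here is $n$), so your intermediate step invoking $\kappa\le n$ is unnecessary---the bound with $(\kappa+f(F_j))\delta_u^2$ already matches the $(k+f(F_j))\delta_u^2$ appearing in $l_j$ exactly.
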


\begin{proof}
    By the assumptions of the lemma, we have that $$d(C^{(i)}, F_j) < \sqrt{(k+f(F_i))\delta_u^2 + \left(\Delta + \tau \cdot e(F_i)\right)^2}.$$ It follows that

    \begin{align*}
        &(1+\epsilon) \cdot d(C^{(i)}, F_j) < l_j \\
        &\Leftarrow \left(1+\frac{1}{n^7}\right) \cdot \sqrt{(k+f(F_i))\delta_u^2 + \left(\Delta + \tau \cdot e(F_i)\right)^2} < \left(1+\frac{2}{n^7}\right)\sqrt{(k+f(F_i))\delta_u^2 + \left(\Delta + \tau \cdot e(F_i)\right)^2} \\
    \end{align*}

    where the final equation holds for all $n$ so the initial equation holds as desired.
\end{proof}

Next, we want to show that the events merge in order. We start by showing that the distance the distance between an event and its outer point is at least $(1+\epsilon)$ times that of the same distance for any earlier event.

\begin{lemma}[\cref{item:outer_in_order}] \label{lem:outer_in_order}
    Let $E_x, E_y$ be two events with $x < y$. Let $o_x$ and $o_y$ be the $l$ or $r$ value corresponding to $E_x$ and $E_y$ depending on whether they are $S$ or $F$ events. Then $(1 + \epsilon) \cdot o_x < o_y$.
\end{lemma}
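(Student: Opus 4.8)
The plan is to strip away the common structure of $o_x$ and $o_y$ and reduce the claim to an elementary polynomial inequality. First I would record, directly from the definitions of $r_i$ and $l_i$, that the outer distance of an event $E$ is
\[
o_E \;=\; \Bigl(1+\tfrac{2}{n^7}\Bigr)\sqrt{a_E\,\delta_u^2 + \bigl(\Delta + \tau\, e(E)\bigr)^2},
\]
where $a_E = k-1+f(E)$ if $E$ is a start event and $a_E = k+f(E)$ if $E$ is a finish event. Since the prefactor $\bigl(1+\tfrac{2}{n^7}\bigr)$ occurs in both $o_x$ and $o_y$ it cancels, and as all quantities are positive it suffices, after squaring, to prove
\[
(1+\epsilon)^2\bigl(a_{E_x}\delta_u^2 + (\Delta + \tau x)^2\bigr) \;<\; a_{E_y}\delta_u^2 + (\Delta + \tau y)^2,
\]
writing $x = e(E_x) < e(E_y) = y$ and recalling $k = n$ in this construction.

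The first ingredient is the monotonicity $a_{E_y} \ge a_{E_x}$, which I would establish by a four-way case split on whether $E_x,E_y$ are start or finish events, using that $f(\cdot)$, the number of finished calls, is non-decreasing along the event timeline. When $E_x$ is a start event, the start-formula $k-1+f$ is the smaller of the two, so $a_{E_y}\ge k-1+f(E_y)\ge k-1+f(E_x)=a_{E_x}$ regardless of $E_y$'s type. When $E_x=F_i$ is a finish event, call $i$ finishes at $E_x$ and hence is counted by $f(E_y)$ but not by $f(E_x)$, so $f(E_y)\ge f(E_x)+1$; this extra $+1$ exactly absorbs the $-1$ by which the start-formula for $E_y$ would otherwise fall short of the finish-formula for $E_x$, and if $E_y$ is itself a finish event the gain is strict. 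In every case $a_{E_y}\ge a_{E_x}$, so the term $(a_{E_y}-a_{E_x})\delta_u^2 \ge 0$ only helps.

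The second ingredient is the radial gain: since $y \ge x+1$,
\[
(\Delta + \tau y)^2 - (\Delta + \tau x)^2 \;=\; \tau(y-x)\bigl(2\Delta + \tau(x+y)\bigr) \;\ge\; 2\tau\Delta \;=\; 2n^4 .
\]
Against this I must charge only the slack $\bigl((1+\epsilon)^2-1\bigr)\bigl(a_{E_x}\delta_u^2 + (\Delta+\tau x)^2\bigr)$. Using $(1+\epsilon)^2-1 \le 3/n^7$, $a_{E_x} \le k+n = O(n)$, $\delta_u = (\Delta+2n\tau)/W = (n^5+2)/n^3$ so $\delta_u^2 = O(n^4)$, and $(\Delta+\tau x)^2 = O(n^{10})$, this slack is $O(n^3)$. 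Therefore the right side minus the left side of the displayed inequality is at least $2n^4 - O(n^3) > 0$ for all $n$ above an absolute constant (and this section implicitly works with $n$ large), which proves the lemma.

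I expect the only step that is not routine bookkeeping to be the monotonicity $a_{E_y}\ge a_{E_x}$, and within it the finish-then-start case, where one must notice that the $\pm1$ discrepancy between the start- and finish-formulas for $a$ is exactly cancelled by the one extra finished call. Everything else is the same crude $\poly(n)$ estimate already carried out in \Cref{lem:s_merge_out} and \Cref{lem:f_merge_out}, where a $\Theta(n^4)$ term comfortably dominates the $O(n^3)$ error produced by $\epsilon = 1/n^7$.
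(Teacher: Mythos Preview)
Your proof is correct and follows the same underlying strategy as the paper: cancel the common prefactor, square, establish the monotonicity $a_{E_y}\ge a_{E_x}$, and then show that the radial gain from $e(E_y)>e(E_x)$ swamps the $(1+\epsilon)^2-1$ slack. The paper implements this by setting both $a$-coefficients to the common upper bound $2n$ (valid because, after first reducing to a common $a=a_{E_x}$ via $a_{E_x}\le a_{E_y}$, increasing that common value only tightens the inequality) and then carrying out a full polynomial expansion; you instead bound the gain and the slack separately as $2n^4$ versus $O(n^3)$, which is cleaner and avoids the page of arithmetic. One small notational slip: the $k$ in the formulas for $r_i,l_i$ is the TCP capacity $\kappa$, not the ambient dimension, so ``$k=n$'' is not literally true---but since $\kappa\le n$, your bound $a_{E_x}\le k+n=O(n)$ is unaffected.
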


\begin{proof}

    First, note that $f(E_x) \leq f(E_y)$ and if $E_x$ is an $F$ event the $f(E_x) \leq f(E_y) + 1$. Also, $e(E_X) \leq e(E_y) + 1$. It follows that
    
    \begin{align*}
        &(1 + \epsilon) \cdot o_x < o_y \\
        &\Leftarrow (1+\epsilon) \cdot \left(1+\frac{2}{n^7}\right)\sqrt{2n\delta_u^2 + \left(\Delta + \tau \cdot e(E_x)\right)^2} < \left(1+\frac{2}{n^7}\right)\sqrt{2n\delta_u^2 + \left(\Delta + \tau \cdot e(E_y)\right)^2} \\
        &\Leftarrow (1+\epsilon) \cdot \sqrt{2n\delta_u^2 + \left(\Delta + \tau \cdot e(E_x)\right)^2} < \sqrt{2n\delta_u^2 + \left(\Delta + \tau \cdot e(E_y)\right)^2} \\
        &\Leftarrow (1+\epsilon) \cdot \sqrt{2n\delta_u^2 + \left(\Delta + (2n-1)\tau\right)^2} < \sqrt{2n\delta_u^2 + \left(\Delta + 2n\tau \right)^2} \\
        &\Leftarrow (1+\epsilon)^2 \cdot \left(2n\delta_u^2 + \left(\Delta + (2n-1)\tau\right)^2\right) < 2n\delta_u^2 + \left(\Delta + 2n\tau \right)^2 \\
        &\Leftarrow (1+\epsilon)^2 \cdot \left(\frac{2n(\Delta + 2 n \tau)^2}{W^2} + \left(\Delta + (2n-1)\tau\right)^2\right) < \frac{2n(\Delta + 2 n \tau)^2}{W^2} + \left(\Delta + 2n\tau \right)^2 \\
        &\Leftarrow (1+\epsilon)^2 \cdot \left(2n(\Delta + 2 n \tau)^2 + \left(\Delta + (2n-1)\tau\right)^2 W^2\right)\\
        &\hspace{1cm} < \left(2n(\Delta + 2 n \tau)^2 + \left(\Delta + 2n\tau\right)^2 W^2\right) \\
        &\Leftarrow \left(1+\frac{1}{n^7}\right)^2 \cdot \left(2n(n^5 + 2)^2 + \left(n^{5} + 2 - \frac{1}{n}\right)^2 n^{6}\right)\\
        &\hspace{1cm} < \left(2n(n^{5} + 2)^2 + \left(n^{5} + 2 \right)^2 n^{6}\right) \\
        &\Leftarrow \left(1+O\left(\frac{1}{n^7}\right)\right) \cdot \left(n^{16} + 6 n^{11} - 2 n^{10} + 12 n^6 - 4 n^5 + n^4 + 8 n\right)\\
        &\hspace{1cm} < n^{16} + 6 n^{11} + 12 n^6 + 8 n \\
    \end{align*}

    where the final equation holds for all $n$ greater than some constant. Thus, the initial equation holds as desired.
\end{proof}

We now prove the same thing except for the distance between an event and $C$.

\begin{lemma}[\cref{item:inner_in_order}] \label{lem:inner_in_order}
    Let $E_x, E_y$ be two events with $x < y$. If for all coordinates $c$ of $C^{(i)}$, $[C^{(i)}]_c \in [-\delta_u,\delta_u]$ and $[C^{(i)}]_c = 0$ for the coordinates of $E_x$ and $E_y$, then $(1+\epsilon) \cdot d(C^{(i)}, E_x) < d(C^{(i)}, E_y)$.
\end{lemma}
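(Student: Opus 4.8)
The plan is to reduce the claim to a routine parameter inequality after one crucial cancellation. Recall that each event $E_x$ is placed on a single coordinate axis, say the $j_x$-th, at signed coordinate $\pm(\Delta+\tau\cdot e(E_x))$ (positive if $E_x$ is an $S$ event, negative if an $F$ event), all of its other coordinates being $0$; likewise $E_y$ sits on the $j_y$-th axis at $\pm(\Delta+\tau\cdot e(E_y))$. Since the hypothesis gives $[C^{(i)}]_{j_x}=[C^{(i)}]_{j_y}=0$, expanding the squared Euclidean distance coordinate by coordinate gives
\begin{align*}
    d(C^{(i)},E_x)^2 = \big(\Delta+\tau\cdot e(E_x)\big)^2 + \|C^{(i)}\|^2, \qquad
    d(C^{(i)},E_y)^2 = \big(\Delta+\tau\cdot e(E_y)\big)^2 + \|C^{(i)}\|^2,
\end{align*}
because removing the (zero) coordinates of $C^{(i)}$ along the axes of $E_x$ and $E_y$ leaves $\|C^{(i)}\|$ unchanged.

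Next I would note that, since $\|C^{(i)}\|^2$ appears identically on both sides, the desired inequality $(1+\epsilon)^2\, d(C^{(i)},E_x)^2 < d(C^{(i)},E_y)^2$ is equivalent to
\begin{align*}
    \big((1+\epsilon)^2-1\big)\|C^{(i)}\|^2 + (1+\epsilon)^2\big(\Delta+\tau\cdot e(E_x)\big)^2 < \big(\Delta+\tau\cdot e(E_y)\big)^2.
\end{align*}
Because $x<y$ the events are totally ordered, so $e(E_x)\le e(E_y)-1$, whence $(\Delta+\tau\cdot e(E_y))^2 \ge (\Delta+\tau\cdot e(E_x))^2 + 2\tau\Delta$; it therefore suffices to show $\big((1+\epsilon)^2-1\big)\big(\|C^{(i)}\|^2 + (\Delta+\tau\cdot e(E_x))^2\big) < 2\tau\Delta$. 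I would then bound the left side using the hypothesis: $[C^{(i)}]_c\in[-\delta_u,\delta_u]$ across all $n$ coordinates gives $\|C^{(i)}\|^2\le n\delta_u^2$, and $\Delta+\tau\cdot e(E_x)\le \Delta+2n\tau$, while $(1+\epsilon)^2-1\le 3\epsilon$. Substituting $\epsilon=1/n^7$, $\tau=1/n$, $W=n^3$, $\Delta=n^5$ (so $\delta_u=(n^5+2)/n^3$, hence $n\delta_u^2=O(n^5)$ and $(\Delta+2n\tau)^2=O(n^{10})$), the left side is $O(n^3)$ while $2\tau\Delta=2n^4$, so the inequality holds for all $n$ above an absolute constant, in line with the companion lemmas of this section.

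The step I expect to be the (mild) obstacle is the cancellation above rather than the final arithmetic. The ``off-center mass'' $\|C^{(i)}\|^2$ of the drifting point $C^{(i)}$ can be as large as $\Theta(n^5)$, which by itself would swamp the $\Theta(n^4)$ gap between the distances from $C^{(i)}$ to two consecutive events. It is only because the hypothesis forbids $C^{(i)}$ from having drifted along the axes of $E_x$ or $E_y$ that this mass enters both squared distances identically and is thus multiplied only by the negligible factor $(1+\epsilon)^2-1=O(1/n^7)$. Once that is recognized, the remainder is the same sort of parameter bookkeeping as in \Cref{lem:s_merge_out,lem:f_merge_out,lem:outer_in_order}.
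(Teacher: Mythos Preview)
Your proposal is correct and follows essentially the same approach as the paper: both use that, because $[C^{(i)}]$ vanishes on the axes of $E_x$ and $E_y$, the squared distances share an identical off-axis term (at most $n\delta_u^2$), so the comparison reduces to a parameter check between $((1+\epsilon)^2-1)\cdot O(n^{10})$ and the $\Theta(n^4)$ gap $2\tau\Delta$. The paper carries this out via a longer brute-force chain of polynomial substitutions, while you isolate the cancellation explicitly; the content is the same.
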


\begin{proof}
    
    \begin{align*}
        &(1+\epsilon) \cdot d(C^{(i)}, E_x) < d(C^{(i)}, E_y) \\
        &\Leftarrow (1+\epsilon) \cdot \sqrt{n\delta_u^2 + \left(\Delta + \tau \cdot e(E_x)\right)^2} < \sqrt{n\delta_u^2 + \left(\Delta + \tau \cdot e(E_y)\right)^2} \\
        &\Leftarrow (1+\epsilon) \cdot \sqrt{n\delta_u^2 + \left(\Delta + \tau \cdot e(E_x)\right)^2} < \sqrt{n\delta_u^2 + \left(\Delta + \tau \cdot (e(E_x) + 1)\right)^2} \\
        &\Leftarrow (1+\epsilon)^2 \cdot \left(n\delta_u^2 + \left(\Delta + \tau \cdot e(E_x)\right)^2\right) < n\delta_u^2 + \left(\Delta + \tau \cdot (e(E_x) +1)\right)^2 \\
        &\Leftarrow (1+\epsilon)^2 \cdot \left(\frac{n(\Delta + 2 n \tau)^2}{W^2} + \left(\Delta + \tau \cdot e(E_x)\right)^2\right) < \frac{n(\Delta + 2 n \tau)^2}{W^2} + \left(\Delta + \tau \cdot (e(E_x) + 1)\right)^2 \\
        &\Leftarrow (1+\epsilon)^2 \cdot \left(n(\Delta + 2 n \tau)^2 + W^2 \left(\Delta + \tau \cdot e(E_x)\right)^2\right)\\
        &\hspace{1cm} < n(\Delta + 2 n \tau)^2 + W^2 \left(\Delta + \tau \cdot (e(E_x) + 1)\right)^2 \\
        &\Leftarrow \left(1+\frac{1}{n^7}\right)^2 \cdot \left(n(n^5 + 2)^2 + n^6 \left(n^5 + e(E_x)/n\right)^2\right)\\
        &\hspace{1cm} < n(n^5 + 2)^2 + n^6 \left(n^5 + (e(E_x)+ 1)/n\right)^2 \\
        &\Leftarrow \left(1+O\left(\frac{1}{n^7}\right)\right) \cdot \left(n^{16} + n^{11} + 2 n^{10} e(E_x) + 4 n^6 + n^4 e(E_x)^2 + 4 n\right)\\
        &\hspace{1cm} < n^{16} + n^{11} + 2 n^{10} e(E_x) + 2n^{10} + 4 n^6 + n^4 e(E_x)^2 + 2 n^4 e(E_x) + n^4 + 4 n \\
    \end{align*}

    where the final equation holds for all $n$ greater than some constant since $1 \leq e(E_x) \leq 2n$. Thus, the initial equation holds
as desired.
\end{proof}

Lastly, we show that all merges will be between an event and either its outer point or the center.

\begin{lemma}[\cref{item:no_bad_merges}] \label{lem:no_bad_merges}
    If there is some active event, for all coordinates $c$ of $C^{(i)}$, $[C^{(i)}]_c \in [-\delta_u,\delta_u]$, and each outer point has at most merged with its event, then $(1+\epsilon)$-approximate HAC will merge an unmerged event with $C$ or its outer point.
\end{lemma}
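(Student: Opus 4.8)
The plan is to rule out every ``bad'' merge by a direct comparison of pairwise distances. Call a merge \emph{permitted} if it joins an unmerged event ($S_j$ or $F_j$) with either $C^{(i)}$ or that event's outer point ($R_j$ or $L_j$), and \emph{forbidden} otherwise. I will show that the global closest pair is permitted and that it is shorter than every forbidden pair by more than a factor $1+\epsilon = 1 + 1/n^7$; then any $(1+\epsilon)$-approximate HAC is forced to perform a permitted merge, which is exactly the claim.

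First I would record the geometric picture. By hypothesis $C^{(i)}$ lies in the box $[-\delta_u,\delta_u]^n$ around the origin, and since in the reduction events only ever merge with $C$ or their own outer points (and each outer point has at most merged with its own event), every cluster other than $C^{(i)}$ lies entirely on a single coordinate axis. Its signed coordinate on that axis has absolute value at least $\Delta$: an unmerged event sits at $\pm(\Delta+\tau e)$; a bare outer point at magnitude $\Delta + r_j \ge 2\Delta$ (using $r_j \ge \Delta + \tau e(S_j)$); and a merged event--outer cluster has its centroid at magnitude $\Delta + r_j/2 \ge \tfrac32\Delta$, with the analogous statement on the negative side via $l_j$. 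Crucially $\delta_u = (\Delta + 2n\tau)/W = \Theta(n^2)$ is far below $\Delta = n^5$.

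Next I would bound the forbidden distances below and exhibit a short permitted one. Two clusters on distinct axes, at signed positions $p,q$ with $|p|,|q| \ge \Delta$, are orthogonal offsets from the origin, hence at distance $\sqrt{p^2+q^2}\ge \sqrt 2\,\Delta$. Two clusters on the same axis but opposite sides are at distance $\ge 2\Delta$. A cluster on axis $j$ that is a bare outer point or a merged event--outer cluster is at distance $\ge \tfrac32\Delta - \delta_u > \sqrt2\,\Delta$ from $C^{(i)}$ for $n$ past a constant. The only same-axis same-side pair that can coexist is a bare event together with its own bare outer point, which is permitted; likewise $C^{(i)}$ paired with a bare event is permitted. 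So every forbidden pair has distance $>\sqrt2\,\Delta$. For the permitted side, I would use the active event $E$ guaranteed by the hypothesis: being unmerged and non-dormant, $E$ is a singleton and its outer point $o_E$ is still a singleton at its original position, so the permitted merge $\{E,o_E\}$ is available with value exactly the parameter $o_E\in\{r_j,l_j\}$. Bounding $k+f(E)\le 2n$, $\Delta+\tau e(E)\le \Delta+2n\tau$, and plugging in $\delta_u=(\Delta+2n\tau)/W$, a one-line estimate gives $o_E \le (1+O(1/n^5))\Delta < 1.01\,\Delta$ for large $n$. Hence the global closest-pair value is at most $1.01\,\Delta$, attained by a permitted pair, and the merge chosen by $(1+\epsilon)$-approximate HAC has value at most $(1+1/n^7)\cdot 1.01\,\Delta < \sqrt2\,\Delta$, so it must be permitted.

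The only part needing genuine care is the exhaustive case check behind the forbidden lower bound: one must verify that no configuration of the axis-$j$ clusters (bare event / bare outer point / merged event--outer, on either side of the origin) produces a forbidden pair closer than $\sqrt2\,\Delta$, and in particular that the displacement of $C^{(i)}$---nonzero in up to $n$ coordinates---never pulls a forbidden distance below that threshold. This is exactly where the large gap between $\delta_u=\Theta(n^2)$ and $\Delta=n^5$ is used, and where the structural fact that every non-$C^{(i)}$ cluster is confined to a single axis (so forbidden inter-event distances are $\Theta(\Delta)$ rather than $\Theta(\Delta/\sqrt n)$) is essential; no step is conceptually hard, but the bookkeeping over cluster shapes must be done carefully.
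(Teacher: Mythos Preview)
Your proposal is correct and follows essentially the same approach as the paper: exhibit a short permitted pair (the active event with its outer point, of value $\approx\Delta$), and show every forbidden pair---$C^{(i)}$ with an outer/merged cluster, or two clusters belonging to different calls---has distance at least a constant multiple of $\Delta$, which swamps the $(1+\epsilon)$ slack. The paper carries this out with explicit algebraic chains plugging in the numeric parameters, whereas you use the cleaner geometric observation that non-$C^{(i)}$ clusters sit on single axes at magnitude $\geq\Delta$ (with outer or merged clusters at $\geq\tfrac{3}{2}\Delta$), but the two arguments are the same in substance.
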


\begin{proof}
    Consider some event $E$ and its outer point $O_E$. Then $$d(E,O_E) \leq \left(1+\frac{2}{n^7}\right)\sqrt{2n \delta_u^2 + (\Delta + 2n \tau)^2}.$$

    Let $F$ and $G$ be two events. The type of merges we want to rule out are those between

    \begin{enumerate}
        \item $C^{(i)}$ and $O_F$
        \item $X_F$ and $X_G$ where $X_F$ is either $F$ or $O_F$ and $X_G$ is either $G$ or $O_G$
    \end{enumerate}

    First we will rule out the first type of merge. By assumption, we have that $$d\left(C^{(i)}, O_F\right) \geq \left( \frac{W}{W+1} \right)(2 \Delta - \delta_u).$$

    We then want to show that

    \begin{align*}
        &(1 + \epsilon) \cdot d(E,O_E) < d\left(C^{(i)}, O_F\right) \\
        &\Leftarrow (1+\epsilon) \cdot \left(1+\frac{2}{n^7}\right)\sqrt{2n \delta_u^2 + (\Delta + 2n \tau)^2} < \left( \frac{W}{W+1} \right) (2 \Delta - \delta_u) \\
        &\Leftarrow (1+\epsilon)^2 \cdot \left(1+\frac{2}{n^7}\right)^2\left(2n \delta_u^2 + (\Delta + 2n \tau)^2\right) < \left( \frac{W}{W+1} \right)^2(2 \Delta - \delta_u)^2 \\
        &\Leftarrow (1+\epsilon)^2 \cdot \left(1+\frac{2}{n^7}\right)^2\left(2n \delta_u^2 + \Delta^2 + 4n \tau \Delta + 4n^2 \tau^2 \right) < \left( \frac{W}{W+1} \right)^2 (4 \Delta^2 - 4 \Delta \delta_u + \delta_u^2) \\
        &\Leftarrow (1+\epsilon)^2 \cdot \left(1+\frac{2}{n^7}\right)^2\left(\frac{2n(\Delta + 2n\tau)^2}{W^2} + \Delta^2 + 4n \tau \Delta + 4n^2 \tau^2 \right)\\
        &\hspace{1cm} < \left( \frac{W}{W+1} \right)^2 \left(4 \Delta^2 - \frac{4\Delta(\Delta + 2n\tau)}{W} + \frac{(\Delta + 2n\tau)^2}{W^2} \right) \\
        &\Leftarrow (1+\epsilon)^2 \cdot \left(1+\frac{2}{n^7}\right)^2\left(2n(\Delta + 2n\tau)^2 + \Delta^2 W^2 + 4n \tau \Delta W^2 + 4n^2 \tau^2 W^2 \right)\\
        &\hspace{1cm} < \left( \frac{W}{W+1} \right)^2 (4 \Delta^2 W^2 - 4\Delta(\Delta + 2n\tau)W + (\Delta + 2n\tau)^2) \\
        &\Leftarrow \left(1+\frac{1}{n^7}\right)^2 \cdot \left(1+\frac{2}{n^7}\right)^2 \left(2n(n^5 + 2)^2 + n^{10} n^6 + 4 n^5 n^6 + 4 n^6 \right)\\
        &\hspace{1cm} < \left( \frac{n^3}{n^3+1} \right)^2(4 n^{10} n^6 - 4n^5(n^5 + 2) n^3 + (n^5 + 2)^2) \\
        &\Leftarrow \left(1+O\left(\frac{1}{n^7}\right)\right) \cdot \left(n^{16} + 6 n^{11} + 12 n^6 + 8 n \right) < \left( \frac{n^3}{n^3+1} \right)^2(4 n^{16} - 4 n^{13} + n^{10} - 8 n^8 + 4 n^5 + 4) \\
    \end{align*}

    where the final equation holds for all n greater than some constant. Thus, the initial equation holds as desired.

    Next we will rule out the second type of merge. By assumption, we have that $$d\left(X_F, X_G\right) \geq \sqrt{2 \Delta^2}.$$

    It is enough to show that

    \begin{align*}
        &(1 + \epsilon) \cdot d(E,O_E) < d\left(X_F, X_G\right)\\
        &\Leftarrow (1+\epsilon) \cdot \left(1+\frac{2}{n^7}\right)\sqrt{2n \delta_u^2 + (\Delta + 2n \tau)^2} < \sqrt{2 \Delta^2} \\
        &\Leftarrow (1+\epsilon)^2 \cdot \left(1+\frac{2}{n^7}\right)^2 \left(2n \delta_u^2 + (\Delta + 2n \tau)^2\right) < 2 \Delta^2 \\
        &\Leftarrow (1+\epsilon)^2 \cdot \left(1+\frac{2}{n^7}\right)^2\left(2n \delta_u^2 + \Delta^2 + 4n \tau \Delta + 4n^2 \tau^2 \right) < 2 \Delta^2 \\
        &\Leftarrow (1+\epsilon)^2 \cdot \left(1+\frac{2}{n^7}\right)^2 \left(\frac{2n(\Delta + 2n\tau)^2}{W^2} + \Delta^2 + 4n \tau \Delta + 4n^2 \tau^2 \right) < 2 \Delta^2 \\
        &\Leftarrow (1+\epsilon)^2 \cdot \left(1+\frac{2}{n^7}\right)^2 \left(2n(\Delta + 2n\tau)^2 + \Delta^2 W^2 + 4n \tau \Delta W^2 + 4n^2 \tau^2 W^2 \right) < 2 \Delta^2 W^2 \\
        &\Leftarrow \left(1+\frac{1}{n^7}\right)^2 \cdot \left(1+\frac{2}{n^7}\right)^2 \left(2n(n^5 + 2)^2 + n^{10} n^6 + 4 n^5 n^6 + 4 n^6 \right) < 2 n^{10} n^6 \\
        &\Leftarrow \left(1+O\left(\frac{1}{n^7}\right)\right) \cdot \left(n^{16} + 6 n^{11} + 12 n^6 + 8 n \right) < 2 n^{16} \\
    \end{align*}

    where again the final equation holds for all n greater than some constant. Thus, the initial equation holds as desired.

\end{proof}

We now put everything together to prove our main hardness result.

\ccHardness*

\begin{proof}
    We have set $\Delta, W$, $\tau$, $r_i$ and $l_i$ so that they are all logspace computable. Thus, \Cref{lem:hardness_requirements} along with \Cref{lem:s_merge_out}, \Cref{lem:s_merge_in}, \Cref{lem:f_merge_out}, \Cref{lem:f_merge_out}, \Cref{lem:outer_in_order}, \Cref{lem:inner_in_order}, and \Cref{lem:no_bad_merges} prove the theorem.
\end{proof}

\clearpage

\bibliography{main}
\clearpage

\appendix

\section{Proofs from \texorpdfstring{\Cref{sec:packablelinkage}}{Section 3}}\label{sec:DefProofProps}
In this section we give deferred proofs from \Cref{sec:packablelinkage}.
\subsection{Packing Points Proof}
We let $V_k(r)$ give the volume of a radius $r$ ball in $k$ dimensions and use the following well-known closed-form of $V_k(r)$. 
\begin{lemma}[\cite{smith1989small}]\label{lem:ballVol}
    The volume of a radius $r$ ball in $k$ dimensions is
    \begin{align*}
        V_k(r) = \frac{r^k\pi^{k/2}}{\Gamma(\frac{k}{2} + 1)}
    \end{align*}
    where $\Gamma$ is Euler's gamma function.
\end{lemma}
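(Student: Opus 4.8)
The plan is to reduce to the unit ball and then evaluate its volume by the classical Gaussian-integral trick. First, the change of variables $x \mapsto rx$ scales $k$-dimensional volume by $r^k$, so it suffices to show $V_k(1) = \pi^{k/2}/\Gamma(k/2+1)$; write $V_k := V_k(1)$ and let $S_{k-1}$ denote the surface area of the unit sphere $\mathbb{S}^{k-1}$. Differentiating the identity $V_k(r) = V_k r^k$ in $r$ gives the standard relation $S_{k-1} = k V_k$, so it is enough to pin down either quantity.

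Next I would evaluate $I := \int_{\mathbb{R}^k} e^{-\|x\|^2}\, dx$ in two ways. On one hand, by Fubini's theorem $I = \left(\int_{\mathbb{R}} e^{-t^2}\, dt\right)^k = \pi^{k/2}$, using the one-dimensional Gaussian integral $\int_{\mathbb{R}} e^{-t^2}\, dt = \sqrt{\pi}$ (itself proved by the usual polar-coordinates argument in the plane). On the other hand, integrating in spherical shells --- i.e., writing Lebesgue measure on $\mathbb{R}^k \setminus \{0\}$ as $\rho^{k-1}\, d\rho\, d\omega$ with $\rho = \|x\|$ and $d\omega$ the surface measure on $\mathbb{S}^{k-1}$ --- gives $I = S_{k-1}\int_0^\infty e^{-\rho^2}\rho^{k-1}\, d\rho$. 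Substituting $u = \rho^2$ turns the radial integral into $\tfrac12\int_0^\infty e^{-u} u^{k/2-1}\, du = \tfrac12\Gamma(k/2)$ by the definition of the Gamma function.

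Combining the two evaluations, $\pi^{k/2} = S_{k-1}\cdot\tfrac12\Gamma(k/2) = kV_k\cdot\tfrac12\Gamma(k/2) = V_k\cdot\Gamma(k/2+1)$, where the last step uses the functional equation $\Gamma(s+1) = s\,\Gamma(s)$ with $s = k/2$. Hence $V_k = \pi^{k/2}/\Gamma(k/2+1)$ and $V_k(r) = r^k\pi^{k/2}/\Gamma(k/2+1)$, as claimed. (An alternative is induction on $k$ via slicing, $V_k(1) = V_{k-1}(1)\int_{-1}^{1}(1-t^2)^{(k-1)/2}\, dt$, together with a Beta-function evaluation of the last integral and the base case $V_1(1) = 2$; the Gaussian route is cleaner since it avoids juggling Beta/Gamma identities.) The one genuinely nontrivial ingredient is the spherical-shell change of variables together with the identity $S_{k-1} = kV_k$ --- essentially polar coordinates / the coarea formula in $\mathbb{R}^k$ --- and this is where I expect the only real care is needed, since it is where the dimension-dependence of the measure enters; everything else is bookkeeping with the Gamma function.
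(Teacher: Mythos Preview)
Your argument is correct and is the standard Gaussian-integral derivation of the volume formula. Note, however, that the paper does not actually prove this lemma: it is stated as a well-known fact with a citation and then used immediately in the proof of the packing theorem. So there is no ``paper's own proof'' to compare against here; you have simply supplied a self-contained proof where the paper chose to cite one.
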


We then have our packing theorem, as follows.
\packPoints*
\begin{proof}
    Let $V_k = V_k(1)$ be the volume of a radius $1$ ball in $\mathbb{R}^k$. By \Cref{lem:ballVol} we have that for any $x \in \mathbb{R}^k$ and $R \geq 0$ that the volume of the radius $R$ ball centered at $x$ is
    \begin{align*}
        \text{Vol}(B(x, R)) = R^k \cdot V_k.
    \end{align*}
    Let $\mcB = \{B(p,r/3) : p \in \mcP \}$ be all balls of radius $r/3$ centered at points of $\mcP$. Observe that the intersection of any two balls of $\mcB$ is empty but each ball of $\mcB$ is contained in $B(x, R)$ and so
    \begin{align*}
        \left(\frac{r}{3}\right)^k \cdot V_k \cdot |P| = \sum_{B \in \mcB}\text{Vol}(B) \leq \text{Vol}(B(x, R)) = R^k \cdot V_k.
    \end{align*}
    Solving for $|P|$ we get
    \begin{align*}
        |P| \leq \left(\frac{3R}{r}\right)^k = \left(\frac{R}{r}\right)^{O(k)}
    \end{align*}
    as required.
\end{proof}

\subsection{Alternate Ward's Form Proofs}

Given $C \subseteq \mathbb{R}^k$, we let $\Delta(C,x) := \sum_{c \in C}\|c-x \|^2$ denote the sum of squared distances from each point in cluster $C$ to some arbitrary point $x$. Then, we have the following identity.
\begin{lemma}\label{lem:ident_sos}
    $\Delta(C,x) = \Delta(C)+|C|\|x-\mu(C)\|^2$.
\end{lemma}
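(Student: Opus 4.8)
The identity is the standard ``parallel axis'' / bias--variance decomposition, so the plan is a direct expansion around the centroid. First I would write, for each $c \in C$,
\begin{align*}
    \|c - x\|^2 = \|(c - \mu(C)) + (\mu(C) - x)\|^2 = \|c - \mu(C)\|^2 + 2\langle c - \mu(C),\, \mu(C) - x\rangle + \|\mu(C) - x\|^2,
\end{align*}
using the bilinearity of the inner product. Then I would sum this over all $c \in C$.

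The three resulting sums are handled as follows. The first sums to $\sum_{c \in C}\|c - \mu(C)\|^2 = \Delta(C)$ by definition. The third sums to $|C| \cdot \|\mu(C) - x\|^2 = |C|\,\|x - \mu(C)\|^2$ since the summand does not depend on $c$ and $\|\cdot\|$ is symmetric in its argument. The key (and only substantive) step is that the middle cross term vanishes: pulling the fixed vector $\mu(C) - x$ out of the sum gives $2\langle \sum_{c \in C}(c - \mu(C)),\, \mu(C) - x\rangle$, and $\sum_{c \in C}(c - \mu(C)) = \sum_{c \in C} c - |C|\,\mu(C) = \mathbf{0}$ directly from the definition $\mu(C) = \frac{1}{|C|}\sum_{c \in C} c$. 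Combining the three pieces yields $\Delta(C,x) = \Delta(C) + |C|\,\|x - \mu(C)\|^2$, as claimed.

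There is no real obstacle here; the only point requiring any care is to invoke the defining property of the centroid to kill the cross term, which is what makes the decomposition work. Everything else is routine algebra with the Euclidean norm, so the write-up can be kept to a few lines.
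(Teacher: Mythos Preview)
Your proof is correct and follows essentially the same approach as the paper: both expand each summand around $\mu(C)$ via the inner product, sum, and kill the cross term using $\sum_{c\in C}(c-\mu(C))=0$. The only cosmetic difference is that the paper writes the decomposition as $\|(x-\mu(C))-(y-\mu(C))\|^2$ rather than your $\|(c-\mu(C))+(\mu(C)-x)\|^2$, which is the same thing up to sign and variable naming.
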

\begin{proof}
We have
\begin{align*}
    \Delta(C,x) &= \sum_{y\in C}\|x-y\|^2\\
     &= \sum_{y\in C} \|(x-\mu(C))-(y-\mu(C))\|^2\\
     &= |C|\|x-\mu(C)\|^2 + \Delta(C)-\sum_{y\in C}\langle x-\mu(C),y-\mu(C) \rangle\\
     &= |C|\|x-\mu(C)\|^2 + \Delta(C)-\langle x-\mu(C),\sum_{y\in C}y-|C|\mu(C) \rangle\\
     &= |C|\|x-\mu(C)\|^2 + \Delta(C)
\end{align*}
as required.
\end{proof}

% \begin{lemma}\label{lem:centroid_property_distance}
%     When clusters $A$ and $B$ are merged, the centroid $\mu(A\cup B)$ lies on the line joining $\mu(A)$ and $\mu(B)$, and
%     $\|\mu(A)-\mu(A\cup B)\|=\frac{|B|}{|A|+|B|}\|\mu(A)-\mu(B)\|$.
% \end{lemma}
\noindent Using the above lemma we can get the the following alternate form for Ward's.

\begin{restatable}[Alternate Ward's]{lemma}{altWardForm}\label{lem:wardAlt}
$\dWard(A,B) = \frac{|A||B|}{|A|+|B|}\|\mu(A)-\mu(B)\|^2$.
\end{restatable}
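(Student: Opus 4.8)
The plan is to prove the identity $\dWard(A,B) = \frac{|A||B|}{|A|+|B|}\|\mu(A)-\mu(B)\|^2$ directly from the definition $\dWard(A,B) = \Delta(A\cup B) - \Delta(A) - \Delta(B)$, using \Cref{lem:ident_sos} as the key tool. The core observation is that $\Delta(A \cup B)$ can be rewritten by splitting the sum over $A \cup B$ into a sum over $A$ plus a sum over $B$, and then applying \Cref{lem:ident_sos} to each piece with $x = \mu(A\cup B)$.

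First I would write $\Delta(A\cup B) = \sum_{a \in A}\|a - \mu(A\cup B)\|^2 + \sum_{b \in B}\|b - \mu(A\cup B)\|^2 = \Delta(A, \mu(A\cup B)) + \Delta(B, \mu(A\cup B))$. Then by \Cref{lem:ident_sos} applied twice, this equals $\Delta(A) + |A|\,\|\mu(A\cup B) - \mu(A)\|^2 + \Delta(B) + |B|\,\|\mu(A\cup B) - \mu(B)\|^2$. Subtracting $\Delta(A)$ and $\Delta(B)$ gives
\begin{align*}
\dWard(A,B) = |A|\,\|\mu(A\cup B) - \mu(A)\|^2 + |B|\,\|\mu(A\cup B) - \mu(B)\|^2.
\end{align*}

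Next I would substitute the explicit formula for the merged centroid, $\mu(A\cup B) = \frac{|A|\mu(A) + |B|\mu(B)}{|A|+|B|}$, which gives $\mu(A\cup B) - \mu(A) = \frac{|B|}{|A|+|B|}(\mu(B) - \mu(A))$ and symmetrically $\mu(A\cup B) - \mu(B) = \frac{|A|}{|A|+|B|}(\mu(A) - \mu(B))$. Plugging these in, the right-hand side becomes $|A|\cdot\frac{|B|^2}{(|A|+|B|)^2}\|\mu(A)-\mu(B)\|^2 + |B|\cdot\frac{|A|^2}{(|A|+|B|)^2}\|\mu(A)-\mu(B)\|^2$, and factoring out $\frac{|A||B|}{(|A|+|B|)^2}\|\mu(A)-\mu(B)\|^2$ leaves $|B| + |A| = |A|+|B|$ in the bracket, yielding $\frac{|A||B|}{|A|+|B|}\|\mu(A)-\mu(B)\|^2$ as desired.

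This proof is essentially a routine computation; there is no real obstacle. The only point requiring a little care is the very first step — recognizing that $\Delta(A\cup B)$ splits into $\Delta(A, \mu(A\cup B)) + \Delta(B, \mu(A\cup B))$ so that \Cref{lem:ident_sos} (with the "external point" $x$ taken to be the centroid of the \emph{other} cluster's union) can be applied. After that, everything is algebra with the centroid formula and the homogeneity of the norm.
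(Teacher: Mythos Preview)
Your proof is correct and follows essentially the same approach as the paper: split $\Delta(A\cup B)$ into the contributions from $A$ and $B$, apply \Cref{lem:ident_sos} to each with $x=\mu(A\cup B)$, then substitute the explicit centroid formula and simplify. The paper's write-up is slightly terser (it invokes the collinearity of $\mu(A),\mu(A\cup B),\mu(B)$ rather than writing out the weighted-average formula), but the computation is identical.
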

% \begin{lemma}\label{lem:wards_alt_form_centroids}
%     $\dWard(A,B) = \frac{|A||B|}{|A|+|B|}\|\mu(A)-\mu(B)\|^2$.
% \end{lemma}
\begin{proof}
By \Cref{lem:ident_sos}, we have
\begin{align*}
    \dWard(A,B) &= \Delta(A\cup B)-\Delta(A)-\Delta(B)\\
     &= \Delta(A)+|A|\|\mu(A)-\mu(A\cup B)\|^2+\Delta(B)+|B|\|\mu(B)-\mu(A\cup B)\|^2-\Delta(A)-\Delta(B)\\
     &= |A|\|\mu(A)-\mu(A\cup B)\|^2+|B|\|\mu(B)-\mu(A\cup B)\|^2\\
     &= \frac{|A||B|^2}{(|A|+|B|)^2}\|\mu(A)-\mu(B)\|^2 + \frac{|B||A|^2}{(|A|+|B|)^2}\|\mu(A)-\mu(B)\|^2\\
     &=\frac{|A||B|}{|A|+|B|}\|\mu(A)-\mu(B)\|^2,
\end{align*}
where the second equality follows by \Cref{lem:ident_sos}, and the fourth equality follows by the fact that when clusters $A$ and $B$ are merged, the centroid $\mu(A\cup B)$ lies on the line joining $\mu(A)$ and $\mu(B)$, and $\|\mu(A)-\mu(A\cup B)\|=\frac{|B|}{|A|+|B|}\|\mu(A)-\mu(B)\|$. 
\end{proof}

We now prove the $2$-approximation for Ward's.
\wardAPX*
\begin{proof}
WLOG, assume $|B|\le|A|$ and apply \Cref{lem:wardAlt}. Then, the left-hand-side follows since $\frac{|A||B|}{|A|+|B|} \ge \frac{|A||B|}{2|A|}$. For the right-hand-side, divide the numerator and denominator by $|A|$. Then, $\frac{|B|}{1+|B|/|A|} \le |B|$. 
\end{proof}

Next, we prove the Lance-Williams form for updated Ward's distances.
\LanceWilliamsWards*
\begin{proof}
    Consider,
    \begin{align*}
        |A|\|\mu(A)-\mu(C)\|^2+|B|\|\mu(B)-\mu(C)\|^2 =& |A|\|\mu(A)-\mu(A\cup B)\|^2+|B|\|\mu(B)-\mu(A\cup B)\|^2\\
        &+ (|A|+|B|)\|\mu(A\cup B)-\mu(C)\|^2\\
        =&\left(\frac{|A||B|^2}{(|A|+|B|)^2}+\frac{|B||A|^2}{(|A|+|B|)^2}\right)\|\mu(A)-\mu(B)\|^2\\
        &+ (|A|+|B|)\|\mu(A\cup B)-\mu(C)\|^2\\
        =& \dWard(A,B)+(|A|+|B|)\|\mu(A\cup B)-\mu(C)\|^2,
    \end{align*}
    where the first equality follows by \Cref{lem:ident_sos}, and the second equality follows by the fact that when clusters $A$ and $B$ are merged, the centroid $\mu(A\cup B)$ lies on the line joining $\mu(A)$ and $\mu(B)$, and
    $\|\mu(A)-\mu(A\cup B)\|=\frac{|B|}{|A|+|B|}\|\mu(A)-\mu(B)\|$. Rearranging and multiplying by $|C|/(|A|+|B|+|C|)$, we get that $\dWard(A\cup B,C)$ is 
    \begin{align*}
        \frac{|A||C|}{|A|+|B|+|C|}\|\mu(A)-\mu(C)\|^2+\frac{|B||C|}{|A|+|B|+|C|}\|\mu(B)-\mu(C)\|^2-\frac{|C|}{|A|+|B|+|C|}\dWard(A,B)
    \end{align*}
    which is
    \begin{align*}
        \frac{|A|+|C|}{|A|+|B|+|C|}\dWard(A,C)+\frac{|B|+|C|}{|A|+|B|+|C|}\dWard(B,C)-\frac{|C|}{|A|+|B|+|C|}\dWard(A,B)
    \end{align*}
    as required.
\end{proof}

\subsection{Approximate Triangle Inequality for Squared Euclidean Distances Proof}

\squaredEucTri*
\begin{proof}
    By the triangle inequality for (non-squared) Euclidean distances we have
    \begin{align}\label{eq:asgas}
        \|a-c\|^2 & \leq (\|a-b\| + \|b- c\|)^2\nonumber\\
        &= \|a-b\|^2 + \|b- c\|^2 + 2 \|a-b\|\|b-c\|.
    \end{align}
    By the AM-GM inequality we have 
    \begin{align*}
        \|a-b\|\|b-c\| \leq \frac{\|a-b\|^2+\|b-c\|^2}{2}
    \end{align*}
    and so
    \begin{align}
        2\|a-b\|\|b-c\| \leq \|a-b\|^2+\|b-c\|^2.\label{eq:asgaass}
    \end{align}
    Plugging \Cref{eq:asgaass} into \Cref{eq:asgas} gives
    \begin{align*}
        \|a-c\|^2 \leq 2 \cdot \left(\|a-b\|^2 + \|b- c\|^2 \right)
    \end{align*}
    as desired.
\end{proof}

\end{document}